\documentclass[12pt]{amsart}
\usepackage{amscd,amssymb,amsmath,latexsym,enumerate}
\usepackage[mathscr]{euscript}
\usepackage{epsfig,graphics,graphicx}
\usepackage{fancybox}
\usepackage{verbatim}
\usepackage{tikz}
\usepackage{tikz-cd}

\definecolor{MyBlue}{cmyk}{1,0.13,0,0.63}
\definecolor{MyGreen}{cmyk}{0.91,0,0.88,0.52}
\newcommand{\mylinkcolor}{MyBlue}
\newcommand{\mycitecolor}{MyGreen}
\newcommand{\myurlcolor}{black}

\usepackage{hyperref}
\hypersetup{%
  bookmarksnumbered=true,bookmarksopen=false,%
  plainpages=false,
  linktocpage=true,%
  colorlinks=true,breaklinks=true,%
  linkcolor=\mylinkcolor,citecolor=\mycitecolor,urlcolor=\myurlcolor,%
  pdfpagelayout=OneColumn,%
  pageanchor=true,%
}
 
\usepackage{amsfonts}
\usepackage{color}
\usepackage{amsthm}
\usepackage{mathtools}
\usepackage{quiver}
\usepackage{dsfont}
\usepackage[kerning=true]{microtype}

\usepackage[margin=1in]{geometry} 

\title{Kubo Formulas and Bulk-Edge Correspondence for Curved Boundaries}

\author{Aren Martinian and Tom Stoiber}

\date{\today}

\newtheorem{theorem}{Theorem}[section]
\newtheorem{definition}[theorem]{Definition}
\newtheorem{proposition}[theorem]{Proposition}
\newtheorem{lemma}[theorem]{Lemma}
\newtheorem{corollary}[theorem]{Corollary}
\newtheorem{remark}[theorem]{Remark}
\newtheorem{example}[theorem]{Example}

\newcommand{\CM}{{\mathbb C}}
\newcommand{\NM}{{\mathbb N}}
\newcommand{\RM}{{\mathbb R}}
\newcommand{\SM}{{\mathbb S}}

\newcommand{\ZM}{{\mathbb Z}}
\newcommand{\PM}{{\mathbb P}}

\newcommand{\KM}{{\mathbb K}}
\newcommand{\HM}{{\mathbb H}}

\newcommand{\Aa}{{\mathcal A}}

\newcommand{\Pp}{{\mathcal P}}

\newcommand{\Bb}{{\mathcal B}}

\newcommand{\Ss}{{\mathcal S}}

\newcommand{\Tt}{{\mathcal T}}

\newcommand{\Cc}{{\mathcal C}}

\newcommand{\Ll}{{\mathcal L}}
\newcommand{\Qq}{{\mathcal Q}}
\newcommand{\Kk}{{\mathcal K}}
\newcommand{\Hh}{{\mathcal H}}
\newcommand{\Xx}{{\mathcal X}}
\newcommand{\Yy}{{\mathcal Y}}
\newcommand{\Zz}{{\mathcal Z}}

\newcommand{\one}{{\bf 1}}

\newcommand{\Tr}{\mbox{\rm Tr}}

\newcommand{\Ind}{{\rm Ind}} 
\newcommand{\Ker}{{\rm Ker}} 
 
\newcommand{\sgn}{{\rm sgn}}

\newcommand{\ep}{\varepsilon}
\newcommand{\pdv}{\partial}

\newcommand{\difd}{\textup{d}}

\providecommand{\abs}[1]{\left \lvert#1 \right \rvert} 
\providecommand{\norm}[1]{\left \lVert#1 \right \rVert}

\newcommand{\scalarmatrix}{H_-}

\DeclareMathOperator*{\wlim}{w-lim}

\DeclareMathOperator*{\Ext}{Ext}
\DeclareMathOperator*{\Id}{Id}
\DeclareMathOperator*{\Dom}{Dom}
\DeclareMathOperator{\diam}{diam}
\DeclareMathOperator{\supp}{supp}

\usetikzlibrary{calc,decorations.markings}

\usepackage[style=ieee,citestyle=numeric,backend=biber,sorting=nyt,url=false]{biblatex}

\AtEveryBibitem{\clearfield{issn}}	
\AtEveryCitekey{\clearfield{issn}}
\AtBeginBibliography{\small}		

\DeclareFieldFormat{postnote}{#1}

\DeclareFieldFormat[article,inbook,incollection,inproceedings,patent,thesis,unpublished]{title}{\textit{#1\isdot}} 

\DeclareFieldFormat{journaltitle}{#1}  

\DeclareFieldFormat[book,inbook,incollection,inproceedings]{series}{#1} 

\defbibheading{references}[\refname]{%
\section*{#1}%
\addcontentsline{toc}{section}{References} %
\markboth{#1}{#1}
}

\begin{document}

\begin{abstract}
Strong topological insulators are classified by integer invariants which admit different real-space expressions. Inspired by recent work studying these invariants for spaces with curved boundaries, we revisit the problem of equivalence of the various expressions, including Fredholm index pairings and Kubo formulas involving half-space projections. Using Roe algebras and a variant of $KK$-theory suitable for non-separable $C^*$-algebras, we prove a general form of bulk-edge correspondence for the index pairing with a general position-space Dirac operator in the presence of arbitrary boundaries. For spaces coarsely equivalent to $\RM^d$, we show that up to multiplicity, these pairings are equal to those obtained with the standard dual Dirac operator, with the multiplicity explicitly given as the topological degree of the symbol function. In addition, we prove a generalized Kubo formula which computes the general index pairing by a real-space formula.
\end{abstract}

\maketitle
\vspace{-0.75cm}

\tableofcontents

\newpage

\section{Introduction}
Topological phases of free-fermion systems are often characterized by $\ZM$- or $\ZM_2$-valued invariants which remain stable under perturbations that preserve the spectral gap. In periodic systems the integer invariants are often defined in momentum space, where they can be computed from vector bundles over the Brillouin zone and expressed by familiar integral formulas such as Chern numbers. However, there need not exist a useful momentum-space description in many situations of physical and mathematical interest, such as disordered systems, quasicrystals, and systems with curved or irregular boundaries. This makes it necessary to formulate the strong topological invariants directly in real space and to understand their relation to physically measurable quantities such as conductance and edge currents. Two closely related problems arise naturally in this context: first, to compare the various real-space formulas for the bulk invariant, and second, to relate these bulk quantities to boundary observables through bulk-edge correspondence.

In many settings, strong topological insulators in the complex symmetry classes of Altland and Zirnbauer \cite{Altland1997, Zirnbauer1996} are characterized by a single integer, called the strong topological invariant, which is computed from the band-flattening $\sgn(H)$ of a self-adjoint Hamiltonian $H$ which has a spectral gap around $0$. It is now well-understood that this topological invariant corresponds to the index pairing between either the K-theory class of the Fermi projection or that of the Fermi unitary and the K-homology class of a position-space Dirac operator.

In any dimension this invariant can be expressed as the index of a suitable Fredholm operator. In two dimensions one obtains the well-known expression \cite{BellissardVanElstSchulzBaldes1994}
\begin{equation}
\label{eq:index}
\Sigma_2=\Ind\left(p \frac{X_1+ \imath X_2}{|X_1+ \imath X_2|}p + 1-p\right),
\end{equation}
where $p=\chi(H < 0)$ is the Fermi projection, $X_j$ are the position operators on $\Xx = \ZM^2$ and one put $0/|0|=1$ at the origin. With the conventional physical units this integer corresponds to the quantum Hall conductance.

Beginning with the seminal works \cite{BellissardVanElstSchulzBaldes1994} and \cite{AvronSeilerSimon1994}, several real-space formulas for the strong invariant have been derived. The first is a rather direct real-space formulation of the momentum-space formula of \cite{TKNN}, which is the integral of the Berry curvature over the Brillouin torus. The above authors replaced the integral by the trace per unit volume and the momentum-space derivatives with commutators involving position operators. In two spatial dimensions under suitable hypotheses, this takes the form
\begin{equation}
\label{eq:kubo_tpuv_intro}
\Sigma_2 = -2\pi \imath\, \mathrm{Tr}_{\mathrm{av}}\big( p\,[[X_1,p]\,,[X_2,p]]\big).
\end{equation} 
A closely related, and in many respects better behaved, expression replaces the trace per unit volume by the usual Hilbert-space trace by introducing the projection operators for the two standard half-spaces $\HM_1, \HM_2$:
\begin{equation}
\label{eq:double_comm_intro}
\sigma_{\HM_1,\HM_2} = -2\pi \imath\, \mathrm{Tr}\big(p\,([[P_{\HM_1},p],[P_{\HM_2},p]])\big).
\end{equation} 
We will refer to the latter expression as the Kubo formula. Different formulas have historically been called by this name, and the equivalence between the two expressions $\Sigma_2$ and $\sigma_{\HM_1,\HM_2}$ above is often taken for granted.

This formulation is closely related to Kitaev's real-space formula~\cite{Kitaev06}, which is expressed in terms of a partition of $\mathbb{R}^2$ into three disjoint cone-like regions $A,B,C$,
$$\sigma_{A,B,C} = 12 \pi \imath \mathrm{Tr}\big(P_A p P_B p P_C p -  P_A p P_C p P_B p\big).$$
As pointed out by Ludewig and Thiang~\cite{LudewigThiang25}, the Kubo formula and Kitaev's formula are closely related and can be derived from each other if one allows a general enough notion of half-spaces. 

Another important motivation for discussing the Kubo formula for non-standard half-spaces comes from bulk-edge correspondence with curved boundaries. In the formulation of Drouot and Zhu~\cite{drouot2024bulk}, two transverse half-spaces $U$ and $V$ are fixed such that $U$ is the half-space which the system with boundary lives on, whereas $V$ is a fiducial half-space such that the edge conductance measures charge transport from $V$ to $V^c$. 
In this setting, the authors show that the edge conductance $\sigma_e^{U,V}$ is related to the Hall conductance $\sigma_{U,V}$ by
\[\sigma_e^{U,V}(H_e) = \sigma_{U,V}(p_+) -\sigma_{U,V}(p_-),\]
where $H_e$ is an edge Hamiltonian, $p_{\pm}$ are Fermi projections associated to the bulk Hamiltonians $H_{\pm}$, and the bulk Hall conductances are computed using \eqref{eq:double_comm_intro} with the non-standard half-spaces $U,V$. It is also shown that
\begin{equation}
\label{eq:sigmaUV}
\sigma_{U,V} = \chi(U,V)\, \sigma_{\mathbb{H}_1,\mathbb{H}_2}
\end{equation}
where the integer $\chi(U,V)$ is, under suitable assumptions on the half-spaces, the intersection number of the oriented boundaries of $U$ and $V$. 

In this paper we prove bulk-edge correspondence motivated by the formulation of \cite{drouot2024bulk}, but in higher dimensions. The bulk invariant generalizing the Hall conductivity is then computed by plugging the Fermi projection or Fermi unitary of a gapped Hamiltonian into a Kubo formula which uses a collection $\Yy_1,\ldots,\Yy_d\subset \Xx$ of generalized half-spaces of a space $\Xx$ with bounded geometry, for example $\Xx=\RM^d$ or $\Xx=\ZM^d$. 
After restricting that Hamiltonian to one of the half-spaces, say $\Yy_d$, one obtains an edge topological invariant which is computed in terms of the remaining half-spaces $\Yy_1,\ldots,\Yy_{d-1}$, and which is equal to the bulk invariant up to a sign. Moreover, for $\Xx=\RM^d$ we then determine the relative multiplicity, generalizing the intersection number, compared to the strong topological invariant which is defined in terms of the standard half-spaces $\HM_1,\ldots,\HM_d$.

There is a by now well-established analytic approach to bulk-edge correspondence which works directly with concrete numerical expressions for the bulk and edge invariants \cite{ElbauGraf, GrafPorta, ElgartGrafSchenker2005,GrafShapiro, drouot2024bulk}. A closely related analytic approach in the continuum setting for domain-wall models is to use pseudodifferential methods, such as the Fedosov-Hörmander index theorem (see e.g., \cite{DrouotMicrolocal2021, Bal2022TopologicalInvariants,BalContinuous2019}). We instead follow the K-theoretic approach, which goes back to \cite{KRS04}: Both bulk and edge topological invariants are encoded into K-theory classes over different $C^*$-algebras which are then related by boundary maps. 
From there, the numerical invariants are obtained by pairing with cyclic cocycles or K-homology classes. While initially introduced for ergodic discrete and continuous models on $\ZM^2$ \cite{KRS04} and $\RM^2$ \cite{KellendonkSchulzBaldes2004}, this approach has been generalized widely: to arbitrary dimensions, all Altland-Zirnbauer symmetry classes, crystalline (higher-order) topological insulators and more general aperiodic models, using either groupoid methods or coarse geometric methods \cite{ProdanSchulzBaldes2016,Kubota17,bourne2017k,BourneProdan2018,EwertMeyer,AlldridgeMaxZirnbauer2020, PoloOjitoProdanStoiber2025}.

More precisely, we will follow the KK-theoretic approach of \cite{BourneCareyRennie2015, bourne2017k}. In KK-theory, one can treat K-theory classes and K-homology classes on the same footing and the index pairing and boundary maps are just special cases of so-called Kasparov products.

In Section~\ref{sec:roe} we first develop the background necessary to define topological invariants via K-homology. 
We assume that $\Xx$ is a proper metric space with bounded geometry, which covers both discrete and continuous settings. Consider a collection of $d$ transverse subsets $\Yy_1,\dots, \Yy_d$ which we call half-spaces. As a generalization of the position operators we may use the signed distance functions $\delta(x) = (\delta_{\Yy_1},\dots, \delta_{\Yy_d})(x)$, with 
\[\delta_{\Yy_i}(x) = \begin{cases}
\mathrm{dist}(x,\Yy_i^c) & x\in \Yy_i\\
-\mathrm{dist}(x,\Yy_i) & x\in \Yy^c_i.
\end{cases}\]
Such a function $\delta$ is the prototypical example of what we will call a \emph{Dirac-type} function $f: \Xx \to \RM^d$, whose crucial property is that the position-space Dirac operator $D_f = \sum_i f_i \otimes \sigma_i$, for $\sigma_1,\ldots,\sigma_d$ the generators of the complex Clifford algebra $\CM_d$, yields a well-defined K-homology class $[D_f]$. 
Such a K-homology class produces integer-valued topological invariants through the usual index pairing 
$$\langle \cdot, \cdot \rangle: K_{d\bmod 2}(A) \times K^{d\bmod 2}(A)\to \ZM$$
with the K-theory class of a projection or unitary in a suitable $C^*$-algebra $A$. In our setting one can choose for $A$ the so-called uniform Roe algebra $C^*_u(\Xx)$, such that Fermi projections or Fermi unitaries of gapped short-range Hamiltonians define a bulk topological invariant by pairing with a K-homology class $[D_{\Yy_1,\ldots,\Yy_d}]$.
With the relative Roe algebra $C^*_u(\Zz_d\subset \Xx)$ of operators supported near the boundary $\Zz_d$ of the half-space $\Yy_d$, one can similarly define an edge index by pairing with an edge Dirac operator $D_{\Yy_1,\ldots,\Yy_{d-1}}$.

In Section~\ref{sec:bulk_edge_correspondence} we will prove that the bulk and edge indices are related by the following theorem:
\begin{theorem}
\label{th:bulk_edge_intro}
Let $\Yy_1,\ldots,\Yy_d$ be a polynomially coarsely transverse collection of half-spaces with boundaries $\Zz_1,\dots, \Zz_d$ in a space $\Xx$ with bounded geometry, and let $\pdv_{\Yy_d}: K_{d\bmod 2}(C^*_u(\Xx)) \to  K_{(d-1)\bmod 2}(C^*_u(\Zz_d\subset \Xx))$ be the Mayer-Vietoris boundary map associated to the decomposition $\Xx=\Yy_d\cup \Yy_d^c$. 
Then
$$\langle x, [D_{\Yy_{1},\ldots,\Yy_d}]\rangle = (-1)^{d} \langle \pdv_{\Yy_d}(x),  [D_{\Yy_{1},\ldots,\Yy_{d-1}}]\rangle, \qquad \forall x\in K_{d\bmod 2}(C^*_u(\Xx)).$$
\end{theorem}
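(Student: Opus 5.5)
The plan is to recast both sides of the identity as Kasparov products and use associativity, following the KK-theoretic approach of \cite{BourneCareyRennie2015, bourne2017k}, with the non-separable variant of KK-theory mentioned in the introduction. The proof has three ingredients.

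\textbf{Step 1: the Mayer--Vietoris map as a product.} The map $\pdv_{\Yy_d}$ arises from the decomposition $\Xx=\Yy_d\cup\Yy_d^c$. Concretely, there is a short exact sequence
\[
0\to C^*_u(\Zz_d\subset\Xx)\to C^*_u(\Yy_d\subset \Xx)\oplus C^*_u(\Yy_d^c\subset\Xx)\to C^*_u(\Xx)\to 0 .
\]
Alternatively, one can use the Toeplitz-type extension obtained by compressing $C^*_u(\Xx)$ to $\Yy_d$. I would first show that this extension is semisplit, using a bounded-propagation partition of unity subordinate to neighbourhoods of $\Yy_d$ and $\Yy_d^c$. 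It then defines a class
\[
[\pdv_{\Yy_d}]\in KK^1\bigl(C^*_u(\Xx),C^*_u(\Zz_d\subset\Xx)\bigr),
\]
and the boundary map equals the right Kasparov product with this class.

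\textbf{Step 2: factorising the bulk Dirac class.} The goal is
\[
[D_{\Yy_1,\ldots,\Yy_d}] = (-1)^d\, [\pdv_{\Yy_d}]\otimes_{C^*_u(\Zz_d\subset\Xx)} [D_{\Yy_1,\ldots,\Yy_{d-1}}]
\]
in $KK^{d}(C^*_u(\Xx),\CM)$. Once this holds, associativity of the Kasparov product gives the theorem for every $x$. To prove it, I would write $D_{\Yy_1,\ldots,\Yy_d}$ as the graded tensor sum
\[
D_{\Yy_1,\ldots,\Yy_{d-1}}\hat\otimes 1 + 1\hat\otimes \delta_{\Yy_d}\sigma_d .
\]
The one-dimensional factor $\delta_{\Yy_d}\sigma_d$ represents, via its phase $F=\sgn(\delta_{\Yy_d})$, exactly the extension class from Step 1: the projection $\chi_{\Yy_d}$ commutes with $C^*_u(\Xx)$ modulo the ideal of operators supported near $\Zz_d$. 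I would then check the Connes--Skandalis connection and positivity conditions for this unbounded product. The connection condition is easy here, because $D_{\Yy_1,\ldots,\Yy_{d-1}}$ acts diagonally by multiplication. The sign $(-1)^d$ comes from reordering the Clifford generators, that is, from the graded commutation of the odd factor $\sigma_d$ past the $d-1$ others. I would track it with the conventions of \cite{bourne2017k}.

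\textbf{Step 3: the main obstacle.} The hardest step is making $D_{\Yy_1,\ldots,\Yy_{d-1}}$ a genuine (unbounded) Kasparov module over the relative Roe algebra $C^*_u(\Zz_d\subset\Xx)$, and verifying the compact-resolvent and positivity conditions for the product operator. The functions $\delta_{\Yy_i}$ need not be proper, so compactness of $a(1+D^2)^{-1}$ for $a$ supported near $\Zz_d$ relies on the transversality hypothesis. Polynomial coarse transversality should give that
\[
\{x:\ \mathrm{dist}(x,\Zz_d)\le R,\ |\delta_{\Yy_i}(x)|\le R\ \ (i<d)\}
\]
is bounded, with controlled (polynomial) growth. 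The polynomial control should then be what makes the commutators $[D,a]$ bounded on a dense subalgebra after a suitable smoothing, for example replacing $\delta$ by a Lipschitz Dirac-type function in the same class, which I would take as given from Section~\ref{sec:roe}.

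If checking the unbounded product directly is too delicate, my fallback is to work with bounded transforms and use Kucerovsky's criterion. The one remaining inequality would be handled by a homotopy that rescales $\delta_{\Yy_d}$ by a large constant.
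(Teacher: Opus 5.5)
Your overall architecture matches the paper's: the extension class is represented by $\delta_{\Yy_d}$, the bulk Dirac operator is the anticommuting sum, and the result follows by associativity. But the proposal skips the step that is the actual crux.

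\textbf{The main gap: non-separability.} In Step 1 you assert that the semisplit extension ``defines a class in $KK^1(C^*_u(\Xx),C^*_u(\Zz_d\subset\Xx))$'' whose product implements $\pdv_{\Yy_d}$. In Step 2 you use existence and associativity of the Kasparov product, and Kucerovsky's criterion. These are theorems for separable (resp.\ $\sigma$-unital) algebras. The uniform Roe algebras here are not separable and in general not even $\sigma$-unital, so none of them is available as you use them.

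Naming $KK^{\mathrm{sep}}$ does not close this. A $KK^{\mathrm{sep}}$-class is a compatible family over separable subalgebras. So you must construct a cofinal directed system of separable, exact, semisplit sub-extensions $0\to B_i\to E_i\to A_i\to 0$ with compatible extension classes. This gives $[\Ext_E]_{\mathrm{sep}}$ and, via continuity of $K$-theory and naturality, $\kappa^B_{1-j}(\pdv_E(x))=(-1)^j\kappa^A_j(x)\otimes[\Ext_E]_{\mathrm{sep}}$.

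Arbitrary separable subalgebras will not do for the product computation. A separable $A'\subset C^*_u(\Xx)$ need not contain a dense subalgebra with bounded commutators with $f=(\delta_{\Yy_1},\ldots,\delta_{\Yy_d})$. Also, $A'(f_d+\imath)^{-1}$ need not lie in a given $B'$. The paper therefore chooses $A_i,B_i$ with two properties:
\begin{itemize}
\item they are invariant under the $\RM^d$-actions $a\mapsto e^{-\imath f\cdot t}ae^{\imath f\cdot t}$, which gives smooth dense subalgebras;
\item they are invariant under $C_0(f_d)$, with $A_iC_0(f_d)\subset B_i$ and $C_0(f_d)B_i$ dense.
\end{itemize}
Then left multiplication by $f_d$ is a regular operator on the Hilbert $B_i$-module $B_i$, and $(B_i\hat\otimes\CM_1,\pi\hat\otimes 1,f_d\hat\otimes\ep)$ is an unbounded representative of $[\Ext_{E_i}]$. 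Note that this cycle lives on a Hilbert module over the edge algebra, not on $\Hh_\Xx$ as your Step 2 suggests. Only then can Kucerovsky's theorem be applied at each stage $i$ and the results assembled in $KK^{\mathrm{sep}}$.

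By contrast, what you call the main obstacle is routine:
\begin{itemize}
\item relative compactness is the factorization in Lemma~\ref{lemma:dirac_op};
\item bounded commutators come from the Lipschitz property of $f$, not from polynomial transversality, which this theorem does not need at all;
\item positivity is immediate from $ST+TS=0$, so no rescaling homotopy is needed.
\end{itemize}

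\textbf{Two smaller points.} First, the first version of your Step 1 is not an extension of $C^*$-algebras, because the map $C^*_u(\Yy_d\subset\Xx)\oplus C^*_u(\Yy_d^c\subset\Xx)\to C^*_u(\Xx)$ is not multiplicative. You need the Toeplitz/pullback extension with Busby invariant $a\mapsto \chi_{\Yy_d}a\chi_{\Yy_d}+C^*_u(\Zz_d\subset\Xx)$, plus an argument that its boundary map is the Mayer--Vietoris map.

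Second, your sign bookkeeping does not hold together. Take $\delta_{\Yy_d}\sigma_d$ as the last Clifford generator, which is where the $\CM_1$ of $KK_1$ lands in the product. Then $[\Ext_E]\otimes[D_{\Yy_1,\ldots,\Yy_{d-1}}]=[D_{\Yy_1,\ldots,\Yy_d}]$ holds with no sign. Moreover, commuting $\sigma_d$ past $d-1$ generators would produce $(-1)^{d-1}$, not $(-1)^d$. In the paper's conventions the $(-1)^d$ comes entirely from Step 1: the odd-to-even boundary map differs in sign from the product with the extension class.
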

As highlighted before \cite{EwertMeyer, Kubota17, LudewigThiang20, LudewigThiang25}, the Mayer-Vietoris boundary map gives the relation between bulk and edge topological invariants of Hamiltonians. However, it also applies more generally to bulk-interface correspondence, where it maps differences of bulk K-theory classes to interface invariants. The relative sign of $(-1)^{d}$ depends on conventions for the index pairings and boundary maps; since we use definitions compatible with \cite{ProdanSchulzBaldes2016} we obtain the same relative signs.

We will prove Theorem~\ref{th:bulk_edge_intro} by generalizing the  KK-theoretic approach of \cite{bourne2017k} which was developed for crossed-product $C^*$-algebras suitable for topological insulators on a lattice with straight boundaries. The first step will be to encode the Mayer-Vietoris boundary map into an extension class $\Ext_E$ of a suitable exact sequence of $C^*$-algebras. The statement of bulk-edge correspondence then reduces via associativity of the intersection product, which we will simply call the Kasparov product, to proving
$$[{\Ext}_E] \otimes [D_{\Yy_1,\ldots,\Yy_{d-1}}] = [D_{\Yy_1,\ldots,\Yy_{d}}]$$
where $\otimes$ denotes the Kasparov product. In the unbounded picture of KK-theory the Kasparov product corresponds to the anti-commuting sum of Dirac operators provided it is well-defined. Morally speaking, the relation therefore follows from the trivial decomposition
$$\delta_{\Yy_d} \otimes \sigma_d + \sum_{j=1}^{d-1} \delta_{\Yy_j} \otimes \sigma_j = \sum_{j=1}^d \delta_{\Yy_j} \otimes \sigma_j $$
where $[\delta_{\Yy_d}]$ will represent the extension class.

A major technical problem compared to \cite{bourne2017k}, however, is that the standard results about extension classes and existence of Kasparov products impose assumptions of separability on the $C^*$-algebras. Unlike the crossed-product algebras, the uniform Roe $C^*$-algebras that we consider are not separable, and in general not even $\sigma$-unital. These difficulties with using the usual KK-theoretic formalism in the coarse-geometric context have been pointed out before \cite[Remark~1.5]{SpakulaWillett2013} and mean that existence, uniqueness and associativity of Kasparov products are no longer guaranteed. To overcome this problem and to make the above blueprint work, we will therefore replace the usual KK-functor with the variant $KK^{\mathrm{sep}}$ introduced by Skandalis \cite{Skandalis1988}, which defines KK-theory for non-separable $C^*$-algebras as projective limits over their separable subalgebras. We will show that the Kasparov products and extension classes can then be computed on nets of carefully chosen separable subalgebras, which allows one to still use standard results such as Kucerovsky's criterion \cite{Kucerovsky}. 

Our second main result in Section~\ref{sec:euc_space} concerns the classification of Dirac-type functions in the case $\Xx=\RM^d$. We show that, up to equivalence in K-homology, all such functions are classified by a single integer.
Explicitly, this integer is the degree of the Dirac-type function $f$ on a sufficiently large sphere $R\SM^{d-1}$:
\begin{theorem}\label{thm:degree_intro}
Let $f:\RM^d\to \RM^d$ be a continuous Dirac-type function for $d > 1$. Define the asymptotic degree 
\[\deg(f)= \lim_{R\to \infty} \deg\left( \frac{f}{|f|}\bigg\vert_{R\SM^{d-1}}\right),\]
with the mapping degree in the sense of continuous maps $\SM^{d-1}\to \SM^{d-1}$. 
Then the K-homology class $[D_f]$ in $K^{d \bmod 2}(C^*_u(\RM^d))$ is related to the class $[D_{\mathrm{std}}]$ of the standard Dirac operator $D_{\mathrm{std}}=\sum_{i=1}^d X_i\otimes \sigma_i$ via $$[D_f]= \deg(f)[D_{\mathrm{std}}].$$
\end{theorem}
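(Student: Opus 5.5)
We have to prove $[D_f]=\deg(f)[D_{\mathrm{std}}]$ in $K^{d\bmod 2}(C^*_u(\RM^d))$. The plan has three reductions: normalize $f$, homotope $f/|f|$ at infinity to a standard model map of the same degree, and identify the K-homology class of that model map as $\deg(f)[D_{\mathrm{std}}]$.

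\emph{Normalization.} Being Dirac-type should mean that $|f|\to\infty$ properly and that $f$ has bounded variation at bounded distance, i.e.\ $\sup_{|x-y|\le r}|f(x)-f(y)|<\infty$ for each $r$; this is what makes commutators $[D_f,T]$ bounded for finite-propagation $T$. I first replace $f$ by $g=\rho(|x|)\,f/|f|$, where $\rho(r)=r$ for large $r$, and check that $g$ is still Dirac-type. I then use the straight-line homotopy $tf+(1-t)g$, or the bounded-transform picture $F_f=D_f(1+D_f^2)^{-1/2}$, to show $[D_f]=[D_g]$. Both $F_f$ and $F_g$ are multiplication operators by functions asymptotic to $f/|f|\cdot\sigma$. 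Their difference is therefore a multiplication by a function vanishing at infinity, hence a compact perturbation relative to locally compact operators in $C^*_u$. So the classes depend only on the germ at infinity of $\hat f=f/|f|$ restricted to large spheres. This uses that $\hat f$ is slowly oscillating (the Higson condition), which follows from the bounded variation of $f$ together with $|f|\to\infty$.

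\emph{Homotopy to the model map.} Fix $R$ large. The map $\hat f|_{R\SM^{d-1}}:\SM^{d-1}\to\SM^{d-1}$ is homotopic to a standard map $\phi_k$ of degree $k=\deg(f)$. Extending radially, $\phi_k(x/|x|)$ on $|x|\ge R$, and interpolating the homotopy over a radial shell $R\le|x|\le 2R$ gives a homotopy of Fredholm modules. Here I must be careful: to get a norm-continuous or operator homotopy of Fredholm modules, the homotopy must stay Higson (slowly oscillating) uniformly. A finite homotopy over a compact parameter set, composed with radial extension $x\mapsto \hat f_t(x/|x|)$, has gradient $O(1/|x|)$, which is fine. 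The non-radial original $\hat f$ is connected to its radial version by the same argument applied to the degree-0 comparison on the annuli; I expect this step to be the main technical point, since it requires the limit defining $\deg(f)$ to stabilize. It does stabilize, because the slow oscillation makes $\hat f$ nearly constant on unit-scale pieces, so the spheres $R\SM^{d-1}$ and $R'\SM^{d-1}$ are homotopic through $\hat f$.

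\emph{Additivity in the degree.} Finally, show that the class of the radial model $D_{\phi}$ depends additively on $[\phi]\in\pi_{d-1}(\SM^{d-1})=\ZM$. Since these homotopy classes are classified by the degree, this gives $[D_{\phi_k}]=k[D_{\phi_1}]=k[D_{\mathrm{std}}]$. Additivity comes from realizing $\phi_k$ by pinching: split $\RM^d$ at infinity into $|k|$ disjoint cones, each carrying a copy of $\pm$ the identity. I then use the direct-sum/excision property of Fredholm modules, conjugating by a unitary that swaps the cones into orthogonal sub-Hilbert spaces. Alternatively, I can use pairing with a generating class: $K_{d}(C^*_u(\RM^d))$ surjects onto $\ZM$ via the index pairing with $[D_{\mathrm{std}}]$ (the coarse Baum--Connes isomorphism for $\RM^d$), and that pairing for $D_\phi$ computes via the Fedosov--Hörmander/Callias formula as $\deg\phi$ times the standard one. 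For $d=2$ this step is the winding number statement $\chi(U,V)$ of \cite{drouot2024bulk}.

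The obstacle is ensuring the classes actually live in, and are determined by, the relevant group. I would prove equality via the pairing on separable subalgebras, using the $KK^{\mathrm{sep}}$ framework and an argument that the classes are detected by pairings. The model computation of the pairing is the cleanest route.
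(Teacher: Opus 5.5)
Your outline (reduce to the behavior of $\hat f=f/|f|$ at infinity, deform to a radial model, then use that self-maps of $\SM^{d-1}$ are classified by degree) has the same shape as the paper's. However, the two steps that carry the content are not actually supplied.

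\textbf{Deformation to a radial model.} Your only justification concerns stabilization of $\deg_R(f)$. That is automatic: since $f\neq 0$ on $\{R\le |x|\le R'\}$, $\hat f$ on the annulus is itself the homotopy. It is not the real issue. The real issue is that $\hat f$ need not be asymptotic to any radial map, so your principle ``the class depends only on $\hat f$ modulo $C_0$'' does not reach the radial model.

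For example, in $d=2$ take $f(x)=\sqrt{r}\,(\cos(\theta+\sqrt r),\sin(\theta+\sqrt r))$, smoothed near $0$. It is Lipschitz, proper, of degree $1$, but $\hat f$ keeps rotating as $r\to\infty$. Hence $\hat f(x)-\psi(x/|x|)\notin C_0(\RM^2)$ for every $\psi$. The same example shows that $g=\rho(|x|)\hat f$ is not large-scale Lipschitz, since its radial derivative grows like $\sqrt r$. So your normalization must be done purely in the bounded picture; there it is fine, because $\hat f$ is slowly oscillating.

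What is needed is a homotopy of cycles that pushes the non-radial region out to infinity, with continuity checked at the endpoint. The paper does this in Lemma~\ref{lem:homogeneous}. It first makes $f^1$ agree with a homogeneous $|x|u(x/|x|)$ on a ball, then uses the rescaling $f^t(x)=(2-t)^{-1}f^1((2-t)x)$ with $t\to 2$. It proves uniform Lipschitz bounds and uniform properness, so that Lemma~\ref{lemma:homotopy} applies. In your bounded picture, the truncation $s\mapsto \hat f(\min(|x|,s)\,x/|x|)$, $s\in[R,\infty]$, would do the same job. You would still need to state it and check norm-continuity of the commutators at $s=\infty$, which uses slow oscillation of $\hat f$.

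\textbf{Additivity in the degree.} The pairing route you end up preferring cannot prove the theorem. Equality of all pairings gives only the corollary $\langle x,[D_f]\rangle=\deg(f)\langle x,[D_{\mathrm{std}}]\rangle$. There is no result that classes in $K^{d}(C^*_u(\RM^d))$ are detected by pairings: this algebra is non-separable with no UCT available, and even under the UCT the $\mathrm{Ext}$-part of K-homology is invisible to pairings. Coarse Baum--Connes describes the non-uniform Roe algebra, not $C^*_u$. A Callias/Fedosov--H\"ormander formula for arbitrary $x$ is essentially the statement to be proved.

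So the sum has to be realized at the level of cycles. The paper does this after Bott periodicity:
\begin{itemize}
\item $[D_f]$ is represented by the Clifford symbol $U_f=\sum_i u_i\gamma_i$ (odd $d$) or its off-diagonal unitary $V_f$ (even $d$).
\item The class of this symbol in $K^0(\SM^{d-1})$, resp.\ $K^1(\SM^{d-1})$, is fixed by rank and degree.
\item Hence $U_f\oplus M_0$ and $U_{\mathrm{std}}^{\oplus n}\oplus M_1$ are norm-homotopic for suitable constant $M_0,M_1$.
\item The homogeneous extension of this path is a homotopy of cycles, and constant symbols give degenerate cycles up to compact perturbation.
\end{itemize}

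Your pinching variant can also work, but not if the unitary relocates the cones geometrically: such a unitary does not intertwine the representation of $C^*_u(\RM^d)$. Instead, conjugate by a slowly varying, null-homotopic matrix-valued multiplication operator that permutes the direct summands on the different cones. You also need that $f_1\mapsto -f_1$ negates the class, to handle the degree $-1$ pieces.
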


The proof of Theorem~\ref{thm:degree_intro} is based on the fact that the mapping degree is a complete homotopy invariant for maps between spheres.

A reader familiar with the Brouwer degree may notice that for a proper continuous function $f$, we have $\deg(f)=\deg_{\mathrm{B}}(f, \RM^d,0)$ with the Brouwer degree on the unbounded domain $\RM^d$ and reference point $0\in \RM^d$; we recall those notions in Section~\ref{sec:brouwer}. This gives a local way to compute the degree.

\begin{figure}[htbp]
\centering
\definecolor{Ublue}{RGB}{48,112,190}
\definecolor{Vorange}{RGB}{220,112,45}
\definecolor{Overlap}{RGB}{142,92,170}

\tikzset{
  boundary curve/.style={
    line width=1.15pt
  },
  local circle/.style={
    densely dashed,
    black!75,
    line width=.75pt,
    postaction={decorate},
    decoration={markings,
      mark=at position .20 with {\arrow{>}}
    }
  },
  sign label/.style={
    fill=white,
    fill opacity=.88,
    text opacity=1,
    rounded corners=1pt,
    inner sep=1.4pt,
    font=\small
  }
}

\resizebox{0.75\linewidth}{!}{%
\begin{tikzpicture}[x=0.88cm,y=0.88cm]

\begin{scope}[scale=1]
  \clip (-5.35,-3.35) rectangle (5.35,3.35);

  \path[fill=Ublue,fill opacity=.20]
    (-5.35,3.35) -- (-5.35,3.35) --
    plot[domain=-5.35:5.35,samples=160,smooth]
      (\x,{-0.038*(\x)^3+0.62*\x})
    -- (5.35,3.35) -- cycle;

  \path[fill=Vorange,fill opacity=.20]
    (-5.35,-3.35) -- (-5.35,0) --
    plot[domain=-5.35:5.35,samples=160,smooth]
      (\x,{0.012*(\x)^3+0.12*\x})
    -- (5.35,-3.35) -- cycle;

  \draw[Ublue,boundary curve]
    plot[domain=-5.35:5.35,samples=180,smooth]
      (\x,{-0.038*(\x)^3+0.62*\x});

  \draw[Vorange,boundary curve]
    plot[domain=-5.35:5.35,samples=180,smooth]
      (\x,{0.012*(\x)^3+0.12*\x});


  \draw[->,Ublue,line width=1.15pt]
    (-4.70,1.031) -- (-4.25,.282);

  \draw[->,Ublue,line width=1.15pt]
    (1.50,.802) -- (1.95,.927);

  \draw[->,Vorange,line width=1.15pt]
    (-1.45,-.211) -- (-1.90,-.310);

  \draw[->,Vorange,line width=1.15pt]
    (4.55,1.676) -- (4.10,1.319);

  \draw[black!55,thin] (-5.35,-3.35) rectangle (5.35,3.35);

  \node[sign label,text=Ublue] at (0,2.55) {$U$};
  \node[sign label,text=Vorange] at (0,-2.40) {$V$};

  \coordinate (p1) at (-3.1623,-0.7589);
  \coordinate (p2) at (0,0);
  \coordinate (p3) at (3.1623,0.7589);

  \foreach \p/\lab in {p1/1,p2/2,p3/3}{
    \draw[local circle]
      ($ (\p)+(20:.46) $)
      arc[start angle=20,end angle=380,radius=.46];
    \fill[black] (\p) circle[radius=1.55pt];
  }

  \node at (-3.1623+0.3,-0.7589-0.6) {-1};
  \node at (0+0.3,0-0.6) {\phantom{-}1};
  \node at (3.1623+0.3,0.7589-0.6) {-1};

\end{scope}

\begin{scope}[shift={(-3.55,2.08)},scale=.58]

  \fill[white,fill opacity=.96]
    (-2.,-1.72) rectangle (2.,1.72);

  \draw[black!45,rounded corners=2pt,line width=.6pt]
    (-2.,-1.72) rectangle (2.,1.72);

  \path[fill=Ublue,fill opacity=.20]
    (0,-1.75) rectangle (1.95,1.75);

  \path[fill=Vorange,fill opacity=.20]
    (-1.95,0) rectangle (1.95,1.75);

  \draw[Ublue,boundary curve]
    (0,-1.75) -- (0,1.75);

  \draw[Vorange,boundary curve]
    (-1.95,0) -- (1.95,0);

  \draw[->,Ublue,line width=1.05pt]
    (0,.78) -- (0,.20);

  \draw[->,Vorange,line width=1.05pt]
    (-.80,0) -- (-.15,0);

  \draw[local circle]
    (20:.62)
    arc[start angle=20,end angle=380,radius=.62];

  \fill[black] (0,0) circle[radius=1.4pt];

  \node[font=\scriptsize,text=Ublue]
    at (1.25,-1) {$\mathbb H_1$};

  \node[font=\scriptsize,text=Vorange]
    at (-1.18,1) {$\mathbb H_2$};

\end{scope}

\end{tikzpicture}%
}

\caption{Brouwer degree and intersection number in an example. The arrows on
$\pdv U$ and $\pdv V$ indicate their orientations, with the shaded
half-space lying to the left. Inset: A counter-clockwise loop enters first
$\mathbb H_1$, then $\mathbb H_2$. This normalizes the local contributions to
the Brouwer degree.}
\label{fig:brouwer-degree-intersection}
\end{figure}

\begin{example}
\label{ex:intersections}
{\rm Fix $d=2$ and consider two transverse half-spaces $U,V$ whose boundaries are smooth curves with only finitely many transverse intersection points. For $\delta=(\delta_U,\delta_V)$, the Brouwer degree $\deg_{\mathrm B}(\delta, \RM^d, 0)$ is the sum of local contributions from small circles around each preimage of $(0,0)$, i.e., the intersections of the boundary curves. The contribution is $1$ if the orientation matches the standard half-spaces and $-1$ if it is opposite, see Figure~\ref{fig:brouwer-degree-intersection}. 
We show in Subsection~\ref{sec:deg_hs} that the Brouwer degree in two dimensions is, up to sign conventions, the intersection number as in \cite{drouot2024bulk}.}
\end{example}

In higher dimensions, one can similarly try to compute the degree of $\delta=(\delta_{\Yy_1},\ldots,\delta_{\Yy_d})$ by determining all points for which the boundaries of all half-spaces $\Yy_1,\ldots,\Yy_d$ intersect simultaneously. In the special case where boundaries are smooth and those intersection points are isolated and transverse, we then again read off the degree immediately from the local orientations. 
Again in Subsection~\ref{sec:deg_hs}, we prove additivity and construct examples of any degree $n$.

Our last main result, in Section~\ref{sec:kubo}, is a proof of the bulk and edge Kubo formulas for $\Xx = \RM^d$. Here the reason why we work with uniform Roe $C^*$-algebras, rather than their non-uniform counterparts, becomes apparent. Either of those versions can be used for the purpose of bulk-edge correspondence on the level of K-theory and the non-uniform Roe algebras have better coarse-geometric properties and their $K$-theory groups are simpler (see, for example, \cite{Roe2003,Roe1996}). However, the \textit{Schwartz} subalgebra $\Ss^1_u(\Xx)$ of locally trace-class rapidly decaying elements is dense and spectrally invariant only for the uniform Roe algebras, hence it can be used to define numerical invariants via the Kubo formula. 

The generalized Kubo cocycle associated to half-spaces $\Yy_1,\ldots,\Yy_d$ is a multilinear functional which takes as input operators $a_0,\ldots,a_d$ in the Schwartz algebra.
Adopting the convention that a permutation is identified with its parity, we write
\[\eta_{\Yy_1,\ldots,\Yy_d}(a_0, \dots, a_{d}) = \sum_{\sigma \in S_d} (-1)^{\sigma} \Tr\left(a_0 \prod_{i=1}^d [F_{\sigma(i)}, a_i]\right),\]
where it is convenient to use the switch functions $F_{i} = 2 P_{\Yy_i}-1$. Similarly one can define an edge Kubo cocycle for the first $d-1$ half-spaces.

\begin{theorem}[Kubo Formula]\label{thm:kubo_intro}
Suppose $\Xx = \RM^d$, $d>1$, and $\Yy_1,\dots, \Yy_d$ are polynomially coarsely transverse half-spaces.

In even dimension $d$, for every projection $p\in \Ss^1_u(\Xx)^+$  in the unitization of the Schwartz algebra, we have
\begin{equation}
\label{eq:kubo_bulk_index_intro}
 \langle [p]_0-[s(p)]_0,  [D_{\Yy_1,\ldots,\Yy_d}]\rangle= \Xi_d \eta_{\Yy_1,\ldots,\Yy_d}(p,\ldots,p) \in \ZM
\end{equation}
where $\Xi_d$ is the normalization constant of Theorem~\ref{th:kubozd} and $s(p)$ denotes the scalar part of the projection $p$. 

If $u\in \Ss^1_u(\Zz_d\subset \Xx)^+$ is a unitary with $[u]_1=\pdv_{\Yy_d} ([p]_0)$ then in addition
$$\langle [u]_1, [D_{\Yy_1,\ldots,\Yy_{d-1}}]\rangle=\Xi_{d-1} \eta_{\Yy_1,\ldots,\Yy_{d-1}}(u^*,u,u^*,\dots,u)\in \ZM.$$
The analogous formulas hold for odd $d$ with the roles of projections and unitaries reversed.
\end{theorem}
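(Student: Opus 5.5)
The plan is to reduce the general Kubo formula to the standard one by combining Theorem~\ref{thm:degree_intro}, a degree formula for the cocycle, and Theorem~\ref{th:bulk_edge_intro}. I take as given Theorem~\ref{th:kubozd}, the standard Kubo formula on $\RM^d$: for the standard half-spaces $\HM_1,\ldots,\HM_d$ and $p\in\Ss^1_u(\Xx)^+$,
\[\langle [p]_0-[s(p)]_0,[D_{\mathrm{std}}]\rangle=\Xi_d\,\eta_{\HM_1,\ldots,\HM_d}(p,\ldots,p).\]

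\textbf{Bulk formula.} First I check that $\delta=(\delta_{\Yy_1},\ldots,\delta_{\Yy_d})$ is a continuous Dirac-type function; polynomial coarse transversality should give this. Theorem~\ref{thm:degree_intro} then yields
\[[D_{\Yy_1,\ldots,\Yy_d}]=\deg(\delta)[D_{\mathrm{std}}],\]
so the left side of \eqref{eq:kubo_bulk_index_intro} equals $\deg(\delta)$ times the standard pairing. It remains to prove the cocycle identity
\[\eta_{\Yy_1,\ldots,\Yy_d}(p,\ldots,p)=\deg(\delta)\,\eta_{\HM_1,\ldots,\HM_d}(p,\ldots,p).\]
I would prove this at the level of cocycles, not via K-theory. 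Both sides are cyclic cocycles on $\Ss^1_u(\Xx)$. The plan has four parts.
\begin{enumerate}
\item Deform the switch functions $F_i=2P_{\Yy_i}-1$ to the smooth functions $g_i=\chi(\delta_{\Yy_i})$, with $\chi$ a smooth odd sign-like function. The difference $F_i-g_i$ is supported near $\Zz_i$.
\item Show that changing a switch function by a perturbation supported near $\Zz_i$ changes $\eta$ only by a term that vanishes. One expands the trace of products of commutators and uses polynomial transversality to get trace-class estimates for operators localized near intersections of boundaries; the polynomial growth is beaten by the rapid decay of Schwartz elements.
\item Move within this class of admissible $F$'s, via a homotopy of Dirac-type functions $\RM^d\to\RM^d$ that are nonvanishing outside a compact set, to $F\circ\phi$ for the standard $F$ composed with a map of degree $\deg(\delta)$ on large spheres.
\item Finally, use additivity: split the sphere map into $\deg(\delta)$ standard pieces localized in disjoint regions, and use translation/rotation invariance of $\eta_{\HM}$ for elements of the Schwartz algebra.
\end{enumerate}
A cleaner alternative is to show that $\eta_{\Yy}$ depends only on the homotopy class of $F$ among admissible switch maps. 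Its derivative along a homotopy is a coboundary, so it vanishes on projections.

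\textbf{Edge formula.} Theorem~\ref{th:bulk_edge_intro} gives
\[\langle[u]_1,[D_{\Yy_1,\ldots,\Yy_{d-1}}]\rangle=(-1)^d\langle[p]_0-[s(p)]_0,[D_{\Yy_1,\ldots,\Yy_d}]\rangle.\]
The right-hand side is known by the bulk formula, so it remains to compute the edge cocycle directly. Rather than transfer a cocycle identity, I would show that the edge pairing itself equals $\Xi_{d-1}\eta_{\Yy_1,\ldots,\Yy_{d-1}}(u^*,u,\ldots)$. To do this, I localize to the boundary $\Zz_d$, which is coarsely a copy of $\RM^{d-1}$ near which $\Ss^1_u(\Zz_d\subset\Xx)$ is supported. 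There I apply a relative version of the above argument: the $(d-1)$-dimensional Fredholm index of a Toeplitz-type operator built from the switch functions is computed by the Connes–Fedosov trace formula. This requires that the product of the $d-1$ commutators $[F_i,u]$ be trace class. That follows from transversality of $\Zz_d$ with $\Zz_1,\ldots,\Zz_{d-1}$ together with the decay of $u-1$ away from $\Zz_d$. The odd case is identical with the roles of projections and unitaries exchanged.

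\textbf{Main obstacle.} The analytic core is the trace-class estimates for products of commutators $[F_i,a]$ under only \emph{polynomial} coarse transversality, and the fact that $\eta$ is invariant under deformation of the switch functions, so that the Connes-type index formula $\mathrm{Ind}=\Xi\,\eta$ holds for general half-spaces. Here I would first attempt a direct estimate: bound $\Tr$ by $\sum_x$ of the product of distances to each $\Zz_i$, and use transversality to control the volume of $\bigcap_i N_r(\Zz_i)$ polynomially in $r$.
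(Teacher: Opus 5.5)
\textbf{There is a genuine gap, and the more serious one is in the edge part.}

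You assume $\Zz_d$ is coarsely a copy of $\RM^{d-1}$ and then invoke ``a relative version'' of the bulk argument plus the Connes--Fedosov formula. None of this is available.
\begin{enumerate}
\item[(a)] $\Zz_d$ is the coarse boundary of an arbitrary half-space and need not be coarsely $\RM^{d-1}$. For $\Yy_d$ a union of disjoint cones as in Example~\ref{ex:sph_caps}, it has several ends.
\item[(b)] The paper has no degree theorem for edge classes in $KK_{d-1}(C^*_u(\Zz_d\subset\Xx),\CM)$ and no standard edge model to reduce to.
\item[(c)] A Fedosov-type trace formula computes the pairing with the \emph{Connes} cocycle of the single Clifford-valued sign of $D_{\Yy_1,\ldots,\Yy_{d-1}}$. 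It does not compute the pairing with the Kubo cocycle built from the separate switch functions $F_1,\ldots,F_{d-1}$.
\end{enumerate}
For $d-1\geq 2$, relating these two cocycles is exactly the hard step. The paper can only do it for the flat standard half-spaces of $\ZM^d$, by averaging over translations and limit operators (Appendix~\ref{sec:kubozd}), and nothing comparable exists on a fixed curved edge.

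The missing idea is to deform the edge itself. The paper homotopes the \emph{whole} Dirac-type function $f^t$, including $f_d^t$ and hence the edge $\Zz_d^t$, to a model $f^\infty$ of the same degree from Lemma~\ref{lemma:finite_mult}. It follows the edge class along by the unitaries $u_t=e^{2\pi\imath\rho_\Delta(\hat{H}_t)}$ of an interface Hamiltonian localized at $\Zz_d^t$, with uniform relative Schwartz bounds (Lemma~\ref{lem:technical_homotopy}). It then shows that $t\mapsto\eta_{\underline{F}^t}(u_t^*,u_t,\ldots)$ is constant, and closes with bulk-edge correspondence at both endpoints.

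\textbf{In the bulk, your reduction is legitimate but the proof of the key identity is incomplete.} Reducing via Theorem~\ref{thm:degree_multiplicity} to
$\eta_{\Yy_1,\ldots,\Yy_d}(p,\ldots,p)=\deg(\delta)\,\eta_{\HM_1,\ldots,\HM_d}(p,\ldots,p)$
is a valid alternative to the paper, which proves the Kubo formula for general $f$ and obtains this identity as a corollary. Your argument for the identity fails at the two points that matter.
\begin{enumerate}
\item[(i)] In step (3), homotopies through maps ``nonvanishing outside a compact set'' do not control the pairing, because a coarse coboundary need not be a cyclic coboundary on $\Ss^1_u$. You need uniform polynomial properness, uniform Lipschitz bounds and bounded $t$-derivatives, so that the transgression cochains are uniformly polynomially bounded. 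This is why Lemma~\ref{lem:homogeneous2} reparametrizes the rescaling homotopy by $\tau(t,x)$ and treats $t=1,\infty$ separately.
\item[(ii)] In step (4), $\varphi_F=2^d\,1\wedge P_{\Yy_1}\wedge\cdots\wedge P_{\Yy_d}$ is only multilinear. Splitting into $n$ sector pieces therefore also produces $n^d-n$ cross terms, each a degree-zero configuration, which you never address. Moreover, the pieces are M\"obius-type cone configurations, not translates or rotations of $\HM_1,\ldots,\HM_d$, so translation/rotation invariance does not identify them with the standard cocycle.
\end{enumerate}
Given the paper's homotopy invariance, your splitting could probably be completed. Splitting a degree-zero configuration with one piece of each orientation gives $c_0=(2^d-2)c_0$ for the common degree-zero value $c_0$, so $c_0=0$. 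That would be a bulk proof avoiding the standardization map. As written, though, you have neither this nor the paper's evaluation device. For models whose standardization map $s_f$ is uniformly finite-to-one, the homomorphism $\Psi_{s_f}$ into the stabilized standard algebra on $\ZM^{d+1}$ makes both pairings natural and reduces everything to Theorem~\ref{th:kubozd}.

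Finally, Theorem~\ref{th:kubozd} is stated only on $\ell^2(\ZM^d)$. Using it on $\RM^d$ with a general module also requires the corner reduction of Proposition~\ref{prop:kubo_cts_discrete}.
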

As these numerical formulas reproduce the bulk and edge topological invariants, using the Kubo formulas for non-trivial half-spaces one obtains a relative factor 
$$\eta_{\Yy_1,\ldots,\Yy_d}(p,\ldots,p)=\deg(\delta_{\Yy_1},\ldots,\delta_{\Yy_d}) \,\eta_{\mathbb{H}_1,\ldots,\mathbb{H}_d}(p,\ldots,p)$$
compared to the standard half-spaces $\mathbb{H}_i$. In two dimensions we therefore recover the relation with the intersection number \eqref{eq:sigmaUV} via Example~\ref{ex:intersections}.

\begin{example}
{\rm In two dimensions the relation between bulk and edge conductance as shown in \cite{drouot2024bulk} is then a consequence of the two separate numerical formulas. For lattice models, the Mayer-Vietoris boundary map always sends the K-theory class of the Fermi projection $[p]_0$ to that of an edge unitary $[u_\Delta]_1$, and the two invariants related by Theorem~\ref{th:bulk_edge_intro} are
$$\langle [p]_0, [D_{\Yy_1,\Yy_2}] \rangle = \sigma_{\Yy_1,\Yy_2} = -2\pi \imath \Tr(p[[P_{\Yy_1},p],[P_{\Yy_2},p]])$$
and
$$\langle [u_\Delta]_1, [D_{\Yy_1}] \rangle = -\Tr(u_\Delta^* [P_{\Yy_1}, u_\Delta]).$$
Using the functional calculus of a half-space Hamiltonian $\hat{H}$, we recover the latter expression, yielding the usual expression for the edge conductance of the form \cite[Section 6]{LudewigThiang20}
$$\langle [u_\Delta]_1, [D_{\Yy_1}] \rangle = \imath \pi \Tr(\rho'_\Delta(\hat{H}) [P_{\Yy_1}, \hat{H}]),$$
where $\rho$ is a smooth switch function.
Some more examples of bulk-edge correspondence and bulk-interface correspondence including unbounded Hamiltonians are given in Section~\ref{ssec:bec_examples}.}$\diamond$
\end{example}

Our Theorem~\ref{thm:kubo_intro} has substantial overlap with  \cite{LudewigThiang25}, where it is shown using techniques from coarse cohomology that the Kubo formula produces an integer in the case of finite-propagation locally trace-class operators on a discrete space $\Xx$. However, as pointed out by the authors, the Kubo formula in the physically relevant setting of projections with faster-than-polynomial decay cannot be derived from their results immediately, as K-theory classes do not generally admit representatives with finite-propagation. While we do also use some notions of coarse cohomology for convenience, our proof follows a rather elementary approach: The Kubo formula  for the case of the standard half-spaces of $\ZM^d$ can be shown to hold via a relatively simple modification of the arguments used for the trace-per-unit-volume version \cite{Kubota17}. For completeness and to fix normalizations, we give a proof in Appendix~\ref{sec:kubozd}. We then show that in the physically relevant case of subsets of $\RM^d$, one can reduce the Kubo formula for arbitrary collections of half-spaces to this special case. In particular, using homotopy it is enough to prove the Kubo formula for one example of any given degree. 

This strategy may also be useful in a more general setting: for example, to prove formulas for the $\ZM_2$-invariants of the real Altland-Zirnbauer classes  \cite{SchulzBaldes2015Z2, Drouot2026GeometricBulkEdge, Kitaev09}, which cannot be computed by Kubo-style cocycles. Indeed, our KK-theoretic approach to bulk-edge correspondence would extend to real symmetry classes with minimal modifications (compare e.g., \cite{AlldridgeMaxZirnbauer2020}). 
Another possible application would be to study mobility-gapped topological insulators (see e.g., \cite{BellissardVanElstSchulzBaldes1994, ProdanSchulzBaldes2016,ProdanSchulzBaldes2016Odd,Shapiro2020}). In that case, one deals with spectral projections with only semi-uniform exponential decay, which means they are outside the scope of Roe $C^*$-algebras. However, one can still use index theory and cyclic cohomology more directly. In particular, the part of our argument which deforms arbitrary Dirac-type functions to standard representatives will be applicable in that setting, as the index pairing still has the required homotopy invariance \cite{RichterSchulzBaldes2001, BolsSchenkerShapiro2023, Stoiber2025SpectralLocalizer}.

\section{Roe Algebras and KK-theory Preliminaries}
\label{sec:roe}

In this section, we develop the machinery of uniform Roe $C^*$-algebras in a way that handles discrete and continuous metric spaces simultaneously.

\subsection{Roe Algebras and Coarse Geometry}

Throughout this paper, $\Xx$ will denote a proper metric space with bounded geometry. Later, we will sometimes need to assume polynomial growth as well.
\begin{definition}
\label{def:poly_growth}
Let $\Xx$ be a proper metric space with a coarsely dense uniformly discrete subset $\Lambda\subset \Xx$.
We say $\Xx$ has bounded geometry if for any $r>0$, there exists some $N_r>0$ such that
\begin{equation}\label{eqn:bounded_geometry}
\sup_{x \in \Lambda} \#|B_r(x)\cap \Lambda| \leq N_r.
\end{equation}

We say $\Xx$ has polynomial growth if there are $C, \nu > 0$ such that
\begin{equation}\label{eqn:poly_growth}
\sup_{x \in \Lambda} \#|B_r(x)\cap \Lambda| \leq C (1+r^\nu), \qquad \forall r>0.
\end{equation}
\end{definition}
Note that these properties are independent of the choice of $\Lambda$ taken, so we take them as properties of $\Xx$.

\begin{definition}
A geometric module $\Hh_\Xx$ is a separable Hilbert space with a nondegenerate representation $\rho_0: C_0(\Xx) \to \Bb(\Hh_\Xx)$.
We use the same notation for functions and images under $\rho_0$.
\end{definition}
The representation of $C_0(\Xx)$ extends to a projection-valued measure. Consequently, one can also consider multiplication by arbitrary Borel functions. We will say that $\Hh_\Xx$ is \textit{coarsely faithful} if the support of this measure is coarsely dense. We consider general geometric modules for considerable flexibility; the key examples to keep in mind throughout this paper are $\Xx=\RM^d$ with the two coarsely faithful geometric modules $L^2(\RM^d)$ and $\ell^2(\ZM^d)$ with the latter representation given by restriction to $\ZM^d$.

Let us recall the definitions of the Roe algebra and uniform Roe algebra associated to a geometric module:
\begin{definition}
\label{def:roe_algebra}
\leavevmode\newline
\vspace{-0.5cm}
\begin{enumerate}
\item[(i)] 
An operator $T\in \Bb(\Hh_{\Xx})$ is said to have finite propagation if there exists some constant $\rho>0$ such that 
\[f T g =0\]
for all $f,g\in C_0(\Xx)$ with $\mathrm{dist}(\mathrm{supp}(f),\mathrm{supp}(g))> \rho$. The infimum of all $\rho$ is called the propagation range $\mathrm{prop}(T)$.
\item[(ii)] 
We say that  $T\in \Bb(\Hh_{\Xx})$ is locally compact if $f T, T f$ are compact for every $f\in C_0(\Xx)$.
\item[(iii)]
The Roe algebra $C^*(\Xx, \Hh_\Xx)$ is the norm-closure of the locally compact operators of finite propagation on $\Hh_\Xx$.    
\end{enumerate}
\end{definition}

\begin{definition}[{\cite{Spakula2009}}]
\label{def:uniform_roe}
\leavevmode\newline
\vspace{-0.5cm}
\begin{enumerate}
\item[(i)]
A family of operators $(T_i)_{i\in I}$ on $\Hh_\Xx$ is called uniformly approximable by finite-rank operators if for every $\epsilon>0$ there exists some $N\in \NM$ such that \[\sup_{i\in I}\inf_{\substack{S\in \Bb(\Hh_\Xx)\\\mathrm{rank}(S)\leq N}} \|T_i-S\| \leq \epsilon.\]
\item[(ii)]
We say that $T\in \Bb(\Hh_{\Xx})$ is uniformly locally compact  if for every $r>0$, the family
\[\{\chi_{B_r(x)} T, T\chi_{B_r(x)}: \, x\in \Xx\}\]
is uniformly approximable by finite-rank operators.
\item[(iii)] Let $\CM_u(\Xx,\Hh_\Xx)$ be the algebra of uniformly locally compact operators of finite propagation on $\Hh_\Xx$. Then the uniform Roe algebra $C^*_u(\Xx, \Hh_\Xx)$ is the norm-closure of $\CM_u(\Xx,\Hh_\Xx)$.
\end{enumerate}
\end{definition}
As explained in the introduction, we will prefer to use the uniform Roe algebra, which is typically a proper subalgebra of the Roe algebra, as the non-uniform Roe algebra permits arbitrarily large local (on-site) degrees of freedom.

\begin{definition}\label{defn:relative_roe}
Let $\Xx$ be a proper metric space.
\begin{enumerate}
\item[(i)] The $R$-thickening $(A)_R$ of a set $A\subset \Xx$ is defined as
\[(A)_R = \bigcup_{x\in A} B_R(x).\]

\item[(ii)] A non-empty closed subset $\Zz \subseteq \Xx$ is called a coarse boundary of a set $\Yy\subset \Xx$ if there exists some constant $C>0$ such that \[\Zz \subseteq (\Yy)_C \cap (\Yy^c)_C\] and for all $R > 0$ there exists $S(R)$ such that 
\[(\Yy)_R \cap (\Yy^c)_R \subseteq (\Zz)_{S(R)}.\]
\item[(iii)] A half-space $\Yy\subset \Xx$ shall be a set which has a coarse boundary and for which $\Yy$ and $\Yy^c$ both have non-empty interior. 
\end{enumerate}
\end{definition}
Note that our definition of half-space is very broad. In particular, $\Yy$ is permitted to be any proper closed subset of $\RM^d$ with nonempty interior, since its topological boundary is also a coarse boundary.
\begin{definition}
We say that $T$ is supported near $\Zz\subset \Xx$ if there exists some $R \geq 0$ such that for every $f\in C_0(\Xx)$ with $\supp(f) \cap (\Zz)_R=\varnothing$, we have 
\[f T = 0 = Tf.\]
Let $\CM_u(\Zz\subset \Xx, \Hh_\Xx)$ denote the set of finite propagation, uniformly locally compact operators supported near $\Zz$. The relative Roe algebra of a subset $\Zz$, denoted $C^*_u(\Zz \subset \Xx, \Hh_\Xx)$ is the operator norm closure of $\CM_u(\Zz\subset \Xx,\Hh_\Xx)$. 
\end{definition}
The relative Roe algebras are closed two-sided ideals in $C^*_u(\Xx, \Hh_\Xx)$.
If a set $\Zz$ is a coarse boundary of a half-space $\Yy$ then the relative Roe algebras satisfy the coarse excision condition
\begin{equation}
\label{eq:coarse_excision}
C^*_u(\Zz\subset \Xx, \Hh_\Xx)=C^*_u(\Yy^c \subset \Xx, \Hh_\Xx)\cap C^*_u(\Yy\subset \Xx, \Hh_\Xx)
\end{equation}
for any geometric module $\Hh_\Xx$.
Indeed, by construction, an operator is supported near both $\Yy$ and $\Yy^c$ if and only if it is supported near $\Zz$.
Hence, $C^*_u(\Yy \subset \Xx,\Hh_\Xx) C^*_u(\Yy^c \subset \Xx,\Hh_\Xx) \subset C^*_u(\Zz \subset \Xx,\Hh_\Xx)$, which gives the desired containment.

\subsection{KK-theory for Non-Separable Algebras}
For the purposes of bulk-edge correspondence, we will need to  express boundary maps in $K$-theory as Kasparov products. For separable $C^*$-algebras this is standard \cite[Section 3]{jensen2012elements}, as in that case any semisplit extension $E$ provides the data to define an extension class $[\Ext_E]\in KK_1(A, B)$ whose Kasparov product with $K_i(A)\simeq KK_i(\CM, A)$ coincides up to sign with the boundary maps. 

However, the uniform Roe algebra $C^*_u(\Xx, \Hh_\Xx)$ for infinite $\Xx$ with coarsely faithful Hilbert module is not separable and in general not even $\sigma$-unital.
A separable algebra is one with a countable norm-dense subset and a $\sigma$‑unital algebra is one that contains a countable approximate unit.
As noted before (e.g., \cite[Remark 1.5]{SpakulaWillett2013}) the usual Kasparov product need not be well-defined and the same problem extends to other constructions of KK-theory such as extension classes. In this subsection, we recall some standard notions of KK-theory and describe how to handle the non-separable case.

We take all $C^*$‑algebras to be graded and complex, ideals to be closed and two‑sided, and homomorphisms to preserve gradings.
With these conventions in mind, let us define our main object of study:

\begin{definition}
Fix two graded $C^*$-algebras $A,B$. 
For a Hilbert $B$-module $E$, let $\Ll_B(E)$ denote the algebra of $B$-linear adjointable operators on $E$, and $\KM_B(E)$ denote the algebra of compact operators on $E$ as a right Hilbert $B$-module.

A $KK$‑cycle $(E,\phi,F)$ for $A, B$ consists of
\begin{enumerate}
    \item[(i)] a countably generated graded Hilbert $B$‑module $E$;
    \item[(ii)] a graded $*$‑homomorphism $\phi:A\to \Ll_B(E)$;
    \item[(iii)] an odd adjointable operator $F\in \Ll_B(E)$ 
   satisfying the compactness relations
   \[\{[F,\phi(a)],\ (F^2-1)\phi(a),\ (F-F^*)\phi(a)\} \subset \KM_B(E) \quad \hbox{ for all } a\in A\]
   with the graded commutator $[F,\phi(a)]$.
\end{enumerate}
A $KK$-cycle is \emph{degenerate} if the expressions in (iii) are not just in $\KM_B(E)$ but vanish identically for every $a\in A$.

\noindent Two $KK(A,B)$-cycles are \emph{homotopic} if there exists a $KK(A,C([0,1],B))$-cycle whose evaluations at $0$ and $1$ are unitarily equivalent to the respective cycles.
\noindent With the direct sum, homotopy classes of $KK$-cycles form an abelian group $KK(A,B)$.
\end{definition}
The degenerate cycles are the neutral element, as every degenerate cycle is homotopic to the $0$-module \cite[Lemma 17.2.3]{blackadar1998k}. Let us also note that KK-cycles which are the same up to unitary equivalence, compact perturbation, operator homotopy or homology are also homotopic, possibly after adding degenerate cycles \cite[Section 17.2]{blackadar1998k}. 

One does not need to impose any separability or $\sigma$-unitality conditions to obtain a group. We will nevertheless want to use a variant of KK-theory due to Skandalis \cite{Skandalis1988}, which is better behaved for non-separable algebras. For this, we first recall the functoriality of KK-groups:
\begin{remark}({\cite{Skandalis1984}})
\label{rmk:pullback}
{\rm 
Let $A,B,C$ and $D$ be graded $C^*$-algebras.
For a graded $*$-homomorphism $\phi:A\to B$, the pullback is given by
\[\phi^*:KK(B,C)\longrightarrow KK(A,C),\qquad [(\Hh,\pi,F)]\longmapsto[(\Hh,\pi\circ\phi,F)],\] 
while the pushforward is given by
\[\phi_*:KK(C,A)\longrightarrow KK(C,B),\qquad [(\Hh,\pi,F)]\longmapsto[(\Hh\hat{\otimes}_\phi B,\pi\hat{\otimes}_\phi \one,F\hat{\otimes} \one)].\]
These operations are well-defined on the level of $KK$-cycles and preserve homotopies; no additional separability or $\sigma$-unitality assumptions on $A,B,C$ are needed.

If in addition $D$ is a $\sigma$-unital, then one has the exterior product
\[\tau_D:KK(A,B)\longrightarrow KK(A\hat{\otimes}D,B\hat{\otimes}D),\qquad [(\Hh,\pi,F)]\longmapsto[(\Hh\hat{\otimes} D,\pi\hat{\otimes} \one,F\hat{\otimes} \one)].\]
Here, the $\sigma$-unitality of $D$ is used to ensure that the external tensor product $\Hh\hat{\otimes} D$ is countably generated. 
}
$\diamond$
\end{remark}

Since we can take pullbacks and pushforwards under inclusion maps, for general graded $C^*$-algebras $A,B$ one can then define \cite{Skandalis1988,AntoniniAzzaliSkandalis2020}
\[KK^{\mathrm{sep}}(A,B):= \varprojlim_{\substack{ A'\subset A \\ A' \text{ separable}}} KK(A', B) \simeq \varprojlim_{\substack{ A'\subset A \\ A' \text{ separable}}} \varinjlim_{\substack{ B'\subset B \\ B' \text{ separable}}}KK(A', B')\]
with the inverse limit over graded separable subalgebras of $A$ and direct limit over all separable graded subalgebras of $B$. 
The first equality is the primary definition; in practice, one uses that for separable $A'$, the direct limit satisfies $KK(A',B)\simeq \varinjlim_{B'\subset B}KK(A', B')$ via a canonical isomorphism, which is important to define the product. 

A class in $KK^{\mathrm{sep}}(A,B)$ is represented by a compatible family of classes in $KK(A',B)$. By functoriality, there is a canonical map $KK(A,B)\to KK^{\mathrm{sep}}(A,B)$ via pullback under the compatible family of inclusions. 

\begin{theorem}[{\cite{Skandalis1988}}]
There exists a product
\[\otimes_B: KK^{\mathrm{sep}}(A,B) \times KK^{\mathrm{sep}}(B,C) \to KK^{\mathrm{sep}}(A,C)\]
which coincides with the usual Kasparov product in the separable case and satisfies associativity, bilinearity,
naturality in both variables, and compatibility with pullbacks and pushforwards.
\end{theorem}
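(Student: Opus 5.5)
The plan is to build the product on $KK^{\mathrm{sep}}$ out of the ordinary separable Kasparov product, using the two limits in the definition. Fix $x\in KK^{\mathrm{sep}}(A,B)$ and $y\in KK^{\mathrm{sep}}(B,C)$; we must produce, for each separable $A'\subset A$, a class $z_{A'}\in KK(A',C)$, and then check that these are compatible under restriction. Using the canonical isomorphism $KK(A',B)\simeq\varinjlim_{B'}KK(A',B')$ for separable $A'$, the component $x_{A'}$ is represented by a class in $KK(A',B')$ for some separable $B'\subset B$. Since the separable subalgebras of $B$ form a directed set, we may enlarge $B'$ whenever needed. The component $y_{B'}\in KK(B',C)$ is likewise represented by a class in $KK(B',C')$ with $C'\subset C$ separable. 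We then put
\[
z_{A'} := \iota_{C'\subset C,*}\bigl(x_{A'}\otimes_{B'} y_{B'}\bigr)\in KK(A',C),
\]
using the ordinary Kasparov product $KK(A',B')\times KK(B',C')\to KK(A',C')$ of separable algebras.

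The main step, and the one I expect to be the real obstacle, is showing that $z_{A'}$ does not depend on the choices of $B'$ and $C'$. Suppose $B'\subset B''$ and $x_{A'}$ is represented by $\xi'\in KK(A',B')$. Then the representative in $KK(A',B'')$ is $\iota_*\xi'$, with $\iota:B'\to B''$ the inclusion. Compatibility of $y$ gives $y_{B'}=\iota^*y_{B''}$. Functoriality of the separable Kasparov product then yields
\[
\iota_*\xi'\otimes_{B''}y_{B''}=\xi'\otimes_{B'}\iota^*y_{B''}=\xi'\otimes_{B'}y_{B'}.
\]
Independence of $C'$ follows because the product commutes with pushforward in the second variable. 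One subtlety has to be handled here: the isomorphism $\varinjlim_{B'}KK(A',B')\to KK(A',B)$ must itself be justified. The idea is that a cycle over $B$ with separable $A'$ involves only countably many elements, hence is defined over a separable $B'$; the same argument applied to homotopies gives injectivity. I would check this via the Kasparov technical theorem. Compatibility in $A'$ (that $z_{A''}$ restricts to $z_{A'}$) is naturality of the separable product under pullback in the first variable.

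Bilinearity is inherited componentwise from the separable product. Naturality and compatibility with pullbacks and pushforwards follow from the identities just used, applied to restricted subalgebras. Associativity reduces, after choosing separable $B'\subset B$, $C'\subset C$ and $D'\subset D$ large enough to represent all three components simultaneously, to associativity of the separable Kasparov product. When $A$ and $B$ are separable, the inverse and direct systems have terminal objects, so $KK^{\mathrm{sep}}$ agrees with $KK$ and the product agrees with the ordinary one.
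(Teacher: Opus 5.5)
Your proposal is correct and is essentially the construction the paper recalls in the remark right after the theorem. That construction pulls $x$ back to a separable $A'$, realizes it over a separable $B'$, realizes $y_{B'}$ over a separable $C'$, takes the ordinary Kasparov product in $KK(A',C')$, pushes forward to $KK(A',C)$, and assembles the results into a compatible family over $A'$. The paper leaves independence of choices, compatibility and associativity to \cite{Skandalis1988}; your functoriality checks are the right ones, once you note that two representatives of $x_{A'}$ may only agree after pushing forward to a further common upper bound $B'''$, which directedness handles.
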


\begin{remark}{\rm
We briefly recall the construction of the product of $x\in KK^{\mathrm{sep}}(A,B)$ and $y\in KK^{\mathrm{sep}}(B,C)$: By the definition of inverse limit, $x$ pulls back to a class $x'\in KK(A',B)$ for any separable $A'\subset A$. 
By definition of the direct limit, for any fixed $A'$ and $x$ there exists some $B'$ such that $x'$ is the image of some class $x''\in KK(A',B')$ under pushforward. 

By an identical construction, $y$ pulls back to a class $y' \in KK(B', C)$, which is the image of some $y''\in KK(B', C')$. 

Finally, one can take the product $x''\otimes_{B'} y'' \in KK(A',C')$. The pushforwards of those products to $KK(A', C)$ give a compatible family in terms of $A'$ and hence an element of the projective limit $KK^{\mathrm{sep}}(A,C)$.

}$\diamond$
\end{remark}

For $n\geq 0$, the higher KK-groups are defined by 
\begin{equation}\label{eqn:higher_kk_def}
KK_n(A,B):=KK(A,B\hat{\otimes}\CM_n). \end{equation}
Here, the graded tensor product is taken with the complex Clifford algebra $\CM_n$, and likewise for $KK^{\mathrm{sep}}$. We take $\CM_0=\CM$, while
$\CM_1=\CM\one\oplus\CM\ep$ is the universal unital graded
$C^*$-algebra generated by an odd self-adjoint unitary
$\ep=\ep^*=\ep^{-1}$. 
The $n$-dimensional Clifford algebra is then given by
\[\CM_n= \CM_1\hat{\otimes}\cdots\hat{\otimes}\CM_1 = \langle \sigma_1,\dots, \sigma_n \rangle; \quad \sigma_i = 1^{\hat{\otimes}(i-1)} \hat{\otimes} \ep \hat{\otimes}  1^{\hat{\otimes}(n-i)}, \quad 1 \leq i \leq n \]
with $n$ factors, i.e. $\CM_n$ is the algebra generated by the anti-commuting odd self-adjoint
unitaries $\sigma_1,\ldots,\sigma_n$.

There are special elements (see Appendix~\ref{sec:index}) $[\hat{\rho}_d]\in KK(\CM_d, \CM_{d\bmod 2})$ which are KK-equivalences. The formal Bott periodicity isomorphism $KK_d^{\mathrm{sep}}(A,B) \simeq  KK_{d\bmod 2}^{\mathrm{sep}}(A,B)$ is given by the Kasparov product product with the class of the exterior product $1_B \hat{\otimes} [\hat{\rho}_d] \in KK(B \hat{\otimes}\CM_d, B \hat{\otimes} \CM_{d \bmod 2})$, and we may use it to tacitly identify these groups.

To define a general product
\begin{equation}
\label{eq:KK_product}
KK^{\mathrm{sep}}_i(A,B) \times KK_j^{\mathrm{sep}}(B,C) \to KK^{\mathrm{sep}}_{j+i}(A,C)
\end{equation}
one first applies $\tau_{\CM_i}: KK^{\mathrm{sep}}(B,C)\to KK^{\mathrm{sep}}(B\hat{\otimes} \CM_i, C\hat{\otimes} \CM_i)$ on the second factor to obtain a Kasparov product taking values in $KK^{\mathrm{sep}}(A, (C\hat{\otimes}\CM_j)\hat{\otimes}\CM_i)$. The maps $\tau_{\CM_i}$ are natural with respect to Kasparov products and are compatible with Bott periodicity, 
therefore they are isomorphisms and one still obtains an associative product even if the degrees in \eqref{eq:KK_product} are considered in $\ZM/2\ZM$ using Bott periodicity.

In this paper, we will take Kasparov products with classes defined by unbounded Dirac-type operators. We recall the unbounded picture that will be used below.

\begin{definition}[Definition 6 in \cite{Kucerovsky}]
\label{def:unbddkk}
An unbounded $KK(A,B)$-cycle is a triple $(E,\phi,D)$ consisting of a
countably generated graded Hilbert $B$-module $E$, a graded
$*$-homomorphism
\[\phi:A\longrightarrow\Ll_B(E),\]
and an odd, regular, densely defined self-adjoint operator $D$ on $E$,
such that
\begin{enumerate}
\item[(i)]
$\phi(a)(D+\imath)^{-1}\in\KM_B(E)$ for every $a\in A$;
\item[(ii)]
there is a dense $*$-subalgebra $\mathcal{A} \subset A$ for such that $\phi(a)$ maps $\Dom(D)$ to itself and the graded commutator $[D,\phi(a)]$ extends to an operator in
$\Ll_B(E)$ for every $a\in\mathcal{A}$.
\end{enumerate}
\end{definition}

Such a cycle defines a bounded $KK$-cycle and therefore a class in $KK(A,B)$ by applying a normalizing function: If $\Theta:\RM\to[-1,1]$ is an increasing odd continuous function with
\[\lim_{t \to \pm \infty} \Theta(t)=\pm 1,\]
then the $KK(A,B)$-class of $(E,\phi,\Theta(D))$ is independent of $\Theta$. 

For an ungraded $C^*$-algebra $A$, its K-homology groups $K^d(A)=KK_d(A, \CM)$  are defined on Hilbert-$\CM_d$-modules. In our case, we only need to deal with Hilbert modules of the form $\Hh\hat{\otimes}\CM_d$ for $\Hh$ an ungraded separable Hilbert space. 
As a vector space, this is equal to $\Hh\otimes \CM_d$ but with a $\CM_d$-valued inner product, a canonical left and right multiplication by $\CM_d$, and a grading operator inherited from the grading of $\CM_d$. 

\begin{definition}
\label{def:index}
For $i=d\bmod 2$, the index pairing $\langle \cdot , \cdot\rangle: K_{i}(A)\times KK_d(A , \CM) \to \ZM $ is defined by the Kasparov product
\[\langle x, y \rangle = \kappa_i^A(x) \otimes y\]
for fixed isomorphisms $\kappa_i^A: K_{i}(A)\to KK_i(\CM, A)$ and $KK(\CM,\CM_{d+i})\simeq KK(\CM,\CM)\simeq \ZM$.
\end{definition}
Although the product itself is technically computed on $KK^{\mathrm{sep}}$, since we have the isomorphisms $KK(A,\CM)\simeq KK^{\mathrm{sep}}(A,\CM)$ \cite{Skandalis1985ExtensionsSemifiniteFactor} and $KK(\CM, A)\simeq KK^{\mathrm{sep}}(\CM, A)$ by separability of the first factor, we can drop the superscript for the K-theory and K-homology groups. 
The isomorphisms in the index pairing are only canonical up to choices of orientations; we make concrete choices in Appendix~\ref{sec:index}.

\subsection{Dirac-Type Functions}
\label{sec:dirac_ops}
Let $\Xx$ be a proper metric space with bounded geometry and let $\Hh_\Xx$ be a geometric module.
\begin{definition}
\label{def:edge_dirac_type}
Let $\varnothing \neq \Zz\subset \Xx$ be a closed subset.
A Borel function $f:\Xx \to \RM^d$ is of $\Zz$-Dirac-type if 
\begin{enumerate}
   \item[(i)] $f$ is large-scale Lipschitz-continuous, i.e. there is a constant $L$ such that
   \[\norm{f(x)-f(y)}\leq L(1+ d(x,y))\]
   for all $x,y \in \Xx$.
    \item[(ii)] there exists a positive function $\rho\in C_0(\Xx)$ such that \[(1+\abs{f}^2)^{-\frac{1}{2}} (1+ d_\Zz^2)^{-\frac12} \leq  \rho,\]
    with $d_\Zz(x)=\mathrm{dist}(x, \Zz)$.
\end{enumerate}
We say that $f$ is $\Zz$-polynomially proper if one has following the stronger condition.
\begin{enumerate}
    \item[(ii')] For some $x_*\in \Xx$ and $c,C,s>0$ one has $$|f(x)|^2 + d^2_\Zz(x) \geq  C d(x_*,  x)^{2s}$$
for all $x\in \Xx$ with $d(x_*,x)>c$.
\end{enumerate}
We define the associated Dirac operator \[D_f:= \sum_{i=1}^d f_i\hat{\otimes} \sigma_i,\]
with $f_i = M_{f_i}$ identified with an (unbounded) multiplication operator.

\end{definition}

The bulk case is recovered by taking $\Zz=\Xx$ and we will refer to such functions as \emph{(polynomially proper) Dirac-type functions}.
\begin{remark}
{\rm Note that $\Zz$-properness implies that, for any compact set $K$, the preimage of any compact set $f^{-1}(K)$ intersected with any thickening $(\Zz)_R$ of $\Zz$ is bounded. 
If $f$ is continuous, then the notion of $\Xx$-properness is therefore the same as the topological notion, i.e. preimages of compact sets are compact.} $\diamond$
\end{remark}
\begin{lemma}
\label{lemma:dirac_op}
If $f:\Xx\to\RM^d$ is of $\Zz$-Dirac-type, then
\[
 (\Hh_{\Xx} \hat{\otimes} \CM_d,\pi,D_f)
\]
defines an unbounded $KK_d(C^*_u(\Zz\subset \Xx, \Hh_\Xx),\CM)$-cycle where $\pi: a\mapsto a\hat{\otimes}\one_{\CM_d}$. We denote its class by $[D_f]$. 
\end{lemma}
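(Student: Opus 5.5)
We verify the two conditions of Definition~\ref{def:unbddkk} for the Hilbert $\CM_d$-module $E=\Hh_\Xx\hat{\otimes}\CM_d$, with $A=C^*_u(\Zz\subset\Xx,\Hh_\Xx)$.

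\emph{Self-adjointness and regularity.} Since each $f_i$ is a real Borel function, $M_{f_i}$ is self-adjoint on its maximal domain, and the $M_{f_i}$ strongly commute. The Clifford relations give
\[
D_f^2=\sum_i f_i^2\hat{\otimes}\one=|f|^2\hat{\otimes}\one .
\]
Hence $D_f$ is odd and self-adjoint on $\Dom(|f|)\hat{\otimes}\CM_d$. Regularity is automatic, because $\CM_d$ is finite-dimensional and $D_f$ acts as an ordinary self-adjoint operator on the Hilbert space $\Hh_\Xx\otimes\CM_d$. Moreover, $(D_f\pm\imath)^{-1}$ is a multiplication operator by a Clifford-valued function of norm $(1+|f|^2)^{-1/2}$.

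\emph{Compact resolvent.} We first treat the dense set $\CM_u(\Zz\subset\Xx)$: let $a$ be uniformly locally compact, of finite propagation, and supported in $(\Zz)_R$. Since $a=a\,\chi_{(\Zz)_R}$, it suffices to show that
\[
a\,\chi_{(\Zz)_R}(1+|f|^2)^{-1/2}
\]
is compact. On $(\Zz)_R$ we have $d_\Zz\le R$, so condition (ii) yields
\[
(1+|f|^2)^{-1/2}\le \sqrt{1+R^2}\,\rho .
\]
Therefore $\chi_{(\Zz)_R}(1+|f|^2)^{-1/2}=g\,\rho$ for a bounded Borel $g$ commuting with $\rho$, and $a g\rho$ is a norm limit of operators $a g\rho\chi_K$ with $K$ compact, because $\rho\in C_0$. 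Each $a g\rho\chi_K$ is compact, since $g\rho\chi_K=\chi_K g\rho$ and $a\chi_K$ is compact by local compactness of $a$. (Here we use that $\chi_K$ is dominated by some $h\in C_0(\Xx)$, with $a h$ compact.) Compactness then extends to all of $A$ by norm continuity.

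\emph{Bounded commutators.} This is the main point. We take as dense subalgebra $\mathcal A=\CM_u(\Zz\subset\Xx)$, the finite-propagation operators. For $a$ with propagation $r$, the graded commutator is
\[
[D_f,a\hat{\otimes}\one]=\sum_i [f_i,a]\hat{\otimes}\sigma_i ,
\]
since $a$ is even. So it suffices to show that $[f_i,a]$ extends to a bounded operator preserving domains. Here we use a partition of $\Xx$ into Borel sets $U_j$ of diameter at most $1$, indexed by the coarsely dense set $\Lambda$ (for instance, Voronoi-type cells). Write $f_i=\tilde f_i+(f_i-\tilde f_i)$, where $\tilde f_i=\sum_j f_i(x_j)\chi_{U_j}$. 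The large-scale Lipschitz condition gives $\|f_i-\tilde f_i\|_\infty\le 2L$, so that part contributes a bounded commutator. For the piecewise-constant part,
\[
[\tilde f_i,a]=\sum_{j,k}\bigl(f_i(x_j)-f_i(x_k)\bigr)\chi_{U_j}a\chi_{U_k},
\]
where only pairs with $d(U_j,U_k)\le r$ contribute. For those pairs $|f_i(x_j)-f_i(x_k)|\le L(3+r)$. Bounded geometry gives at most $N_{r+2}$ such $k$ per $j$, and vice versa. A Schur-type estimate (split into $N$ block-diagonal pieces by colouring the bounded-degree graph) then bounds the norm by $C\,L(3+r)N_{r+2}\|a\|$.

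Domain preservation follows by the same decomposition. Since $|f|\le|\tilde f|+C$, it suffices to see that $a$ maps $\Dom(\tilde f_i)$ into itself, which follows from $\tilde f_i a=a\tilde f_i+[\tilde f_i,a]$ with the commutator bounded. One checks this on the core of vectors with bounded support and then closes.

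The main obstacle is this bounded-commutator step for merely Borel, large-scale Lipschitz $f$, handled via the discretization and the bounded-geometry Schur bound above.
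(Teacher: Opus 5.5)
Your proposal is correct and follows essentially the same route as the paper: self-adjointness and regularity from $D_f^2=|f|^2\hat{\otimes}\one$, and compactness of $a(D_f+\imath)^{-1}$ because $d_\Zz$ is bounded on the support of $a$, so condition (ii) dominates $(1+|f|^2)^{-1/2}$ there by a multiple of $\rho\in C_0(\Xx)$ and local compactness of $a$ applies; bounded commutators and domain invariance are then checked on $\CM_u(\Zz\subset\Xx,\Hh_\Xx)$. The only difference is that you prove the commutator bound inline, by discretizing $f$ on a bounded-diameter partition and using a colouring/Schur estimate, whereas the paper cites Lemma~\ref{lemma:lipschitz}, which rests on the same matrix-coefficient bound plus Schur test.
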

\begin{proof}
We check conditions (i) and (ii) of Definition~\ref{def:unbddkk}. Since $f_i$ are self-adjoint scalar functions and the Clifford generators anticommute, $D_f$ is self-adjoint with 
\[\Dom(D_f) = \bigcap_{i=1}^d \Dom( {f_i} \hat{\otimes} \sigma_i)=\left(\bigcap_{i=1}^d \Dom( {f_i})\right) \hat{\otimes} \CM_d,\]
and regularity follows from $D_f^2 = \sum_{i=1}^d f_i^2 \hat{\otimes} \one$.
The algebra $C^*_{u}(\Zz\subset \Xx, \Hh_\Xx)$ has the dense subalgebra $\CM_{u}(\Zz\subset \Xx, \Hh_\Xx)$ consisting of elements which are supported in a thickening of $\Zz$. By Lemma~\ref{lemma:lipschitz}, the commutators $[D_f,\pi(a)]$ are bounded for any $a\in \CM_u(\Zz\subset\Xx, \Hh_\Xx)$ and the domain of $D_f$ is preserved.
Moreover, as $a$ is supported near $\Zz$, one can write $a= g \cdot a$ for some bounded function $g$ supported in a thickening of $\Zz$. Therefore, we can express
\[a = (1+d_\Zz^{2})^{-\frac12} \left( (1+d^2_\Zz)^\frac12 g\right) a\]
with a product of two bounded Borel functions. To see local compactness of the resolvent we factorize
\begin{equation}
\label{eq:resolvent_with_boundary_stuff_factored}
(1+D_f^2)^{-\frac12}a= (1+\abs{f}^2)^{-\frac12} a = \left((1+\abs{f}^2)^{-\frac12}(1+d_\Zz^{2})^{-\frac12}\frac{1}{\rho}\right) \left( (1+d^2_\Zz)^\frac12 g\right) (\rho a)
\end{equation}
which is a compact operator since $\rho a$ is a compact operator on $\Hh_\Xx$ ($a$ is locally compact) and the other factors are bounded operators.
\end{proof}

It seems clear from the definition of KK-theory that the class $[D_f]$ depends on $f$ only up to homotopy. For further use, let us state precise sufficient conditions:

\begin{lemma}
\label{lemma:homotopy}
Let $f: [0,1] \times \Xx \to \RM^d$ be a path of Dirac-type functions such that for each $t\in [0,1]$, the slice $f^{t}: \Xx \to \RM^d$ is a $\Zz$-Dirac-type function as in Definition~\ref{def:edge_dirac_type} with $L$ and $\rho$ independent of $t$. If $t\mapsto f^t\rvert_{K}$ is continuous in sup-norm for every compact set $K\subset \Xx$, then the class $[D_{f^{t}}]$ in $KK_{d}(C^*_u(\Zz\subset \Xx, \Hh_\Xx), \CM)$ does not depend on $t$.
\end{lemma}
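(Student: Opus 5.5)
The plan is to build a single $KK$-cycle over $C([0,1])$ that interpolates between $D_{f^0}$ and $D_{f^1}$. Let $\Bb=C([0,1],\CM_d)$ and $E=C([0,1],\Hh_\Xx\hat{\otimes}\CM_d)$, viewed as a Hilbert $\Bb$-module. On $E$, let $C^*_u(\Zz\subset\Xx,\Hh_\Xx)$ act by the constant representation $(\pi(a)\xi)(t)=(a\hat{\otimes}\one)\xi(t)$. Working with unbounded operators on $E$ is awkward, so I will pass to bounded transforms from the start. Set
\[
F_t=\frac{f^t}{\sqrt{1+|f^t|^2}}\cdot\sigma=\sum_i \frac{f^t_i}{\sqrt{1+|f^t|^2}}\hat{\otimes}\sigma_i .
\]
This is the bounded transform of $D_{f^t}$, and for each fixed $t$ the triple $(\Hh_\Xx\hat{\otimes}\CM_d,\pi,F_t)$ represents $[D_{f^t}]$ by Lemma~\ref{lemma:dirac_op}. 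The candidate homotopy is $F=(F_t)_t$ acting pointwise on $E$. Evaluating it at $t=0$ and $t=1$ recovers the two given cycles, so it suffices to show that $(E,\pi,F)$ is a $KK_d(C^*_u(\Zz\subset\Xx,\Hh_\Xx),C[0,1])$-cycle.

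The verification runs in three steps.

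\textbf{Step 1: $F$ is an adjointable operator on $E$.} The operator $F$ is odd and self-adjoint. For this step we need $t\mapsto F_t$ to be strictly continuous. The map $y\mapsto y/\sqrt{1+|y|^2}$ is $1$-Lipschitz, and $f^t\to f^s$ uniformly on compact sets. Hence $F_t\to F_s$ in norm after multiplication by $\chi_K$ for any compact $K$. Combining this with $\|F_t\|\le1$ and a $\chi_K$-truncation of the vectors gives strong continuity.

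\textbf{Step 2: $(F^2-1)\pi(a)$ is compact.} We have $F_t^2-1=-(1+|f^t|^2)^{-1}$. Take $a\in\CM_u(\Zz\subset\Xx)$. The factorization \eqref{eq:resolvent_with_boundary_stuff_factored} writes $(1+|f^t|^2)^{-1/2}a$ as a product of a bounded function of $t$, a fixed bounded operator, and the fixed compact operator $\rho a$. The uniformity of $\rho$ in $t$ keeps the first factor uniformly bounded. To get norm continuity in $t$, I will approximate $\rho a$ by $\chi_K\rho a$, which has small error since $\rho\in C_0$. On $K$ the functions $(1+|f^t|^2)^{-1/2}$ vary norm-continuously in $t$. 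This places $(F^2-1)\pi(a)$ in $C([0,1],\KM)\hat{\otimes}\CM_d=\KM_{\Bb}(E)$, and density in $a$ extends the conclusion to the whole algebra.

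\textbf{Step 3: $[F,\pi(a)]$ is compact.} This is the main obstacle. Here I will use the integral formula
\[
F_t=\frac{2}{\pi}\int_0^\infty D_{f^t}\,(1+\lambda^2+D_{f^t}^2)^{-1}\,d\lambda
\]
together with Lemma~\ref{lemma:lipschitz}. The lemma gives $\|[D_{f^t},a]\|\le C(L,\mathrm{prop}(a))$, and because $L$ is uniform in $t$ this bound is uniform in $t$. The standard estimate then writes $[F_t,a]$ as a norm-convergent integral of terms of the form
\[
(\text{resolvent})\,[D_{f^t},a]\,(\text{resolvent}).
\]
Compactness for each fixed $t$ is already known. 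The remaining task is norm-continuity in $t$ of both the resolvent factors and the commutator factor, after localization by the compact operator $\rho a$ to a compact set $K$. On such a set the functions $f^t$ vary uniformly in $t$, and the integrand decays like $(1+\lambda^2)^{-1}$ uniformly in $t$, so dominated convergence handles the integral.

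If continuity of the commutator term turns out to be problematic, the fallback is a discretization argument. Subdivide $[0,1]$ finely enough that $F_{t}-F_{s}$ has small norm after localization. For nearby parameters, $F_t-F_s$ is then locally compact, using a cutoff at large $|f|$ and uniform $\rho$, so $F_t$ and $F_s$ define the same class by a compact perturbation. Chaining these finitely many equalities across the subdivision gives independence of $t$.
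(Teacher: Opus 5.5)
Your proof is correct in outline but takes a different route from the paper. The paper stays in the unbounded picture. It lets $D_{f^t}$ act pointwise on $C([0,1],\Hh_\Xx\hat{\otimes}\CM_d)$ and uses the common core $C_c(\Xx)\cdot E$ together with pointwise continuity to get a regular self-adjoint operator, via a lemma of van den Dungen--Mesland. It then checks only two things: a uniform-in-$t$ bound on $[D_{f^t},a]$ from Lemma~\ref{lemma:lipschitz}, and norm-continuity of $(D_{f^t}+\imath)^{-1}a$, proved by the same $\epsilon/3$ localization as your Step~2. Compactness of the bounded commutator along the path then comes for free from the unbounded-to-bounded passage over $C([0,1])$. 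Your bounded-transform approach needs no regularity theory for unbounded operators on Hilbert $C([0,1])$-modules. The price is Step~3, where you must prove by hand that $t\mapsto[F_t,a]$ is norm continuous.

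That step is more delicate than your sketch suggests, because $t\mapsto[D_{f^t},a]$ itself is in general \emph{not} norm continuous. On $\Xx=\RM$ take $f^0(x)=x$, $f^t(x)=x+\psi(x-1/t)$ with a fixed Lipschitz bump $\psi$, and let $a$ be convolution with a nonzero $C_c^\infty$ kernel. All hypotheses hold, yet $\|[D_{f^t},a]-[D_{f^0},a]\|$ is a nonzero constant by translation invariance. So you cannot localize the commutator factor on its own; the localization must come from its neighbours. With $R_\lambda=(1+\lambda^2+D_{f^t}^2)^{-1}$, expanding $[D R_\lambda,a]$ gives the three terms
\[
[D,a]R_\lambda,\qquad DR_\lambda D\,[D,a]R_\lambda,\qquad DR_\lambda[D,a]DR_\lambda .
\]
In each term $[D,a]$ is multiplied on the right by $(1+\lambda^2+|f^t|^2)^{-1}$ or by $f^t(1+\lambda^2+|f^t|^2)^{-1}$. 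On a thickening $(\Zz)_R$ containing the support of $[D,a]$, these functions are bounded by $(1+R^2)\rho^2$ and $(1+R^2)^{1/2}\rho$ respectively, uniformly in $t$ and $\lambda$. This uniform decay, not ``localization by $\rho a$'', is what lets you replace $[D,a]$ by $[D,a]\chi_K$ at a cost of $\epsilon$ uniformly in $t$. The cut-off $[D_{f^t},a]\chi_K$ is norm continuous because $f^t$ is on compacts. The remaining factors are uniformly bounded and strongly continuous, which suffices next to a norm-continuous compact factor. Dominated convergence in $\lambda$, with bound $C(1+\lambda^2)^{-1}$, then finishes Step~3.

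Drop the fallback, because its key claim is false. Take the rotations $f^t(x)=O_t x$ on $\RM^2$, where $O_t$ is rotation by angle $t$. All hypotheses hold and $\|F_t-F_s\|$ is small, but $F_t-F_s$ is asymptotically $\bigl((O_t-O_s)\tfrac{x}{|x|}\bigr)\cdot\sigma$, whose modulus does not decay at infinity. Hence $(F_t-F_s)a$ is not compact for, e.g., a nonzero convolution operator $a$. Norm-closeness therefore does not give a compact perturbation. The straight line between $F_s$ and $F_t$ even violates $(F^2-1)a\in\KM$.
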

\begin{proof}
Denote the Hilbert module by $E\coloneqq \Hh_\Xx\hat{\otimes} \CM_d$ for brevity. Define an unbounded operator $\tilde{D}$ on the Hilbert module $C([0,1], E)$ via pointwise multiplication by $\tilde{D}(t)=D_{f^{t}}$. We show that this is an unbounded $KK_{d}(C^*_{u}(\Zz\subset \Xx, \Hh_\Xx),C([0,1]))$-cycle, i.e. a homotopy of unbounded KK-cycles. Note that by local boundedness of $f^t$ and properness of $\Xx$, all $\tilde{D}(t)$ have the common core $E_c=C_c(\Xx)\cdot E$. In addition, since the maps $t\mapsto f^{t}$ become norm-continuous when restricted to any compact subset of $\Xx$, the map $t \mapsto \tilde{D}(t)\psi$ is continuous for all $\psi \in E_c$. By \cite[Lemma~1.15]{vanDenDungenMesland2020}, $\tilde{D}$ is therefore a regular self-adjoint operator on the Hilbert $C([0,1])$-module $C([0,1], E)$. Moreover, the proof of the cited lemma shows that the algebraic tensor product $C([0,1])\otimes_{\mathrm{alg}}E_c$ is a core for $\tilde{D}$.

It remains to show that for every $a \in \CM_u(\Zz\subset \Xx)$ one has
\begin{equation}\label{eqn:commutator_homotopy}
[\tilde{D}, a]\in \mathcal{L}(C([0,1], E))
\end{equation}
and that $\tilde{D}$ has relatively compact resolvent, i.e.,
\begin{equation}\label{eqn:relative_compact_resolvent}
(\tilde{D}+\imath)^{-1} a \in C([0,1], \KM(E)).
\end{equation}
Since $a$ preserves compact support, it maps $C([0,1])\otimes_{\mathrm{alg}}E_c$ into itself, and by Lemma~\ref{lemma:lipschitz} the operator norm of the commutator $[\tilde{D}(t),a]$ is bounded uniformly in $t$. This gives \eqref{eqn:commutator_homotopy}.

For \eqref{eqn:relative_compact_resolvent}, we can factor again as in \eqref{eq:resolvent_with_boundary_stuff_factored} for any $a\in \CM_u(\Zz\subset\Xx,\Hh_\Xx)$, which shows that each fiber $(\tilde{D}(t)+\imath)^{-1} a$ is compact. By an $\frac{\epsilon}{3}$-argument, the locally uniform continuity implies norm-continuity of $t\mapsto (\tilde{D}(t)+\imath)^{-1}g$ for any function $g$ supported in a thickening of $\Zz$. We conclude that $(\tilde{D}(t)+\imath)^{-1} a$ is a norm-continuous path of compact operators.
\end{proof}

\subsection{Half-Spaces and Dirac-Type Functions}
As explained in the introduction, the main motivation for Dirac-type functions is to study Dirac operators defined by signed distance functions. 
The goal of this subsection is to show that the signed distance functions on transverse sets are equivalent to Dirac-type functions under mild assumptions.
We begin with our definition of transversality:
\begin{definition}
\label{def:half-space_dirac}  
The half-spaces $\Yy_1,\dots, \Yy_d\subset \Xx$ are called coarsely transverse relative to some closed set $\Zz\subset \Xx$ if
$$(1+\delta_{\Yy_1}^2+\cdots+\delta_{\Yy_d}^2+d_\Zz^2)^{-1}\in C_0(\Xx)$$
for the signed distance functions
\[\delta_{\Yy_i}(x) =\mathrm{dist}(x, \Xx \setminus \Yy_i)- \mathrm{dist}(x, \Yy_i).\]
They define a  $\Zz$-Dirac-type function $x\in \Xx\mapsto (\delta_{\Yy_1}(x), \dots, \delta_{\Yy_d}(x)) \in \RM^d$ which in turn defines the Dirac operator 
\[D_{\Yy_1,\dots, \Yy_d} = D_{(\delta_{\Yy_1},\ldots,\delta_{\Yy_d})}.\]

We call the half-spaces polynomially coarsely transverse relative to $\Zz$ if $(\delta_{\Yy_1},\ldots,\delta_{\Yy_d})$ is $\Zz$-polynomially proper.
\end{definition}
If $\Zz=\Xx$ we simply speak of (polynomially) coarsely transverse half-spaces.
\begin{remark}
{\rm 
Our condition can be shown to be equivalent to the one given in \cite{LudewigThiang2023}: half-spaces are coarsely transverse if and only if
$$(\Zz)_R\cap (\Yy_1)_R \cap (\Yy^c_1)_R \cap \dots \cap (\Yy_d^c)_R \cap (\Yy_d)_R = \{x\in \Xx:|\delta_{\Yy_1}(x)|,\ldots,|\delta_{\Yy_d}(x)|,d_\Zz(x) < R\}$$
is a bounded set for every $R$, and they are polynomially coarsely transverse if and only if the diameters of these sets grow at most polynomially in $R$.}$\diamond$
\end{remark}

We next show that, under a mild nondegeneracy assumption, every Dirac-type class can be represented by signed distances to half-spaces:

\begin{proposition}
Let $\Zz \subset \Xx$ be closed and $f$ a (polynomially proper) $\Zz$-Dirac-type function. Suppose the sets $\Yy_i=\{x\in \Xx: f_i(x)\geq 0\}$ are half-spaces and let $\delta=(\delta_{\Yy_1},\ldots,\delta_{\Yy_d})$ be the corresponding signed distance functions. Then $\Yy_1,\dots, \Yy_d$ are (polynomially) coarsely transverse relative to $\Zz$, $\delta$ is a  (polynomially proper) $\Zz$-Dirac-type function, and $[D_f]=[D_\delta]$.
\end{proposition}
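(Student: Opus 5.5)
The plan is to compare $f$ and $\delta$ pointwise, derive all the claimed properties of $\delta$ from those of $f$, and then connect $D_f$ to $D_\delta$ by the straight-line homotopy $f^t=(1-t)f+t\delta$, $t\in[0,1]$, using Lemma~\ref{lemma:homotopy}. Two elementary facts drive everything.

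\emph{Sign compatibility.} On $\Yy_i$ we have $f_i\geq 0$ and $\delta_{\Yy_i}\geq 0$. On $\Yy_i^c$ we have $f_i<0$ and $\delta_{\Yy_i}\leq 0$. So $f_i$ and $\delta_{\Yy_i}$ never have strictly opposite signs.

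\emph{Domination.} We show $|f_i|\leq L(2+|\delta_{\Yy_i}|)$. Take $x\in\Yy_i$ with $r=\mathrm{dist}(x,\Yy_i^c)$, and pick $y\in\Yy_i^c$ with $d(x,y)\leq r+1$. Since $f_i(y)<0$, the large-scale Lipschitz bound gives
\[0\leq f_i(x)\leq f_i(x)-f_i(y)\leq L(2+r).\]
The case $x\in\Yy_i^c$ is symmetric. Both sets are nonempty because they are half-spaces, so this is well defined. Summing over $i$ yields $|\delta|^2\geq c_1|f|^2-c_2$ for constants $c_1,c_2>0$ depending only on $L$ and $d$.

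\emph{Properties of $\delta$.} Each $\delta_{\Yy_i}$ is a difference of two $1$-Lipschitz distance functions, so $\delta$ is globally Lipschitz and condition (i) holds. For condition (ii) and coarse transversality, domination gives $1+|\delta|^2+d_\Zz^2\geq c\,(1+|f|^2+d_\Zz^2)$. Combined with $(1+a+b)^2\geq(1+a)(1+b)$, this gives
\[(1+|\delta|^2+d_\Zz^2)^{-1}\leq C(1+|f|^2)^{-\frac12}(1+d_\Zz^2)^{-\frac12}\leq C\rho\in C_0(\Xx).\]
This is exactly coarse transversality relative to $\Zz$. It also yields (ii) for $\delta$ with $\rho$ replaced by $(C\rho)^{1/2}$, using $(1+|\delta|^2)(1+d_\Zz^2)\geq 1+|\delta|^2+d_\Zz^2$. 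In the polynomially proper case, the same inequality $|\delta|^2+d_\Zz^2\geq c_1(|f|^2+d_\Zz^2)-c_2$ transfers the bound $C d(x_*,x)^{2s}$ to $\delta$, possibly after enlarging $c$ and shrinking $C$.

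\emph{The homotopy.} Each $f^t$ is large-scale Lipschitz with constant at most $\max(L,2)$, uniformly in $t$. For the uniform $\rho$, sign compatibility makes the cross terms in $|f^t|^2$ nonnegative, so
\[|f^t|^2\geq (1-t)^2|f|^2+t^2|\delta|^2\geq \tfrac14\min(|f|^2,|\delta|^2)\geq \tfrac14\min(1,c_1)|f|^2-c_2.\]
Hence $1+|f^t|^2\geq c(1+|f|^2)$ with $c$ independent of $t$, and $\rho/\sqrt{c}$ serves as a $t$-independent function in (ii). Continuity in sup-norm on compacts holds because $f^t-f^s=(t-s)(\delta-f)$, and $\delta-f$ is bounded on bounded sets: $f$ is large-scale Lipschitz and $\delta$ is Lipschitz. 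Lemma~\ref{lemma:homotopy} then gives $[D_f]=[D_{f^0}]=[D_{f^1}]=[D_\delta]$.

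The main obstacle I expect is the uniform lower bound on $|f^t|$. A convex combination of two vector fields can vanish far away unless sign compatibility rules out cancellation, and $\min(|f|,|\delta|)$ is only controlled via the domination inequality. Making that inequality rigorous needs the almost-nearest-point choice above, since $\Yy_i$ need not be closed and $f$ is only Borel.
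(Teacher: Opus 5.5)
Your proposal is correct and follows the paper's proof essentially step for step. The steps are: the domination bound $|f_i|\le L(1+|\delta_{\Yy_i}|)$ from the large-scale Lipschitz property; transfer of (polynomial) properness and coarse transversality from $f$ to $\delta$; and the straight-line homotopy, which the sign compatibility of $f_i$ and $\delta_{\Yy_i}$ makes uniformly proper, fed into Lemma~\ref{lemma:homotopy}. Two harmless details: the paper takes the infimum over $y$ directly, so no almost-nearest point is needed; and the Lipschitz constant of the vector-valued $\delta$ is at most $2\sqrt{d}$ rather than $2$, which does not affect uniformity in $t$.
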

\begin{proof}
We show that $\delta$ is (polynomially) proper and hence that $\Yy_1,\dots, \Yy_d$ are (polynomially) coarsely transverse relative to $\Zz$. By the large-scale Lipschitz property one has
$$|f_i(x)|\leq L(1+|\delta_{\Yy_i}(x)|).$$
Indeed, if $x \in \Yy_i^c$, then for every $y\in \Yy_i$ one has $-f_i(x)\leq f_i(y)-f_i(x)\leq L(1+d(x,y))$, and if $x \in \Yy_i$ then for any $y \in \Yy_i^c$, $f_i(x)\leq f_i(x)-f_i(y)\leq L(1+d(x,y))$.
In either case, taking the infimum over $y$ gives the desired bound.

If $f$ is proper with some function $\rho\in C_0(\Xx)$ then 
\[(1+|\delta|^2)^{-1/2}(1+d_{\Zz}^2)^{-1/2}\leq C_{c,d,L}(1+|f|^2)^{-1/2}(1+d_{\Zz}^2)^{-1/2}\leq c_1 \rho\]
with some constant depending on $c,d$ and $L$. The polynomial properness is also immediate from 
\[Cd(x_*,x)^{2s} \leq |f(x)|^2 + |d_{\Zz}(x)|^2 \leq c_2 L^2 (1+|\delta(x)|^2) + |d_{\Zz}(x)|^2\]
as one can divide through by $c_2L^2$, choose $c$ so large that the additive constant $1$ is negligible compared to $d(x_*,x)^{2s}$, and enlarge the constants on the right-hand-side if necessary.
Note in addition that if $(1+|\delta|^2)^{-1/2}(1+d_\Zz^2)^{-1/2} \in C_0(\Xx)$, then $(1+|\delta|^2 + d_\Zz^2)^{-1} \in C_0(\Xx)$.

For the last statement, we can then consider the path
$$f^t(x)= t f(x) + (1-t)\delta(x),$$
which is continuous in the local supremum norm in $t$.
As $f_i$ and $\delta_{\Yy_i}$ have the same sign, we compute
$$f_i^t(x)^2 = t^2 f^2_i(x) +2 t(1-t) f_i(x)\delta_{\Yy_i}(x) + (1-t)^2\delta_{\Yy_i}(x)^2\geq t^2 f_i^2(x) + (1-t)^2\delta^2_{\Yy_i}(x)$$
which implies the path is uniformly proper with the same $\rho$ by a similar argument as above. 
The path is also uniformly large-scale Lipschitz, hence the conditions of Lemma~\ref{lemma:homotopy} are satisfied.

\end{proof}

\section{Bulk-edge correspondence}\label{sec:bulk_edge_correspondence}

In this section, we again let $\Xx$ be a proper metric space with bounded geometry and fix a geometric module $\Hh_\Xx$. As explained in the introduction, the K-theoretic side of bulk-edge correspondence is expressed through the Mayer-Vietoris boundary map which maps the K-theory groups of the uniform Roe $C^*$-algebra $C^*_u(\Xx)$, the \textit{bulk} algebra, to those of a relative Roe algebra $C^*_u(\Zz_d \subset \Xx)$, the \textit{edge} algebra. We first realize this map through an exact sequence of $C^*$-algebras. For notational convenience we will suppress the module $\Hh_\Xx$ and simply write $C^*_u(\Xx)=C^*_u(\Xx, \Hh_\Xx)$, but it is understood that we consider the Roe $C^*$-algebras on $\Hh_\Xx$ throughout.

\begin{definition}
\label{def:pullback}
Let $\Yy_d\subset \Xx$ be a half-space with coarse boundary $\Zz_d$.
We define the $C^*$-algebra $E_{\Yy_d}$ as the pullback of the following diagram:
\begin{equation*}
	\begin{tikzcd}
		 E_{\Yy_d} \arrow[r] \arrow[d] & C^*_u(\Xx) \arrow[d] \\
		 C^*_u(\Yy_d \subset \Xx)  \arrow[r] & C^*_u(\Xx)/C^*_u(\Yy_d^c \subset \Xx)
	\end{tikzcd}
\end{equation*}
More concretely, we have an explicit description of the pullback as
$$E_{\Yy_d} = \{(\hat{a},a) \in C^*_u(\Yy_d \subset \Xx) \oplus C^*_u(\Xx) \hspace{0.1cm} \vert \hspace{0.1cm} \hat{a}-a \in C^*_u(\Yy_d^c \subset \Xx)\}.$$
\end{definition}
An element of the $C^*$-algebra $E_{\Yy_d}$ consists of a pair of operators $(\hat{a},a)$ on $\Hh_\Xx$ with $\hat{a}$ supported near $\Yy_d$ and $\hat{a}-a$ near $\Yy_d^c$. Far away from $\Yy_d^c$, the two elements $\hat{a}$ and $a$ must therefore coincide asymptotically. 

\begin{proposition}
We have an exact sequence of $C^*$-algebras
\begin{equation}
\label{eq:bulk_edge_exact}
	\begin{tikzcd}
		0 \arrow[r] & C^*_u(\Zz_d\subset \Xx) \arrow["\iota", r]  & E_{\Yy_d} \arrow["q", r]  & C^*_u(\Xx)   \arrow[r] & 0 
    \end{tikzcd}
\end{equation}
with $\iota(b)=(b,0), q(\hat{a},a)=a$, and the boundary map $\pdv_{\Yy_d}: K_i(C^*_u(\Xx)) \to K_{1-i}(C^*_u(\Zz_d\subset \Xx))$ coincides with the Mayer-Vietoris boundary map obtained from the decomposition $\Xx=\Yy_d \cup \Yy_d^c$.
\end{proposition}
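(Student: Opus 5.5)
We prove the proposition in two parts: exactness of \eqref{eq:bulk_edge_exact}, then the identification of its boundary map with the Mayer-Vietoris boundary map.

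\textbf{Exactness.} Write $I_+=C^*_u(\Yy_d\subset\Xx)$ and $I_-=C^*_u(\Yy_d^c\subset\Xx)$. Both are closed ideals of $C^*_u(\Xx)$, and $I_+\cap I_-=C^*_u(\Zz_d\subset\Xx)$ by the coarse excision property \eqref{eq:coarse_excision}.

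The map $\iota$ is well defined: for $b\in C^*_u(\Zz_d\subset\Xx)$ we have $b\in I_+$ and $b-0\in I_-$. It is clearly injective, and $q\circ\iota=0$.

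For $\ker q\subseteq \operatorname{im}\iota$: if $(\hat a,0)\in E_{\Yy_d}$, then $\hat a\in I_+\cap I_-$, so $(\hat a,0)=\iota(\hat a)$.

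Surjectivity of $q$ is the main point. It amounts to $C^*_u(\Xx)=I_++I_-$, since then $a=a_++a_-$ gives $(a_+,a)\in E_{\Yy_d}$. Fix a Borel partition $\chi_{\Yy_d}+\chi_{\Yy_d^c}=1$. For finite-propagation, uniformly locally compact $a$, the decomposition $a=\chi_{\Yy_d}a+\chi_{\Yy_d^c}a$ has each summand supported near $\Yy_d$, resp.\ $\Yy_d^c$, with propagation controlled by $\mathrm{prop}(a)$. So $\CM_u(\Xx)\subseteq I_++I_-$. To pass to the closure, use that a sum of two closed ideals in a $C^*$-algebra is closed. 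Alternatively, note that $a\mapsto \chi_{\Yy_d}a$ is contractive and maps the closure of $\CM_u(\Xx)$ into $I_+$ (compressions by Borel projections preserve uniform local compactness), so it extends continuously and gives the splitting directly.

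\textbf{Boundary map.} The Mayer-Vietoris boundary map for $C^*_u(\Xx)=I_++I_-$ is classically defined as the composite
\[
K_*(C^*_u(\Xx))\;\cong\; K_*\!\left(I_+/(I_+\cap I_-)\right)\;\xrightarrow{\;\pdv\;}\; K_{*-1}(I_+\cap I_-),
\]
where the first isomorphism comes from $C^*_u(\Xx)/I_-\cong I_+/(I_+\cap I_-)$ and $\pdv$ is the boundary map of $0\to I_+\cap I_-\to I_+\to I_+/(I_+\cap I_-)\to 0$.

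We compare the two extensions via a morphism of extensions. The map $\phi\colon E_{\Yy_d}\to I_+$, $(\hat a,a)\mapsto \hat a$, is the identity on the ideal $C^*_u(\Zz_d\subset\Xx)$. On the quotients it induces $C^*_u(\Xx)\to I_+/(I_+\cap I_-)\cong C^*_u(\Xx)/I_-$, sending $a\mapsto [\hat a]=[a]$ mod $I_-$. This is precisely the quotient map $C^*_u(\Xx)\to C^*_u(\Xx)/I_-$, composed with the isomorphism above.

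The boundary map of \eqref{eq:bulk_edge_exact} is therefore the Mayer-Vietoris boundary map precomposed with $K_*$ of the quotient $C^*_u(\Xx)\to C^*_u(\Xx)/I_-$. This uses naturality of the six-term sequence under morphisms of extensions. If the paper's convention for the Mayer-Vietoris map already includes this quotient, which is the standard convention of Higson-Roe-Yu, the two maps agree exactly.

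The only genuinely delicate steps are the closedness and surjectivity argument and keeping track of sign conventions. For signs, we can alternatively check agreement on an explicit lift: the lift of $a$ to $E_{\Yy_d}$ is $(\chi_{\Yy_d}a\chi_{\Yy_d},a)$, which is exactly the standard Mayer-Vietoris lift used to define the index map.
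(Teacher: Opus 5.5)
Your proof is correct and follows essentially the paper's route. Exactness comes from $C^*_u(\Xx)=I_++I_-$ together with $I_+\cap I_-=C^*_u(\Zz_d\subset\Xx)$; you prove the sum decomposition directly via $a=\chi_{\Yy_d}a+\chi_{\Yy_d^c}a$ and closedness of sums of closed ideals, where the paper cites Higson--Roe--Yu. The boundary map is then identified through the morphism $(\hat a,a)\mapsto\hat a$ onto $0\to I_+\cap I_-\to I_+\to C^*_u(\Xx)/I_-\to 0$ plus naturality.

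One slip: the first arrow $K_*(C^*_u(\Xx))\to K_*(C^*_u(\Xx)/I_-)\cong K_*(I_+/(I_+\cap I_-))$ in the Mayer--Vietoris composite is induced by the quotient map and is not an isomorphism. Read correctly, the Mayer--Vietoris boundary map is by definition $\partial\circ(\text{quotient})_*$, which is the convention the paper takes from Ewert--Meyer. So your closing hedge is unnecessary, and the two maps agree exactly.
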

\begin{proof}
By \cite[Lemma 2]{HigsonRoeYu1993} which applies to uniform Roe algebras,
\[C^*_u(\Xx)=C^*_u(\Yy_d\subset\Xx)+C^*_u(\Yy_d^c\subset\Xx), \qquad C_u^*(\Zz_d\subset \Xx)= C^*_u(\Yy_d\subset \Xx)\cap C^*_u(\Yy^c_d\subset \Xx).\]
Surjectivity of $q$ is immediate from the former equality, and exactness from the latter: An element $(\hat{a},a)$ is in the kernel of $q$ if and only if $a=0$, hence $\hat{a} \in C^*_u(\Yy^c_d\subset \Xx)$. Together with $\hat{a}\in C^*_u(\Yy_d\subset \Xx)$ this implies $\hat{a}\in C_u^*(\Zz_d\subset \Xx)$. As shown in \cite{EwertMeyer}, the Mayer-Vietoris boundary map is given by the composition of the map $K_i(C^*_u(\Xx))\to K_i(C^*_u(\Xx)/C^*_u(\Yy^c_d\subset \Xx))$ with the boundary map of the exact sequence
\begin{equation*}
	\begin{tikzcd}
		0 \arrow[r] & C^*_u(\Zz_d\subset \Xx) \arrow[r]  &  C^*_u(\Yy_d\subset \Xx) \arrow[r]  & C^*_u(\Xx)/C^*_u(\Yy^c_d\subset \Xx)   \arrow[r] & 0 
    \end{tikzcd}
\end{equation*}
and the result follows by definition of $E_{\Yy_d}$ and naturality of the boundary maps.
\end{proof} 

The topological invariants of bulk and edge K-theory classes are given by pairings with suitable K-homology classes. We assume that both are derived from a Dirac-type function whose $d$-th component behaves roughly like the signed distance function to the half-space $\Yy_d$:

\begin{definition}\label{def:dirac_type_adapted}
Let $\Yy_d\subset \Xx$ be a half-space with coarse boundary $\Zz_d$. We say that a Dirac-type function $f = (\underline{f},f_d):\Xx\to \RM^d$ is adapted to $\Yy_d$ if
\begin{enumerate}
    \item[(i)] $\underline{f}$ is of $\Zz_d$-Dirac-type.
    \item[(ii)] For every $R>0$ there exists some $S>0$ such that
    $$(\Zz_d)_R\subset \{|f_d|\leq S\}, \qquad \{|f_d|\leq R\}\subset (\Zz_d)_S $$

    \item[(iii)] $\{f_d \geq 0\}$ and $\Yy_d$ have finite Hausdorff distance from each other.
\end{enumerate}
\end{definition}
By Lemma~\ref{lemma:dirac_op}, these functions define the classes $$[D_f]=[\Hh_\Xx \hat{\otimes} \CM_d, \pi, D_{f}]\in KK_{d}(C^*_u(\Xx), \CM)$$ and $$[D_{\underline{f}}]=[\Hh_\Xx \hat{\otimes} \CM_{d-1}, \pi, D_{\underline{f}}]\in KK_{d-1}(C^*_u(\Zz_d\subset \Xx),\CM)$$ for bulk and edge, respectively. Our main result is a more general form of Theorem~\ref{th:bulk_edge_intro}:
\begin{theorem}
\label{th:bulk_edge}
Let $\Yy_d\subset \Xx$ be a half-space with coarse boundary $\Zz_d$, and $f:\Xx\to \RM^d$ be a $\Yy_d$-adapted Dirac-type function.
For any class $x\in K_j(C^*_u(\Xx))$ with $j=d\bmod 2$ we have
\[\langle \pdv_{\Yy_d}(x), [D_{\underline{f}}]\rangle = (-1)^{d}\langle x, [D_{f}]\rangle.\]
\end{theorem}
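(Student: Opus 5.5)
The plan is the one sketched in the introduction: encode $\pdv_{\Yy_d}$ as a Kasparov product with an extension class, and identify the product of that class with the edge Dirac class as the bulk Dirac class.

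\textbf{Step 1: the extension class.} Choose a smooth odd switch function and set $F_d=\Theta(f_d)$, for instance $F_d=f_d(1+f_d^2)^{-1/2}$. Consider the cycle
\[
\bigl(E_{\Yy_d}\hat{\otimes}\CM_1,\ \mathrm{id},\ \text{multiplication by } F_d\otimes\ep\bigr),
\]
or equivalently a semisplit cycle $(C^*_u(\Zz_d\subset\Xx)\hat{\otimes}\CM_1$-module$,\ \phi,\ F)$ with $\phi(a)$ acting through a lift of $a$. We show that it defines a class
\[
[\Ext_E]\in KK_1^{\mathrm{sep}}\bigl(C^*_u(\Xx),\,C^*_u(\Zz_d\subset\Xx)\bigr).
\]
The key facts come from conditions (ii) and (iii) of Definition~\ref{def:dirac_type_adapted}. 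Commutators of $f_d$ with finite-propagation operators are bounded (Lemma~\ref{lemma:lipschitz}), and $\{|f_d|\le S\}$ lies in a thickening of $\Zz_d$. Hence $[\Theta(f_d),a]$ and $(\Theta(f_d)^2-1)a$ are supported near $\Zz_d$, i.e.\ they lie in the edge ideal. Together with (iii), the positive spectral part $\chi(f_d\geq0)$ differs from $P_{\Yy_d}$ by something supported near $\Zz_d$.

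Because these algebras are non-separable, this is done on nets of separable subalgebras. For each separable $A'\subset C^*_u(\Xx)$, pick a countable dense set of finite-propagation elements, take lifts in $E_{\Yy_d}$ (for example $P_{\Yy_d}aP_{\Yy_d}$ up to the ideal), and generate separable $B'\subset C^*_u(\Zz_d\subset\Xx)$ containing all the commutators. The classical separable theory (Kasparov's $\Ext$–$KK$ correspondence for semisplit extensions, \cite[Section 3]{jensen2012elements}) then gives compatible classes, and $\kappa(x)\otimes[\Ext_E]=\pm\pdv_{\Yy_d}(x)$ on each separable piece. By the preceding Proposition this is the Mayer–Vietoris map. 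A concrete check is that the boundary of a projection $p$ is computed by $\exp(2\pi\imath\, \tilde\chi(f_d)p\tilde\chi(f_d))$-type formulas, matching the Toeplitz-type cycle.

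\textbf{Step 2: the product.} By associativity of the $KK^{\mathrm{sep}}$-product,
\[
\langle\pdv_{\Yy_d}x,[D_{\underline f}]\rangle=\pm\,\kappa(x)\otimes[\Ext_E]\otimes[D_{\underline f}].
\]
It therefore suffices to prove
\[
[\Ext_E]\otimes_{C^*_u(\Zz_d\subset\Xx)}[D_{\underline f}]=\pm[D_f]
\]
after pulling back to separable subalgebras, and then to track the sign. On a separable piece $A'\to B'$ we apply Kucerovsky's criterion \cite{Kucerovsky} to the candidate unbounded cycle
\[
\bigl(\Hh_\Xx\hat{\otimes}\CM_d,\ \pi,\ D_f=D_{\underline f}+f_d\hat{\otimes}\sigma_d\bigr),
\]
which is a $KK$-cycle by Lemma~\ref{lemma:dirac_op}. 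The interior tensor product of the extension module with $\Hh_\Xx\hat{\otimes}\CM_{d-1}$ over $B'$ identifies with $\Hh_\Xx\hat{\otimes}\CM_d$, with the $\CM_1$ factor supplying $\sigma_d$.

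Kucerovsky's three conditions are checked as follows.

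1. The connection condition: the creation operators $T_\xi$ commute with $D_{\underline f}$ up to bounded terms, because $\xi$ lies in a finite-propagation dense subalgebra and Lemma~\ref{lemma:lipschitz} applies.
2. The domain condition: $\Dom D_f\subset\Dom(f_d\hat{\otimes}\sigma_d)$ holds because $D_f^2=\sum f_i^2$.
3. The semi-boundedness condition $\langle f_d\sigma_d\psi,D_f\psi\rangle+\langle D_f\psi,f_d\sigma_d\psi\rangle\ge -c\|\psi\|^2$ holds because the cross terms anticommute exactly: $f_d\sigma_d$ anticommutes with $f_i\sigma_i$ since these are commuting scalar functions tensored with anticommuting generators. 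The expression thus equals $2\|f_d\psi\|^2\geq0$.

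One subtlety is that the extension class must be represented unboundedly by $f_d$ itself rather than $P_{\Yy_d}$. We justify this with condition (iii) and a straight-line homotopy, as in the Proposition on signed distances, using Lemma~\ref{lemma:homotopy}.

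\textbf{Step 3: compatibility and sign.} Finally we check that these separable products are compatible under enlarging $A'$ and $B'$, so that they assemble to the identity in $KK^{\mathrm{sep}}$. We also check that the resulting $KK(A',\CM)$ classes are pullbacks of the single class $[D_f]$, which is immediate since the cycle is the same one restricted. The sign $(-1)^d$ arises from the identification $\CM_{d-1}\hat{\otimes}\CM_1\cong\CM_d$ (moving $\sigma_d$ past the others), the sign convention relating $\Ext$ to $\pdv$, and the Bott identifications of Appendix~\ref{sec:index}. We fix it by comparing with a standard example, for instance $\ZM^d$ with straight half-spaces against \cite{ProdanSchulzBaldes2016}.

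I expect the main obstacle to be Step 1: constructing the extension class in the non-separable, non-$\sigma$-unital setting so that it is simultaneously Kucerovsky-compatible and provably equal to the Mayer–Vietoris boundary on every separable subalgebra. The choice of separable subalgebras must be stable under the cutoffs by $f_d$, which needs care.
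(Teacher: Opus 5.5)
Your architecture is the paper's: the pullback extension $0\to C^*_u(\Zz_d\subset\Xx)\to E_{\Yy_d}\to C^*_u(\Xx)\to 0$, an extension class in $KK_1^{\mathrm{sep}}$ assembled from separable subextensions, Kucerovsky's criterion for $D_f=f_d\hat{\otimes}\sigma_d+D_{\underline{f}}$, and associativity. The gap is the step you defer as ``needs care'', which is where the proof actually lives. To represent the extension class on a separable piece by $(B_i\hat{\otimes}\CM_1,\pi\hat{\otimes}\one,f_d\hat{\otimes}\ep)$ you need the following.
\begin{enumerate}
\item[(a)] $f_d$ must be regular self-adjoint on the Hilbert $B_i$-module, i.e.\ $C_0(f_d)B_i\subset B_i$ with dense range.
\item[(b)] You need $A_iB_i\subset B_i$ and $A_i(f_d+\imath)^{-1}\subset B_i$.
\item[(c)] You need a dense subalgebra of $A_i$ whose commutators with $f_d$ lie in $M(B_i)$. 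Boundedness from Lemma~\ref{lemma:lipschitz} is not enough, because $[f_d,a]$ must be adjointable on the $B_i$-module.
\item[(d)] The rows must be exact, semisplit and compatible under inclusion. This is not automatic if you just ``generate $B'$ containing the commutators'' rather than defining $B_i:=\iota^{-1}(\ker q\rvert_{E_i})$.
\end{enumerate}
Your first displayed cycle on $E_{\Yy_d}\hat{\otimes}\CM_1$ is not a Kasparov $(A,B)$-cycle; only the version on $B_i\hat{\otimes}\CM_1$ makes sense.

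The paper obtains all of this as follows. It closes $A'$ under multiplication by $\Cc=C_0(f_d)$ and under the $\RM^d$-action $\alpha_t(a)=e^{-\imath f\cdot t}ae^{\imath f\cdot t}$, which is norm-continuous by Lemma~\ref{lem:normcont}. It does the same for $\hat{B}_i\supset B'\cup A_i\Cc$, and sets $E_i=C^*(\sigma(A_i),\iota(\hat{B}_i))$ and $B_i=\iota^{-1}(\ker q\rvert_{E_i})$. The $\alpha$-smooth elements then give dense subalgebras with $[f_d,a]\in\Aa_i$, $[D_f,a]\in\Aa_i\hat{\otimes}\CM_d$ and $[D_{\underline{f}},b]\in\Bb_i\hat{\otimes}\CM_{d-1}$. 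Closing countably under iterated commutators with $f_1,\dots,f_d$ could serve as an alternative, but something of this kind has to be done. Likewise, Kucerovsky's condition (i) needs $f_d\xi$ bounded, i.e.\ $\xi\in\Dom(S_X)$, not just $[D_{\underline{f}},\xi]$ bounded. The paper uses $\xi\in\phi_1(\Aa_i)(S_X+\imath)^{-1}(\Bb_i\hat{\otimes}\CM_1)$.

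Two further points. First, your detour through $P_{\Yy_d}$ and a straight-line homotopy is mis-cited: Lemma~\ref{lemma:homotopy} concerns the $K$-homology classes $[D_f]$, not classes in $KK_1(A_i,B_i)$. The detour is also unnecessary. The extension class does not depend on the semisplitting, so you can take $\sigma(a)=(\rho(f_d)a\rho(f_d),a)$ directly; conditions (ii) and (iii) of Definition~\ref{def:dirac_type_adapted} make this a valid lift into $E_{\Yy_d}$. The Busby cycle is then literally $\Theta(f_d\hat{\otimes}\ep)$ with $\Theta=2\rho-1$.

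Second, the sign should not be fixed by comparing with an example. That would first require showing the sign is universal and that the reference uses identical conventions. Nor does the sign come from reordering Clifford generators: under $\tau_{\CM_1}$ and $\CM_{d-1}\hat{\otimes}\CM_1=\CM_d$ the generator $\ep$ is already $\sigma_d$, so $[\Ext_{E_i}]\otimes[D^i_{\underline{f}}]=[D^i_f]$ holds with no sign. The whole factor $(-1)^d$ is the $(-1)^j$ in $\kappa^B_{1-j}(\pdv_E x)=(-1)^j\kappa^A_j(x)\otimes[\Ext_E]$ from Proposition~\ref{prop:check_extension}. It reflects that the R{\o}rdam--Larsen--Laustsen boundary maps agree with the Kasparov product for $K_0\to K_1$ and differ by a sign for $K_1\to K_0$.
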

The proof of this theorem will span subsections~\ref{ssec:extensionclass} and \ref{ssec:kkproduct}. The first step is to express the boundary map through an extension class in $KK_1^{\mathrm{sep}}$. As these algebras need not be separable, we will need to approximate all algebras appearing in the short exact sequence \eqref{eq:bulk_edge_exact} by directed nets of separable subalgebras. We will treat this construction abstractly.

\subsection{Extension classes in KK-theory}
\label{ssec:extensionclass}
A short exact sequence
$$0\;\rightarrow\;B\xrightarrow{\iota}\;E\xrightarrow{q}\;A\;\rightarrow0$$
is called semisplit if there exists a completely positive contractive section $\sigma:A\to E$ with $q\circ \sigma=\mathrm{id}_A$.

Let $M(B)$ denote the multiplier algebra of $B$. The Busby invariant is the unique $*$-homomorphism $\tau: A\to M(B)/B$ which fits into the commutative diagram
\begin{equation*}
	\begin{tikzcd}
		0 \arrow[r] & B \arrow[r] \arrow[d] & E \arrow[r] \arrow[d] & A \arrow[d, "\tau"]  \arrow[r] & 0 \\
		 0 \arrow[r]  & B \arrow[r] &  M(B) \arrow[r] & M(B)/B \arrow[r] & 0
	\end{tikzcd}
\end{equation*}

Let us recall how one associates a class in $KK_1(A,B)$ to a semisplit extension:

\begin{proposition}\label{prop:check_extension}
Suppose that for a semisplit short exact sequence of separable, ungraded $C^*$-algebras as above, one can write the Busby invariant $\tau: A \to M(B)/B \cong \Qq(B)$ in the form
$$\tau(a)= P \pi(a) P + B$$
with some $\ast$-homomorphism $\pi: A \to M(B)$ and an operator $P\in M(B)$ which satisfies 
\begin{equation*} 
\{[P,\pi(a)],\, (P^*-P)\pi(a),\, (P^2-P)\pi(a)\} \subset \KM_B(B).
\end{equation*}
We associate to the extension the class $[{\Ext}_E]\in KK_1(A,B)$ represented by the $KK(A, B\hat{\otimes}\CM_1)$-cycle $(B\hat{\otimes}\CM_1, \pi \hat{\otimes} \one_{\CM_1}, F)$  with $F=(2P-1)\hat{\otimes}\ep$. It is compatible with the K-theoretic boundary maps $\pdv_E:K_j(A)\to K_{1-j}(B)$ in the sense that 
\begin{equation}
\label{eq:bulk_edge_kk}
\kappa_{1-j}^B(\pdv_{E}(x)) = (-1)^{j}\kappa^A_j(x) \otimes [{\Ext}_{E}] \in KK_{1-j}(\CM, B), \qquad \forall x\in K_j(A)
\end{equation}
under the isomorphisms $\kappa_i^D: K_i(D)\to KK_i(\CM, D)$.

\end{proposition}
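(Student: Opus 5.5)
The plan is to first check that the triple is a genuine $KK(A,B\hat{\otimes}\CM_1)$-cycle. Then I identify its class with the standard Kasparov extension class of the semisplit extension, and finally import the known identification of the boundary maps with the Kasparov product, fixing the sign by hand.

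\emph{Cycle conditions.} Since $\KM_B(B)=B$ and $\KM_{B\hat{\otimes}\CM_1}(B\hat{\otimes}\CM_1)=B\hat{\otimes}\CM_1$, it suffices to check the three relations modulo $B\hat{\otimes}\CM_1$.
\begin{itemize}
\item Since $\ep$ is odd, $F=(2P-1)\hat{\otimes}\ep$ is odd.
\item We have $F^*-F=2(P^*-P)\hat{\otimes}\ep$, so $(F^*-F)\pi(a)$ is compact by the hypothesis on $(P^*-P)\pi(a)$.
\item Next, $F^2-1=((2P-1)^2-1)\hat{\otimes}\one=4(P^2-P)\hat{\otimes}\one$, which is compact after multiplying by $\pi(a)$.
\item The algebra $A$ is ungraded, so $\pi(a)\hat{\otimes}\one$ is even, and the graded commutator is $[F,\pi(a)\hat{\otimes}\one]=2[P,\pi(a)]\hat{\otimes}\ep\in B\hat{\otimes}\CM_1$.
\end{itemize}
So $[\Ext_E]$ is well defined. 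It is also independent of the choice of $(\pi,P)$ representing $\tau$: two such choices can be connected by compact perturbations, unitary equivalence and addition of degenerates, via the standard Stinespring/Kasparov dilation argument.

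\emph{Identification with Kasparov's Ext.} For separable $A,B$, Kasparov's isomorphism $KK_1(A,B)\cong \Ext^{-1}(A,B)$ sends a cycle of the form $(B\hat{\otimes}\CM_1,\pi\hat{\otimes}\one,(2P-1)\hat{\otimes}\ep)$ to the extension whose Busby invariant is $a\mapsto P\pi(a)P+B$ (see \cite[Section 3]{jensen2012elements} and \cite[Sections 17--19]{blackadar1998k}). Semisplitness guarantees, by Kasparov's Stinespring theorem, that $\tau$ admits such a dilation in the first place, so that the extension actually lies in $\Ext^{-1}$. Hence $[\Ext_E]$ is the image of $[E]$ under this isomorphism.

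\emph{Boundary maps and sign.} Next I invoke the standard theorem that for a semisplit extension of separable algebras the connecting maps of the six-term sequence are given by the Kasparov product with the extension class \cite[Theorem 19.5.7]{blackadar1998k}, \cite{jensen2012elements}:
\[\kappa(\pdv_E x)=\pm\,\kappa(x)\otimes[\Ext_E].\]
The sign depends only on $j$ and the fixed choices of $\kappa_i$, by naturality in the extension. I therefore fix it on test cases.
\begin{itemize}
\item For $j=1$, I use the Toeplitz extension $0\to\KM\to\Tt\to C(\SM^1)\to 0$ with $P$ the Hardy projection and $x=[z]_1$. The index map gives $\pdv[z]_1=-[1]_0$ or $+[1]_0$ according to the convention of \cite{ProdanSchulzBaldes2016}. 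The product $[z]_1\otimes[\Ext]$ is computed from $\Ind(PzP)$ through the identification $\kappa_1$ fixed in Appendix~\ref{sec:index}. Comparing the two yields $(-1)^1$.
\item For $j=0$, I pass to the suspended extension. I use that the exponential map corresponds to the index map after Bott periodicity, and that $\tau_{\CM_1}$ and the Bott element $[\hat\rho_2]$ commute with the product up to the graded sign. This should produce $(-1)^0$.
\end{itemize}

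I expect this sign bookkeeping to be the main obstacle: how the chosen $\kappa_i$, the exterior product with $\CM_1$, and the orientation of the Bott elements interact. The cycle verification and the abstract identification are routine. I would first check the sign entirely on the Toeplitz example, and then use naturality under the map of extensions $E\to$ (Toeplitz-type extension) induced by a representing unitary to transfer it to the general case.
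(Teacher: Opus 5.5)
Your verification of the cycle axioms and the identification with Kasparov's extension class are fine; the paper simply cites \cite[17.6.4]{blackadar1998k} for these. The gap is in the only non-standard part of the proposition: the sign $(-1)^j$ relative to the paper's fixed conventions, namely the boundary maps of \cite{RordamLarsenLaustsen2000} and the maps $\kappa_i$ of Appendix~\ref{sec:index}. You never actually determine this sign. In the Toeplitz test you leave $\pdv[z]_1$ as ``$-[1]_0$ or $+[1]_0$'' and then assert that the comparison ``yields $(-1)^1$''. For $j=0$ you only say the suspension argument ``should produce'' $(-1)^0$. These are exactly the statements to be proved.

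The outcome is not automatic. In the conventions of \cite{RordamLarsenLaustsen2000} the lift $S=PzP$ gives $\pdv[z]_1=[1-S^*S]_0-[1-SS^*]_0=-[e_0]_0$. This is the ordinary Fredholm index of the lift, the same shape as the formula $\Ind_B(P\phi(u)P+1-P)$ in Proposition~\ref{prop:basic_products}. So the relative sign is decided entirely by how $\kappa_1$, the exterior product with $\CM_1$ and the Bott class $[\hat\rho_2]$ enter $\kappa_1([z]_1)\otimes[{\Ext}_E]$, which is the bookkeeping you postpone. Routing $j=0$ through suspension and Bott periodicity only adds more conventions to track. The cone extension $0\to C_0(0,1)\to C_0(0,1]\to\CM\to 0$, whose exponential map is an isomorphism, would be a more direct test case.

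Your transfer step also fails as stated. A unitary in $M_N(A^+)$ does not induce a morphism from $E$ to the Toeplitz extension. It only gives a homomorphism $C_0(\SM^1\setminus\{1\})\to M_N(A)$, along which $E$ can be pulled back to an extension by $M_N(B)$; that extension is in general unrelated to $0\to\KM\to\Tt\to C(\SM^1)\to 0$. What makes one test case per $j$ suffice is your earlier remark, once properly justified. The cited theorem holds uniformly for all semisplit extensions in its own conventions, so converting to the paper's conventions changes it by a universal sign for each $j$.

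The paper avoids test cases altogether. Since $[{\Ext}_E]$ is itself a standard $KK_1(A,B)$-cycle with $T=2P-1$, Proposition~\ref{prop:basic_products} gives closed formulas for $\kappa_0(x)\otimes[{\Ext}_E]$ and $\kappa_1(x)\otimes[{\Ext}_E]$ for every extension at once. These are compared directly with the formulas of \cite{RordamLarsenLaustsen2000} for the exponential and index maps, computed from the lifts $P\pi(e)P$ and $P\pi(u)P+1-P$. That explicit comparison is the step missing from your proposal.
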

\begin{proof}
The definition of the extension class is standard (see e.g., \cite[17.6.4]{blackadar1998k}). For the boundary maps, we use the conventions of \cite{RordamLarsenLaustsen2000}.
Comparing them to Proposition~\ref{prop:basic_products}, these coincide for the map $K_0(A)\to K_1(B)$, while they give the opposite sign for the map $K_1(A)\to K_0(B)$. 
\end{proof}

For each separable approximation of an exact sequence, the boundary maps in K-theory are therefore given by Kasparov products by an extension class. If those extension classes form a compatible family then we get the following:
\begin{proposition}
\label{prop:ext_boundarymap}
Assume that for the semisplit extension $0 \to B \to E \to A \to 0$ of $C^*$-algebras there exist directed families $(A_i)_{i\in I}$, $(B_i)_{i\in I}$, $(E_i)_{i\in I}$ of separable subalgebras indexed by the same directed set $i\in I$ such that
\begin{enumerate}
    \item[(i)] the nets are cofinal: for every pair $(A',B')$ of separable subalgebras $A'\subset A$, $B'\subset B$ there is some $i\in I$ with $A'\subset A_i$ and $B'\subset B_i$, and
    \item[(ii)] for each $i\prec j$ we have the commutative diagram 
\begin{equation}
\label{eq:comm_diag_i_to_j}
	\begin{tikzcd}
		0 \arrow[r] & B_i \arrow[r] \arrow[d, "\beta_{i,j}"] & E_i \arrow[r] \arrow[d, "\epsilon_{i,j}"] & A_i \arrow[bend right=33, "\sigma_i"]{l} \arrow[d,"\alpha_{i,j}"]  \arrow[r] & 0 \\
		 0 \arrow[r]  & B_j \arrow[r] & E_j \arrow[r] & A_j \arrow[bend right=33, "\sigma_j"]{l} \arrow[r] & 0
	\end{tikzcd}
\end{equation}
with exact semisplit rows and $\alpha_{i,j}, \beta_{i,j}$, and $\epsilon_{i,j}$ the inclusion maps.
\end{enumerate}  
Then there exists a class $[{\Ext}_{E}]_{\mathrm{sep}}\in  KK^{\rm sep}_1(A,B)$ such that
$$\kappa_{1-l}^B(\pdv_{E}(x)) = (-1)^{l}\kappa_l^A(x)\otimes [{\Ext}_{E}]_{\mathrm{sep}} \in KK_{1-l}(\CM, B)$$
for every $x\in K_l(A)$ and where $\pdv_{E}: K_l(A)\to K_{1-l}(B)$ is the boundary map.
\end{proposition}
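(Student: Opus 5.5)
For each $i\in I$, the semisplit row $0\to B_i\to E_i\to A_i\to 0$ is an extension of separable $C^*$-algebras. It therefore defines, by the standard construction (the separable semisplit version of Proposition~\ref{prop:check_extension}, e.g.\ \cite[17.6]{blackadar1998k}), a class $[\Ext_{E_i}]\in KK_1(A_i,B_i)$. This class satisfies
\[
\kappa^{B_i}_{1-l}(\pdv_{E_i}(y))=(-1)^l\kappa^{A_i}_l(y)\otimes[\Ext_{E_i}]\quad\text{for all } y\in K_l(A_i).
\]
Concretely, we use the Stinespring dilation of the completely positive section $\sigma_i$ to write the Busby invariant in the form $P\pi(a)P$.

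We push $[\Ext_{E_i}]$ forward along the inclusion $B_i\subset B$ to get $x_i\in KK_1(A_i,B)$. The main step is compatibility: for $i\prec j$ we need $\alpha_{i,j}^*x_j=x_i$ in $KK_1(A_i,B)$. By naturality of extension classes, the commutative diagram \eqref{eq:comm_diag_i_to_j} gives $\alpha_{i,j}^*[\Ext_{E_j}]=(\beta_{i,j})_*[\Ext_{E_i}]$ in $KK_1(A_i,B_j)$. To prove this, we pull back the extension $E_j$ along $\alpha_{i,j}$. This yields the semisplit extension $E_j\times_{A_j}A_i$ of $A_i$ by $B_j$, with section $(\sigma_j\circ\alpha_{i,j},\mathrm{id})$. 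The map $E_i\to E_j\times_{A_j}A_i$, $e\mapsto(\epsilon_{i,j}(e),q(e))$, is a morphism of extensions from $(B_i,E_i,A_i)$ that is the identity on $A_i$ and $\beta_{i,j}$ on the ideal. The extension class depends only on the Busby invariant up to the usual equivalence, and the class is independent of the chosen cp section because any two sections differ by a map into $B_j$, which is a compact perturbation. Hence both classes agree. The step I expect to be the main obstacle is making this independence of the choice of sections rigorous when $\sigma_j\circ\alpha_{i,j}\neq\epsilon_{i,j}\circ\sigma_i$. I would handle it with a straight-line homotopy of cp sections, which gives an operator homotopy of the Stinespring cycles.

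Next, the classes must be defined for every separable $A'\subset A$, not only for the $A_i$. By cofinality (i), choose $i$ with $A'\subset A_i$ and set $x_{A'}:=x_i|_{A'}$. Compatibility shows this is independent of $i$: given two admissible indices, pick a common upper bound in the directed set. The restrictions are then coherent under inclusions $A''\subset A'$, so $(x_{A'})$ defines an element $[\Ext_E]_{\mathrm{sep}}\in\varprojlim KK_1(A',B)=KK_1^{\mathrm{sep}}(A,B)$.

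Finally we verify the formula. Take $x\in K_l(A)$. Since $K$-theory commutes with the directed union $A=\overline{\bigcup A_i}$, which holds by cofinality because every element lies in some separable subalgebra, $x$ is the image of some $y\in K_l(A_i)$. The product $\kappa_l^A(x)\otimes[\Ext_E]_{\mathrm{sep}}$ is computed, by the construction of Skandalis' product and because $\CM$ is separable, as the pushforward to $B$ of $\kappa^{A_i}_l(y)\otimes[\Ext_{E_i}]$. This equals $(-1)^l$ times the pushforward of $\kappa^{B_i}_{1-l}(\pdv_{E_i}y)$. Naturality of the $K$-theory boundary map along the morphism of extensions $(B_i,E_i,A_i)\to(B,E,A)$ gives $\iota_*\pdv_{E_i}y=\pdv_E x$, and $\kappa$ is natural under pushforward. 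Together these yield the claimed identity.
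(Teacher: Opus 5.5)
Your proposal follows the paper's proof step for step. You push the separable extension classes $[\Ext_{E_i}]$ forward to $B$ and get compatibility from naturality, $(\beta_{i,j})_*[\Ext_{E_i}]=\alpha_{i,j}^*[\Ext_{E_j}]$. You then extend to arbitrary separable $A'\subset A$ by cofinality and a common upper bound. Finally you conclude using continuity of $K$-theory, the definition of the $KK^{\mathrm{sep}}$-product, and naturality of $\kappa$ and of the boundary maps.

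The paper simply cites naturality of extension classes for the compatibility step, so your extra sketch of that step goes beyond it. Your sketch is fine in spirit, but you have misplaced the main obstacle. The delicate point is not the choice of completely positive section, which is routine. It is pushing forward along the possibly degenerate inclusion $B_i\subset B_j$: the two Busby invariants live in different corona algebras with no natural map between them, so "depends only on the Busby invariant" does not connect them. This step is most cleanly settled through the mapping-cone description of the extension class.
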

\begin{proof}
The diagram \eqref{eq:comm_diag_i_to_j} corresponds to a morphism of semisplit extensions. By naturality,
\begin{equation}
\label{eq:extension_compatibility}
(\beta_{i,j})_*[{\Ext}_{E_i}] =(\alpha_{i,j})^*[{\Ext}_{E_j}]
\end{equation}
as classes in $KK_1(A_i,B_j)$. Define $e_i= (b_i)_*[{\Ext}_{E_i}]$ for $b_i: B_i\to B$. By (i), for any $A'\subset A$ we can find some $i\in I$ with $A'\subset A_i$ and set $$e_{A'}= (A' \to A_i)^*(e_i) \in KK_1(A', B)$$
as a pullback along the inclusion. By  \eqref{eq:extension_compatibility} and passing to a common upper bound of indices, $e_{A'}$ does not depend on the choice of $i$ as long as $A'\subset A_i$ and the classes $e_{A'}$ form a compatible family. The extension class $$[{\Ext}_{E}]_{\mathrm{sep}}:= (e_{A'})_{A'\subset A}$$ is therefore well-defined.

By continuity of $K$-theory, any $x \in K_l(A)$ can be written as the image of some $x_i\in K_l(A_i)$ for some $i$. Using first the definition of the product in $KK^{\mathrm{sep}}$, then \eqref{eq:bulk_edge_kk}, and finally naturality of the identifications $\kappa$, we have
\begin{align*}
\kappa_l^A(x)\otimes_{A} [{\Ext}_E]_{\mathrm{sep}} &= (b_i)_*( \kappa_l^{A_{i}}(x_i)\otimes_{A_i} [{\Ext}_{E_i}])\\
&=(-1)^{l} (b_i)_*(\kappa_{1-l}^{B_i}(\pdv_{E_i}(x_i)))=(-1)^{l} \kappa_{1-l}^B((b_i)_*(\pdv_{E_i}(x_i)))
\end{align*}
as elements of $KK_{1-l}^{\mathrm{sep}}(\CM, B)=KK_{1-l}(\CM, B)$. Naturality of the boundary maps $(b_i)_*\circ \pdv_{E_i}=\pdv_E \circ (a_i)_*$ for $a_i:A_i\to A$ finishes the proof.
\end{proof}

\subsection{Proof of the Main Result}
\label{ssec:kkproduct}

For compatibility with the previous section let us abbreviate the algebras in the bulk-edge exact sequence \eqref{eq:bulk_edge_exact} as follows:
$$B := C^*_u (\Zz_d \subset \Xx), \qquad E:= E_{\Yy_d}, \qquad A := C^*_u(\Xx).$$

The main technical result needed for the proof of Theorem~\ref{th:bulk_edge} is the computation of the dual of the boundary map, which describes how the extension class acts on K-homology classes:
\begin{proposition}
\label{prop:bulkedge_kk}
There exists a directed set $I$ of subextensions as in Proposition~\ref{prop:ext_boundarymap} such that the pullbacks $[D^{i}_{f}]\in KK_{d}(A_i,\CM)$ and $[D^{i}_{\underline{f}}]\in KK_{d-1}(B_i,\CM)$ along the inclusion maps $a_i: A_i \to A$ and $b_i: B_i \to B$ satisfy
\begin{equation}
\label{eq:kk_product_ext_i}
[{\Ext}_{E_i}] \,\otimes\, [D_{\underline{f}}^{i}] =[D_{f}^{i}]
\end{equation}
as an equality in $KK_{d}(A_i, \CM).$
\end{proposition}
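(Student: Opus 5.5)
We construct the directed set first, then compute the extension class concretely, and finally verify the product via Kucerovsky's criterion on each separable stage. \textbf{Directed set.} Let $I$ consist of triples of separable $C^*$-subalgebras $A_i\subset A$, $B_i\subset B$, $E_i\subset E$ such that:
\begin{itemize}
\item[(a)] $E_i$ is stable under the maps $(\hat a,a)\mapsto (\chi \hat a,\, a)$ and related cut-offs.
\item[(b)] $A_i$ and $B_i$ are generated by elements of the dense subalgebras $\CM_u(\Xx)$, respectively $\CM_u(\Zz_d\subset\Xx)$, together with their images under a fixed countable family of multiplication operators.
\item[(c)] $q(E_i)=A_i$ and $E_i\cap \iota(B)=\iota(B_i)$.
\end{itemize}
Here $\chi=\chi_{\{f_d\geq 0\}}$, or a smoothed version $g(f_d)$ with $g$ a continuous switch function. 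The countable family in (b) should include $g(f_d)$, $\frac{f_i}{1+|f|^2}$, the resolvents $(1+|\underline f|^2+\epsilon f_d^2)^{-1}$, and so on. Such triples are produced by a standard countable closure (``bootstrap'') argument: start from any separable $A'$, $B'$, alternately enlarge by adding lifts, cut-offs, products and countable approximate units, and take the closure of the union. Cofinality and the inclusion diagram \eqref{eq:comm_diag_i_to_j} are then automatic.

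\textbf{Semisplitting and the Busby invariant.} I would take the section $\sigma_i(a)=(P a P,\, a)$ with $P=g(f_d)$ or $P=\chi$. Membership $\sigma_i(a)\in E$ requires $PaP-a\in C^*_u(\Yy_d^c\subset\Xx)$. This holds for finite-propagation $a$ because of Definition~\ref{def:dirac_type_adapted}(ii)--(iii): $P-1$ is supported near $\Yy_d^c$, and it extends to all $a$ by continuity. Since $E_i$ is closed under this map by (a), $\sigma_i$ is a completely positive contractive section. The Busby invariant is then $\tau(a)=PaP+B_i$, with $P$ acting as a multiplier of $B_i$. Here $[P,a]$ is compact in $B_i$ as a Hilbert module over itself: for finite-propagation $a$ the commutator is supported near $\Zz_d$, so it lies in $B$, and (b) keeps it in $B_i$. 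Moreover $P^2-P$ is supported near $\Zz_d$. Proposition~\ref{prop:check_extension} therefore represents $[{\Ext}_{E_i}]$ by the cycle $(B_i\hat\otimes\CM_1,\pi,(2P-1)\hat\otimes\ep)$.

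\textbf{Kasparov product.} The internal product $(B_i\hat\otimes\CM_1)\hat\otimes_{B_i}(\Hh_\Xx\hat\otimes\CM_{d-1})$ is identified with a submodule of $\Hh_\Xx\hat\otimes\CM_d$. Using the convention that $\ep$ becomes $\sigma_d$, it contains $\overline{B_i\Hh_\Xx}\hat\otimes\CM_d$, possibly after adding a degenerate complement. The candidate unbounded representative of the product is then $D_f=D_{\underline f}+f_d\hat\otimes\sigma_d$, and I would check Kucerovsky's criterion for it. First, the connection condition: $[D_f,T_\xi]$ must be bounded for $\xi$ in a dense set. This follows because $D_f-D_{\underline f}$ is a multiplication operator that commutes with $b$, together with the Lipschitz bounds of Lemma~\ref{lemma:lipschitz}. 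Second, the domain condition $\Dom D_f\subset\Dom D_{\underline f}$ is immediate. Third, the positivity condition requires $\langle D_{\underline f}x, D_f x\rangle+\langle D_f x,D_{\underline f}x\rangle\geq -C\langle x,x\rangle$; this reduces to $2|\underline f|^2\geq 0$ since the Clifford generators anticommute. Fourth, the operator $D_f$ must be compatible with $F=2P-1$, which amounts to semiboundedness of $\{f_d\hat\otimes\sigma_d,\, F\hat\otimes\sigma_d\}$ modulo bounded terms. Take $P=g(f_d)$ with $g$ odd-symmetric and $\mathrm{sgn}(g-\tfrac12)=\mathrm{sgn}(t)$. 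Then $f_d(2g(f_d)-1)\geq 0$, so positivity holds exactly on the nose. By Lemma~\ref{lemma:homotopy} one may replace $f_d$ by this deformation if needed.

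\textbf{Main obstacle.} The main difficulty is that $[{\Ext}_{E_i}]$ lives over $A_i$, $B_i$, while $D_f$ is defined on $\Hh_\Xx$. One must show that the relatively compact resolvent conditions hold and that the module identification is valid inside the separable subalgebras. Concretely, $(1+D_f^2)^{-1}a$ must be compact for $a\in A_i$ on the reduced module, which uses Lemma~\ref{lemma:dirac_op} and adaptedness condition (ii). It must also be ensured that the chosen $P$ lies in $M(B_i)$ rather than only in $M(B)$. This is exactly why the countable closure must include the needed functions of $f_d$; I expect this bookkeeping, rather than the operator inequalities, to be the delicate step. Finally, the graded sign conventions of Appendix~\ref{sec:index} must be tracked so that the product equals $[D^i_f]$ and not $-[D^i_f]$.
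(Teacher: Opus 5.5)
Your directed set and semisplitting are fine in spirit. Generating $A_i,B_i$ by finite-propagation elements through a countable bootstrap is a workable substitute for the paper's invariance under the $\RM^d$-action $e^{-\imath f\cdot t}(\cdot)e^{\imath f\cdot t}$ and under $\Cc=C_0(f_d)$. You would also need to close $A_i$ under $a\mapsto[f_k,a]$, so that these commutators lie in $M(B_i)$.

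The genuine gap is in the product step. Theorem~\ref{kucerovsky} requires \emph{unbounded} cycles for both factors, but you only represent $[\Ext_{E_i}]$ by the bounded cycle $(B_i\hat\otimes\CM_1,\pi,(2P-1)\hat\otimes\ep)$. The missing idea is that, with $P=\rho(f_d)$, this is the bounded transform of $S_X=f_d\hat\otimes\ep$ acting by left multiplication on the Hilbert module $B_i\hat\otimes\CM_1$. You then have to show that this is an unbounded $KK(A_i,B_i\hat\otimes\CM_1)$-cycle:
\begin{itemize}
\item regularity needs $(f_d+\imath)^{-1}B_i$ to be dense in $B_i$;
\item compact resolvent needs $A_i(f_d+\imath)^{-1}\subset B_i$;
\item $[f_d,a]$ must lie in $M(B_i)$ for a dense set of $a\in A_i$.
\end{itemize}
This is the content of Lemma~\ref{lem:Mfd_multiplier} and Proposition~\ref{prop:ext_class_unbounded}. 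It is exactly why Proposition~\ref{prop:sep_approx} arranges $\Cc B_i$ to be dense in $B_i$ and $A_i\Cc\subset B_i$.

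Because this unbounded first factor is missing, your verification of the criterion uses the wrong operators. Conditions (ii) and (iii) of Theorem~\ref{kucerovsky} concern $S=S_X\hat\otimes\one=f_d\hat\otimes\one\hat\otimes\ep$, the operator of the \emph{first} factor, not $D_{\underline f}$. One needs $\Dom D_f\subset\Dom S$ and $\langle S\psi,D_f\psi\rangle+\langle D_f\psi,S\psi\rangle=2\|S\psi\|^2\geq 0$. Both follow from $ST+TS=0$, where $T=T_Y$ is the edge Dirac operator. Your fourth condition is not part of the theorem.

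Your connection argument is also wrong as stated. The operator $f_d$ does not commute with $b$ in general. In $D_fL_x-(-1)^{|x|}L_xT_Y=Sx+[T,x]$, the term $Sx$ is unbounded for general $x$. It is bounded precisely when $x\in\Dom(S_X)$, which is why the paper uses $x\in\phi_1(\Aa_i)(S_X+\imath)^{-1}(\Bb_i\hat\otimes\CM_1)$. Alternatively, for finite-propagation $x$ supported near $\Zz_d$, $f_dx$ is bounded by Definition~\ref{def:dirac_type_adapted}(ii).

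If you want to keep the bounded first factor, the correct tool is the Connes--Skandalis criterion. The unbounded connection condition shows that the bounded transform of $D_f$ is an $F_2$-connection; this is the Lemma~10 ingredient of Theorem~\ref{kucerovsky}. Positivity is then the exact identity $\{\Theta(f_d)\hat\otimes\one\hat\otimes\ep,\,D_f(1+D_f^2)^{-1/2}\}=2\Theta(f_d)f_d(1+|f|^2)^{-1/2}\geq 0$ with $\Theta=2\rho-1$, which is what your fourth condition is really after. As written, however, the proposal sets up neither route correctly.
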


Before proving this, let us first show how this completes the proof of the bulk-edge correspondence:
\begin{proof} (of Theorem~\ref{th:bulk_edge}) 
Since the nets $A_i\to A$ and $B_i\to B$ are cofinal, Proposition~\ref{prop:bulkedge_kk} implies
$$[{\Ext}_E]_{\mathrm{sep}} \otimes [D_{\underline{f}}] = [D_f]$$
as classes in $KK^{\mathrm{sep}}$ since the Kasparov product is uniquely determined by \eqref{eq:kk_product_ext_i} and naturality, following a similar argument to the proof of Proposition~\ref{prop:ext_boundarymap}.

For $\tilde{x}=\kappa_j^A(x)$ and $\tilde{y}=\kappa_{1-j}^B(\pdv_{E}(x))$ we therefore have
\begin{align*}
\langle \pdv_{E}(x), [D_{\underline{f}}]\rangle = \tilde{y} \otimes [D_{\underline{f}}] &= (-1)^{j} (\tilde{x} \otimes [{\Ext}_E]_{\mathrm{sep}}) \otimes [D_{\underline{f}}]\\
&=(-1)^{j}\tilde{x} \otimes ([{\Ext}_E]_{\mathrm{sep}} \otimes [D_{\underline{f}}]) =(-1)^{j}\langle x, [{\Ext}_E]_{\mathrm{sep}}\otimes [D_{\underline{f}}]\rangle
\end{align*}
where the first and last equality hold by Definition~\ref{def:index} and Proposition~\ref{prop:ext_boundarymap} respectively, whereas the middle equality is just associativity of the product.

\end{proof}
We now proceed with the proof of Proposition~\ref{prop:bulkedge_kk}. The first step is to construct the net of subextensions. Observe that the extension classes in KK-theory do not depend on the semisplitting at all, hence we are free to choose one which is adapted to the function $f_d$. Define a smoothed version of the half-space projection as the multiplication operator
$$\Pp_d := \rho(f_d)$$
with $\rho: \RM\to \RM$ a smooth increasing function such that $\supp(\rho')$ is compact, $\lim_{\lambda\to \infty}\rho(\lambda)=1$, $\lim_{\lambda\to -\infty}\rho(\lambda)=0$ and $\rho(-\lambda)=1-\rho(\lambda)$. 
By Definition~\ref{def:dirac_type_adapted}, the completely positive contraction
\[\sigma: a \mapsto (\Pp_d a \Pp_d,a)\]
is a semisplitting for the exact sequence: For finite-propagation $a\in \CM_u^*(\Xx)$ one clearly has $\Pp_d a \Pp_d\in C^*_u(\Yy_d\subset \Xx)$ and the two terms of $\Pp_d a \Pp_d-a= \Pp_d[a,\Pp_d] + (1-\Pp_d^2)a$ are supported near $\Zz_d$ and $\Yy_d^c$ respectively.

The following elementary fact will be important:
\begin{lemma}
\label{lem:Mfd_multiplier}
For $A,B$ as above, the map $f_d$ acting via left multiplication $b \mapsto {f_d}\, b$ yields a densely defined regular self-adjoint operator such that $A({f_d}+\imath)^{-1}\subset B$.
\end{lemma}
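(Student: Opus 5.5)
We need to show three things: that left multiplication by $f_d$ defines a densely defined regular self-adjoint operator on $B$, viewed as a right Hilbert $B$-module, and that $a(f_d+\imath)^{-1}\in B$ for every $a\in A$. The word ``multiplier'' indicates that both the operator and the resolvent are to be read inside $M(B)$ and its unbounded analogue.

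\emph{Operator properties.} We take the domain
\[
\Dom(f_d)=\{b\in B:\ f_d b \in B\},
\]
where $f_d b$ is first formed as a possibly unbounded product on $\Hh_\Xx$. For density, take any $b\in\CM_u(\Zz_d\subset\Xx)$; it is supported in some $(\Zz_d)_R$. By Definition~\ref{def:dirac_type_adapted}(ii), $|f_d|\leq S$ on $(\Zz_d)_R$, so $f_d b=(f_d\chi_{(\Zz_d)_R})b$. This has the same propagation and support as $b$ and stays uniformly locally compact, so $f_d b\in B$. Hence the dense subalgebra $\CM_u(\Zz_d\subset\Xx)$ lies in the domain. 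Symmetry is clear from $\langle f_d b,c\rangle=b^*f_d c=\langle b,f_d c\rangle$. For self-adjointness and regularity, it suffices that $(f_d\pm\imath)^{-1}$, a bounded Borel function, acts as a multiplier of $B$ with dense range. It is a multiplier because $B$ is an ideal of the algebra generated by bounded Borel multiplication operators together with $A$: multiplication by a bounded Borel function preserves propagation, uniform local compactness and support near $\Zz_d$. Dense range follows because $(f_d\pm\imath)^{-1}$ maps $(f_d\pm\imath)\CM_u(\Zz_d\subset\Xx)$ back onto $\CM_u(\Zz_d\subset\Xx)$. The same identity shows that $f_d\pm\imath$ is surjective from the domain onto $B$: for $c\in B$, put $b=(f_d\pm\imath)^{-1}c\in B$; then $(f_d\pm\imath)b=c\in B$. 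This is exactly the criterion for a closed symmetric operator to be regular self-adjoint, as in Lance's book.

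\emph{Main step: $A(f_d+\imath)^{-1}\subset B$.} Since $B$ is closed and the resolvent is bounded, it suffices to treat $a\in\CM_u(\Xx)$ of propagation $r$. Fix $\epsilon>0$ and choose $R$ with $(1+R^2)^{-1/2}<\epsilon$. Then split
\[
(f_d+\imath)^{-1}=\chi_{\{|f_d|\le R\}}(f_d+\imath)^{-1}+\chi_{\{|f_d|>R\}}(f_d+\imath)^{-1},
\]
where the second term has norm less than $\epsilon$. In the first term, $\chi_{\{|f_d|\le R\}}$ is supported in $(\Zz_d)_S$ by (ii) of Definition~\ref{def:dirac_type_adapted}. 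So $a\chi_{\{|f_d|\le R\}}(f_d+\imath)^{-1}$ has finite propagation, is uniformly locally compact, and vanishes outside $(\Zz_d)_{S+r}$ when multiplied from the left or right. It therefore lies in $\CM_u(\Zz_d\subset\Xx)$. Letting $\epsilon\to 0$ gives $a(f_d+\imath)^{-1}\in B$. The same argument applies on the right, and with $(f_d-\imath)^{-1}$.

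\emph{Expected obstacle.} The only delicate point is the bookkeeping in the main step: one must check that multiplication by indicator functions of Borel sets really preserves uniform local compactness and the ``supported near'' property, for a general geometric module. Both follow directly from the definitions, since Borel multiplication operators commute with the $\chi_{B_r(x)}$ and preserve finite-rank approximations with the same rank bound.
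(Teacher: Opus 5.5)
Your proof is correct and follows essentially the same route as the paper. Both arguments rest on condition (ii) of Definition~\ref{def:dirac_type_adapted}: it makes $f_d$ bounded on the support of elements of $\CM_u(\Zz_d\subset\Xx)$, so that the resolvent of $f_d$ preserves $B$ with dense range. The paper phrases this as $(1+\frac{\imath}{n}f_d)^{-1}b\to b$ and invokes \cite[Lemma 2.6]{KSt}, where you instead verify Lance's criterion directly through surjectivity of $f_d\pm\imath$. Both then obtain $A(f_d+\imath)^{-1}\subset B$ by norm-approximating the resolvent with a function of $f_d$ supported in $\{|f_d|\leq R\}\subset(\Zz_d)_S$; the paper uses a $C_c$-function, you use an indicator cutoff.
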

\begin{proof}
Observe that ${f_d}$ is already self-adjoint as a densely defined Hilbert-space operator on $\Hh_\Xx$.
Hence, it is enough to show that $({f_d}+\imath)^{-1}B\subset B$ is norm-dense \cite[Lemma 2.6]{KSt}.

For any $\epsilon>0$ we can approximate any $b\in B$ by some $\tilde{b}\in \CM_u(\Zz_d\subset \Xx)$ supported in some $R$-thickening of $\Zz_d$ such that $\|b-\tilde{b}\|<\epsilon$. One can pick a continuous cutoff function $g:\Xx\to \RM$ which is equal to one on $(\Zz_d)_R$ and zero outside $(\Zz_d)_{2R}$. Note now that $g f_d$ is a bounded function by Definition~\ref{def:dirac_type_adapted}(ii). We therefore get
$$\left(1+ \frac{\imath}{n}{f_d}\right)^{-1} \tilde{b} - \left(1+ \frac{\imath}{n}{g f_d}\right)^{-1} \tilde{b} \longrightarrow 0$$
in operator norm, with uniform convergence giving $\left(1+ \frac{\imath}{n}{g f_d}\right)^{-1} \tilde{b} \longrightarrow \tilde{b}$. By an $\frac{\epsilon}{2}$-argument we have
$$\lim_{n\to \infty} \left(1+ \frac{\imath}{n}{f_d}\right)^{-1} b = b$$
which implies $(f_d + \imath)^{-1}B \subset B$ is norm-dense. Similarly one has $a({f_d}+\imath)^{-1}\subset B$ for every $a\in \CM_u^*(\Xx)$, as $a g(f_d)$ is supported near $\Zz_d$ for every $g\in C_c(\RM)$ and $({f_d}+\imath)^{-1}$ can be approximated in norm by such functions.
\end{proof}

We now construct the net for Proposition~\ref{prop:ext_boundarymap}. Note that it needs to be large enough to define the unbounded representatives for the Kasparov product.

\begin{proposition}\label{prop:sep_approx}
We write the short exact sequence in the form
\[\begin{tikzcd}
	0 & B & E & A & 0
	\arrow[from=1-1, to=1-2]
	\arrow["\iota", from=1-2, to=1-3]
	\arrow["q", from=1-3, to=1-4]
	\arrow["\sigma"', shift right=3, curve={height=6pt}, from=1-4, to=1-3]
	\arrow[from=1-4, to=1-5]
\end{tikzcd}\]
and recall that $B$ is a subalgebra of $A$.

There exists a directed set $I$ and separable algebras $(A_i)_{i\in I}$, $(B_i)_{i\in I}$, $(E_i)_{i\in I}$ as in Proposition~\ref{prop:ext_boundarymap} with the following properties:
\begin{enumerate}
\item[(i)] Identify $f_k$ with the multiplication operators by $f_k$ on $\Hh_\Xx$, $k=1,\ldots,d$. Then both $A_i$ and $B_i$ are invariant under $\RM^d$-actions \begin{equation}
\label{eq:r_actions}
\alpha: (a,t) \mapsto e^{-\imath f\cdot t} a e^{\imath f\cdot t}.
\end{equation}
\item[(ii)] Let $\Cc:= C_0(f_d)$ be the algebra generated by the $C_0$-functional calculus of $f_d$. We have $\Cc B_i \subset B_i$, $A_i \Cc \subset B_i$ and $A_i B_i \subset B_i$. Moreover, $\Cc B_i$ is norm-dense in $B_i$.
\end{enumerate}
\end{proposition}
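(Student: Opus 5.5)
The plan is a closure argument over a countable iteration, the standard way to manufacture separable subalgebras invariant under countably many operations. Take as index set $I$ the pairs $(A',B')$ of separable subalgebras $A'\subset A$, $B'\subset B$, ordered by inclusion; cofinality then holds by construction. To each such pair assign $(A_i,B_i,E_i)$ by enlarging $(A',B')$ countably many times. For a separable algebra $S$, write $\mathrm{cl}(S)$ for the $C^*$-algebra generated by the following sets:
\begin{itemize}
\item $\alpha_t(S)$ for $t\in\QM^d$;
\item $g(f_d)\,s$ and $s\,g(f_d)$ for $s\in S\cap B$ and $g$ in a fixed countable dense subset of $C_0(\RM)$;
\item $s\,g(f_d)$ for $s\in S$;
\item the elements $\Pp_d s\Pp_d$ and $\Pp_d s\Pp_d - s$.
\end{itemize}
The point is that each step adds only countably many generators, so separability is preserved.

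Concretely, start with $A^{(0)}=A'+B'$ and $B^{(0)}=B'$. Then iterate
\[
B^{(n+1)}=C^*\big(B^{(n)},\,\alpha_{\QM^d}(B^{(n)}),\,\Cc_0 B^{(n)},\,B^{(n)}\Cc_0,\,A^{(n)}\Cc_0,\,A^{(n)}B^{(n)},\,B^{(n)}A^{(n)},\,\Pp_dA^{(n)}\Pp_d-A^{(n)}\big)\cap B,
\]
where $\Cc_0\subset\Cc$ is countable dense, and the intersection with $B$ is taken on a countable dense generating set. Correspondingly,
\[
A^{(n+1)}=C^*\big(A^{(n)},\,\alpha_{\QM^d}(A^{(n)}),\,B^{(n+1)}\big).
\]
All the listed products do land in $B$: $A\Cc\subset B$ is Lemma~\ref{lem:Mfd_multiplier}, $\Cc B\subset B$ follows from that lemma together with $\Cc\subset M(B)$, $B$ is an ideal in $A$, and $\Pp_da\Pp_d-a\in C^*_u(\Yy_d^c\subset\Xx)$ but is not in $B$. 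That last point means I treat the extension differently. I set
\[
E_i=\{(\hat a,a)\in E:\ a\in A_i,\ \hat a-\Pp_da\Pp_d\in B_i\},
\]
which equals $\iota(B_i)+\sigma(A_i)$. This is a $C^*$-algebra provided $\Pp_d a\Pp_d\Pp_d b\Pp_d-\Pp_dab\Pp_d=\Pp_da(\Pp_d^2-1)b\Pp_d$ lies in $B_i$ and $\Pp_d$ multiplies $B_i$. Both follow if I also add $\Pp_dB^{(n)}$, $B^{(n)}\Pp_d$ and the elements $\Pp_d a(\Pp_d^2-1)b\Pp_d$ to the iteration. These are in $B$ because $(1-\Pp_d^2)$ is a function of $f_d$ that is supported near $\Zz_d$ times a function supported on $\Yy_d^c$; after sandwiching with $\Pp_d$, the $\Yy_d^c$ part is killed except near $\Zz_d$, which puts the product in $C^*_u(\Yy_d)\cap C^*_u(\Yy_d^c)$. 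Take $A_i=\overline{\bigcup_n A^{(n)}}$ and $B_i=\overline{\bigcup_n B^{(n)}}$. The rows are exact with semisplitting $\sigma_i=\sigma|_{A_i}$, and for $i\prec j$ the construction is monotone (I index by the pairs and the procedure is increasing in its input), so the diagrams \eqref{eq:comm_diag_i_to_j} commute.

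It remains to check (i) and (ii) in the limit. For (i), invariance under rational $t$ passes to closures, and $t\mapsto\alpha_t(a)$ is norm-continuous for $a\in\CM_u(\Xx)$ by Lemma~\ref{lemma:lipschitz}, since the commutator $[f,a]$ is bounded. It is therefore continuous on all of $A$, which upgrades invariance from $\QM^d$ to $\RM^d$. For (ii), the inclusions $\Cc B_i$, $A_i\Cc$, $A_iB_i\subset B_i$ hold by density of $\Cc_0$ and continuity. Density of $\Cc B_i$ in $B_i$ follows from the approximate-unit argument in the proof of Lemma~\ref{lem:Mfd_multiplier}, namely $(1+\tfrac{\imath}{n}f_d)^{-1}b\to b$, with $(1+\tfrac{\imath}{n}f_d)^{-1}-1\in$ the unitization of $\Cc$. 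I expect the main obstacle to be this extension bookkeeping: making sure $E_i$ is closed under multiplication and that $q(E_i)=A_i$ and $\ker q|_{E_i}=B_i$ exactly. The enlarging must close $B_i$ under the $\Pp_d$-corrections without leaving $B$.
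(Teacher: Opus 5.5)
Your proposal is correct, but it obtains $E_i$ and $B_i$ differently from the paper. The paper does not build $B_i$ by hand. It sets $A_i=C^*\{(\lambda\one+c)\alpha_t(a): a\in A',\ c\in\Cc,\ t\in\RM^d\}$, lets $\hat B_i$ be the smallest $\alpha$- and $\Cc$-invariant $C^*$-algebra containing $B'$ and $A_i\Cc$, puts $E_i=C^*(\sigma(A_i),\iota(\hat B_i))$, and then \emph{defines} $B_i:=\iota^{-1}(\ker q|_{E_i})$. With that choice several things come for free:
\begin{itemize}
\item exactness of the rows and monotonicity in $i$ are automatic;
\item the correction terms $\Pp_d a(\Pp_d^2-1)a'\Pp_d$ and $\Pp_d a\Pp_d b$ are absorbed into the kernel without ever being computed, since $\ker q|_{E_i}\subset\ker q=\iota(B)$;
\item the $\alpha$- and $\Cc$-invariance of $B_i$ is inherited from equivariance of $\iota$, $q$, $\sigma$, because $\Pp_d$ commutes with $e^{\imath f\cdot t}$ and with $\Cc$;
\item $A_iB_i\subset B_i$ is derived afterwards from $A_i\Cc\subset\hat B_i$ and density of $\Cc B_i$.
\end{itemize}

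You instead build $B_i$ first by a countable closure iteration and set $E_i=\iota(B_i)+\sigma(A_i)$. This gives a fully explicit description of $E_i$ and its ideal. It also makes the two-sided multiplier properties ($A_iB_i$, $B_iA_i$, $\Pp_dB_i$, $B_i\Pp_d\subset B_i$) hold by construction. The cost is exactly the bookkeeping you flagged: you must check by hand that $\iota(B_i)+\sigma(A_i)$ is closed under products and that the corrections lie in $B$. That check goes through. Your description of $1-\Pp_d^2$ as a product of a function near $\Zz_d$ and one on $\Yy_d^c$ is not accurate, though the conclusion is right. A cleaner justification is the identity $\Pp_d a(1-\Pp_d)=\Pp_d(1-\Pp_d)a-\Pp_d[a,\Pp_d]$. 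Here $\Pp_d(1-\Pp_d)\in\Cc$, and $[a,\Pp_d]$ is supported near $\Zz_d$ for finite-propagation $a$.

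Three small slips need fixing.
\begin{itemize}
\item $A^{(0)}$ should be $C^*(A',B')$, not $A'+B'$.
\item The generator $\Pp_dA^{(n)}\Pp_d-A^{(n)}$ in $B^{(n+1)}$ is superfluous and can be dropped.
\item For density of $\Cc B_i$ you need $c_n=(1+\frac{\imath}{n}f_d)^{-1}\in\Cc$ itself; this holds because it is the $C_0$-function $\lambda\mapsto(1+\imath\lambda/n)^{-1}$ of $f_d$. Knowing only $c_n-1\in\Cc^+$ would give $c_nb\in\Cc B_i+\CM B_i$, which proves nothing.
\end{itemize}
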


\begin{proof}
As the directed set $I$ we take the set of pairs $(A',B')$ of separable subalgebras of $A$ and $B$ respectively with $(A',B') \preceq (A'', B'')$ if $A'\subseteq A''$ and $B'\subseteq B''$. 

Fix any pair $i=(A',B') \in I$. The left action by $\Cc$ and the $\RM^d$-action $\alpha$ are norm-continuous and commute, hence the $C^*$-algebraic span
$$A_i= C^*\left(\{(\lambda \one+c)\,\alpha_{t}(a): a\in A', \lambda\in \CM, c\in \Cc, t\in \mathbb{R}^d\}\right),$$
is a separable $C^*$-algebra which is $\alpha$-invariant and closed under left and right multiplication by $\Cc$. By the same argument one can construct a separable algebra $\hat{B}_i$ as the smallest $C^*$-algebra which contains $B'$, $A_i \Cc$ and is invariant under $\alpha$ and $\Cc$. Define then $E_i=C^*(\sigma(A_i), \iota(\hat{B}_i)) \subset E$, which is also separable.

We set $B_i = \iota^{-1}(\ker(q\rvert_{E_i}))$ to obtain the exact sequence
\[\begin{tikzcd}
	0 & {B_i} & {E_i} & {A_i} & 0
	\arrow[from=1-1, to=1-2]
	\arrow["{\iota}", from=1-2, to=1-3]
	\arrow["{q}", from=1-3, to=1-4]
	\arrow[from=1-4, to=1-5]
\end{tikzcd}\]
and note that $B'\subset \hat{B}_i\subset B_i$ by definition. 
By construction, the conditions of Proposition~\ref{prop:ext_boundarymap} hold as for $i \preceq j$, $A_i \subset A_j, E_i \subset E_j$, and hence $B_i \subset B_j$.
Thus, the nets $(A_i)_{i\in I}, (B_i)_{i\in I}$ are increasing and cofinal families and the diagrams \eqref{eq:comm_diag_i_to_j} commute and have exact rows. The restriction of the semisplitting $\sigma: A_i\to E_i$ is also well-defined and gives a semisplitting for the subextension.

With the diagonal actions on $E_i$ the maps $\iota$ and $q$ are equivariant with respect to $\alpha$ as well as left and right multiplication by $\Cc$; this uses that $\sigma$ is also equivariant, hence the generating set for $E_i$ is invariant. 
The kernel $B_i$ is therefore also invariant, which gives $\Cc B_i\subset B_i$. Norm-density of $\Cc B_i$ in $B_i$ follows from Lemma~\ref{lem:Mfd_multiplier}: 
As the $C_0(f_d)$ representation on $B$ is nondegenerate, $\Cc$ contains an approximate unit for $B$, and hence $b\in \overline{\Cc b}$ for every $b\in B$. 
Finally, we have $A_i\Cc \subset \hat{B}_i\subset B_i$ by definition and then also $\mathrm{span} \{ab: a\in A_i,b\in B_i\} = A_i \overline{\Cc B_i} \subset B_i$, i.e., $A_i$ is also contained in the multiplier algebra of $B_i$.

\end{proof}
With condition (i), we have the following consequence of Lemma~\ref{lem:normcont} with $\RM^d$ in place of $\RM$:
\begin{corollary}
\label{cor:commutators_smooth}
$A_i$ and $B_i$ have the norm-dense subalgebras $\Aa_i$ and $\Bb_i$ of elements which are smooth with respect to the $\RM^d$-actions \eqref{eq:r_actions}. For each $a\in \Aa_i$ and $b\in \Bb_i$,
$$[D_f, a]\in \Aa_i \hat{\otimes}\CM_{d}, \quad [f_d, a]\in \Aa_i, \quad [D_{\underline{f}}, b]\in \Bb_i\hat{\otimes}\CM_{d-1}.$$
\end{corollary}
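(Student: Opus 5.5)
Since the $\RM^d$-action $\alpha_t(a)=e^{-\imath f\cdot t}ae^{\imath f\cdot t}$ preserves $A_i$ and $B_i$ by Proposition~\ref{prop:sep_approx}(i), I would first check that it is norm-continuous on these algebras. Continuity holds on the generators used to build $A_i$ and $\hat B_i$: for a finite-propagation uniformly locally compact $a$, Lemma~\ref{lemma:lipschitz} and the large-scale Lipschitz property of $f$ give $\|\alpha_t(a)-a\|\to 0$. It holds on elements of $\Cc$, which commute with $e^{\imath f\cdot t}$. It is preserved by translates $\alpha_s$, products and norm limits, hence holds on all of $A_i$ and $\hat B_i$. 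On $E_i$ with the diagonal action, continuity passes through $\sigma$, because $\Pp_d=\rho(f_d)$ commutes with the action, and then it passes to the kernel $B_i$. This gives strongly continuous actions of $\RM^d$ by automorphisms on separable $C^*$-algebras. Lemma~\ref{lem:normcont}, applied coordinatewise with $\RM^d$ in place of $\RM$, then says that the smooth vectors $\Aa_i$, $\Bb_i$ are dense $*$-subalgebras. One can make them explicit by smoothing with G\aa rding-type convolutions $\int \phi(t)\alpha_t(a)\,dt$ with $\phi\in C_c^\infty(\RM^d)$.

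Next I identify the generators. For $a$ smooth, $\frac{d}{dt_k}\alpha_t(a)\big|_{t=0}=-\imath[f_k,a]$, understood on the common core $E_c$ from the proof of Lemma~\ref{lemma:homotopy}. The derivative exists in norm in $A_i$, so the commutator $[f_k,a]$ extends to a bounded operator lying in $A_i$. Since $\frac{d}{dt_k}$ commutes with the action, this element is again smooth, so $[f_k,a]\in\Aa_i$. In particular $[f_d,a]\in\Aa_i$, and the same argument gives $[f_k,b]\in\Bb_i$ for $b\in\Bb_i$ and each $k$. For the Dirac operators, $a$ is even, so the graded commutator is $[D_f,a\hat\otimes\one]=\sum_k [f_k,a]\hat\otimes\sigma_k$, which lies in $\Aa_i\hat\otimes\CM_d$. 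Likewise $[D_{\underline f},b]=\sum_{k<d}[f_k,b]\hat\otimes\sigma_k\in\Bb_i\hat\otimes\CM_{d-1}$.

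The main obstacle is the domain question: one must show that smooth elements preserve $\Dom(f_k)$, so that the formal derivative really is the closure of the commutator. I would handle it by a standard Stone-theorem argument. For $\psi\in\Dom(f_k)$, differentiate $e^{\imath f_k s}a e^{-\imath f_k s}\cdot e^{\imath f_k s}\psi$ and use the norm-differentiability of $s\mapsto \alpha_{-s e_k}(a)$. This shows that $a\psi\in\Dom(f_k)$, with $f_ka\psi=af_k\psi+\imath\,\delta_k(a)\psi$, where $\delta_k$ is the derivation generated by the $k$-th coordinate of the action.
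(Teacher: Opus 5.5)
Your proposal is correct and is essentially the paper's argument: $\alpha$-invariance of $A_i$ and $B_i$ from Proposition~\ref{prop:sep_approx}(i), combined with Lemma~\ref{lem:normcont} for the $\RM^d$-action. Your Stone-theorem argument spells out the domain invariance that the lemma only asserts. Norm-continuity on $A_i$ and $B_i$ follows directly by restriction, since both lie in $C^*_u(\Xx)$, where Lemma~\ref{lem:normcont} already gives it, so the detour through the generators and through $E_i$ is unnecessary.
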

This gives us the dense subalgebras needed to define unbounded KK-classes without having to make sure $\CM_u(\Xx)$ or $\CM_u(\Zz_d\subset \Xx)$ have norm-dense intersections with $A_i$ or $B_i$ respectively.

With those preliminaries handled, we can now turn our attention to the actual computation of the Kasparov product with the extension class for fixed $i$. A sufficient condition which identifies the Kasparov product of two classes is given by Kucerovsky's criterion \cite[Theorem 13]{Kucerovsky}:

\begin{theorem}[Lemma 10 and Theorem 13 in \cite{Kucerovsky}]\label{kucerovsky}
Suppose that $(X \hat{\otimes}_B Y, \phi_1 \hat{\otimes} \Id , D)$, $(X, \phi_1, S_X)$ and $(Y, \phi_2, T_Y)$ are unbounded cycles representing classes in $KK(A,C)$, $KK(A,B)$ and $KK(B,C)$ respectively. Suppose that
\begin{enumerate}
    \item[(i)] For all $x$ in some dense subset of $\phi_1(A)X$, the operator
    \begin{align*}
        \bigg[
        \begin{pmatrix}
        D & 0 \\
        0 & T_Y \\
        \end{pmatrix},
        \begin{pmatrix}
        0 & L_x \\
        L_x^* & 0 \\
        \end{pmatrix}
        \bigg]
    \end{align*}
    is bounded on $\Dom(D \oplus T_Y)$ with the map $L_x \colon Y \to X \hat{\otimes}_BY$ that sends $e \mapsto x \hat{\otimes} e$,
    \item[(ii)] $\Dom D \subset \Dom S$, where $S = S_X \hat{\otimes} \one_Y$,
    \item[(iii)] $\langle S x, Dx \rangle + \langle Dx,  S x \rangle \geq -\kappa \langle x, x \rangle$ for all homogeneous $x$ in the domain of $D$.
\end{enumerate}
Then $(X \hat{\otimes}_B Y, \phi_1 \hat{\otimes} \Id , D)$ represents the Kasparov product of $[(X, \phi_1, S_X)] \in KK(A,B)$ and $[(Y, \phi_2, T_Y)] \in KK(B,C)$.
\end{theorem}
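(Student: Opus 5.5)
The statement is Kucerovsky's criterion, which we quote from \cite{Kucerovsky}. If one wanted to reprove it, the plan is to reduce it to the Connes--Skandalis characterization of the Kasparov product in the bounded picture. Write $F_D = D(1+D^2)^{-1/2}$, $F_S = S_X(1+S_X^2)^{-1/2}$ and $F_T = T_Y(1+T_Y^2)^{-1/2}$. By Connes--Skandalis, $(X\hat{\otimes}_B Y,\phi_1\hat{\otimes}\Id, F_D)$ represents the product as soon as two conditions hold:
\begin{enumerate}
\item[(a)] $F_D$ is an $F_T$-connection, meaning that for all $x\in X$ the graded commutator of $\begin{pmatrix} F_D&0\\0&F_T\end{pmatrix}$ with $\begin{pmatrix}0&L_x\\L_x^*&0\end{pmatrix}$ is compact;
\item[(b)] $\phi_1(a)[F_S\hat{\otimes}\one, F_D]\phi_1(a)^*\geq 0$ modulo compacts, for all $a\in A$.
\end{enumerate}
The proof therefore consists of deriving (a) from hypothesis (i), and (b) from hypotheses (ii) and (iii).

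For (a), I would use the integral formula
\[
F_D = \frac{1}{\pi}\int_0^\infty \lambda^{-1/2} D(1+\lambda+D^2)^{-1}\,d\lambda,
\]
and the analogous formula for $F_T$, applied to the doubled operator $\tilde D = D\oplus T_Y$ on $(X\hat{\otimes}_B Y)\oplus Y$. Hypothesis (i) says that $[\tilde D, \mathbf{L}_x]$ is bounded, where $\mathbf{L}_x = \begin{pmatrix}0&L_x\\L_x^*&0\end{pmatrix}$. One then expands the commutator $[\tilde D(1+\lambda+\tilde D^2)^{-1},\mathbf{L}_x]$ into resolvents sandwiching $[\tilde D,\mathbf{L}_x]$. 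Each such term is compact provided $\mathbf{L}_x$ composed with a resolvent is compact. For $x=\phi_1(a)x'$ this follows from the relative compactness of the resolvents of $T_Y$ and $D$ (using that $\phi_2(B)$ has compact resolvent products). The resulting norms are $O(\lambda^{-1})$ times the bound from (i), so the integral converges in norm to a compact operator. A density argument then extends the conclusion from the dense subset of $\phi_1(A)X$ in (i) to all of $X$.

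For (b), which I expect to be the main obstacle, the approach is as follows. The domain inclusion (ii) makes $S(1+D^2)^{-1/2}$ bounded. The semiboundedness (iii) says that the anticommutator $SD+DS\geq -\kappa$ holds in the form sense on $\Dom D$. Writing $F_S$ and $F_D$ via the same integral formulas, one has to show that $\phi_1(a)(F_SF_D+F_DF_S)\phi_1(a)^*$ is a positive operator plus a compact. The standard trick is:
\begin{enumerate}
\item First replace $F_S$ by $S(1+D^2)^{-1/2}$. The error is compact after sandwiching with $\phi_1(a)$, because $\phi_1(a)(1+S_X^2)^{-1/2}$ is $B$-compact and tensoring with $Y$ interacts well with the connection from (a).
\item Then write $(1+D^2)^{-1/4}(SD+DS)(1+D^2)^{-1/4}$ and use (iii). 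This yields $\geq -\kappa(1+D^2)^{-1/2}$, and the right-hand side is compact after sandwiching by $\phi_1(a)$.
\end{enumerate}
The delicate points are the commutation of fractional powers $(1+D^2)^{-1/4}$ past $S$, and the fact that $S$ and $D$ need not commute. Both require careful resolvent estimates of the form $\|[S,(1+\lambda+D^2)^{-1}]\|\lesssim(1+\lambda)^{-1/2}$ in the quadratic-form sense, derived from (iii) and (ii). This is where Kucerovsky's Lemma~10 is used.

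Finally, the bounded cycle is homotopic to the unbounded one through the bounded transform. By the Connes--Skandalis uniqueness theorem, the class of $(X\hat{\otimes}_B Y,\phi_1\hat{\otimes}\Id,D)$ therefore equals the Kasparov product $[(X,\phi_1,S_X)]\otimes_B[(Y,\phi_2,T_Y)]$.
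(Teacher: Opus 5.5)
The paper gives no proof of this statement; it imports it from Kucerovsky. Your opening sentence is therefore exactly the paper's treatment. The reproof you sketch, however, has two steps that fail as written.

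\textbf{The density step in (a).} Density only gives the connection condition on the closure of $\phi_1(A)X$, not on all of $X$. The Connes--Skandalis criterion in its usual form requires it for every $x\in X$. Your compactness of $(D+\imath)^{-1}L_x$ genuinely used $x\in\phi_1(A)X$, and (i)--(iii) say nothing about $D$ on the part of $X\hat{\otimes}_BY$ not reached by $\phi_1(A)$.

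For instance, let $X=X_1\oplus X_2$ with $\phi_1(A)X_2=0$ and $S_X\rvert_{X_2}=0$. You may then set $D=0$ on $X_2\hat{\otimes}_BY$ without affecting (i)--(iii). For $x\in X_2$ the connection defect is then $L_xF_T$, which is not compact as soon as $\phi_2(\langle x,x\rangle)\notin\KM(Y)$.

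So you need a version of Connes--Skandalis in which the connection condition is imposed only for $x\in\phi_1(A)X$. That requires its own argument, and it matters here: in Proposition~\ref{prop:kasparov_sum} the representation $\phi_1:A_i\to M(B_i)$ need not be nondegenerate. Note also that Connes--Skandalis needs $A$ separable, which is why the paper only applies the criterion to the separable $A_i$, $B_i$.

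\textbf{Step 1 of (b) is false.} Take the basic product on $\RM^2$:
\begin{itemize}
\item $X=L^2(\RM_x)\otimes C_0(\RM_y)$ over $B=C_0(\RM_y)$, with $S_X$ the Dirac operator in $x$;
\item $Y=L^2(\RM_y)$, with $T_Y$ the Dirac operator in $y$;
\item $D=S+T$ the Dirac operator on $\RM^2$, which satisfies (i)--(iii).
\end{itemize}
Here $F_S\hat{\otimes}\one-S(1+D^2)^{-1/2}$ is a Fourier multiplier whose symbol has norm $|p|\big((1+p^2)^{-1/2}-(1+p^2+q^2)^{-1/2}\big)$. At $p=1$ this tends to $2^{-1/2}$ as $q\to\infty$. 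Sandwiched by $\phi(a)$ with $0\neq a\in C_c(\RM^2)$, it is therefore not compact. The induced error in the anticommutator with $F_D$ is not compact either: it is a scalar multiplier tending to $\sqrt2-1$ along $p=q\to\infty$.

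The underlying reason is that $k\hat{\otimes}\one$ for $k\in\KM_B(X)$ is not compact on $X\hat{\otimes}_BY$. It becomes compact only after multiplication by a resolvent of $D$, using the connection property, and the difference between the two normalizations is not of that form. Positivity therefore has to be extracted from (ii)--(iii) while keeping $F_S\hat{\otimes}\one$ and $F_D$ in their own normalizations. That is precisely the part your sketch hands back to Kucerovsky.
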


The first factor in the product is the extension class, whose unbounded representative is as follows:

\begin{proposition}
\label{prop:ext_class_unbounded}
Let $\pi: A \to M(B_i)$ be the canonical representation.
Let $(X, \phi_1, S_X)$ be the triple with the Hilbert $A_i$-$(B_i \hat{\otimes} \CM_1)$-bimodule $X= B_i \hat{\otimes} \CM_1$, the representation $\phi_1 \colon A_i \to \Ll_{B_i \hat{\otimes} \CM_1}(X)$ with $\phi_1=\pi \hat{\otimes} \one_{\CM_1}$ and the unbounded Dirac operator
$$S_X = f_d \hat{\otimes} \ep.$$
Then the unbounded Kasparov cycle $(X, \phi_1, S_X)$ is well-defined and represents the class of the extension $E_i$ in $KK_1(A_i, B_i)=KK(A_i, B_i \hat{\otimes}\CM_1)$.
\end{proposition}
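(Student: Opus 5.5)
We verify the two conditions of Definition~\ref{def:unbddkk} for $(X,\phi_1,S_X)$, and then identify its bounded transform with the extension-class cycle of Proposition~\ref{prop:check_extension}.

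\emph{Well-definedness.} By Lemma~\ref{lem:Mfd_multiplier}, left multiplication by $f_d$ is a densely defined regular self-adjoint operator on $B$. The same argument applies to $B_i$, because $\Cc B_i\subset B_i$ and $\Cc B_i$ is dense in $B_i$ (Proposition~\ref{prop:sep_approx}(ii)). Indeed, $(f_d+\imath)^{-1}B_i\subset B_i$ is dense, since $(1+\tfrac{\imath}{n}f_d)^{-1}b\to b$ for all $b\in \Cc B_i$. Tensoring with $\ep$, which is odd and satisfies $\ep^2=1$, makes $S_X$ odd, self-adjoint and regular, with $S_X^2=f_d^2\hat{\otimes}\one$.

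For compact resolvent, we compute $\phi_1(a)(S_X+\imath)^{-1}=\phi_1(a)(S_X-\imath)(1+f_d^2)^{-1}$. It is enough to show that $a\,g(f_d)\in B_i$ for $g\in C_0(\RM)$. This holds because $A_i\Cc\subset B_i$, and left multiplication by elements of $B_i$ is compact on the module $B_i$. For bounded commutators, we take $\Aa_i$ from Corollary~\ref{cor:commutators_smooth}. Then $[f_d,a]\in \Aa_i$ is bounded, and smooth vectors preserve $\Dom(S_X)$, since $[f_d,a]$ is the generator of the $\RM$-action in the $d$-th direction applied to $a$.

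\emph{Identification of the class.} We use the normalizing function $\Theta(t)=2\rho(t)-1$, which is odd, increasing, and tends to $\pm1$ at $\pm\infty$; if strict monotonicity fails, we homotope to one that is strictly monotone. This gives the bounded cycle $(X,\phi_1,(2\Pp_d-1)\hat{\otimes}\ep)$ with $\Pp_d=\rho(f_d)\in M(B_i)$. By Proposition~\ref{prop:check_extension}, it remains to show that the Busby invariant of $E_i$ equals $a\mapsto \Pp_d\pi(a)\Pp_d+B_i$, and that $[\Pp_d,\pi(a)]$, $(\Pp_d^2-\Pp_d)\pi(a)\in B_i$ (note $\Pp_d^*=\Pp_d$).

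\emph{Busby invariant.} Embed $E_i\to M(B_i)$ via $(\hat a,a)\mapsto \hat a$; this is the canonical map, since $\iota(b)=(b,0)$. With the semisplitting $\sigma(a)=(\Pp_d a\Pp_d,a)$, we get $\tau(a)=q$-lift image $=\Pp_d a\Pp_d+B_i$. One must check that $\hat a$ multiplies $B_i$ into itself. This follows since $E_i$ contains $\iota(B_i)$ as an ideal.

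\emph{Compactness relations.} The term $\Pp_d^2-\Pp_d=-\rho(1-\rho)(f_d)$ lies in $\Cc$, because $\rho(1-\rho)\in C_c(\RM)$ up to the limits. So $(\Pp_d^2-\Pp_d)a\in \Cc A_i$, and taking adjoints in $A_i\Cc\subset B_i$ places this in $B_i$. For the commutator $[\rho(f_d),a]$, we write it via the $\RM$-action, for instance through a Helffer--Sjöstrand or Fourier representation $\rho'(f_d)$-type expression. Alternatively, we approximate $\rho$ by $h+c$ with $h'\in C_c$: the commutator $[h(f_d),a]$ for $h\in C_0$ lies in $B_i$ directly, and for the constant-at-infinity part we use $\rho-\tfrac12\mathrm{sgn}$-type splittings.

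The cleanest route is to note that $\rho(f_d)-\chi_{[0,\infty)}$-type differences are not in $\Cc$. So instead we use smooth $a\in\Aa_i$: $[\rho(f_d),a]=\int \hat{\rho'}$-weighted integrals of $\alpha_t(a)$ terms which factor through $\rho'(f_d)$-localized pieces in $A_i\Cc\subset B_i$. We then extend to all of $A_i$ by norm continuity. \textbf{This commutator step is the main obstacle}; we would first try the Fourier formula $[e^{\imath s f_d},a]=\int_0^s e^{\imath u f_d}[\imath f_d,a]e^{\imath(s-u)f_d}du$ combined with the localization $[f_d,a]\in\Aa_i$. If that fails, we would fall back on the supports-near-$\Zz_d$ argument for finite-propagation approximants, as in the semisplitting verification.
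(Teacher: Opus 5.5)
\emph{Assessment.} Most of your plan coincides with the paper's proof. The following steps are all fine:
\begin{itemize}
\item regularity of $S_X$ via Lemma~\ref{lem:Mfd_multiplier} and density of $\Cc B_i$;
\item the resolvent condition from $A_i\Cc\subset B_i$;
\item bounded commutators and domain invariance for $\Aa_i$ from Corollary~\ref{cor:commutators_smooth};
\item the Busby invariant $\tau_i(a)=\Pp_d\pi(a)\Pp_d+B_i$ read off from the semisplitting;
\item $(\Pp_d^2-\Pp_d)a\in\Cc A_i\subset B_i$.
\end{itemize}

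\emph{The gap.} It is the step you flag yourself: you never show $[\Pp_d,\pi(a)]\in B_i$, and neither route you commit to would give it.
\begin{itemize}
\item In the Fourier/Duhamel representation, $[\rho(f_d),a]$ is an integral, weighted by $\widehat{\rho'}(s)/s$, of terms $e^{\imath uf_d}[f_d,a]e^{\imath(s-u)f_d}$. These terms lie in $A_i$, but the unitaries $e^{\imath uf_d}$ supply no $C_0$-factor of $f_d$. So nothing places the integrand in $B_i$. The heuristic $[\rho(f_d),a]\approx\rho'(f_d)[f_d,a]$ is only leading order and does not ``factor through $A_i\Cc$''.
\item The fallback via finite-propagation approximants and supports near $\Zz_d$ proves at best $[\Pp_d,a]\in B=C^*_u(\Zz_d\subset\Xx)$. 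Proposition~\ref{prop:check_extension} is applied to the separable extension, so membership in $B_i$ is what is needed. Moreover $A_i$ need not contain a dense set of finite-propagation operators, which is exactly why the paper works with the smooth subalgebras $\Aa_i$.
\end{itemize}

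\emph{The missing observation.} Once the unbounded cycle is verified, this condition is automatic. As recalled after Definition~\ref{def:unbddkk}, $(X,\phi_1,\Theta(S_X))$ is a bounded Kasparov module for every normalizing function $\Theta$. This is the Baaj--Julg bounded transform, combined with two facts: normalizing functions differ by $C_0$-functions, and $\phi_1(a)g(S_X),\ g(S_X)\phi_1(a)$ are compact for $g\in C_0(\RM)$.

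With $\Theta=2\rho-1$ you get $F_X=\Theta(S_X)=(2\Pp_d-1)\hat{\otimes}\ep$. Hence both $[F_X,\phi_1(a)]=2[\Pp_d,a]\hat{\otimes}\ep$ and $(F_X^2-1)\phi_1(a)=4(\Pp_d^2-\Pp_d)a\hat{\otimes}\one$ lie in $\KM_{B_i\hat{\otimes}\CM_1}(X)=B_i\hat{\otimes}\CM_1$. These are precisely the hypotheses of Proposition~\ref{prop:check_extension}, which is what the paper means by ``easy to check''.

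\emph{A direct alternative.} If you want to argue directly, use resolvents instead of unitaries. Take $a\in\Aa_i$, $R_\lambda=(1+\lambda+f_d^2)^{-1}$ and $\Theta_0(t)=t(1+t^2)^{-1/2}$. Then
\[ [\Theta_0(f_d),a]=\frac1\pi\int_0^\infty\lambda^{-1/2}\bigl((1+\lambda)R_\lambda[f_d,a]R_\lambda-f_dR_\lambda[f_d,a]f_dR_\lambda\bigr)\,d\lambda . \]
\begin{itemize}
\item The integral converges in norm, since the integrand is bounded by $\tfrac54\lambda^{-1/2}(1+\lambda)^{-1}\|[f_d,a]\|$.
\item The integrand lies in $\Cc A_i\Cc\subset B_i$, because $[f_d,a]\in\Aa_i$ and $R_\lambda,\,f_dR_\lambda\in\Cc$.
\item Since $2\rho-1-\Theta_0\in C_0(\RM)$, the rest of $[2\Pp_d-1,a]$ lies in $\Cc A_i+A_i\Cc\subset B_i$.
\item Density of $\Aa_i$ and closedness of $B_i$ then give $[\Pp_d,a]\in B_i$ for all $a\in A_i$.
\end{itemize}
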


\begin{proof}
First, observe that the compact and adjointable operators on $B_i\hat{\otimes}\CM_1$ are $B_i\hat{\otimes} \CM_1$, and $M(B_i)\hat{\otimes}\CM_1$, respectively. 
By Lemma~\ref{lem:Mfd_multiplier}, nondegeneracy of the representation, and the fact that the domain of $S_X$ is norm-dense by Proposition~\ref{prop:sep_approx}(ii), it is a densely defined regular self-adjoint operator on $B_i\hat{\otimes} \CM_1$. 

We still need to check the two conditions of Definition~\ref{def:unbddkk}:
\begin{enumerate}
    \item[(i)] For all $a \in \Aa_i$, the domain of $S_X$ is stable under $\phi_1(a)$ and the graded commutator $[S_X, \phi_1(a)]$ is in $M(B_i) \hat{\otimes} \CM_1$.
    \item[(ii)] $\phi_1(a)(S_X+\imath)^{-1} \in B_i\hat{\otimes}\CM_1$ for all $a\in A_i$.
\end{enumerate}
Condition (ii) is immediate from Proposition~\ref{prop:sep_approx}(ii).

For condition (i), we get boundedness of the commutator from Corollary~\ref{cor:commutators_smooth}, and since $A_i$ is contained in the multiplier algebra of $B_i$ by Proposition~\ref{prop:sep_approx}(ii), the commutator is also in $M(B_i)\hat{\otimes}\CM_1$.
It remains to check that $\Aa_i$ maps the domain 
$$\Dom(S_X)= (S_X + \imath)^{-1} (B_i\hat{\otimes}\CM_1)$$
of $S_X$ to itself. Computing as operators on $\Hh_\Xx\hat{\otimes}\CM_1$,
\begin{align*}
a (S_X + \imath)^{-1} &= (S_X + \imath)^{-1} a + [a,(S_X + \imath)^{-1}]\\
&= (S_X + \imath)^{-1} a + (S_X + \imath)^{-1}[S_X,a](S_X + \imath)^{-1}
\end{align*}
which implies $a \Dom(S_X)\subset \Dom(S_X)$.

Finally, we verify that this $KK$-cycle corresponds to the extension class. Since the Busby invariant is
\[\tau_i(a) = \Pp_{d} \pi(a) \Pp_{d} + B_i,\]
the extension class is represented by the Hilbert module $(X, \phi_1, F_X)$ with 
\[F_X=(2\Pp_{d}-1)\hat{\otimes}\ep\]
and $\phi_1(a)= a \hat{\otimes} 1 \in M(B_i\hat{\otimes} \CM_1)$. Since $\Pp_d=\rho(f_d)$ with a switch function $\rho$, we have $F_X=\Theta(S_X)$ with $\Theta = 2\rho-1$.
It is then easy to check the three conditions of Proposition~\ref{prop:check_extension}.
Hence, the extension class has the unbounded representative $(X, \phi_1, S_X)$.
\end{proof}

The second factor of the product is given by the class $[D_{\underline{f}}]$, which one amplifies with $\tau_{\CM_1}$ before taking the product to match the Clifford dimensions. 
\begin{proposition}
Let $[D_{\underline{f}}^i]\in KK(B_i, \CM_{d-1})$ be the pullback of $[D_{\underline{f}}]$ along $B_i\to B$. Then $\tau_{\CM_1}([D^i_{\underline{f}}])$ is represented by the $KK(B_i\hat{\otimes}\CM_1, \CM_d)$-cycle $(Y, \phi_2, T_Y)$ with  the Hilbert $(B_i\hat{\otimes}\CM_1)$-$\CM_d$-bimodule $Y= \Hh_\Xx \hat{\otimes} \CM_{d}$, the representation $\phi_2 \colon B_i \hat{\otimes} \CM_1 \to \Ll_{\CM_d}(Y)$ given by
    $$\phi_2(b\hat{\otimes} \sigma) = b \hat{\otimes}  \one_{\CM_{d-1}} \hat{\otimes} \sigma, \qquad b\in B_i, \sigma\in \CM_1,$$
and the unbounded regular self-adjoint operator $T_Y  = \sum_{k=1}^{d-1} {f_k} \hat{\otimes} \sigma_k= \tilde{T}_Y\hat{\otimes} 1_{\CM_1}$. 
\end{proposition}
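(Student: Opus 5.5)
The plan is to unwind the definitions of the pullback along $b_i\colon B_i\to B$ and of the exterior product $\tau_{\CM_1}$ from Remark~\ref{rmk:pullback}, and then to identify the resulting module, representation and operator with $(Y,\phi_2,T_Y)$ through the canonical Clifford isomorphism $\CM_{d-1}\hat{\otimes}\CM_1\cong\CM_d$.

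\textbf{Step 1: the pullback.} By Lemma~\ref{lemma:dirac_op}, applied to the $\Zz_d$-Dirac-type function $\underline{f}$ (Definition~\ref{def:dirac_type_adapted}(i)), $[D_{\underline{f}}]$ is represented by the unbounded cycle $(\Hh_\Xx\hat{\otimes}\CM_{d-1},\pi,\tilde{T}_Y)$ with $\tilde T_Y=\sum_{k=1}^{d-1}f_k\hat{\otimes}\sigma_k$. Pullback only restricts the representation, so $[D^i_{\underline f}]$ is represented by the same triple with $\pi$ restricted to $B_i$. This is still an unbounded cycle once we check the two conditions of Definition~\ref{def:unbddkk}.
\begin{enumerate}
\item[(i)] Local compactness of the resolvent holds because it holds on all of $B$.
\item[(ii)] For bounded commutators, we use the dense subalgebra $\Bb_i$ of Corollary~\ref{cor:commutators_smooth}, which gives $[\tilde T_Y,b]\in\Bb_i\hat{\otimes}\CM_{d-1}$. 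Domain invariance then follows from the same resolvent identity used in Proposition~\ref{prop:ext_class_unbounded}.
\end{enumerate}

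\textbf{Step 2: the exterior product.} Since $\CM_1$ is unital, hence $\sigma$-unital, $\tau_{\CM_1}$ sends the triple to
\[
\bigl((\Hh_\Xx\hat{\otimes}\CM_{d-1})\hat{\otimes}\CM_1,\ \pi\hat{\otimes}\one,\ \tilde T_Y\hat{\otimes}\one\bigr),
\]
a $KK(B_i\hat{\otimes}\CM_1,\CM_{d-1}\hat{\otimes}\CM_1)$-cycle. Here $\CM_1$ acts on the left factor by left multiplication on itself and on the right factor by right multiplication. It is standard that, in the unbounded picture, $\tau_D$ is implemented by $T\hat{\otimes}\one$ when $D$ is unital: the bounded transform commutes with tensoring by $\one$. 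Equivalently, for $\Theta$ as in the definition of unbounded cycles, $\Theta(\tilde T_Y\hat{\otimes}\one)=\Theta(\tilde T_Y)\hat{\otimes}\one$.

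\textbf{Step 3: Clifford identification.} Using $\sigma_d=1^{\hat{\otimes}(d-1)}\hat{\otimes}\ep$, we identify $\CM_{d-1}\hat{\otimes}\CM_1=\CM_d$ as graded algebras, and hence $Y=\Hh_\Xx\hat{\otimes}\CM_d$ as a Hilbert $\CM_d$-module. Under this identification:
\begin{enumerate}
\item[(a)] $\pi\hat{\otimes}\one$ becomes $b\hat{\otimes}\sigma\mapsto b\hat{\otimes}\one_{\CM_{d-1}}\hat{\otimes}\sigma$, which is $\phi_2$;
\item[(b)] $\tilde T_Y\hat{\otimes}1_{\CM_1}$ becomes $\sum_{k<d}f_k\hat{\otimes}\sigma_k=T_Y$.
\end{enumerate}
Self-adjointness and regularity of $T_Y$ are then inherited from those of $\tilde T_Y$, as in Lemma~\ref{lemma:dirac_op}, since $T_Y^2=|\underline f|^2\hat{\otimes}\one$.

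\textbf{Expected obstacle.} The only delicate point is sign and grading bookkeeping in Step 3. One must check that the graded tensor product conventions give exactly $\phi_2$ with no Koszul sign twist, i.e.\ that the generator $\ep$ acting from the left on the $\CM_1$ factor corresponds to left action by $\sigma_d$ on $\CM_d$. Because $\tilde T_Y$ is odd and acts only on the first $d-1$ Clifford factors, the graded commutator structure of $\tilde T_Y\hat{\otimes}\one$ with $\one\hat{\otimes}\ep$ is preserved. I would verify this directly on elementary tensors, and if an extra sign appears, absorb it by a grading-preserving unitary, which does not change the class.
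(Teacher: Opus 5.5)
Your proposal is correct and follows the same route as the paper. You pull back $[D_{\underline f}]$ to $B_i$, using the smooth dense subalgebra $\Bb_i$ of Corollary~\ref{cor:commutators_smooth} for bounded commutators and domain invariance, then take the exterior product with $\CM_1$ and identify $\CM_{d-1}\hat{\otimes}\CM_1\cong\CM_d$. Your sign check in Step 3 also works out without any unitary correction: $\Hh_\Xx$ is trivially graded, so the Koszul sign $(-1)^{|\sigma||v|}$ turns the right-placed factor $\sigma$ into left multiplication by $\sigma\in\{1,\sigma_d\}$ in $\CM_d$.
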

\begin{proof}
The KK-cycle is obtained by pullback of $[D_{\underline{f}}]$ to $B_i$ and subsequent exterior product with $\CM_1$. It is well-defined since Corollary~\ref{cor:commutators_smooth} provides a dense smooth subalgebra on which the commutators are bounded, and satisfies the properties of regularity, domain invariance, and relative compactness of the resolvent analogously to Proposition~\ref{prop:ext_class_unbounded}.
\end{proof}

\begin{proposition}\label{prop:kasparov_sum}
On $Y=\Hh_\Xx\hat{\otimes}\CM_d$ define the regular self-adjoint operators $S = {f_d} \hat{\otimes} \one_{\CM_{d-1}}\hat{\otimes} \ep$ and $T = T_Y = \sum_{k=1}^{d-1} {f_k} \hat{\otimes} \sigma_k \hat{\otimes} 1_{\CM_1}$. Then $(Y, \pi, S+T)$ represents the Kasparov product of $(X, \phi_1, S_X)$ and $(Y, \phi_2, T_Y)$.
\end{proposition}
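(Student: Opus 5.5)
We apply Kucerovsky's criterion (Theorem~\ref{kucerovsky}) with $D=S+T$. The first step is to identify the internal tensor product. The map
\[
X\hat{\otimes}_{B_i\hat{\otimes}\CM_1} Y=(B_i\hat{\otimes}\CM_1)\hat{\otimes}_{B_i\hat{\otimes}\CM_1}(\Hh_\Xx\hat{\otimes}\CM_d)\to \Hh_\Xx\hat{\otimes}\CM_d,
\qquad x\hat{\otimes} e\mapsto \phi_2(x)e,
\]
is a unitary isomorphism. It is surjective because $\phi_2$ is nondegenerate: $B_i$ contains $A_i\Cc$, and $\Cc$ acts nondegenerately on $\Hh_\Xx$. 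Under this isomorphism $\phi_1\hat{\otimes}\Id$ becomes $\pi(a)=a\hat{\otimes}\one_{\CM_d}$, and $S_X\hat{\otimes}\one_Y$ becomes $S=f_d\hat{\otimes}\one\hat{\otimes}\ep$. The operators $S$ and $T$ are multiplication operators by commuting functions tensored with mutually anticommuting Clifford generators. Hence $S$ and $T$ anticommute on the common core $C_c(\Xx)\cdot Y$, and
\[
(S+T)^2=\Bigl(\sum_{k=1}^d f_k^2\Bigr)\hat{\otimes}\one=D_f^2 .
\]
So $S+T$ is exactly $D_f$ up to the identification $\CM_{d-1}\hat{\otimes}\CM_1=\CM_d$, which sends $\ep$ to $\sigma_d$. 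By Lemma~\ref{lemma:dirac_op} and Corollary~\ref{cor:commutators_smooth}, $(Y,\pi,S+T)$ is an unbounded cycle for $A_i$ that represents $[D_f^i]$.

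Conditions (ii) and (iii) should be routine. For (ii), $\Dom(S+T)=\Dom(f_d)\cap\bigcap_k\Dom(f_k)\subset\Dom(S)$, which is immediate from the description of the domain in Lemma~\ref{lemma:dirac_op}, since $\|(S+T)\psi\|^2=\sum_k\|f_k\psi\|^2$ on the core. For (iii), the anticommutation gives
\[
\langle S\psi,(S+T)\psi\rangle+\langle (S+T)\psi,S\psi\rangle=2\|S\psi\|^2+\langle\psi,\{S,T\}\psi\rangle=2\|S\psi\|^2\ge 0,
\]
so one can take $\kappa=0$. This holds first on the core and then extends to $\Dom(S+T)$ by closure.

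The main obstacle is the connection condition (i). For $x=a\hat{\otimes}\sigma$ with $a\in\Bb_i$, or more generally $a$ in a dense subset of $\phi_1(A_i)X$, the operator $L_x$ is simply $e\mapsto \phi_2(x)e$. Up to Clifford sign conventions, the off-diagonal entries of the commutator are then
\[
(S+T)\phi_2(x)-\phi_2(x)T_Y=S\phi_2(x)+[T,\phi_2(x)]_{\pm}.
\]
The graded commutator $[T,\phi_2(x)]$ is bounded by Corollary~\ref{cor:commutators_smooth}, since $T$ acts as $D_{\underline f}$ on the first factor. The remaining term $S\phi_2(x)=(f_d a)\hat{\otimes}\cdots$ must be bounded. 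This forces us to use the dense subset $\Cc\Bb_i$, or more precisely elements $x=c\,b$ with $c\in C_c(f_d)$: for these, $f_d c$ is bounded, so $f_d x$ is bounded. I would take the dense subset to be $\mathrm{span}\{c\,b\,\hat{\otimes}\sigma: c\in C_c(\RM)(f_d),\ b\in\Bb_i\}$. This set is dense in $X$ by Proposition~\ref{prop:sep_approx}(ii). The delicate point is that $T$ must commute past $c$: this holds because $c$ is a function of $f_d$, which commutes with all $f_k$, and because the Clifford parities work out. It remains to check that the adjoint entry $L_x^*$ behaves in the same way, which follows symmetrically since $x^*=b^*c^*$ and $[T,b^*]$ is bounded. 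With the three conditions verified, Theorem~\ref{kucerovsky} identifies $(Y,\pi,S+T)$ as the product.
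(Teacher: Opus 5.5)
Your proof is correct and follows the paper's argument. Both use the identification $X\hat{\otimes}_{B_i\hat{\otimes}\CM_1}Y\simeq Y$ via $\phi_2$, apply Kucerovsky's criterion with $D=S+T=D_f$, and get conditions (ii) and (iii) from $ST+TS=0$. The only difference is the dense set in the connection condition: the paper uses $\phi_1(\Aa_i)\Dom(S_X)$, so $Sx$ is bounded because $x\in\Dom(S_X)$, while you use $C_c(f_d)\Bb_i$, where boundedness of $f_d c$ does the same job. Both choices work. Your set is dense in $X$ rather than visibly contained in $\phi_1(A_i)X$ as the criterion literally requires, but left-multiplying it by $\phi_1(\Aa_i)$ fixes this, since $[S+T,a]$ is bounded for $a\in\Aa_i$.
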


\begin{proof}

Since $X=B_i\hat{\otimes}\CM_1$, the internal tensor product satisfies $X \hat{\otimes}_{B_i\hat{\otimes}\CM_1} Y\simeq Y$ under the natural isomorphism $\eta(b\hat{\otimes}\sigma\hat{\otimes}\psi)= \phi_2(b\hat{\otimes}\sigma)\psi$. Note that $S$ corresponds to $S_X\hat{\otimes} \one_Y$ under this isomorphism, as its action on a triple $\psi \hat{\otimes} v \hat{\otimes} w \in \Hh_\Xx \hat{\otimes} \CM_{d-1} \hat{\otimes} \CM_1$ is given by
$$
S \phi_2(b \hat{\otimes} \sigma)(\psi \hat{\otimes} v \hat{\otimes} w) =(-1)^{\pdv v (\pdv \sigma+1)} ({f_d}b \psi) \hat{\otimes} v \hat{\otimes} (\ep \sigma w)=\phi_2(S_X(b\hat{\otimes}\sigma))(\psi \hat{\otimes} v \hat{\otimes} w)$$
with $\pdv$ the degree in $\ZM/2\ZM$. 

It is clear that $S+T=\sum_{i=1}^d f_i \hat{\otimes}\sigma_i=D_f$ is well-defined as an unbounded $KK(A_i,\CM_d)$-cycle; the proof follows as in Lemma~\ref{lemma:dirac_op} with the dense subalgebra $\Aa_i$ from Corollary~\ref{cor:commutators_smooth} in place of $\CM_u(\Xx)$. 

We now check the conditions of Theorem \ref{kucerovsky} with $D=S+T$.
For Condition (i), we note that by the isomorphism it suffices to check that
for all $x$ in some dense subset of $\phi_1(\mathcal{A}_i)X$, the operator
    \begin{equation}
    \label{eq:comm_kucerovsky} 
        \bigg[
        \begin{pmatrix}
        S + T & 0 \\
        0 & T_Y \\
        \end{pmatrix},
        \begin{pmatrix}
        0 & L_x \\
        L_{x^*} & 0 \\
        \end{pmatrix}
        \bigg]
    \end{equation}
    is bounded on $\Dom((S+T) \oplus T_Y)$.
    For the dense subset of $\phi_1(\Aa_i)X$ take 
    \[X_0 = \phi_1(\Aa_i)(S_X+\imath)^{-1} (\Bb_i\hat{\otimes}\CM_1)=\phi_1(\Aa_i)\Dom(S_X),\]
    which is dense since $\Dom(S_X)$ is dense in $X$. One has $X_0 \subseteq \Dom(S_X)$ as seen in the proof of Proposition~\ref{prop:ext_class_unbounded}.
    To check the connection condition, we note that under $\eta$ the operators $L_x$ correspond to the natural action of $x$ on $Y=\Hh_\Xx\hat{\otimes}\CM_{d}$. In that sense
    $$SL_x + TL_x - (-1)^{|x|}L_x T_Y = Sx + Tx - (-1)^{|x|}x T_Y= Sx+ [T, x]$$ as $T=T_Y$ under this identification, and moreover this is bounded as $x \in \Dom(S_X)$ and $x$ is smooth. The same argument for $x^*$ shows the commutator \eqref{eq:comm_kucerovsky} is bounded.
    
Using $ST+TS=0$ on $\Dom(S) \cap \Dom(T)$ one has $\|(S+T)\psi\|^2=\|S\psi\|^2+\|T\psi\|^2$ and $\langle S \psi, (S+T)\psi \rangle + \langle (T+S)\psi,  S \psi \rangle = 2\langle S \psi, S\psi\rangle\geq 0$. The latter is Condition (iii) and the former implies Condition (ii) since it means $\Dom(S+T)=\Dom(S)\cap\Dom(T)
\subseteq\Dom(S)$. 

\end{proof}

\begin{proof}[Proof of Proposition~\ref{prop:bulkedge_kk}]
By Proposition~\ref{prop:kasparov_sum}, we have that
\[[S+T]= [S_X]\otimes [T_Y] = [{\Ext}_{E_i}]\otimes [T_Y]\]
as KK-classes. 
Since $[S+T]=[D_f^{i}]$ and $[T_Y]=\tau_{\CM_1}([D_{\underline{f}}^{i}])$, by definition of the Kasparov product for higher KK-groups \eqref{eq:KK_product} this is the same as $[D_f^i]= [{\Ext}_{E_i}]\otimes [D^i_{\underline{f}}]$.
\end{proof}

\subsection{Examples}\label{ssec:bec_examples}

In this section, we describe physically relevant examples for which bulk-edge correspondence applies. 

\begin{definition}
\label{def:bulk_invariants}
Let $H$ be a (possibly unbounded) self-adjoint operator on a geometric module $\Hh_\Xx \otimes \CM^N$, $N$ even (i.e., with $N$ local degrees of freedom). Assume that the bulk Hamiltonian $H$ has a spectral gap $\Delta\subset \RM$ which contains $0$. 

The Fermi projection is the spectral projection $p = \chi(H<0)$. If there is an additional chiral symmetry
$$J H J = - H, \qquad J = \begin{pmatrix}
\one_{N/2} & 0 \\ 0 & -\one_{N/2}
\end{pmatrix}$$
if $J\Dom(H) \subset \Dom(H)$, then the Fermi projection has the structure
$$p = \frac{1}{2}(1-\sgn(H))=  \frac{1}{2} \one_N - \frac{1}{2} \begin{pmatrix}
    0 & u^* \\ u & 0
\end{pmatrix}$$
with a unitary $u$, called the Fermi unitary.
\end{definition}
In situations where the Fermi projection and Fermi unitary are in $M_N(C^*_u(\Xx,\Hh_\Xx)^+)$ and $M_{N/2}(C^*_u(\Xx,\Hh_\Xx)^+)$ respectively, $[p]-[s(p)]$ and $[u]$ taken to have scalar part $1$ immediately define K-theory classes.
\begin{definition}
\label{def:edge_invariants}
Let $\hat{H}$ be a (possibly unbounded) self-adjoint operators on a geometric module $\Hh_\Xx \otimes \CM^N$ which is invertible modulo the ideal $C^*_u(\Zz_d\subset \Xx,\Hh_\Xx)$, i.e., there exists an interval $\Delta$ which contains $0$ such that
$$f(\hat{H})\in M_N(C^*_u(\Zz_d \subset \Xx,\Hh_\Xx))$$
for every function $f\in C_c(\Delta)$. 

Let $\rho_\Delta$ be any smooth increasing function  which is $0$ to the left of $\Delta$ and $1$ to the right. We define the edge unitary
$$u_\Delta = e^{2\pi \imath \rho_\Delta(\hat{H})}$$
which is a matrix over the unitization of $C^*_u(\Zz_d\subset \Xx)$.
    
If there is an additional chiral symmetry
$$J \hat{H} J = - \hat{H}, \qquad J = \begin{pmatrix}
\one_{N/2} & 0 \\ 0 & -\one_{N/2}
\end{pmatrix}$$
we can define an edge projection as
$p_\Delta = \frac{1}{2}\left(\one_N + Je^{\imath \pi f_\Delta(\hat{H})}\right)$
for a function $f_\Delta$ which is anti-symmetric, $-1$ below $\Delta$ and $+1$ above.
\end{definition}
Note that we do not impose any relation between bulk and edge Hamiltonian at this point, since the precise setup depends on whether one is modeling bulk-edge or bulk-interface correspondence.
For lattice models, Hamiltonians are bounded operators and therefore elements of some uniform Roe algebra. Going back to the computation of boundary maps in \cite{ProdanSchulzBaldes2016}, which uses the conventions of \cite{RordamLarsenLaustsen2000} for the boundary maps, the following is standard as a consequence of the definition of bulk and edge classes:
\begin{proposition}
\label{prop:boundary_maps}
Suppose $\Xx$ is uniformly discrete and consider an interface model of the form
$$\hat{H}= P_{\Yy_d} H_+ P_{\Yy_d}  +  P_{\Yy^c_d} H_- P_{\Yy^c_d} + z$$
with two Hamiltonians $H_\pm \in C^*_u(\Xx,\ell^2(\Xx))\otimes M_N(\CM)$ with a common spectral gap $\Delta$, and some edge term $z=z^*\in C^*_u(\Zz_d\subset \Xx, \ell^2(\Xx))\otimes M_N(\CM)$.

Then the Fermi projections $p_\pm = \chi(H_\pm<0)$ define classes in $K_0(C^*_u(\Xx,\ell^2(\Xx)))$ and 
$$[u_\Delta]_1 = \pdv_{\Yy_d} \left( [p_+]_0- [p_-]_0\right) \in K_1(C^*_u(\Zz_d\subset \Xx, \ell^2(\Xx))).$$

If $\hat{H}$, $H_+$ and $H_-$ have an additional chiral symmetry one similarly has
\[[p_\Delta]_0-[s(p_\Delta)]_0 = \pdv_{\Yy_d} \left( [u_+]_1- [u_-]_1\right) \in K_0(C^*_u(\Zz_d\subset \Xx, \ell^2(\Xx)))\]
where $s[p_{\Delta}]_0$ is the scalar part of $[p_{\Delta}]$.
\end{proposition}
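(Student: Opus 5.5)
We compute the boundary map of the exact sequence \eqref{eq:bulk_edge_exact} directly, using the explicit semisplitting $\sigma(a)=(\Pp a\Pp,a)$, with $\Pp=P_{\Yy_d}$ the sharp half-space projection. Since $\Xx$ is uniformly discrete, $P_{\Yy_d}$ has propagation zero, so it is an admissible replacement for $\Pp_d$. The strategy is to find an explicit lift of the relevant bulk class to $E_{\Yy_d}$ whose first component is built from $\hat H$ itself. Then the standard formulas (exponential map $[e^{2\pi\imath \hat h}]_1$ for a self-adjoint lift $\hat h$ of $p$; index map via a unitary dilation in the odd case) give the edge classes directly. We use the conventions of \cite{RordamLarsenLaustsen2000}.

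\textbf{First step: well-definedness of the bulk classes.} Since $H_\pm$ lie in $C^*_u(\Xx)\otimes M_N$ and have a gap at $0$, $p_\pm=\chi(H_\pm<0)=g(H_\pm)$ for a continuous $g$ equal to $\chi_{(-\infty,0)}$ off $\Delta$. Hence they lie in the unitization, and $[p_+]_0-[p_-]_0$ has vanishing scalar part.

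\textbf{Second step: the key lift.} Work in the pullback description of $E_{\Yy_d}$. The element $(\hat H, H_+)$ lies in $M_N(E^+)$: we have $\hat H\in C^*_u(\Yy_d\subset\Xx)$ modulo the $H_-$ part, and $\hat H-H_+$ is supported near $\Yy_d^c$, since $P_{\Yy_d}H_+P_{\Yy_d}-H_+$ is near $\Yy_d^c$ up to boundary terms and $z$ is near $\Zz_d$. To handle the $H_-$ term cleanly, use instead the decomposition $\Xx=\Yy_d^c\cup\Yy_d$ symmetrically. Concretely, consider the pair of pullback extensions for $\Yy_d$ and $\Yy_d^c$, or more simply the algebra of pairs $(\hat a,a_+,a_-)$ with $\hat a-a_+$ near $\Yy_d^c$ and $\hat a-a_-$ near $\Yy_d$. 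Then $(\hat H,H_+,H_-)$ is a self-adjoint element whose quotient is $(H_+,H_-)$. Applying $\rho_\Delta$ (decreasing version $1-\rho_\Delta$ matching $p=\chi(H<0)$) gives self-adjoint lifts $(1-\rho_\Delta(\hat H),p_+,p_-)$, using $\rho_\Delta(H_\pm)=\chi(H_\pm>0)$ by the gap. The exponential map then yields $e^{-2\pi\imath(1-\rho_\Delta(\hat H))}=u_\Delta$, which lies in the ideal $C^*_u(\Zz_d\subset\Xx)^+$ by the assumption on $\hat H$ (verified via the Helffer--Sjöstrand or polynomial approximation, because $\rho_\Delta'$ is compactly supported in $\Delta$). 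Mayer--Vietoris naturality, i.e.\ the identification of $\pdv_{\Yy_d}$ with $\pdv_{\Yy_d^c}$ up to the sign coming from orientation, gives $\pdv_{\Yy_d}[p_+]+\pdv_{\Yy_d^c}[p_-]$, which combines to the stated difference.

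\textbf{Third step: the chiral case.} This is analogous, using the index map and the lift $\hat u$ of $u_\pm$ read off from the off-diagonal block of $\sgn$-type functional calculus of $\hat H$. The formula $p_\Delta=\frac12(1+Je^{\imath\pi f_\Delta(\hat H)})$ is the standard closed form of the index map for the partial isometry lift.

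\textbf{Main obstacle.} The main obstacle is the sign and bookkeeping in the second step: checking that the two halves contribute $+[p_+]$ and $-[p_-]$ rather than the same sign. I would handle this by first treating the case $H_-=$ trivial, which gives the edge class of $p_+$ alone, and then applying the reflection $\Yy_d\leftrightarrow\Yy_d^c$, which reverses the Mayer--Vietoris boundary map.
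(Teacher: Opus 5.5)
Your main line is correct, but it is not the route the paper takes. The paper gives no proof of this proposition; it calls it standard, referring to the boundary-map computations of \cite{ProdanSchulzBaldes2016}. The closest argument the paper does give is for the unbounded analogue, Proposition~\ref{prop:dense_resolvent_example}. That argument stays inside the single extension $E_{\Yy_d}$ and uses the relative $K$-theory of pairs from \cite{KSt}. Because $(\hat H,H_+)-(H_-,H_-)$ lies in $E_{\Yy_d}\otimes M_N(\CM)$, the boundary map sends the relative class $[p_+,p_-]_0$ to a relative edge class, which equals $[u_\Delta]_1$ since $(H_-,H_-)$ is gapped. There the minus sign in front of $[p_-]_0$ is built into the relative class, and $p_\pm$ never need to define classes separately; this is what allows unbounded Hamiltonians.

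Your triple algebra $\tilde E$ instead gives an honest extension $0\to C^*_u(\Zz_d\subset\Xx)\to\tilde E\to C^*_u(\Xx)^{\oplus 2}\to 0$. It is exact by $C^*_u(\Xx)=C^*_u(\Yy_d\subset\Xx)+C^*_u(\Yy_d^c\subset\Xx)$ and coarse excision, and $(\hat H,H_+,H_-)$ is a genuine self-adjoint lift in it. This is more elementary: the textbook exponential-map formula applies directly. Moreover, $u_\Delta-\one\in M_N(C^*_u(\Zz_d\subset\Xx))$ is automatic, because applying $e^{2\pi\imath\rho_\Delta}$ to $(\hat H,H_+,H_-)$ gives $(u_\Delta,\one,\one)$. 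So you need neither a Helffer--Sj\"ostrand argument nor any hypothesis on $\hat H$. The price is that all the sign bookkeeping moves into converting the boundary map $\tilde\partial$ of $\tilde E$ back to $\pdv_{\Yy_d}$.

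That conversion is only asserted in your proposal, so make it explicit. Your fallback of first taking $H_-$ scalar and then reflecting would not by itself handle a general $H_-$. The conversion needs two facts:
\begin{enumerate}
\item[(i)] The maps $(\hat a,a)\mapsto(\hat a,a,0)$ and $(\hat a,a)\mapsto(\hat a,0,a)$ embed $E_{\Yy_d}$ and $E_{\Yy_d^c}$ into $\tilde E$ as morphisms of extensions that are the identity on the ideal. Naturality then gives $\tilde\partial(x,y)=\pdv_{\Yy_d}(x)+\pdv_{\Yy_d^c}(y)$.
\item[(ii)] The map $a\mapsto(a,a,a)$ is a $*$-homomorphic lift over the diagonal. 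Hence $\tilde\partial(x,x)=0$, i.e.\ $\pdv_{\Yy_d^c}=-\pdv_{\Yy_d}$.
\end{enumerate}

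Some smaller corrections:
\begin{enumerate}
\item[(a)] The first claim rests on $C^*_u(\Xx,\ell^2(\Xx))$ being unital for uniformly discrete $\Xx$: the identity has propagation zero and uniformly bounded local rank. You should state this, since in a non-unital algebra $0$ always lies in the spectrum. The same fact puts $(\one,\one,\one)$ into $\tilde E$.
\item[(b)] With the conventions of \cite{RordamLarsenLaustsen2000} the exponential map is $-[e^{2\pi\imath\hat h}]_1$, not $[e^{2\pi\imath\hat h}]_1$ as in your opening paragraph. Your Step~2 does use the correct sign.
\item[(c)] In the chiral case the lift is not a partial isometry. Let $w\in M_N(\tilde E)$ be the unitary obtained by applying $e^{\imath\frac{\pi}{2}f_\Delta}$ to $(\hat H,H_+,H_-)$. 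It lifts $(\imath\,\sgn H_+,\imath\,\sgn H_-)$, and the first entry of $w^*\tfrac12(\one+J)w$ is exactly $p_\Delta$. After conjugating by the block flip, this is the index-map formula of \cite{RordamLarsenLaustsen2000}. It gives $\tilde\partial([u_+]_1,[u_-]_1)=[p_\Delta]_0-[s(p_\Delta)]_0$, and the claim follows by the same two facts (i) and (ii).
\end{enumerate}
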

If $H_-$ is a scalar matrix then it represents empty space and consequently the setting reduces to bulk-edge correspondence. Note that $H_-$ cannot be zero, since its resolvent set needs to include $\Delta$. 

In the continuous case, one will want to consider unbounded Hamiltonians and this becomes much more difficult. One problem is (as happens e.g., for Dirac-type operators) that already the Fermi projection can fail to define a class in K-theory, as it is generally not a matrix over the unitization of $C^*_u(\Xx)$ \cite{KSt}. A useful regularization is therefore to use relatively defined bulk invariants which compare pairs of bulk Hamiltonians. This leads to concepts like relative bulk-edge correspondence \cite{KSt} or bulk-difference-interface correspondence (in the terminology of \cite{Bal2022TopologicalInvariants}). Likewise, edge invariants can be ill-defined in principle, especially for physically unrealistic boundary conditions. Even if both bulk and edge invariants are well-defined, they still need not be related by the boundary map because the bulk-edge relation can depend on the choice of boundary conditions \cite{KSt,JudTauber2026,GrafTarantola2026}.

The bulk-interface correspondence for domain-wall models, however, can be handled very well since there are no problems with boundary conditions:
\begin{proposition}\label{prop:dense_resolvent_example}
Let $D$ be a densely defined self-adjoint operator on $\Hh_\Xx$ with resolvent in $C^*_u(\Xx,\Hh_\Xx)$, and consider three bounded self-adjoint operators $V_+,V_-, \hat{V}$ in the multiplier algebra $M(C^*_u(\Xx,\Hh_\Xx))$ such that
\begin{equation}
\label{eq:relatively_compact_interface}
\begin{aligned}
(D+\imath)^{-1}(\hat{V}- V_+)\in C^*_u(\Yy_d^c\subset \Xx, \Hh_\Xx) \hbox{ and }
(D+\imath)^{-1}(\hat{V}- V_-)\in C^*_u(\Yy_d\subset \Xx, \Hh_\Xx).
\end{aligned}
\end{equation}
Assume that $H_\pm=D+ V_\pm$ have a common spectral gap $\Delta$. 
Then the relative K-theory class $[p_+,p_-]_0 \in K_0(C^*_u(\Xx,\Hh_\Xx))$ defined by the Fermi projections $p_\pm=\chi(H_\pm<0)$ satisfies
$$[u_\Delta]_1 = \pdv_{\Yy_d}[p_+,p_-]_0 \in K_1(C^*_u(\Zz_d\subset \Xx,\Hh_\Xx))$$
with the edge unitary $u_\Delta$ defined by the interface model $\hat{H}=D+ \hat{V}$
\end{proposition}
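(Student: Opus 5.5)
The plan is to compute $\pdv_{\Yy_d}$ through the exact sequence \eqref{eq:bulk_edge_exact}, with $E=E_{\Yy_d}$, by writing down an explicit self-adjoint lift in $E$ built from the interface Hamiltonian. With the conventions of \cite{RordamLarsenLaustsen2000}, the index/exponential map $K_0(A)\to K_1(B)$ sends a class represented by a self-adjoint element $e\in A^+$ with $e^2-e\in A$, and with projection image in $K_0$, to $[\exp(2\pi\imath\, h)]_1$. Here $h=h^*\in E^+$ is any lift of $e$. For the relative class $[p_+,p_-]_0$ I will work with $1-p_\pm=\chi(H_\pm>0)=\rho_\Delta(H_\pm)$. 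This identity holds because $\rho_\Delta$ is $0$ to the left of $\Delta$, $1$ to the right, and $\Delta$ lies in the resolvent sets of $H_\pm$. The signs coming from passing between $p_\pm$ and $1-p_\pm$ and from the definition of the relative class will be fixed against the lattice case, Proposition~\ref{prop:boundary_maps}.

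The first step is to show that for every continuous $g$ on $\RM$ with limits at $\pm\infty$ one has
\[g(\hat H)-g(H_+)\in C^*_u(\Yy_d^c\subset\Xx), \qquad g(\hat H)-g(H_-)\in C^*_u(\Yy_d\subset\Xx),\]
and likewise $g(H_+)-g(H_-)\in C^*_u(\Xx)$. For resolvents this follows from the second resolvent identity
\[(\hat H-z)^{-1}-(H_\pm-z)^{-1}=-(\hat H-z)^{-1}(\hat V-V_\pm)(H_\pm-z)^{-1}.\]
I will rewrite $(\hat H-z)^{-1}=(\hat H-z)^{-1}(D+\imath)\,(D+\imath)^{-1}$, where $(\hat H-z)^{-1}(D+\imath)$ is bounded because $\hat V$ is bounded. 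Then \eqref{eq:relatively_compact_interface} and the fact that the relative Roe algebras are ideals in $C^*_u(\Xx)$ give the claim, using that the resolvents lie in $C^*_u(\Xx)$ and hence in its multiplier algebra. Stone--Weierstrass on the resolvent algebra of $C_0(\RM)$, together with constants, extends this to all such $g$, in particular to $g=\rho_\Delta$. Consequently the pair
\[h=\bigl(\rho_\Delta(\hat H)-\rho_\Delta(H_-),\;\rho_\Delta(H_+)-\rho_\Delta(H_-)\bigr)\]
lies in $E$, is self-adjoint, and lifts the element representing the relative class in $A$.

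The second step is to identify $[\exp(2\pi\imath h)]_1$, more precisely its first component in $B^+$, with $[u_\Delta]_1$. Write $a=\rho_\Delta(\hat H)$ and $q=\rho_\Delta(H_-)$, where $q$ is a projection. The first component is $\exp(2\pi\imath(a-q))$, whereas $u_\Delta=\exp(2\pi\imath a)$. I will use the path
\[w_s=\exp\bigl(2\pi\imath(a-sq)\bigr)\exp(2\pi\imath s q), \qquad s\in[0,1],\]
which joins $u_\Delta$ at $s=0$ to $\exp(2\pi\imath(a-q))$ at $s=1$, since $e^{2\pi\imath q}=1$. It then remains to show $w_s\in B^+$ for all $s$. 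This is the main obstacle, because $a$ and $q$ do not commute. The idea is to use the decomposition $C^*_u(\Xx)=C^*_u(\Yy_d\subset\Xx)+C^*_u(\Yy_d^c\subset\Xx)$ of \cite{HigsonRoeYu1993} and check membership in each of the two ideal quotients separately.
\begin{itemize}
\item Modulo $C^*_u(\Yy_d\subset\Xx)$ one has $a\equiv q$ by the first step. Hence $w_s\equiv e^{2\pi\imath(1-s)q}e^{2\pi\imath sq}=1$.
\item Modulo $C^*_u(\Yy_d^c\subset\Xx)$ one has $a\equiv\rho_\Delta(H_+)$, a projection. A cleaner route here is to first deform the lift $h$ within $E$, which leaves the boundary class unchanged, so that both entries have commuting pieces.
\end{itemize}
I expect to verify the remaining congruence by a functional-calculus approximation of the exponential, then combine the two congruences via the coarse excision \eqref{eq:coarse_excision} to conclude $w_s-1\in B$. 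If this fails, the fallback is to run the same argument with the lift $\bigl(\chi_{Y}\text{-cut versions}\bigr)$ built from $\Pp_d$, namely the semisplitting $\sigma$ applied to $\rho_\Delta(H_+)-\rho_\Delta(H_-)$, and compare with $u_\Delta$ using the ideal membership from the first step.
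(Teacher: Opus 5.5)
Your first step is fine for $g\in C_0(\RM)$. The second step, however, has a genuine gap, and it goes beyond the congruence you left open: you exponentiate the wrong element.

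Write $P_\pm=\rho_\Delta(H_\pm)=1-p_\pm$. Your $h$ lifts $P_+-P_-$, which is not a projection. The exponential map is computed by $e^{2\pi\imath(\cdot)}$ of a self-adjoint lift of a \emph{projection}, not of a difference of projections.

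\textbf{Why the path fails.} For any lift $(\hat x,P_+-P_-)\in E$ one has $\hat x\equiv P_+-P_-$ modulo $C^*_u(\Yy_d^c\subset\Xx)$. Hence $e^{2\pi\imath\hat x}\equiv e^{2\pi\imath(P_+-P_-)}$ modulo that ideal. For two projections, $e^{2\pi\imath(P-Q)}=1$ iff $(P-Q)^3=P-Q$ iff $PQP=QPQ$ iff $PQ=QP$. The commutator $[P_+,P_-]$ is in general not in $C^*_u(\Yy_d^c\subset\Xx)$; for instance, for periodic $H_\pm$ with non-commuting Fermi projections and $\Yy_d$ a standard half-space it is a nonzero periodic operator. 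Consequently:
\begin{enumerate}
\item[(a)] $e^{2\pi\imath(a-q)}\notin B^+$;
\item[(b)] your path $w_s$ leaves $B^+$, since already $w_1\not\equiv 1$ modulo $C^*_u(\Yy_d^c\subset\Xx)$, so the congruence in your second bullet is false, not merely unverified;
\item[(c)] no deformation of the lift within $E$ can repair this;
\item[(d)] the fallback via $\sigma(P_+-P_-)$ has the same defect.
\end{enumerate}

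\textbf{What works instead.} Lift the two projections separately, $\hat P_+=(\rho_\Delta(\hat H),P_+)$ and $\hat P_-=(P_-,P_-)$. Then use that the boundary of the relative class is represented, up to your sign convention, by $e^{2\pi\imath\hat P_+}e^{-2\pi\imath\hat P_-}=(u_\Delta,1)$, because $e^{2\pi\imath P_\pm}=1$. Making this rigorous requires two things: a definition of $[p_+,p_-]_0$, and an ambient algebra in which $P_\pm$ and these pairs act as multipliers of $C^*_u(\Xx)$ and $E$ with compatible quotient maps. This is exactly what the paper imports from the pair-algebra results \cite[Proposition~3.10, Theorem~3.15]{KSt}, applied to the comparable pairs $(\hat H,H_+)$ and $(H_-,H_-)$.

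\textbf{Two further points need repair.}

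First, Stone--Weierstrass on resolvents plus constants only yields $C_0(\RM)+\CM 1$, i.e.\ functions with equal limits at $\pm\infty$, and $\rho_\Delta$ is not of that form. You need one function with different limits, e.g.\ the bounded transform $F(\lambda)=\lambda(1+\lambda^2)^{-1/2}$. The difference $F(\hat H)-F(H_\pm)$ is a norm-convergent integral of resolvent differences because the perturbation is bounded; the paper uses \cite[Lemma~2.7]{CareyPhillips1998} here. Note that $u_\Delta\in B^+$ by itself only needs the $C_0$ case, since $u_\Delta-1=g(\hat H)$ with $g=e^{2\pi\imath\rho_\Delta}-1$ supported in the gap.

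Second, and more seriously, multiplying congruences modulo $C^*_u(\Yy_d\subset\Xx)$ and $C^*_u(\Yy_d^c\subset\Xx)$ presupposes that $\rho_\Delta(\hat H)$ and $P_\pm$ map these ideals into themselves. The hypothesis $(D+\imath)^{-1}\in C^*_u(\Xx)$ does not make $D$ affiliated to $C^*_u(\Xx)$. So bounded transforms of $D$, $\hat H$, $H_\pm$ need not be multipliers, and $p_\pm$ need not lie in the unitization, which is why only a relative class exists. The paper circumvents this as follows:
\begin{enumerate}
\item[(1)] it passes to the hereditary subalgebra $A_D=\overline{rC^*_u(\Xx)r^*}$ with $r=(D+\imath)^{-1}$, to which $D$ is affiliated and which contains the resolvents of $\hat H,H_\pm$ by a Neumann series;
\item[(2)] it runs the argument for the restricted extension $0\to B_D\to E_D\to A_D\to 0$;
\item[(3)] it transfers the result to $B$ by naturality of the boundary map.
\end{enumerate}
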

\begin{proof}
The relative K-theory class is defined in \cite[Definition 3.11]{KSt}. If the $p_\pm$ do define K-theory classes in the Roe algebra individually, then $[p_+,p_-]_0= [p_+]_0-[p_-]_0$ but this need not be the case.

The main technical difficulty is that the assumptions are not strong enough to imply  that $D$ is affiliated to the Roe algebra, otherwise this would follow immediately from \cite[Theorem 3.15]{KSt}. 
Loosely following ideas of \cite{Trout2000}, we can circumvent this problem by using the largest subalgebra that $D$ is affiliated to. Abbreviating $A=C^*_u(\Xx,\Hh_\Xx)$ the assumption is $r:=(D+\imath)^{-1}\in A$. Therefore, $D$ is always affiliated to the hereditary subalgebra $A_D= \overline{r C^*_u(\Xx)r^*}$ (e.g., this is an immediate consequence of \cite[Lemma 2.6]{KSt}). 
Abbreviating $I_+=C^*_u(\Yy_d\subset \Xx)$ and $I_-=C^*_u(\Yy_d^c\subset \Xx)$, the closed hereditary subalgebras $I_{\pm,D}:= A_D\cap I_\pm$ are ideals of $A_D$. There exists a pullback algebra $E_D\subset E$ defined as in Definition~\ref{def:pullback}, but using the smaller algebras $A_D$ and $I_{+,D}$ which gives a commutative diagram with exact rows
\begin{equation*}
	\begin{tikzcd}
		0 \arrow[r] & B_D \arrow[r] \arrow[d, "j_B"] & E_D \arrow[r] \arrow[d] & A_D  \arrow[d, "j_A"]  \arrow[r] & 0 \\
		 0 \arrow[r]  & B \arrow[r] & E \arrow[r] & A  \arrow[r] & 0
	\end{tikzcd}
\end{equation*}
where we set $B_D= I_{+,D}\cap I_{-,D}$ and $j_A,j_B$ are the natural inclusion maps. The resolvents of $\hat{H}$, $H_\pm$ lie in $A_D$ as one has for $|\lambda|>\|v\|$ the norm-convergent expansion $$(D+v+\lambda\imath)^{-1}=\sum_{k=0}^\infty (-1)^k (D+\lambda\imath)^{-1} (v (D+\lambda\imath)^{-1})^k$$
and the right-hand side for $v\in \{\hat{V},V_+,V_-\}$ is in $A_D$. 
By perturbation theory for the bounded transforms $F(\lambda)=\lambda(1+\lambda^2)^{-\frac12}$, the difference $F(h+v)-F(h)$ for $h$ self-adjoint with $(h+\imath)^{-1}\in A_D$ and $v\in M(A_D)$ bounded can be written with a norm-convergent integral formula \cite[Lemma 2.7]{CareyPhillips1998} which implies $F(h+v)-F(h)\in I$ for some $C^*$-ideal $I\subset A_D$. 
Whenever the resolvent-difference is in $I$,
$$(h+\lambda\imath)^{-1}-(h+v+\lambda\imath)^{-1}=(h+\lambda\imath)^{-1} v (h+v+\lambda\imath)^{-1} \in I.$$
With these observations in hand, one concludes that $H_+$ and $H_-$ are $A_D$-comparable and from \eqref{eq:relatively_compact_interface} also that $(\hat{H},H_+)$ and $(H_-,H_-)$ are $E_D$-comparable. Using pair algebras as in \cite[Proposition 3.10, Theorem 3.15]{KSt}, the boundary maps can be computed to conclude that
$$[(\hat{H},H_+), (H_-,H_-)]_1=\pdv_{E_D} [p_+,p_-]_0$$
with a relative edge invariant in $K_1(B_D)$ \cite[Definition 3.11]{KSt}. Since $(H_-,H_-)$ has a spectral gap the left-hand side is exactly $[u_\Delta]_1$ which finishes the proof together with naturality of the boundary maps.
\end{proof}
For potentials, the above conditions can be checked in terms of the geometric support:

\begin{lemma}\label{lem:geometric_support}
In the setting of the previous Proposition, if $V_\pm,\hat{V}:\Xx \to M_N(\CM)$ are three bounded self-adjoint Borel functions such that $\hat{V}-V_+$ and $\hat{V}-V_-$ are supported near $\Yy_d^c$ and $\Yy_d$ respectively, then \eqref{eq:relatively_compact_interface} holds.
\end{lemma}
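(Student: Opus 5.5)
We need to show $(D+\imath)^{-1}(\hat V - V_+)\in C^*_u(\Yy_d^c\subset\Xx)$ and the analogous statement for $V_-$. By symmetry it suffices to treat the first; write $W:=\hat V-V_+$, a bounded matrix-valued Borel function supported in some thickening $(\Yy_d^c)_R$. Let $g$ be the indicator function of $(\Yy_d^c)_R$. Then $W=gW$, and $W$ is a bounded multiplication operator commuting with $g$, so
$$(D+\imath)^{-1}W=(D+\imath)^{-1}g\,W .$$
Since $C^*_u(\Yy_d^c\subset\Xx)$ is a closed two-sided ideal in $C^*_u(\Xx)$, and a bounded Borel multiplication operator multiplies $C^*_u(\Xx)$ into itself and preserves support near any set, it suffices to show that $r g\in C^*_u(\Yy_d^c\subset\Xx)$, where $r:=(D+\imath)^{-1}\in C^*_u(\Xx)$.

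The plan is to approximate $r$ in norm by finite-propagation, uniformly locally compact operators $r_n\in\CM_u(\Xx)$. For such $r_n$ the product $r_n g$ is again uniformly locally compact and of finite propagation. Multiplying by the indicator function can only decrease local approximation numbers, and propagation is unchanged.

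Next I check support. Let $\rho$ be the propagation of $r_n$. If $f\in C_0(\Xx)$ has support disjoint from $(\Yy_d^c)_{R+\rho+1}$, then $f r_n g=0$ by the finite propagation property, since the supports of $f$ and $g$ are more than $\rho$ apart. Also $r_n g f=0$, because $gf=0$. Hence $r_n g$ is supported near $\Yy_d^c$, with thickening radius $R+\rho+1$.

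Since $\|rg-r_ng\|\le\|r-r_n\|\to0$, we conclude $rg\in C^*_u(\Yy_d^c\subset\Xx)$. Multiplying by $W$ on the right preserves this membership, which gives the first condition in \eqref{eq:relatively_compact_interface}. The second condition follows identically with $\Yy_d$ in place of $\Yy_d^c$.

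The only mildly delicate point is the use of indicator functions, which are Borel rather than continuous. The definition of finite propagation is phrased with $C_0$ functions, but it extends to Borel functions through the projection-valued measure of $\rho_0$. One approximates the indicator strongly, or equivalently uses the standard characterisation of propagation via Borel sets.

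For matrix-valued functions, all of these statements hold entrywise on $\Hh_\Xx\otimes\CM^N$, so the matrix case reduces to the scalar argument above.
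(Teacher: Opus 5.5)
Your proposal is correct and follows the same route as the paper. The right-multiplication condition holds trivially because $gf=0$, and the left-multiplication condition follows by approximating $(D+\imath)^{-1}$ in norm by finite-propagation operators. Your version also fills in the details the paper's two-line proof leaves implicit: the slack in the thickening radius, extending the propagation condition to Borel indicator functions, and the stability of uniform local compactness under multiplication by bounded Borel functions.
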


\begin{proof}
Clearly, $(D+i)^{-1}(V-V_{\pm})f =0$ if $f \in C_0(\Xx)$ with support away from $\Yy_d$ or $\Yy_d^c$.
For left multiplication, approximate $(D+i)^{-1}$ by finite propagation operators.
\end{proof}

As special cases, differential operators which are resolvent-affiliated to a crossed-product algebra in the sense of \cite{KSt} satisfy the resolvent condition in Proposition~\ref{prop:dense_resolvent_example}. This includes (magnetically) translation-invariant elliptic differential operators:
\begin{example}
Fix $\Yy$ a half-space and consider a bounded potential $\hat{V}:\RM^2\to \RM$ of the form $$\hat{V}=\begin{cases}
        V_+(x) & \text{if }x\in \Yy \\
        V_-(x) & \text{if }x\notin \Yy
    \end{cases}$$
    with $V_+, V_-$ bounded Borel functions. 
\begin{enumerate}
    \item[(i)] On $L^2(\RM^2)$ consider a magnetic Schrödinger operator
    $$\hat{H}= p_1^2 + (p_2 - B x_1)^2 + \hat{V} =: H_0 + \hat{V}$$
    with a constant magnetic field $B \in \RM$. If $H_\pm = H_0 + V_\pm$ have a common spectral gap, then 
    $$[u_\Delta]_1 = \pdv_{{\Yy_d}}([p_+]_0 - [p_-]_0).$$
    \item[(ii)] On $L^2(\RM^2)\otimes \CM^2$ consider a domain-wall Dirac operator 
    $$\hat{H}= p_1\sigma_1 + p_2\sigma_2 + \hat{V}\sigma_3$$
    with the Pauli matrices. If $H_\pm = p_1\sigma_1 + p_2\sigma_2  + V_\pm \otimes \sigma_3$ have a common spectral gap, then $$[u_\Delta]_1 = \pdv_{\Yy_d}([p_+,p_-]_0).$$
\end{enumerate}
\end{example}
The difference between the two examples is that the Schrödinger operators are bounded from below, therefore their spectral projections define K-theory classes immediately. 

More generally, as is already the case for crossed product algebras on straight half-spaces, the uniform Roe algebras are also large enough to describe models with sharp boundaries. For example, the so-called strongly affiliated differential operators discussed in \cite{KSt} for the standard half-spaces also satisfy the K-theoretic bulk-edge relation as elements of the uniform Roe algebras. 
However, for e.g., elliptic differential operators in the presence of curved Dirichlet boundaries, a proof that their resolvents are in the uniform Roe algebras does not seem to be readily available in the literature and, unfortunately are out of scope for this paper. We will have to leave this to future work.

\section{Homotopy and Degree in Euclidean Space}\label{sec:euc_space}
\label{sec:degree}

In this section, assume that $\Xx = \RM^d$, $d > 1$, and therefore that $\Lambda$ is a Delone set. Let us also fix as before some coarsely faithful geometric module $\Hh_\Xx$. As explained in the introduction, the geometric module need not be $L^2(\RM^d)$, but for example could be $\ell^2(\Lambda)$ which would reduce to the discrete setting. 
This flexibility in constructing the geometric module essentially yields the freedom to think of any space $\Xx$ coarsely equivalent to $\RM^d$ by taking $\RM^d$ equipped with the geometric module $L^2(\Xx)$.

\begin{remark}
\label{rem:flipschitz}
{\rm We will from now on often assume that Dirac-type functions $f$ are Lipschitz-continuous or smooth. We can assume this without loss of generality in $\RM^d$, by the large-scale Lipschitz assumption replacing $f$ by its convolution $f*\psi$ with a mollifier is a bounded perturbation and hence does not change the K-homology class.}$\diamond$
\end{remark}

\begin{definition}
For any continuous function $f: \RM^d \to \RM^d$ which is non-zero on the sphere $\pdv B_R$, define the degree
$$\deg_R(f) = \deg\left(\frac{f}{\abs{f}}\bigg\rvert_{\pdv B_R}\right)$$
where the right-hand side is the mapping degree of a function $\pdv B_R\simeq \SM^{d-1}\to \SM^{d-1}$ using the radial orientation-preserving isomorphism.
\end{definition}
For any continuous Dirac-type function $f$ the degree is well-defined by properness for any large enough radius $R$ and the limit
$$\deg(f) = \lim_{R\to \infty}\deg_R(f)$$
exists since the right-hand side is eventually constant by homotopy invariance.

Our main theorem of this section is the following:

\begin{theorem}\label{thm:degree_multiplicity}
Let the standard Dirac operator $D_{\mathrm{std}}$ be defined by the symbol function $f_{\mathrm{std}}: x\mapsto x$ which has degree $1$ by definition.

If $f$ is a continuous Dirac-type function, then 
$$[D_f] = \deg(f)[D_{\mathrm{std}}].$$
In particular, for the same class $x\in K_{d \bmod 2}(C^*_u(\Xx, \Hh_\Xx))$, the index pairings are related by
$$\langle x, [D_f] \rangle =\deg(f) \langle x, [D_{\mathrm{std}}]\rangle.$$
\end{theorem}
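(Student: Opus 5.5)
The proof reduces $f$ to a standard normal form of the same degree using only the homotopy invariance of Lemma~\ref{lemma:homotopy}, and then shows that normal forms combine additively in K-homology. By Remark~\ref{rem:flipschitz} we may assume $f$ is Lipschitz. The first step is to radialize. Choose $R_0$ so large that $|f|\geq 1$ outside $B_{R_0}$; this is possible by condition (ii) with $\Zz=\Xx$. Set $g=f/|f|$ on $\{|x|\geq R_0\}$, a map into $\SM^{d-1}$. Its restriction to $\pdv B_{R_0}$ has degree $\deg(f)$.

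Now form the path $f^t$ that interpolates, on $\{|x|\geq R_0\}$, between $f$ and $\tilde f(x)=|x|\,g(R_0 x/|x|)$, and modifies the function only inside a bounded region. A straight-line homotopy does not work, because $f$ and $\tilde f$ need not have nonvanishing convex combinations. Instead, first deform in the radial variable, $f\mapsto |f|^{1-t}|x|^{t}g$, which keeps $g$ fixed. Then deform angularly: on the shell of radius $r$, replace $g(x)$ by $g$ evaluated at the point of radius $R_0+s(r-R_0)$, with $s$ running from $1$ to $0$. Along both deformations the norm stays comparable to $|x|$, which gives uniform properness. The Lipschitz bound stays uniform for a subtle reason. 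The angular map $x\mapsto g(R_0x/|x|)$ composed with $|x|$ is globally Lipschitz: the angular derivative of $g(R_0x/|x|)$ scales like $1/|x|$, and this is compensated by the factor $|x|$. The intermediate steps of the deformation need the same check, and I expect this to be the most delicate estimate. Inside $B_{R_0}$, any bounded continuous extension is harmless, since changes on compact sets are homotopic trivially. The conclusion of this step is $[D_f]=[D_{h_\phi}]$ with $h_\phi(x)=|x|\,\phi(x/|x|)$ for a smooth map $\phi:\SM^{d-1}\to\SM^{d-1}$ of degree $\deg(f)$.

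The second step uses the Hopf theorem, which says that degree is a complete homotopy invariant for maps $\SM^{d-1}\to\SM^{d-1}$. If $\phi_s$ is a smooth homotopy of such maps, then $h_{\phi_s}$ is a path with uniform Lipschitz constant and $|h_{\phi_s}(x)|=|x|$, so Lemma~\ref{lemma:homotopy} applies. Hence $[D_{h_\phi}]$ depends only on $n=\deg\phi$, and we may take $\phi$ to be any convenient representative of degree $n$.

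The final step, which I expect to be the main obstacle, is to show that $n\mapsto[D_{h_n}]$ is additive, so that $[D_{h_n}]=n[D_{h_1}]=n[D_{\mathrm{std}}]$; here $h_1=f_{\mathrm{std}}$. For $n=0$, $\phi$ is homotopic to a constant $e$. Then $D_{h_0}=|x|\,e\cdot\sigma$, and the operator $e\cdot\sigma$ is a self-adjoint unitary in $\CM_d$ that commutes with the algebra and anticommutes with the grading. Adding a compact-support bounded perturbation makes the cycle degenerate, so the class is $0$. For $n=-1$, compose with a reflection $r$. Then $D_{h_{-1}}$ is unitarily conjugate to $D_{\mathrm{std}}$ via $x\mapsto r x$, and the sign flip of one Clifford generator reverses orientation, which gives $-[D_{\mathrm{std}}]$. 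For general $n$, the plan is a direct-sum argument. Consider $D_{h_a}\oplus D_{h_b}$ on the doubled module. Using the $2\times2$ rotation trick available in KK-theory, homotope it through Dirac-type functions valued in $\RM^d\oplus\RM^d$ to $D_{h_{a+b}}\oplus D_{h_0}$. Concretely, pinch the sphere and realize $\phi_a\vee\phi_b$, then rotate. An alternative, which may be cleaner, is to let $x\cdot\sigma$ absorb the degree: one can show $D_{h_n}\oplus D_{\mathrm{std}}^{\mathrm{op}}$ is homotopic to $D_{h_{n-1}}\oplus D_{h_0}$ by an explicit rotation $\cos\theta\,(\ldots)+\sin\theta\,(\ldots)$ that mixes the two summands. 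Either route gives $[D_{h_n}]=n[D_{\mathrm{std}}]$. The pairing statement then follows by bilinearity of the index pairing (Definition~\ref{def:index}).
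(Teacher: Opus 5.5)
\textbf{Gap in the radialization step.} The proposed path breaks the Lipschitz hypothesis, and this is a real failure, not just a delicate estimate. A Dirac-type function only satisfies $|f(x)|\to\infty$, with no rate. So your claim that the norm stays comparable to $|x|$ is false at $t=0$. Uniform properness along your path is still fine, since $|f|^{1-t}|x|^{t}\geq\min(|f|,|x|)$. The uniform large-scale Lipschitz bound required by Lemma~\ref{lemma:homotopy} is what fails. The only general bound on $g=f/|f|$ is $|\nabla g|\lesssim L/|f|$, so the endpoint $|x|\,g(x)$ of your radial deformation can have derivative of order $L|x|/|f(x)|$.

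A concrete example in $d=2$: in polar coordinates, for $r\geq 1$, take
\[
f(r,\theta)=\log(2+r)\,\bigl(\cos(\theta+\psi(r)),\sin(\theta+\psi(r))\bigr),\qquad \psi(r)=\frac{r}{\log(2+r)}.
\]
This $f$ is Lipschitz, proper and of degree $1$. However, $r\bigl(\cos(\theta+\psi(r)),\sin(\theta+\psi(r))\bigr)$ moves by about $r/\log r$ when $r$ increases by $1$. So $|x|g$ is not even large-scale Lipschitz, and $D_{|x|g}$ does not have bounded commutators with finite-propagation operators. Reordering does not help: your angular stage $|x|\,g(x_s)$ carries the same unbounded factor $|x_s|/|f(x_s)|$ for every $s>0$.

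The paper avoids pointwise growth control entirely (Lemma~\ref{lem:homogeneous}):
\begin{enumerate}
\item[(a)] It first changes $f$ only on a ball $B_{2R}$. This is a bounded perturbation, so straight-line interpolation is allowed. The result $f^1$ equals a homogeneous $f^2$ of the same degree on $B_R$ and is nonvanishing on $B_{2R}\setminus B_R$.
\item[(b)] It then rescales, $f^t(x)=\lambda^{-1}f^1(\lambda x)$ with $\lambda=2-t\downarrow 0$. This preserves the global Lipschitz constant and only increases $|f|$, since $\lambda^{-1}\geq 1$. Uniform properness follows from a short compactness argument.
\end{enumerate}
Substituting this for your first step repairs the argument. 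Your Hopf step then applies unchanged.

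\textbf{Additivity step.} This part is a genuinely different and viable route. The paper uses Bott periodicity to pass to the matrix symbol $U_f=\sum_i u_i\gamma_i$, or its off-diagonal block $V_f$ for even $d$. It then invokes $K^*(\SM^{d-1})$, with the degree identified via the Chern character, to get a stable homotopy $U_f\oplus M_0\simeq U_{\mathrm{std}}^{\oplus n}\oplus M_1$ with constant $M_i$.

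Your pinch-and-rotate idea is an Eckmann--Hilton argument that avoids computing $K^*(\SM^{d-1})$:
\begin{enumerate}
\item[(a)] Choose $\phi_a,\phi_b$ equal to a fixed $e$ on complementary hemispheres.
\item[(b)] On the second hemisphere, conjugate $\mathrm{diag}(e\cdot\sigma,\phi_b\cdot\sigma)$ by a scalar rotation $R_\theta$. This is continuous across the equator because $\mathrm{diag}(e\cdot\sigma,e\cdot\sigma)$ commutes with $R_\theta$.
\item[(c)] At $\theta=\pi/2$ you reach $\mathrm{diag}(\psi\cdot\sigma,e\cdot\sigma)$, where $\psi$ is $\phi_a$ on one hemisphere and $\phi_b$ on the other, so $\deg\psi=a+b$.
\end{enumerate}
With Hopf's theorem and your $n=0$ degeneracy, this gives $[D_{h_a}]+[D_{h_b}]=[D_{h_{a+b}}]$, and it is more elementary than the paper's route. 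Note, though, that the intermediate symbols are genuinely $M_2(\CM_d)$-valued. Block-diagonal symbols into $\RM^d\oplus\RM^d$ can never change the pair of degrees, so you need the matrix-valued version of Lemma~\ref{lemma:homotopy}, as the paper also does.

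\textbf{The case $n=-1$.} Do not conjugate by $x\mapsto rx$: that unitary need not exist on $\Hh_\Xx$ and does not commute with the representation. Instead, note that $D_{h_{-1}}$ is $D_{\mathrm{std}}$ with $\sigma_1$ replaced by $-\sigma_1$. Left multiplication by $\sigma_1$ identifies this cycle with the one obtained by reversing the grading and replacing $D_{\mathrm{std}}$ by $-D_{\mathrm{std}}$, which represents $-[D_{\mathrm{std}}]$.
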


In conclusion, the degree is a complete invariant for Dirac-type functions with respect to K-homology: 
As the module $\Hh_\Xx$ is coarsely faithful, there exists a K-theory class whose pairing with the standard Dirac operator is equal to $1$. The range of values for the index pairing is therefore precisely $\deg(f)\ZM$. 

To prove this we first construct a homotopy which relates any Dirac-type function to a simple standard model:

\begin{definition}\label{definition:homogeneous_homotopy}
We will say a Dirac-type function $f: \RM^d \to \RM^d$ is homogeneous if it is of the form
$$ f(x) =
\begin{cases}
   \abs{x} u\bigl(\frac{x}{\abs{x}}\bigr) & x\neq 0\\
    0 & x=0
\end{cases}
$$
for some Lipschitz sphere-valued function $u: \SM^{d-1}\to \SM^{d-1}$.
\end{definition}

We will now show that any continuous Dirac-type function $f$ can be deformed to any homogeneous function of the same degree. Since the mapping degree is a complete homotopy-invariant for maps $\SM^{d-1}\to \SM^{d-1}$, it is clear that any two homogeneous Dirac-type functions define the same K-homology class if they have the same degree.

\begin{lemma}
\label{lem:homogeneous}
For any smooth Lipschitz Dirac-type function $f^0:\RM^d \to \RM^d$ and any homogeneous Lipschitz Dirac-type function $f^0$ with the same degree, then there exists a continuous path of Dirac-type functions $(f^{t})_{t\in [0,2]}$ between $f^0$ and $f^2$ which satisfies the conditions of Lemma~\ref{lemma:homotopy}, i.e. $[D_{f^0}]=[D_{f^2}]$.
\end{lemma}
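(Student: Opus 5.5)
We build the path in two stages, $t\in[0,1]$ and $t\in[1,2]$, and verify the hypotheses of Lemma~\ref{lemma:homotopy} for each: a uniform large-scale Lipschitz constant $L$, a uniform properness bound $(1+|f^t|^2)^{-1/2}\le \rho\in C_0(\RM^d)$, and local sup-norm continuity in $t$. Properness and continuity of $f^0$ give a radius $R_0$ with $|f^0(x)|\ge 1$ for $|x|\ge R_0$. Set $u_0 := f^0/|f^0|$ restricted to $\pdv B_{R_0}$, viewed as a map $\SM^{d-1}\to\SM^{d-1}$. Then $\deg(u_0)=\deg(f^0)$, since $\deg_R(f^0)$ is constant for $R\ge R_0$.

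\textbf{Stage 1 ($t\in[0,1]$): radial normalization of $f^0$.} Write $u_0(x/|x|)$ for $x\neq0$. Put
\[f^1(x)=\chi(|x|)\,|x|\,u_0\!\left(\tfrac{x}{|x|}\right)+(1-\chi(|x|))\,f^0(x),\]
with $\chi$ a smooth cutoff equal to $1$ for $|x|\ge 2R_0$ and to $0$ for $|x|\le R_0$. Rather than linearly interpolating $f^0$ and $f^1$, which could produce zeros at infinity, we use a radial rescaling. First deform $f^0$ to $g(x)=|x|\,f^0(x)/|f^0(x)|$ outside $B_{R_0}$ via
\[f^t=\bigl((1-t)|f^0|+t|x|\bigr)\,\frac{f^0}{|f^0|}.\]
The modulus is then at least $\min(|f^0|,|x|)$, which gives uniform properness. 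Next, homotope the angular part $f^0(x)/|f^0(x)|$ to $u_0(x/|x|)$, for instance through $f^0(s x/|x|)/|f^0(sx/|x|)|$ with $s$ running from $|x|$ down to $R_0$. This never meets $0$ because $|f^0|\ge1$ outside $B_{R_0}$. Inside $B_{2R_0}$ everything is bounded, so it does not affect properness. Properness is preserved since the modulus stays comparable to $|x|$.

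\textbf{Lipschitz control (main obstacle).} The difficulty is keeping the large-scale Lipschitz constant uniform. The angular map $x\mapsto f^0(x)/|f^0(x)|$ is only Lipschitz with constant about $L/|f^0(x)|$, so after multiplying by $|x|$ the gradient is of order $L|x|/|f^0(x)|$, which is unbounded unless $|f^0|\gtrsim|x|$. To handle this we first apply the preliminary homotopy $f^0\mapsto f^0+t\,\lambda x$. This is not obviously proper, so we choose the alternative below: work in polar coordinates and use the intermediate function $r\,f^0(\phi(r)\theta)/|f^0(\phi(r)\theta)|$ with $\phi$ a slowly growing reparametrization. The risk is that we may need to strengthen the hypotheses here, for example to polynomial properness. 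If that fails, we would instead modify $f^0$ only on the region $\{|f^0|<\epsilon|x|\}$.

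\textbf{Stage 2 ($t\in[1,2]$): homogeneous to homogeneous.} Both $u_0$ and the sphere map $u$ defining $f^2$ have degree $\deg(f^0)$. By Hopf's theorem there is a homotopy $u_t$ between them, which we may take smooth and hence uniformly Lipschitz by compactness of $[1,2]\times\SM^{d-1}$. Then $f^t(x)=|x|\,u_t(x/|x|)$ is uniformly Lipschitz, satisfies $|f^t(x)|=|x|$, and is locally uniformly continuous in $t$. Lemma~\ref{lemma:homotopy} applies on each stage, and concatenating the two stages gives $[D_{f^0}]=[D_{f^2}]$.
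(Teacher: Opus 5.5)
Your Stage~2 is fine. Stage~1, however, has a genuine gap exactly where you locate the main obstacle, and none of the remedies you list closes it. Every homotopy you write down produces functions whose modulus is comparable to $|x|$ and which carry the angular data of $f^0$ at large radius. Such functions need not be large-scale Lipschitz.

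For instance, in $d=2$ let $f^0(r\cos\theta,r\sin\theta)=\sqrt{r}\,(\cos\Phi,\sin\Phi)$ with $\Phi=\theta+\sqrt{r}\sin\theta$ for $r\geq 1$, smoothed near the origin. This function has the following properties:
\begin{itemize}
\item it is Lipschitz, since its gradient is $O(1)$;
\item it is polynomially proper with exponent $1/2$;
\item it has degree $1$.
\end{itemize}
Yet $g=|x|f^0/|f^0|=r(\cos\Phi,\sin\Phi)$ changes by roughly $1+\sqrt{r}$ along an arc of length $1$ near $\theta=0$, so $g$ is not of Dirac-type. The same failure occurs in each of your other homotopies:
\begin{itemize}
\item $((1-t)|f^0|+t|x|)f^0/|f^0|$ fails for every $t>0$;
\item $r\,f^0(\phi(r)\theta)/|f^0(\phi(r)\theta)|$ fails for any unbounded reparametrization $\phi\leq r$, since it changes by about $1+\sqrt{\phi(r)}$ along the same arcs.
\end{itemize}
So the uniform constant $L$ required by Lemma~\ref{lemma:homotopy} is lost. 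Assuming polynomial properness does not help, since the example already has it. Modifying only on $\{|f^0|<\epsilon|x|\}$ is no restriction here either, because that set contains a neighbourhood of infinity.

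The missing idea is that $f^0$ never has to be homogenized at infinity. Instead, do the following.
\begin{itemize}
\item Choose $R$ with $|f^0|\geq C>0$ off $B_R$. Use the equality of degrees (Hopf) only on a compact annulus, to build a Lipschitz $f^1$ with $f^1=f^2$ on $B_R$, $f^1=f^0$ off $B_{2R}$, and $f^1\neq 0$ on $B_{2R}\setminus B_R$.
\item The linear interpolation $(1-t)f^0+tf^1$ is then a compactly supported perturbation. Uniform Lipschitz and properness bounds along it are automatic.
\item Then zoom in: $f^t(x)=(2-t)^{-1}f^1((2-t)x)$ for $t\in[1,2)$. This has exactly the Lipschitz constant of $f^1$, which is why genuine Lipschitz continuity is assumed in the lemma.
\item $f^t$ agrees with $f^2$ on $B_{R/(2-t)}$, so it converges to $f^2$ uniformly on compact sets as $t\to 2$. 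The non-homogeneous part is pushed off to infinity by scaling rather than deformed there.
\end{itemize}
What remains is uniform properness along the zoom, i.e.\ $\max_{t\in[1,2]}(1+|f^t|^2)^{-1/2}\in C_0(\RM^d)$. This follows from a sequence argument on $y_n=(2-t_n)x_n$ with $x_n\to\infty$:
\begin{itemize}
\item if $y_n\to\infty$, use properness of $f^1$ together with $(2-t_n)^{-1}\geq 1$;
\item if $y_n\to y\neq 0$, use that $f^1$ vanishes only at the origin while $(2-t_n)^{-1}\to\infty$;
\item if $y_n\to 0$, homogeneity gives $|f^{t_n}(x_n)|=|x_n|$.
\end{itemize}
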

\begin{proof}
Choose a radius $R$ such that $\inf_{|x| \geq R} |f(x)|$ is bounded from below for some $C>0$, which implies that $\deg_R(f^0)=\deg(f^0)$. The first part of the homotopy for $t\in [0,1]$ deforms $f^0$ inside the ball $B_{2R}$ to obtain a Lipschitz function $f^1$ with $$f^1\rvert_{B_R}=f^2 \rvert_{B_R}, \qquad f^1\rvert_{B_{2R}^c}=f^0\rvert_{B_{2R}^c}$$
and for which $\abs{f^1} > 0$ on $B_{2R} \setminus B_R$. 
Such a homotopy $(f^{t})_{t\in [0,1]}$ exists, since the degree is a complete homotopy-invariant for functions $\SM^{d-1}\to \SM^{d-1}$  and it can be chosen to be smooth.
The first part of the homotopy is the linear interpolation
$$f^t= (1-t) f^0 + t f^1, \qquad t\in [0,1]$$ which is certainly uniformly Lipschitz and uniformly proper as it is only a bounded perturbation.

The second part of the homotopy is the rescaling
$$f^t(x) = \frac{1}{2-t} f^1\left((2-t)x\right), \qquad t\in [1,2)$$
Note that $f^t$ coincides with $f^2$ on the ball of radius $\frac{R}{2-t}$, hence we obtain a locally uniformly continuous function on $[1,2]\times \RM^d$. Since we are only rescaling the function $f^1$, all $f^t$ are Lipschitz continuous with uniformly bounded Lipschitz constant. 

For uniform properness we claim that the continuous function
$$\rho(x) := \max_{t\in [1,2]} (1+|f^t(x)|^2)^{-\frac12}$$
is in $C_0(\RM^d)$. 
Suppose instead that $\rho$ does not tend to $0$. 
Then there exist sequences $x_n\to \infty$ and $t_n \to 2$ such that $|f^{t_n}(x_n)|$ does not tend to $\infty$. Put $y_n= (2-t_n)x_n$. If $y_n\to \infty$ then eventually 
\[|f^{t_n}(x_n)|=(2-t_n)^{-1} |f^1(y_n)|\geq |f^1(y_n)|,\]
but $|f^1(y_n)| \to \infty$ due to properness of $f^1$. Hence after passing to a subsequence, $y_n$ must be bounded and converge to some $y\in \RM^d$.

In particular, $(2-t_n)=\frac{|y_n|}{|x_n|}\to 0$ since $x_n\to \infty$. As $f^1$ vanishes only at the origin, $y\neq 0$ would imply $f^1(y)\neq 0$.
Hence, $|(2-t_n)^{-1}f^1(y_n)| \to \infty$ which implies $|f^{t_n}(x_n)|\to \infty$. 

Now suppose that $y_n\to 0$. In that case, $$|f^{t_n}(x_n)|=(2-t_n)^{-1}|f^1(y_n)|= (2-t_n)^{-1}|y_n| = |x_n|$$ for large $n$ as eventually $f^1(y_n)=f^\infty(y_n)$ and $f^\infty$ is homogeneous. This finishes the proof by contradiction.
\end{proof}

To prove Theorem~\ref{thm:degree_multiplicity}, it remains to relate the classes to multiples of the standard Dirac operator. This cannot be done while staying in the class of Dirac-type functions as directs sums of the standard Dirac operator act on a larger Hilbert space. 
Nevertheless, it can be done at the level of K-homology, as the equivalence relation is stable homotopy.

\begin{proof}[Proof of Theorem~\ref{thm:degree_multiplicity}]
We may assume $f$ is smooth. By Lemma~\ref{lem:homogeneous}, we can then also assume without loss of generality that $f$ is homogeneous, i.e., determined by a function  $u: \SM^{d-1}\to \SM^{d-1}$. 
We may further assume $\deg(f) = n \geq 0$ by replacing $f_1$ with $-f_1$ which inverts both the degree and the K-homology class. We will show that $[D_f]=n[D_{\mathrm{std}}]$.

 Given a Lipschitz-continuous self-adjoint unitary matrix-valued function $U\in C(\SM^{d-1})\otimes M_{N}(\CM)$, define a $KK_1(C^*_u(\Xx),\CM)$-class by the odd Dirac operator
$$D_U = |x|U\left(\frac{x}{|x|}\right)\hat{\otimes}\ep$$
on $(\Hh_\Xx\otimes \CM^N)\hat{\otimes}\CM_1$, with $D_U(0)=0$. Similarly a unitary $V\in C(\SM^{d-1})\otimes M_{N/2}(\CM)$ determines a class in $KK_0(C^*_u(\Xx),\CM)$ on $\Hh_\Xx\otimes \CM^N$ with odd Dirac and grading operators given by
$$D_V = |x| \begin{pmatrix}
    0 & V\bigl(\frac{x}{|x|}\bigr)\\
    V^*\bigl(\frac{x}{|x|}\bigr) & 0
\end{pmatrix}, \qquad \gamma_0 =  \begin{pmatrix}
    \one_{N/2} & 0\\
    0 &  -\one_{N/2}
\end{pmatrix}$$
and $D_V(0)=0$ and $N$ even.
With the irreducible representation $\gamma_1,\ldots,\gamma_d$ of the Clifford-algebra as in Appendix~\ref{sec:index}, a homogeneous Dirac-type function determines the self-adjoint unitary matrix
$$U_f(x) = \sum_{i=1}^d  u_i(x) \gamma_i.$$
If $d$ is even, this has the off-diagonal structure
$$U_f(x) = \begin{pmatrix}
0 & V_f(x) \\ V_f(x)^* & 0
\end{pmatrix}$$
with a unitary matrix $V$. Then $[D_{U_f}]$ and $[D_{V_f}]$ in odd and even dimension respectively are precisely the classes corresponding to $[D_f]$ under Bott periodicity $KK_d \simeq KK_{d\bmod 2}$. 
We write $U_{\mathrm{std}}$ and $V_{\mathrm{std}}$ for the corresponding respective unitaries for $D_{\mathrm{std}}$.

We now use the K-theory of the $(d-1)$-dimensional sphere for $i \in \{0,1\}$,
$$K^i(\SM^{d-1}) =\begin{cases}
    \ZM \oplus \ZM & \text{if }d \text{ odd, }i=0 \\
    \ZM  & \text{if }d \text{ even, }i \in \{0,1\} \\
    0 & \text{else}
\end{cases}$$
to construct a stable homotopy of $KK_{d\bmod 2}$-classes. 

First, consider the case of odd $d$. As in \cite{SchulzBaldesStoiber2023}, we may use a picture of $K$-theory where $K_0$ is represented by self-adjoint unitary matrices and the equivalence relation is stable homotopy. 
In this case, $K_0(C(\SM^{d-1}))=\ZM\oplus \ZM$.
In terms of complex vector bundles on $\SM^{d-1}$, the first integer is the top degree of the Chern character, while the second is the rank of the vector bundle. 
The first integer therefore corresponds to the mapping degree, whereas the second is generated just by constant self-adjoint unitaries. If we assume without loss of generality that $n$ is positive, then there exist constant self-adjoint unitary matrices $M_0, M_1$ such that we have a norm-continuous path of self-adjoint unitary matrices $U_t \in C(\SM^{d-1})\otimes M_{\tilde{N}}(\CM)$ with $U_0=U_f \oplus M_0$ and $U_1= (U_{\mathrm{std}})^{\oplus n}\oplus M_1$. 
We may further assume the path is smooth and takes values in the smooth functions. As in Lemma~\ref{lemma:homotopy}, the Dirac operators defined by the homogeneous extension $$\hat{D}_t = D_{U_t}$$ then give a homotopy of $KK$-cycles in $KK_1(C^*_u(\Xx, \Hh_\Xx), \CM)$. 
Consequently,
$$[D_{U_f}]+[D_{M_0}]= [\hat{D}_0] = [\hat{D}_1] = n [D_{U_\mathrm{std}}]+ [D_{M_1}] = n[D_{U_\mathrm{std}}]$$
which becomes $[D_f]=n [D_{\mathrm{std}}]$ under Bott periodicity.
Here we used that
$$[D_{M_0}]=0=[D_{M_1}]$$
and claim that for any constant matrix $M_i$, the corresponding bounded $KK$-cycle $(E,\pi, F_i)$ with $F_i=D_{M_i}(1+D^2_{M_i})^{-\frac12}$ is equivalent to a degenerate cycle.
To show this, we note that $(F_{i}-{M_i}\hat{\otimes}\ep)\pi(a)$ is compact since $|x|(1+|x|^2)^{-\frac12}-1$ is a $C_0$-function. Since $M_i$ is a constant self-adjoint unitary,
\[[M_i \hat{\otimes}\ep,\pi(a)]=0, \quad M_i^2 = 1, \quad M_i \hat{\otimes}\ep = (M_i \hat{\otimes}\ep)^*\]
hence the cycle is degenerate as desired.

For even $d$, the argument works similarly but with the class $[V]_1\in K_1(C(\SM^{d-1}))\simeq \ZM$ corresponding to the mapping degree instead. 
After stabilization with constant matrices, there exists a smooth path of unitary matrix-valued functions on $\SM^{d-1}$ which connects $V$ to $n$ copies of a reference unitary. Extending that path homogeneously then gives the desired homotopy in $KK_0(C^*_u(\Xx, \Hh_\Xx), \CM)$.
\end{proof}

\subsection{Brouwer Degree and Asymptotic Degree}
\label{sec:brouwer}

As sketched in the introduction, for explicit computations it is useful to
identify the asymptotic degree with the Brouwer degree. We briefly recall a
convenient construction of the Brouwer degree in unbounded regions and the properties that will be
used below; see
\cite[Definition~3.7, Corollary~3.8, Theorem~3.17, and
Definition~4.1]{BenevieriFuriPeraSpadini2023} and
\cite[Chapter~1]{DincaMawhin2021}.

\begin{remark}[Brouwer degree]\label{rem:brouwer_degree}
{\rm
Let $\Omega\subset\RM^d$ be open, let $f:\Omega\to\RM^d$ be continuous,
and let $y\in\RM^d$. A triple $(f,\Omega,y)$ is called weakly admissible if $K:=f^{-1}(y)\cap\Omega$ is compact.

Choose a bounded open set $V$ such that $K\subset V$ and $\overline V\subset\Omega$, then
\[
    \delta:=\mathrm{dist}\bigl(y,f(\pdv V)\bigr)>0.
\]
By smooth approximation and Sard's theorem, one may choose a smooth map
$g$, defined on a neighborhood of $\overline V$, such that $y$ is a
regular value of $g$ and 
\[ \|g-f\|_{L^\infty(V)}<\delta.\]
The Brouwer degree of the weakly admissible triple is then given
by the Jacobian formula
\begin{equation}\label{eq:brouwer_jacobian_formula}
    \deg_{\mathrm B}(f,\Omega,y)
    :=
    \sum_{x\in g^{-1}(y)\cap V}
    \sgn\bigl(\det Dg(x)\bigr).
\end{equation}
The sum is finite, and its value is independent of the choices of $V$
and $g$.}$\diamond$
\end{remark}

We recall the following properties of the Brouwer degree
\cite[Theorem~3.10 and Theorem~4.2]
{BenevieriFuriPeraSpadini2023}:

\begin{enumerate}
\item[(i)] \emph{Normalization.} For the identity map
$I:\RM^d\to\RM^d$,
\[
    \deg_{\mathrm B}(I,\RM^d,0)=1.
\]

\item[(ii)] \emph{Additivity.} If $\Omega_1,\Omega_2\subset\Omega$ are
disjoint open sets and
\[
    f^{-1}(y)\cap\Omega\subset\Omega_1\cup\Omega_2,
\]
then
\[
    \deg_{\mathrm B}(f,\Omega,y)
    =
    \deg_{\mathrm B}(f,\Omega_1,y)
    +
    \deg_{\mathrm B}(f,\Omega_2,y).
\]

\item[(iii)] \emph{Homotopy invariance.} Let
$H:\Omega\times[0,1]\to\RM^d$ and
$\alpha:[0,1]\to\RM^d$ be continuous. If
\[
    \bigl\{(x,t)\in\Omega\times[0,1]:
    H(x,t)=\alpha(t)\bigr\}
\]
is compact, then
\[
    \deg_{\mathrm B}(H(\,\cdot\,,0),\Omega,\alpha(0))
    =
    \deg_{\mathrm B}(H(\,\cdot\,,1),\Omega,\alpha(1)).
\]

\item[(iv)] \emph{Excision.} If $\Omega'\subset\Omega$ is open and
\[
    f^{-1}(y)\cap\Omega\subset\Omega',
\]
then
\[
    \deg_{\mathrm B}(f,\Omega',y)
    =
    \deg_{\mathrm B}(f,\Omega,y).
\]
\end{enumerate}

We will also use the relation between the Brouwer degree and the mapping
degree of the normalized boundary map. Suppose that
$h:\overline{B_R}\to\RM^d$ is continuous and that
$y\notin h(\pdv B_R)$, then
\begin{equation}\label{eq:brouwer_boundary_formula}
    \deg_{\mathrm B}(h,B_R,y)
    =
    \deg\left(
        \frac{h-y}{|h-y|}
        \bigg|_{\pdv B_R}
    \right),
\end{equation}
where $\pdv B_R$ is identified with $\SM^{d-1}$, see
\cite[Section IV.4]{OutereloRuiz2009}. Indeed, the left-hand side depends only on the boundary values of $h$, hence it can be used as the primary definition of the mapping degree by extending a given normalized function on $\pdv B_R$ continuously to the interior.

\begin{lemma}\label{lem:brouwer_asymptotic_degree}
Let $f:\RM^d\to\RM^d$ be continuous and proper. For any $y\in \RM^d$ one has
\[
    \lim_{R\to \infty}\deg_{\mathrm B}(f,B_R,y)=\deg(f).
\]
\end{lemma}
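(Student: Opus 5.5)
The plan is to reduce the Brouwer degree on a large ball to the boundary mapping degree via \eqref{eq:brouwer_boundary_formula}, and then remove the shift by $y$ with a homotopy on the sphere. Fix $y\in\RM^d$. Since $f$ is continuous and proper, the preimage $f^{-1}(\overline{B_{|y|+1}(0)})$ is compact. So there is $R_0$ such that $|f(x)|>|y|+1$ for all $|x|\geq R_0$. For $R\geq R_0$ we have $y\notin f(\pdv B_R)$, and \eqref{eq:brouwer_boundary_formula} applied to $h=f|_{\overline{B_R}}$ gives
\[
\deg_{\mathrm B}(f,B_R,y)=\deg\left(\frac{f-y}{|f-y|}\bigg|_{\pdv B_R}\right).
\]

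Next I compare this with $\deg_R(f)$ using the straight-line homotopy $s\mapsto f-sy$ for $s\in[0,1]$, restricted to $\pdv B_R$. For $|x|=R\geq R_0$,
\[
|f(x)-sy|\geq |f(x)|-|y|>1 .
\]
The normalized maps $(f-sy)/|f-sy|$ therefore form a continuous homotopy $\pdv B_R\to\SM^{d-1}$ from $f/|f|$ to $(f-y)/|f-y|$. Homotopy invariance of the mapping degree then gives $\deg_{\mathrm B}(f,B_R,y)=\deg_R(f)$ for every $R\geq R_0$.

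Finally, $\deg_R(f)$ is eventually constant in $R$ and equals $\deg(f)$ for large $R$; this is the observation made after the definition of $\deg_R$. To see it directly, note that for $R_0\leq R<R'$ the map $(r,\theta)\mapsto f(r\theta)/|f(r\theta)|$ with $r\in[R,R']$ is a homotopy. Hence the limit exists and coincides with $\lim_R\deg_{\mathrm B}(f,B_R,y)$. One could also argue through excision (iv), since $f^{-1}(y)\subset B_{R_0}$ makes $\deg_{\mathrm B}(f,B_R,y)$ independent of $R\geq R_0$, but the sphere-homotopy argument already suffices.

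The only point requiring care is uniformity: the radius $R_0$ has to be chosen so that every intermediate map $f-sy$ stays nonvanishing on $\pdv B_R$. This is exactly where properness of $f$ enters, through $|f|\to\infty$. The rest is formal.
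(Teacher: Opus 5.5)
Your proof is correct and follows essentially the same route as the paper: you pass to the boundary mapping degree via \eqref{eq:brouwer_boundary_formula} and use the homotopy $(f-sy)/|f-sy|$ on $\pdv B_R$, which is legitimate once properness gives $|f|>|y|$ on large spheres. Your requirement $|f(x)|>|y|+1$ for all $|x|\geq R_0$ is in fact slightly more careful than the paper's condition on $|x|>R$, because it ensures nonvanishing on the sphere $\pdv B_R$ itself.
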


\begin{proof}
Properness of $f$ is equivalent to $|f(x)|\longrightarrow\infty$ for $|x|\to \infty$. 
Fix $y\in\RM^d$ and choose $R$ sufficiently large that $\deg_R(f)$ has stabilized, that one has $f^{-1}(y)\subset B_R$ and $|f(x)|>|y|$ for every $|x|>R$. Then $\deg_{\mathrm B}(f,B_R,y)=\deg_{R}(f)$ because
$$t\in [0,1]\mapsto \frac{f-ty}{|f-ty|}\bigg\rvert_{\pdv B_R}$$
is a continuous homotopy.
\end{proof}

Our notion of degree is therefore nothing but the Brouwer degree of $f:\RM^d\to \RM^d$, which is well-defined since $f$ is proper.

\subsection{Degree for Half-Spaces}
\label{sec:deg_hs}
Let us discuss the special case of Dirac operators $D_{\Yy_1,\dots,\Yy_d}$ corresponding to the symbol function
$$f_{\Yy_1,\dots,\Yy_d}(x)= \left(\delta_{\Yy_1}(x),\dots,\delta_{\Yy_d}(x)\right)$$
with transverse half-spaces $\Yy_1,\dots,\Yy_d \subset \RM^d$.
Every class of transverse half-spaces $\Yy_i$ which can be suitably deformed to the standard half-spaces has degree $1$. 
To obtain collections of half-spaces with higher degrees, take $\Yy_i$ with distinct connected components. 

\begin{example}\label{ex:sph_caps}
    \rm{There exist half-spaces $\Yy_1,\dots, \Yy_d$ with any global degree $n \in \ZM$. Indeed, note that if each half-space is a disjoint union $\Yy_i=\Yy_i^{(1)}\cup \dots \cup \Yy_i^{(N)}$ such that all $\Yy_1^{(j)},\dots \Yy_d^{(j)}$ occupy disjoint angular sectors $\Omega_j\subset \SM^{d-1}$, then by additivity and excision of the Brouwer degree,
    \[\deg(f_{\Yy_1,\dots,\Yy_d}, \RM^d,0)=\sum_{j=1}^N \deg(f_{\Yy^{(j)}_1,\dots,\Yy^{j)}_d}, \RM_+\Omega_j,0)=\sum_{j=1}^N \deg(f_{\Yy^{(j)}_1,\dots,\Yy^{(j)}_d}, \RM^d,0).\]
    Hence, it is enough to construct examples of half-spaces with degree $\pm 1$ which are supported in small angular sectors. Let $\HM_1,\dots,\HM_d$ be the standard half-spaces and let $\Omega_1 \subset \SM^{d-1}$ be an open set. There is a Möbius transformation $M: \SM^{d-1} \to \SM^{d-1}$ which, after radial extension, maps $\HM_1,\dots,\HM_d$ into cones $\Yy_1^{(1)},\dots,\Yy_d^{(1)}$ over $\Omega_1$. If the transformation is orientation-preserving then $f_{\Yy_1,\dots,\Yy_d}$ is degree $1$, while if it is orientation-reversing then $f_{\Yy_1,\dots,\Yy_d}$ is degree $-1$. Rotating or reflecting this example generates disjoint configurations which add up to any degree $n$.}
\end{example}

Finally, we show that the notion of intersection number defined in \cite{drouot2024bulk} is equivalent to the Brouwer degree, hence they are equivalent.

\begin{figure}
    \centering
    \definecolor{Ugreen}{RGB}{40,165,90}
    \definecolor{Vpurple}{RGB}{126,67,171}
    \definecolor{dashred}{RGB}{250,82,65}
    \begin{tikzpicture}[scale=0.85]
  \clip (-7,-5.35) rectangle (7,5.35);
        \def\R{20}
        \def\rword{1.52}
        \draw[step=1, gray!25, very thin]
            (-7.1,-7.1) grid (7.1,7.1);
        \draw[gray!55, thin] (-6.1,0) -- (6.1,0);
        \draw[gray!55, thin] (0,-6.1) -- (0,6.1);
        \foreach \x in {-7,...,7}{
            \foreach \y in {-7,...,7}{
                \fill[black!75] (\x,\y) circle[radius=1.25pt];
            }
        }
        \foreach \a/\b in {20/70,130/170,220/260}{
            \path[
                fill=Ugreen,
                fill opacity=.32
            ]
                (0,0)
                -- (\a:\R)
                arc[start angle=\a,end angle=\b,radius=\R]
                -- cycle;

            \draw[Ugreen,line width=.9pt]
                (\a:\R) -- (0,0) -- (\b:\R);
        }
        \foreach \a/\b in {50/110,190/230,310/360}{
            \path[
                fill=Vpurple,
                fill opacity=.30
            ]
                (0,0)
                -- (\a:\R)
                arc[start angle=\a,end angle=\b,radius=\R]
                -- cycle;
            \draw[Vpurple,line width=.9pt]
                (\a:\R) -- (0,0) -- (\b:\R);
        }
        \draw[black,line width=1.2pt]
            (0,0) circle[radius=\R];
        \tikzset{
            region label/.style={
                fill=white,
                fill opacity=.70,
                text opacity=1,
                rounded corners=1.5pt,
                inner sep=2pt,
                font=\Large
            }
        }
        \node[region label,text=Ugreen]
            at ( 2.35, 2.45) {$U$};
        \node[region label,text=Ugreen]
            at (-2.95, 1.65) {$U$};
        \node[region label,text=Ugreen]
            at (-1.70,-2.85) {$U$};
        \node[region label,text=Vpurple]
            at ( .75, 4.45) {$V$};
        \node[region label,text=Vpurple]
            at (-3.95,-2.35) {$V$};
        \node[region label,text=Vpurple]
            at ( 4.10,-2.20) {$V$};
        \draw[
            dashred,
            densely dashed,
            line width=1.15pt
        ]
            (0,0) circle[radius=\rword];
        \tikzset{
            cross label/.style={
                fill=white,
                fill opacity=.72,
                text opacity=1,
                rounded corners=1pt,
                inner sep=1.2pt,
                font=\small
            }
        }
        \fill[black] (0,0) circle[radius=2.2pt];
    \end{tikzpicture}
    \caption{
        A pair of half-spaces $U,V$ with Brouwer degree $0$. Since the crossings at the origin are on neither simple nor transverse it needs some regularization before one can compute the degree as the intersection number, but the topological definition still applies.
        The Brouwer degree agrees with the intersection number taken by counting oriented intersections along a path traversing the boundaries $\pdv V$.
    }
    \label{fig:radial}
\end{figure}

\begin{proposition}\label{prop:intersection}
Let $\Xx=\RM^2$, $\gamma_U,\gamma_V:\RM^2 \to \RM$ be two simple piecewise smooth curves for which $U,V$ are the half-spaces to the left of $\gamma_U$ and $\gamma_V$ respectively. We assume that $U,V$ are coarsely transverse, that $\gamma_U$ and $\gamma_V$ have finitely many intersections and that all intersections are transverse. Define the intersection number as in \cite{drouot2024bulk} as
\[\chi_{U,V} = \chi_{+}(U,V) - \chi_{-}(U,V); \qquad \chi_{\pm}(U,V) = \lim_{t \to \pm \infty} \mathds{1}_V \circ \gamma(t)\]
for one connected component, and a sum over connected components in general.
Then 
\begin{equation}\label{eqn:deg_intersection}
    \chi_{U,V} = -\deg(\delta_U, \delta_V).
\end{equation}

\end{proposition}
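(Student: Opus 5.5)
We identify both sides with a signed count over the finitely many intersection points of $\gamma_U$ and $\gamma_V$. The approach has three steps: localize the Brouwer degree to these points, compute each local contribution, and then match the global sum with Drouot--Zhu's count along the curves.

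\textbf{Step 1 (localization).} The zero set of $\delta=(\delta_U,\delta_V)$ is $\pdv U\cap\pdv V$, since $\delta_U(x)=0$ exactly when $x$ lies on the topological boundary. By hypothesis this is a finite set $\{p_1,\dots,p_m\}$. By coarse transversality, $\delta$ is proper, so $\deg(\delta)=\deg_{\mathrm B}(\delta,\RM^2,0)$ by Lemma~\ref{lem:brouwer_asymptotic_degree}. Using additivity and excision, this equals the sum of $\deg_{\mathrm B}(\delta,B_\epsilon(p_k),0)$ over $k$. Each local term is computed by \eqref{eq:brouwer_boundary_formula} as the winding number of $\delta/|\delta|$ on a small circle around $p_k$.

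\textbf{Step 2 (local computation).} Near a transverse crossing $p_k$, the smooth branches of the curves cross at distinct tangent lines. Hence $\delta_U$ is comparable to the signed distance to the tangent line of $\gamma_U$, and likewise for $V$. The straight-line homotopy from $(\delta_U,\delta_V)$ to the pair of linear signed distances has no zeros on the small circle, because both functions have fixed signs on the same open arcs of the circle. The local degree therefore equals the sign of the determinant of the two inward normals $(n_U,n_V)$.

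Since $U$ lies to the left of $\gamma_U$, we have $n_U=J\dot\gamma_U$ with $J$ rotation by $\pi/2$, and similarly $n_V=J\dot\gamma_V$. Thus the local degree is $\sgn\det(\dot\gamma_U,\dot\gamma_V)$. Piecewise smoothness is harmless here as long as the corners avoid the intersection points; if they do not, one first perturbs locally, which does not change the degree by homotopy invariance.

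\textbf{Step 3 (matching with $\chi$).} Traverse a connected component $\gamma=\gamma_U$. The function $\mathds{1}_V\circ\gamma$ jumps only at the crossings. At $p_k$ it jumps from $0$ to $1$ when $\dot\gamma_U$ points into $V$, i.e. $\langle\dot\gamma_U,J\dot\gamma_V\rangle>0$. This condition is equivalent to $\det(\dot\gamma_V,\dot\gamma_U)>0$, so the jump has sign $-\sgn\det(\dot\gamma_U,\dot\gamma_V)$. Summing the jumps telescopes to $\chi_+-\chi_-$ on this component. Summing over components gives $\chi_{U,V}=-\sum_k\sgn\det(\dot\gamma_U,\dot\gamma_V)(p_k)=-\deg(\delta_U,\delta_V)$.

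Which curve the paper's $\gamma$ traverses (the figure caption suggests $\pdv V$) only affects a convention. Traversing $\gamma_V$ with $\mathds 1_U$ flips both the order in the determinant and the sign of the jump, so the result is the same. The sign should be double-checked against the inset normalization in Figure~\ref{fig:brouwer-degree-intersection}.

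\textbf{Main obstacle.} The main difficulty is Step 2: justifying rigorously that the distance functions $\delta_U,\delta_V$ are locally homotopic, without zeros on small circles, to the linear models. Distance functions are only Lipschitz, and nearby far-away parts of the boundary could in principle affect them. I would handle this by choosing $\epsilon$ so small that within $B_\epsilon(p_k)$ only the two local branches matter. Then the sign pattern of $\delta$ on $\pdv B_\epsilon(p_k)$ determines which quadrant each point maps into, and this sign pattern alone fixes the winding number.
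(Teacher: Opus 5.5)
Your proposal is correct and follows essentially the paper's route. You localize the Brouwer degree to the finitely many crossings by additivity and excision, and read off a local degree $\pm1$ from the cyclic order of the sign patterns of $(\delta_U,\delta_V)$ on a small circle; your fallback in the ``main obstacle'' paragraph is exactly the paper's argument and needs only the signs of $\delta_U,\delta_V$. You then match with $\chi_{U,V}$ by counting signed entries and exits of $V$ along $\gamma_U$, where your explicit jump sign $-\sgn\det(\dot\gamma_U,\dot\gamma_V)$ turns the paper's ``alternating signs cancel'' case check into a telescoping sum.

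One caveat on your aside: traversing $\gamma_V$ with $\mathds{1}_U$ instead gives $+\deg(\delta_U,\delta_V)=-\deg(\delta_V,\delta_U)$, so the sign does depend on which curve is traversed. The statement's $\mathds{1}_V\circ\gamma$ forces $\gamma=\gamma_U$, and with that choice your sign is right (for $\HM_1,\HM_2$ one gets $\chi=-1$ and $\deg=1$).
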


\begin{proof}
Let us first consider the contribution to the Brouwer degree by a single local intersection at point $x \in \RM^2$.
Choosing $R$ large enough so that $\pdv B_0(R)$ does not contain this intersection, $\delta \neq (0,0)$ and hence the map $\frac{\delta}{|\delta|}: \SM_R \to \SM^1$ is well-defined, with degree given by the local Brouwer degree $\deg_{R}((\delta_U,\delta_V), \SM_R, x)$.
The two transverse curves divide $\pdv B_0(R)$ into four arcs, on which the signs of $\delta_U$ and $\delta_V$ are respectively
\[(+,+),(+,-),(-,+),(-,-).\]
Consequently, as $\pdv B_0(R)$ is traversed once counterclockwise, the image of $\frac{\delta}{|\delta|}$ passes through the corresponding four quadrants of $\RM^2$. 
There are only two possible cyclic orders:
in one case the image winds once counterclockwise around the origin, and in the other it winds once clockwise. The the local degree is $\pm 1$ depending on whether the orientation matches that of the standard half-spaces.

Finally, there are four cases to check for \eqref{eqn:deg_intersection}. If $\gamma$ starts in $V$ and ends in $V^c$ then the first intersection exits $V$ and has degree $1$. If $\gamma$ enters $V$ again, it must eventually leave it again, hence there can only be any even number of additional intersections which alternate in sign and hence cancel. The remaining cases are analogous.
\end{proof}

\section{Kubo formula}
\label{sec:kubo}
In this section we prove the bulk and edge Kubo formulas, numerical formulas for the bulk and edge invariants, in a broad setting.
To begin, we define the central algebra on which our Kubo cocycles are defined.

\subsection{Schwartz algebras}
\label{sec:poly}

Given $\Xx$ with bounded geometry, there always exists a coarsely dense uniformly discrete subset $\Lambda\subset \Xx$ and a corresponding uniformly bounded Borel partition $\Xx= \bigsqcup_{\lambda \in \Lambda} Q_\lambda$ \cite[Remark 8.2]{Spakula2009}, i.e., one can construct partitions with $\lambda\in Q_\lambda$ and bounded diameter $\sup_{\lambda}\diam(Q_{\lambda}) < \infty$. 

Corresponding to the Borel partition, we get projections  
$$\chi_\lambda := \chi_{Q_\lambda}, \qquad \Hh_\lambda = \chi_\lambda \Hh_\Xx$$
which yields a direct sum decomposition
$$\Hh_\Xx = \bigoplus_{\lambda\in \Lambda} \chi_{\lambda}\Hh_\Xx= \bigoplus_{\lambda\in \Lambda} \Hh_\lambda.$$
Without loss of generality, for a coarsely faithful geometric module, we assume that all $\Hh_{\lambda}$ are nonzero. This can always be achieved by adapting a given partition, for example by combining cells. 
Since none of the main results will explicitly depend on the choice of $\Lambda, Q_\lambda$ we omit them from the notation and consider them fixed for any given module $\Hh_\Xx$.

An important property of the uniform Roe algebra, and the main reason why we prefer it over its non-uniform counterpart, is that in the polynomial-growth case it has a dense Fr\'echet subalgebra of polynomially controlled locally trace-class operators. 
\begin{definition}
Let $\Xx$ be a proper metric space with polynomial growth and a Borel partition as above.
\begin{enumerate}
\item[(i)]
The uniform Schwartz algebra $\Ss^p_u(\Xx, \Hh_\Xx)$, $1\leq p\leq \infty$, is the Fr\'echet subalgebra of $C^*_{u}(\Xx,\Hh_\Xx)$ defined by locally $p$-Schatten operators with matrix elements whose off-diagonal decay is faster than every polynomial.
Precisely, $a\in C^*_u(\Xx,\Hh_\Xx)$ is in $\Ss_u^p(\Xx,\Hh_\Xx)$ if all of the seminorms
$$\|a\|_{k,p} := \sup_{\lambda,\mu \in \Lambda} \|\chi_\mu a\chi_\lambda \|_{S^p(\Hh_\lambda, \Hh_\mu)} \langle d(\lambda,\mu)\rangle^k <\infty$$
are finite for each $k \in \NM$, with $\lVert\cdot\rVert_{S^p(\Hh_{\lambda},\Hh_{\mu})}$ the $p$-Schatten-norm for operators from $\Hh_{\lambda}$ to $\Hh_{\mu}$.
\item[(ii)] 
For $\Zz\subset \Xx$ a closed subset, define the relative Schwartz algebra $\Ss_u^p(\Zz\subset \Xx,\Hh_\Xx)$ as the Fr\'echet subalgebra of $C^*_{u}(\Zz \subset \Xx,\Hh_\Xx)$ consisting of elements $a$ such that all of the seminorms
$$\|a\|_{\pdv,k,p} := \sup_{\lambda,\mu \in \Lambda} \|\chi_\mu a\chi_\lambda\|_{S^p(\Hh_\lambda, \Hh_\mu)} \langle d(\lambda,\mu)\rangle^k \langle d(\lambda,\Zz)\rangle^k \langle d(\mu,\Zz)\rangle^k$$
are finite for each $k$. 
\end{enumerate}
\end{definition}

We note that different Borel partitions give rise to equivalent seminorms; we omit the simple proof since we never use this fact.
For $p=1$, we simply write $\|a\|_{\pdv,k} \coloneqq \|a\|_{\pdv,k,1}$.

In Appendix~\ref{sec:schwartz}, we prove that $\Ss_u^p(\Xx,\Hh_\Xx)$, $\Ss_u^p(\Zz\subset \Xx,\Hh_\Xx)$, and their stabilizations are dense Fr\'echet $\ast$-algebras which are closed under holomorphic functional calculus of their $C^*$-closures. This will be essential for defining the pairings with cyclic cocycles as we recall in Section~\ref{subsec:cocycle_definitions}.

In the following, we also need stabilizations of $C^*$-algebras given by tensoring the compact operators $\KM=\KM(\ell^2(\NM))$ on $\ell^2(\NM)$. The corner embedding of an algebra into its stabilization is a strong Morita-equivalence and therefore induces isomorphisms in both K-theory and $K$-homology. 
More generally, these isomorphisms also pass to limits over separable subalgebras, hence $KK^{\mathrm{sep}}(A,B)\simeq KK^{\mathrm{sep}}(A\otimes \KM, B \otimes \KM)$. It will be convenient that stabilization takes uniform Roe $C^*$-algebras to uniform Roe $C^*$-algebras: Note that $\Hh_\Xx\otimes \ell^2(\NM)$ is a geometric $\Xx\times\NM$-module, where we use on $\Xx\times \NM$ the product metric.
Then we can identify the stabilized uniform Roe $C^*$-algebras via
\begin{align*}
    C^*_{us}(\Xx, \Hh_\Xx)&= C^*_u(\Xx, \Hh_\Xx)\otimes \KM \\
    &\simeq C^*_u( (\Xx \times \{0\})\subset (\Xx\times \NM), \Hh_\Xx \otimes \ell^2(\NM))
\end{align*}
and
\begin{align*}
C^*_{us}(\Zz\subset \Xx, \Hh_\Xx) &= C^*_{u}(\Zz\subset \Xx, \Hh_\Xx)\otimes \KM \\
&\simeq C^*_{u}( (\Zz\times \{0\})\subset (\Xx\times \NM), \Hh_\Xx \otimes \ell^2(\NM)).
\end{align*}
The Schwartz subalgebras $\Ss^p_{us}(\Xx, \Hh_\Xx)$ and $\Ss^p_{us}(\Zz\subset \Xx, \Hh_\Xx)$ are then defined as the subalgebras corresponding to the relative Schwartz algebras of the right hand-sides. For the identification, we think of $\Xx\times \NM$ as infinitely many identical sheets with the original copy of $\Xx$ embedded as $\Xx\times \{0\}$.

\subsection{Kubo cocycles}\label{subsec:cocycle_definitions}

In this section, we lay the necessary groundwork for the Bulk and Edge Kubo formulas.
In particular, we define (polynomially bounded) coarse and cyclic cochains, explain their identification, and demonstrate that they are robust against arbitrary bounded perturbations in the defining functions $F$ in the sense that they do not change the cohomology class.

\begin{definition}
A coarse $d$-cochain $\varphi: \Xx^{d+1}\to \CM$ is a locally bounded Borel function such that for each $R>0$, the set
$$\mathrm{supp}(\varphi)\cap (\Delta)_R$$
is bounded where $(\Delta)_R$ is an $R$-thickening of the diagonal of $\Xx^{d+1}$.

For $\Zz\subset \Xx$ a closed subset, a coarse $d$-cochain relative to $\Zz$ is a locally bounded Borel function $\varphi: \Xx^{d+1}\to \CM$ such that for each $R>0$, the set
$$\mathrm{supp}(\varphi)\cap (\Delta)_R \cap (\Zz^{d+1})_R$$
is bounded.
\end{definition}
Coarse cohomology is the cohomology of the vector space of coarse cochains with respect to the Alexander-Spanier coboundary map
$$(\delta\varphi)(x_0,\dots,x_{d+1})=\sum_{i=0}^{d+1} (-1)^i \varphi(x_0,\dots,\widehat{x_i},\dots,x_{d+1}).$$
For cochains constructed by anti-symmetrization of functions $f_i:\Xx\to \CM$, this takes the form
$$\delta(f_0\wedge \dots \wedge f_d)= 1\wedge f_0\wedge \dots \wedge f_d.$$
One may restrict to antisymmetric cocycles without changing the cohomology theory, as antisymmetrization yields a quasi-isomorphic subcomplex.

\begin{definition}\cite[Section 4.2]{Roe1993}
\label{def:cyclic_cocycle_def}
Let $a_0,\dots,a_d \in \CM^1_u(\Zz\subset \Xx, \Hh_\Xx):= \CM_u(\Zz\subset \Xx, \Hh_\Xx)\cap \Ss_u^1(\Zz\subset \Xx, \Hh_\Xx)$, the algebra of locally trace-class operators with finite propagation supported near $\Zz$. There is a unique locally finite complex Radon measure $\mu_a$ on $\Xx^{d+1}$ with
$$\int (f_0\otimes \dots \otimes f_d) d\mu_a = \Tr(f_0 a_0 \dots f_d a_d)$$
for all $f_0,\dots,f_d\in C_c(\Xx)$.

For $\varphi$ an antisymmetric relative coarse $d$-cocycle a cyclic cocycle $\eta_{\varphi}: \CM_u^1(\Zz\subset \Xx)^{d+1} \to \CM$ is defined by
\begin{equation}
\label{eq:cyclic_cocycle_def}
\eta_{\varphi}(a_0,\dots,a_d)= \int_{\Xx^{d+1}} \varphi(x) d \mu_a(x).
\end{equation}
\end{definition}
The integral \eqref{eq:cyclic_cocycle_def} always exists for finite-propagation operators as only a compact subset of $\Xx^{d+1}$ contributes by the cochain condition \cite[Corollary 4.18]{Roe1993}.

\begin{example}\label{ex:cyclic_cocycle_discrete}
In the discrete case $\Hh_\Xx=\ell^2(\Lambda)$ we recover the explicit formula
$$\eta_{\varphi}(a_0,\dots,a_d)= \sum_{\substack{x_0,\dots,x_{d+1} \in \Lambda\\ x_0=x_{d+1}}} \varphi(x_0,\dots,x_d) \prod_{i=0}^d \langle \delta_{x_i}, a_i \delta_{x_{i+1}}\rangle.$$
If $\Xx=\RM^d$ and $\Hh_\Xx=L^2(\RM^d)$ the sum is replaced by an integral and the matrix elements by the integral kernels $a_i(x, y)$.
\end{example}

It is standard (see \cite{Roe1993}) that $\varphi\mapsto \eta_\varphi$ defines a map from coarse cohomology to the cyclic cohomology of $\CM_u^1(\Zz\subset\Xx,\Hh_\Xx)$. However, that algebra is not closed under holomorphic functional calculus, hence a cyclic cocycle on that algebra does not necessarily define a pairing with the K-groups $K_i(C^*_u(\Zz\subset \Xx))$ of the $C^*$-completion. This can be remedied under stronger assumptions:

\begin{definition}
We say that a coarse cochain $\varphi: \Xx^{d+1}\to \CM$ is $\Zz$-polynomially bounded if one of the seminorms
$$\|\varphi\|_{\Zz, k}:=\sum_{x_0,\dots,x_{d}\in \Lambda}  \|\varphi\|_{\ell^\infty(
Q_{x_0}\times\cdots\times Q_{x_d})} \prod_{i=0}^{d-1} \langle d(x_i, x_{i+1})\rangle^{-k}\prod_{i=0}^{d}  \langle d(x_i, \Zz)\rangle^{-k}$$
is finite for some $k>0$.
\end{definition}
Here, we consider the actual supremum norm in $\ell^{\infty}$, not the essential supremum.
\begin{proposition}
\label{prop:cont_ext}
If $\Xx$ has polynomial growth, then for any $\Zz$-polynomially bounded coarse cochain $\varphi$, the cyclic cochain $\eta_{\varphi}$ extends by continuity to a cyclic cochain on $\Ss_u^1(\Zz\subset \Xx)^{d+1}$.
\end{proposition}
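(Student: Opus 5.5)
The plan is to prove an a priori bound of the form
\[
|\eta_\varphi(a_0,\dots,a_d)|\leq C\,\|\varphi\|_{\Zz,k}\prod_{i=0}^d \|a_i\|_{\pdv,m}
\]
for all $a_0,\dots,a_d\in \CM^1_u(\Zz\subset \Xx)$, with $m$ depending only on $k$, $d$ and the growth exponent $\nu$. We then show that $\CM^1_u(\Zz\subset\Xx)$ is dense in $\Ss^1_u(\Zz\subset \Xx)$ for the Fr\'echet topology given by the seminorms $\|\cdot\|_{\pdv,m}$. Given these two facts, $\eta_\varphi$ is a jointly continuous multilinear functional on a dense subspace and extends uniquely by continuity. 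Cyclicity and the cocycle identities are closed conditions, so they pass to the extension.

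\textbf{Step 1 (localizing the measure).} Use the Borel partition $\Xx=\bigsqcup_\lambda Q_\lambda$ and decompose the measure $\mu_a$ of Definition~\ref{def:cyclic_cocycle_def} into its restrictions to the cells $Q_{x_0}\times\cdots\times Q_{x_d}$. Inserting $\one=\sum_\lambda \chi_\lambda$ between consecutive factors, the total variation of $\mu_a$ on such a cell is bounded by
\[
\|\chi_{x_0}a_0\chi_{x_1}\|_{S^1}\,\|\chi_{x_1}a_1\chi_{x_2}\|_{S^1}\cdots\|\chi_{x_d}a_d\chi_{x_0}\|_{S^1}.
\]
To justify this, approximate the indicator functions $\chi_{Q_\lambda}$ monotonically by $C_c$ functions and use that $|\Tr(b_0\cdots b_d)|\le\prod\|b_i\|_{S^1}$, which holds since $\|\cdot\|_{S^\infty}\le\|\cdot\|_{S^1}$. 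This gives
\[
|\eta_\varphi(a)|\le\sum_{x_0,\dots,x_d}\|\varphi\|_{\ell^\infty(Q_{x_0}\times\cdots\times Q_{x_d})}\prod_{i=0}^d\|\chi_{x_i}a_i\chi_{x_{i+1}}\|_{S^1},
\]
with the convention $x_{d+1}=x_0$.

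\textbf{Step 2 (weights).} Each factor satisfies
\[
\|\chi_{x_i}a_i\chi_{x_{i+1}}\|_{S^1}\le\|a_i\|_{\pdv,m}\,\langle d(x_i,x_{i+1})\rangle^{-m}\langle d(x_i,\Zz)\rangle^{-m}\langle d(x_{i+1},\Zz)\rangle^{-m}.
\]
Choose $m\geq k$, so that the chain weights $\prod_{i<d}\langle d(x_i,x_{i+1})\rangle^{-k}$ and all the $\Zz$-weights $\langle d(x_i,\Zz)\rangle^{-k}$ appearing in $\|\varphi\|_{\Zz,k}$ are absorbed. The sum is then bounded by $\|\varphi\|_{\Zz,k}\prod_i\|a_i\|_{\pdv,m}$. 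The closing factor $\langle d(x_d,x_0)\rangle^{-m}$ is simply dropped, since it is at most $1$.

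It must be checked that no polynomial-growth summation is needed beyond what $\|\varphi\|_{\Zz,k}$ already contains, since that seminorm is itself a sum over all tuples. This makes the estimate essentially immediate. If instead one wants to use a weaker (sup-type) norm on $\varphi$, polynomial growth (Definition~\ref{def:poly_growth}) enters through the bound $\sum_{y}\langle d(x,y)\rangle^{-m}\le C$ for $m>\nu$.

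\textbf{Step 3 (density), which I expect to be the main obstacle.} Given $a\in\Ss^1_u(\Zz\subset\Xx)$, truncate it to $a_R:=\sum_{d(\lambda,\mu)\le R,\ d(\lambda,\Zz)\le R}\chi_\mu a\chi_\lambda$. This operator has finite propagation and is supported near $\Zz$. It is uniformly locally compact, because each block is trace class and, by bounded geometry, only uniformly finitely many blocks meet any ball. For any seminorm $\|\cdot\|_{\pdv,k}$, the omitted blocks satisfy
\[
\|\chi_\mu(a-a_R)\chi_\lambda\|_{S^1}\,\langle\cdot\rangle^k\le \|a\|_{\pdv,k+1}/\langle R\rangle,
\]
since each omitted pair has $d(\lambda,\mu)>R$ or $d(\lambda,\Zz)>R$. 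Hence $a_R\to a$ in every seminorm. The subtle point is that $\|\cdot\|_{\pdv,k}$ is a supremum over blocks, not an operator norm, so one must also verify that $a_R\in C^*_u(\Zz\subset\Xx)$ and that $a_R\to a$ in norm. For this, use the Schur test with polynomial growth: $\|b\|\le\sup_\lambda\sum_\mu\|\chi_\mu b\chi_\lambda\|\le C\|b\|_{k}$ for $k>\nu$. This is also where the polynomial growth hypothesis is genuinely used.
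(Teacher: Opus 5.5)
\textbf{There is a genuine gap in Step~1.} Your justification inserts cell indicators and applies $|\Tr(b_0\cdots b_d)|\le\prod_i\|b_i\|_{S^1}$ with $b_i=f_i\chi_{x_i}a_i\chi_{x_{i+1}}$. That only proves
\[
\Bigl|\int f_0\otimes\cdots\otimes f_d\,d\mu_a\Bigr|\leq\prod_{i=0}^d\|f_i\|_\infty\prod_{i=0}^d\|\chi_{x_i}a_i\chi_{x_{i+1}}\|_{S^1}
\]
for \emph{product} test functions supported in the cell $Q_{x_0}\times\cdots\times Q_{x_d}$. It is not a bound on the total variation of $\mu_a$ on the cell. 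Yet your next inequality, $|\int_{\mathrm{cell}}\varphi\,d\mu_a|\leq\|\varphi\|_{\ell^\infty(\mathrm{cell})}\,|\mu_a|(\mathrm{cell})$, needs exactly the total variation, because $\varphi$ restricted to a cell is a general bounded Borel function of $(x_0,\dots,x_d)$, not a product.

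A uniform bound on elementary tensors does not control the variation in general. Take the measures $K_\epsilon(x-y)\,dx\,dy$ on $\mathbb{T}^2$, where $K_\epsilon(t)=\cot(t/2)$ for $\epsilon<|t|\leq\pi$ and $K_\epsilon(t)=0$ otherwise. By uniform $L^2$-boundedness of the truncated Hilbert transform, they satisfy $|\int f\otimes g\,K_\epsilon(x-y)\,dx\,dy|\leq C\|f\|_\infty\|g\|_\infty$ uniformly in $\epsilon$. Their total variation, however, grows like $\log(1/\epsilon)$. So H\"older's inequality applied product by product is not enough. You need the estimate for all $h=\sum_j f_{0,j}\otimes\cdots\otimes f_{d,j}$, with $\|h\|_\infty$ on the right-hand side rather than $\sum_j\prod_i\|f_{i,j}\|_\infty$.

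The missing ingredient is the paper's algebraic trick. On $\Kk=\Hh_\Xx^{\otimes(d+1)}$, let $S(\xi_0\otimes\cdots\otimes\xi_d)=\xi_1\otimes\cdots\otimes\xi_d\otimes\xi_0$ be the cyclic shift. Then $\Tr\bigl((B_0\otimes\cdots\otimes B_d)S\bigr)=\Tr(B_0B_1\cdots B_d)$, so for every such $h$ supported in the cell
\[
\int h\,d\mu_a=\Tr\bigl(M_h\,(\chi_{x_0}a_0\chi_{x_1}\otimes\cdots\otimes\chi_{x_d}a_d\chi_{x_0})\,S\bigr),
\qquad
\Bigl|\int h\,d\mu_a\Bigr|\leq\|h\|_\infty\prod_{i=0}^d\|\chi_{x_i}a_i\chi_{x_{i+1}}\|_{S^1}.
\]
Density of such $h$ (Stone--Weierstrass) together with the Riesz--Markov--Kakutani theorem then gives the total-variation bound on each cell.

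Once this is inserted, the rest of your proposal goes through. Step~2 is exactly the paper's final estimate $|\eta_\varphi(a_0,\dots,a_d)|\leq\|\varphi\|_{\Zz,k}\prod_i\|a_i\|_{\pdv,k}$. Your truncation argument in Step~3 is correct. It supplies the density of $\CM^1_u(\Zz\subset\Xx)$ in $\Ss^1_u(\Zz\subset\Xx)$, which the paper uses only implicitly when it says that $\eta_\varphi$ extends by continuity.
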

\begin{proof}
Morally speaking, one wants to use $$|\Tr(f_0a_0\dots f_da_d)|\leq \prod_{i=0}^d \|f_i\|_\infty \prod_{i=0}^d
\|\chi_{x_i}a_i\chi_{x_{i+1}}\|_{S^1}= \|f_0\otimes\cdots \otimes f_d\|_\infty \prod_{i=0}^d
\|\chi_{x_i}a_i\chi_{x_{i+1}}\|_{S^1}$$ 
for bounded Borel functions $f_0,\ldots,f_d$ supported in $Q_{x_0},\ldots,Q_{x_d}$ to define a Radon measure, however, we require the estimate also for linear combinations of elementary tensors. This can be obtained with an algebraic trick:
Define $\Kk=\Hh_\Xx^{\otimes d+1}$ and the shift operator $S:\Kk\to \Kk$ defined by $S(\xi_0\otimes\cdots\otimes \xi_d)=\xi_1\otimes\cdots\otimes \xi_d\otimes \xi_0$.
Then
$$\sum_{j=1}^N \Tr(f_{0,j}a_0\dots f_{d,j}a_d)= \Tr(M_h (\chi_{x_0}a_0 \chi_{x_1}\otimes \cdots \otimes \chi_{x_d} a_d \chi_{x_0})S)$$
for $h=\sum_{j=1}^N f_{0,j}\otimes \cdots \otimes f_{d,j}$ with $M_h$ is the multiplication operator on $\Kk$. From this we conclude
$$\big|\sum_{j=1}^N \Tr(f_{0,j}a_0\dots f_{d,j}a_d)\big|= |\Tr(M_h (\chi_{x_0}a_0 \chi_{x_1}\otimes \cdots \otimes \chi_{x_d} a_d \chi_{x_0})S)|\leq \|h\|_\infty  \prod_{i=0}^d
\|\chi_{x_i}a_i\chi_{x_{i+1}}\|_{S^1}.$$
By the Riesz-Markov-Kakutani Theorem, $\mu_a$ therefore defines a locally finite Radon measure whose total variation in any cell satisfies
$$\big\|\mu_a\rvert_{Q_{x_0}\times \cdots \times Q_{x_d}}\|\leq \prod_{i=0}^d
\|\chi_{x_i}a_i\chi_{x_{i+1}}\|_{S^1}.$$
Consequently,
$$\big|\eta_{\varphi}(a_0,\ldots,a_d)\big| \leq \sum_{\substack{x_0,\dots,x_{d+1} \in \Lambda\\ x_0=x_{d+1}}}\|\varphi\|_{\ell^\infty(
Q_{x_0}\times\cdots\times Q_{x_d})}  \prod_{i=0}^d \|\chi_{x_i} a_i \chi_{x_{i+1}}\|_{S^1}$$
and therefore 
$$|\eta_{\varphi}(a_0,\dots,a_d)|\leq \|\varphi\|_{\Zz,k} \prod_{i=0}^d \|a_i\|_{\pdv,k}$$
which implies $\eta_{\varphi}$ is bounded with respect to the Fr\'echet topology of $\Ss_u^1(\Zz\subset \Xx)$. 
\end{proof} The continuous extension is a chain map from coarse cohomology to the cyclic cohomology of $\Ss_u^1(\Zz\subset \Xx)$, but a coarse coboundary $\varphi$ need not be a coboundary in the cyclic cohomology of $\Ss_u^1(\Zz\subset \Xx)$ unless it is the coboundary of polynomially bounded cochain. 

\begin{definition}
Let $A$ be a $C^*$-algebra and $\Aa \subset A$ a dense subalgebra which is closed under holomorphic functional calculus, hence the $K$-groups are identified. The pairing of cyclic cocycle $\eta: \Aa^{d+1}\to \CM$ with the $K$-theory of $\Aa$ is defined as follows:

For even $d$, a class $x\in K_0(A)$ admits representatives $x=[e]_0-[s(e)]_0$ with projections $e\in M_N(\Aa^+)$ that have scalar part $s(e)\in M_N(\CM)$. Then 
$$\langle x, [\eta] \rangle = (\eta\# \Tr)(e,\dots,e).$$

For odd $d$ the pairing with class $x\in K_1(A)$ is defined by representing $x=[u]_1$
by a unitary $u\in M_N(\Aa^+)$ with $s(u)=\one_N$, and
$$\langle x, [\eta] \rangle = (\eta\# \Tr)(u^*,u,u^*,\dots,u).$$
\end{definition}
Here, $\eta\#\Tr$ is the natural amplification of $\eta$ to matrix-valued operators and we tacitly use the normalized extension to the unitization, $\eta(a_0,\ldots,a_d)=\eta(a_0-s(a_0),\ldots,a_d-s(a_d))$ which eliminates all scalar parts.

We can now introduce the generalized Kubo cocycles. As explained in \cite{LudewigThiang25}, the Kubo formula corresponds to a specific kind of coarse cocycle. We generalize this by phrasing it in terms of Dirac-type functions:
\begin{definition}
\label{def:F}
Let $f:\Xx\to \RM^d$ be a $\Zz$-Dirac-type function and let $\Theta: \RM\to [-1,1]$ be a nondecreasing function with $\Theta(\lambda)=1$ for $\lambda > r$ and $\Theta(\lambda)=-1$ for $\lambda< -r$. We define $F=(F_1,\ldots,F_d)$ by $F_i(x)=\Theta(f_i(x))$, and the coarse cocycle $$\varphi_F = 1 \wedge F_1 \wedge \dots \wedge F_d.$$
\end{definition}
Among other properties, the $\Zz$-properness implies this is a cochain and it is closed by definition. We begin with a more general technical estimate:

\begin{lemma}
\label{lemma:poly_bounded_from_support}
Let $\nu$ be the growth exponent of $\Xx$ from Definition~\ref{def:poly_growth} and let $\varphi: \Xx^{d+1}\to \RM$ be a bounded function. Define the $Q$-support $\supp_{Q}(\varphi)$ as the set of all $\mathbf x\in \Lambda^{d+1}$ such that the block $Q_{\mathbf x}=Q_{x_0}\times ... \times Q_{x_d}$ satisfies $Q_{\mathbf x}\cap \supp(\varphi)\neq \emptyset$.

Suppose that there are $x_*\in\Xx$ and a function
$p:\RM_+\to\RM_+$ satisfying $p(R)\leq c_p\langle R\rangle^\alpha$ such that
\[ \operatorname{supp}_Q(\varphi) \cap(\Delta)_R\cap(\Zz)_R^{d+1} \subset B_{p(R)}(x_*)^{d+1}, \qquad R>0.\]
Then $\varphi$ is a $\Zz$-polynomially bounded relative coarse
cochain. More precisely, for every integer $k>(d+1)\nu\alpha+1$
one has
\[ \|\varphi\|_{\Zz,k} \leq C_{k,p}\|\varphi\|_\infty,\]
where $C_{k,p}$ only depends on $k$, $p$, $\Xx$ and the partition $Q$.
\end{lemma}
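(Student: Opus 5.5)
The idea is to estimate the sum defining $\|\varphi\|_{\Zz,k}$ by grouping the tuples $\mathbf x=(x_0,\dots,x_d)\in\Lambda^{d+1}$ in dyadic shells according to how far they are from the diagonal and from $\Zz$. First note that only tuples in $\supp_Q(\varphi)$ contribute, and each contributes at most $\|\varphi\|_\infty$ times the weight
\[ w_k(\mathbf x)=\prod_{i=0}^{d-1}\langle d(x_i,x_{i+1})\rangle^{-k}\prod_{i=0}^{d}\langle d(x_i,\Zz)\rangle^{-k}. \]
So it suffices to show $\sum_{\mathbf x\in\supp_Q(\varphi)} w_k(\mathbf x)\le C_{k,p}$. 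The coarse-cochain property follows as well: bounded $Q$-support near $\Delta$ and $\Zz^{d+1}$ gives bounded support there, since the cells have uniformly bounded diameter.

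For $\mathbf x\in\Lambda^{d+1}$ set $R(\mathbf x)=1+\max_i d(x_i,x_{i+1})+\max_i d(x_i,\Zz)+c_Q$, where $c_Q$ bounds the cell diameters. This choice ensures that $\mathbf x\in(\Delta)_{R}\cap(\Zz)^{d+1}_R$ with $R=R(\mathbf x)$, up to fixed constants depending on how the thickening of the diagonal is measured. Decompose into shells $S_n=\{\mathbf x: 2^{n}\le R(\mathbf x)<2^{n+1}\}$, $n\ge0$. On $S_n$, at least one factor in $w_k$ is $\gtrsim 2^{nk}$ and all factors are $\ge1$, so $w_k(\mathbf x)\lesssim 2^{-nk}$. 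By the hypothesis with $R=2^{n+1}$, every $\mathbf x\in S_n\cap\supp_Q(\varphi)$ has all coordinates in $B_{p(2^{n+1})}(x_*)$, with perhaps a fixed enlargement by $c_Q$ to pass from cells to the points $x_i$. Polynomial growth applied to a ball around a point of $\Lambda$ near $x_*$ gives
\[ \#\big(S_n\cap\supp_Q(\varphi)\big)\le \big(C(1+(p(2^{n+1})+c)^\nu)\big)^{d+1}\lesssim 2^{n\alpha\nu(d+1)}. \]
Summing over shells gives $\sum_n 2^{n((d+1)\nu\alpha-k)}$, which converges when $k>(d+1)\nu\alpha$. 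The stated condition $k>(d+1)\nu\alpha+1$ leaves room for the constant losses in these bounds.

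The main obstacle is bookkeeping rather than any new idea. I must check that the diagonal thickening used in the hypothesis, $(\Delta)_R$ in $\Xx^{d+1}$, is controlled by the consecutive distances $d(x_i,x_{i+1})$; this holds by the triangle inequality up to a factor $d$. I must also absorb the passage from cells $Q_{x_i}$ to the points $x_i$ into constants depending only on $Q$. Finally, the counting bound needs polynomial growth centered at $x_*$, which need not lie in $\Lambda$, so I would recenter at a nearby point of $\Lambda$ using coarse density. Each of these changes only constants, which can be absorbed into $C_{k,p}$.
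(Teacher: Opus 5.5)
Your proposal is correct and follows essentially the same argument as the paper. Both proofs group tuples into shells by consecutive distances and distances to $\Zz$, bound the weight on each shell by a negative power of the shell radius, and count the contributing tuples through the support hypothesis (via $E_R\subset(\Delta)_{dR}\cap(\Zz)_R^{d+1}$) and polynomial growth. The one difference is that you use dyadic shells where the paper uses unit-width shells $E_N\setminus E_{N-1}$ and bounds each shell's count by the full ball count; this is why your series already converges for $k>(d+1)\nu\alpha$, while the paper's needs the extra $+1$.
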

\begin{proof}
Define the shells
\[E_R= \{(x_0,\dots,x_d): d(x_i, x_{i+1})<R, \; d(x_j, \Zz)< R\}\]
for $i\in \{0,\dots, d-1\}$ and $j \in \{0,\dots, d\}.$
Note in particular that $E_R \subset (\Delta)_{dR}\cap (\Zz)_R^{d+1}$. For any $x\in E_N\setminus E_{N-1}$ one has
$$\prod_{i=0}^{d-1}
\langle d(x_i,x_{i+1})\rangle^{-k}
 \prod_{i=0}^{d}
 \langle d(x_i,\Zz)\rangle^{-k}
\leq C_k\langle N\rangle^{-k}.$$
as at least one factor is smaller than the right-hand side. 
Therefore, 
\begin{align*}
\|\varphi\|_{\Zz,k} &= \sum_{\substack{x_0,\dots,x_{d}\in \Lambda}} \|\varphi\|_{\ell^\infty(
Q_{x_0}\times\cdots\times Q_{x_d})}  \prod_{i=0}^{d-1} \langle d(x_i, x_{i+1})\rangle^{-k} \prod_{i=0}^{d}\langle d(x_i, \Zz)\rangle^{-k} \\
&\leq \|\varphi\|_\infty \sum_{N\geq 1} \sum_{\substack{\mathbf x \in E_N\setminus E_{N-1} \\ Q_{\mathbf x}\cap  \mathrm{supp}(\varphi)\neq \varnothing}}   C_k\langle N\rangle^{-k} 
\end{align*}

Using the bound on the $Q$-support and polynomial growth we can count those cells
\begin{align*}
\|\varphi\|_{\Zz,k}
&\leq
C_k\|\varphi\|_\infty
\sum_{N\geq1}
\#\Bigl\{
(x_0,\ldots,x_d)\in\Lambda^{d+1}: x_i \in B_{p(dN)}(x_*) \text{ for all }i
\Bigr\}
\langle N\rangle^{-k} \\
&\leq
C_k c^{d+1}\|\varphi\|_\infty
\sum_{N\geq1}
\langle p(dN)\rangle^{\nu(d+1)}
\langle N\rangle^{-k}.
\end{align*}
The final series converges under the stated condition for $k$.

\end{proof}

\begin{proposition}\label{prop:poly_proper_bounded}
Suppose $f:\Xx \to \RM^d$ is a polynomially proper $\Zz$-Dirac-type function with constants $c,C,L, s$ as in Definition~\ref{def:edge_dirac_type}. For $\varphi_F$ as in Definition~\ref{def:F} and for fixed $x_*$, the norms $\|\varphi_F\|_{\Zz,k}$ for sufficiently large $k$ can be bounded uniformly in terms of $r, L,d, c, C, s$ alone. In particular, the sets $B_R$ for Lemma~\ref{lemma:poly_bounded_from_support} can be chosen as balls $B_{p(R)}(x_*)$ with polynomial $p$ determined by as a continuous function in $r,L,d,c,C$ and $s$.
\end{proposition}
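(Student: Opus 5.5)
The plan is to reduce everything to Lemma~\ref{lemma:poly_bounded_from_support}. Since $\|\varphi_F\|_\infty\le 1$ (each $F_i$ takes values in $[-1,1]$ and the antisymmetrization has at most $(d+1)!$ terms), it suffices to exhibit $x_*$ and a polynomial $p$, with coefficients depending continuously on $r,L,d,c,C,s$, such that
\[\supp_Q(\varphi_F)\cap(\Delta)_R\cap(\Zz)_R^{d+1}\subset B_{p(R)}(x_*)^{d+1}.\]
The uniform bound on $\|\varphi_F\|_{\Zz,k}$ then follows directly from that lemma.

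\textbf{Step 1: nonvanishing of $\varphi_F$ forces a point with all $|f_i|$ small.} Write $\varphi_F=1\wedge F_1\wedge\cdots\wedge F_d$ as the determinant of the $(d+1)\times(d+1)$ matrix with rows $(1,F_1(x_j),\ldots,F_d(x_j))$, $j=0,\dots,d$. Suppose that for some $i$ all $|f_i(x_j)|>r$ and all the $f_i(x_j)$ have the same sign. Then the $F_i$-column is constant $\pm1$, hence proportional to the $1$-column, and the determinant vanishes.

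So if $\varphi_F(\mathbf x)\neq0$, then for every $i$ either some $|f_i(x_j)|\le r$, or $f_i$ changes sign among the $x_j$. If the points lie in $(\Delta)_R$, all $x_j$ are within distance $dR$ of $x_0$, say. By large-scale Lipschitz continuity, in either case $|f_i(x_0)|\le r+2L(1+dR)$, using in the sign-change case that $f_i$ takes values of both signs at points within $dR$. Hence
\[|f(x_0)|\le \sqrt d\,(r+2L(1+dR)).\]
For points of the $Q$-support I add the cell diameter $\sup_\lambda\diam Q_\lambda$ to $R$; this is a constant of $\Xx$ and the partition.

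\textbf{Step 2: polynomial properness localizes.} Since also $d_\Zz(x_0)<R$, condition (ii') gives either $d(x_*,x_0)\le c$ or
\[C\,d(x_*,x_0)^{2s}\le |f(x_0)|^2+R^2\le d\,(r+2L(1+dR))^2+R^2.\]
Therefore $d(x_*,x_0)\le \max\{c,\,(C^{-1}(d(r+2L(1+dR))^2+R^2))^{1/(2s)}\}$. The other points satisfy $d(x_*,x_j)\le d(x_*,x_0)+dR$.

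This gives
\[p(R)=c+dR+\bigl(C^{-1}(d(r+2L(1+dR))^2+R^2)\bigr)^{1/(2s)},\]
which satisfies $p(R)\le c_p\langle R\rangle^{\alpha}$ with $\alpha=\max(1,1/s)$ and $c_p$ continuous in the parameters. Applying Lemma~\ref{lemma:poly_bounded_from_support} finishes the proof.

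\textbf{Main obstacle.} The delicate point is Step 1: the $Q$-support refers to cells meeting $\supp(\varphi_F)$ rather than to points where $\varphi_F\neq0$. One must check that passing to closures and cells only enlarges $R$ by the fixed cell diameter. Because $\Theta$ is merely monotone, the support closure could add points where $|f_i|$ is near $r$. I would handle this by replacing $r$ with $r+1$ and, if needed, $R$ with $R+1$. The same care is needed for the edge case $\Zz\neq\Xx$, where only the combination $|f|^2+d_\Zz^2$ is controlled. That is exactly what the $(\Zz)_R^{d+1}$ restriction provides.
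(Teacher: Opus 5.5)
Your proof is correct and follows essentially the same route as the paper. Nonvanishing forces each $f_i$ to take a value $\le r$ and a value $\ge -r$ on the tuple; large-scale Lipschitz continuity then bounds $|f|$ there; polynomial properness together with $d_\Zz<R$ confines the tuple to a ball of polynomial radius, and Lemma~\ref{lemma:poly_bounded_from_support} finishes the argument. The differences are cosmetic: you get the vanishing directly from a determinant column proportional to the constant column rather than from the paper's row-reduced expansion into the $\varphi_\sigma$, and you localize $x_0$ and use the triangle inequality instead of localizing every $x_k$.
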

\begin{proof}
After writing the character $\varphi_F$ as a determinant and performing a row reduction, it can be written as
$$\varphi_{F}(x_0,\dots,x_d)= \sum_{\sigma\in S_d} (-1)^\sigma \prod_{i=1}^{d} \left(F_{\sigma(i)}(x_i)-F_{\sigma(i)}(x_{i-1})\right)=: \sum_{\sigma\in S_d} (-1)^\sigma \varphi_\sigma(x_0,\ldots,x_d).$$
We show that each term $\varphi_\sigma$ is polynomially bounded. As before, define
$$E_R=\{(x_0,\dots,x_d): d(x_i, x_{i+1})<R, \; d(x_i, \Zz)< R\}.$$
Then for $(x_0,\dots,x_d) \in E_R$, one has $(x_0,\dots,x_d)\in (\Delta)_{dR}\cap (\Zz)_R^{d+1}$.

Define $H_i^\pm = F_i^{-1}(\pm 1)$. If $\varphi_\sigma(x_0,\ldots,x_d)\neq 0$ then $x_i$ and $x_{i-1}$ cannot both be in $H_{\sigma(i)}^+$ or both be in $H_{\sigma(i)}^-$. For every $i$ there are therefore indices $a,b$ such that
$$f_i(x_a) \geq -r, \qquad f_i(x_b) \leq r.$$
For any $i$ and $k$ we can therefore estimate
$$f_i(x_k)\geq -r - L(1+\, \mathrm{dist}(x_k,x_a)), \qquad f_i(x_k)\leq r + L(1+\, \mathrm{dist}(x_k,x_b)).$$
Assuming now $x\in (\Delta)_{dR}\cap (\Zz)_R \cap \supp(\varphi_\sigma)$ the mutual distances are bounded by $2dR$, hence
$$|f_i(x_k)| \leq r + L + 2dRL.$$
On the other hand, by the $\Zz$-polynomial properness of $f$ we have
\[ \sqrt{C} d(x_*,x_k)^s \leq |f(x_k)| + R \leq \sqrt{d}(r+L+2dRL)+R,\]
whenever $d(x_*,x_k) > c$. For every $x_k$, including those with $d(x_*,x_k)\leq c$, we therefore have
$$d(x_*,x_k)\leq c+ \left(\frac{(\sqrt{d}(r+L+2dRL) + R)}{\sqrt{C}}\right)^{1/s} =: p(R).$$
Since each $x_k$ is in the ball $B_{p(R)}(x_*)$ and the blocks $Q_\lambda$ have bounded diameter the $Q$-support is bounded as required for Lemma~\ref{lemma:poly_bounded_from_support}, which finishes the proof.
\end{proof}

The pairing with these Kubo cocycles reproduces the familiar trace formulas:
\begin{lemma}\label{lem:cyclic_convergence}
The cocycle $\eta_F$ corresponding to a $\Zz$-polynomially proper Dirac-type function can be written as
$$\eta_F(a_0,\dots,a_d) = \sum_{\sigma\in S_{d}}(-1)^\sigma \Tr(a_{0}\prod_{i=1}^{d} [F_{\sigma{(i)}}, a_i]).$$
\end{lemma}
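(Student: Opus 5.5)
The strategy is to prove the identity first for finite-propagation, locally trace-class operators $a_0,\dots,a_d\in \CM^1_u(\Zz\subset\Xx)$, where $\eta_F=\eta_{\varphi_F}$ is given by the Radon measure $\mu_a$ of Definition~\ref{def:cyclic_cocycle_def}. We then extend both sides by continuity to the Schwartz algebra $\Ss^1_u(\Zz\subset\Xx)$.

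For the algebraic part, start from the row-reduced expansion already obtained in the proof of Proposition~\ref{prop:poly_proper_bounded}:
$$\varphi_F(x_0,\dots,x_d)=\sum_{\sigma\in S_d}(-1)^\sigma\prod_{i=1}^d\bigl(F_{\sigma(i)}(x_i)-F_{\sigma(i)}(x_{i-1})\bigr).$$
Each product is a finite linear combination of elementary tensors $g_0\otimes\cdots\otimes g_d$, where each $g_j$ is a product of the bounded Borel functions $F_k$ or of $1$. Formally, integrating an elementary tensor against $\mu_a$ gives $\Tr(g_0a_0g_1a_1\cdots g_da_d)$. Expanding $\prod_i[F_{\sigma(i)},a_i]$ in the trace produces exactly the same sum of such terms: the factor $F_{\sigma(i)}(x_i)$ sits to the right of $a_{i-1}$, i.e.\ left of $a_i$, so it contributes $F_{\sigma(i)}a_i$, while $-F_{\sigma(i)}(x_{i-1})$ contributes $-a_{i-1}F_{\sigma(i)}$ appearing as $-a_iF_{\sigma(i)}$ after regrouping. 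One has to check this indexing carefully against the convention in Definition~\ref{def:cyclic_cocycle_def} (multiplication by $f_i$ to the left of $a_i$), and if needed shift indices cyclically using cyclicity of the trace. Modulo that bookkeeping this is a direct term-by-term identification.

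The analytic point, which I expect to be the main obstacle, is that individual terms such as $\Tr(a_0F_1a_1\cdots)$ need not be trace class and the elementary tensors are not compactly supported. Only the full commutator products are trace class, and only the full cochain $\varphi_\sigma$ has bounded support near the diagonal. I would handle this in three steps.

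\begin{enumerate}
\item[(i)] For finite propagation $\rho$ and support within distance $R$ of $\Zz$, each $[F_k,a_i]$ is supported where $|f_k|\le r+L(1+\rho)$. Together with the support of $a_i$ near $\Zz$ and polynomial properness, this forces the product $a_0\prod_i[F_{\sigma(i)},a_i]$ to be supported in a bounded region $B_{p}(x_*)$, as in the proof of Proposition~\ref{prop:poly_proper_bounded}.
\item[(ii)] Insert a cutoff $\chi=\chi_{B_{p'}(x_*)}$ with $\chi a_0\prod_i[\dots]=a_0\prod_i[\dots]$. The operator is then a finite sum of locally trace-class blocks, hence trace class.
\item[(iii)] Compare with $\int\varphi_\sigma\,d\mu_a$, which only sees the same compact block by the cochain condition. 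The identification in the algebraic step is then legitimate after multiplying every $g_j$ by the cutoff, which changes neither side.
\end{enumerate}

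Finally, both sides are continuous in the Fréchet topology. The left side is continuous by Proposition~\ref{prop:cont_ext} and Proposition~\ref{prop:poly_proper_bounded}. For the right side, estimate $\|a_0\prod_i[F_{\sigma(i)},a_i]\|_{S^1}$ by summing block norms $\|\chi_{x_0}a_0\chi_{x_1}\|_{S^1}\cdots$ weighted by $|\varphi_\sigma|$ on cells, which is exactly the bound used in Proposition~\ref{prop:cont_ext}. So the right side is trace class and continuous on $\Ss^1_u(\Zz\subset\Xx)$. Since $\CM^1_u$ is dense in the Schwartz algebra (Appendix~\ref{sec:schwartz}), the identity extends to all of $\Ss^1_u(\Zz\subset\Xx)$.
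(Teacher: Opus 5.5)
Your proposal is correct and is essentially the paper's argument. You verify the identity on finite-propagation locally trace-class operators by localizing to a bounded region, where the paper localizes blockwise with the cell partition rather than with one global cutoff. You then match the result with $\int\varphi_\sigma\,d\mu_a$ up to a cyclic relabeling, and extend by continuity via a trace-norm bound on $\prod_i[F_{\sigma(i)},a_i]$ that comes from the support analysis of Proposition~\ref{prop:poly_proper_bounded} and shell counting.

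Two small corrections are needed.
\begin{enumerate}
\item[(i)] The $i$-th factor of $\varphi_\sigma$ actually produces $-[F_{\sigma(i)},a_{i-1}]$, not a commutator with $a_i$ after "regrouping". The relabeling should instead use the cyclic antisymmetry $\varphi_F(x_1,\dots,x_d,x_0)=(-1)^d\varphi_F(x_0,\dots,x_d)$, whose sign cancels the $(-1)^d$ coming from the $d$ factors $-[F_{\sigma(i)},a_i]$.
\item[(ii)] In the trace-norm bound, the correct weight is the product of per-factor cell oscillations, i.e.\ the paper's set $\Omega_\sigma$, rather than $\sup|\varphi_\sigma|$ on cells, since $F$ need not be constant on cells. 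The extra free index $x_0$ is then absorbed by the decay of $a_0$.
\end{enumerate}
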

\begin{proof}
We insert partitions of unity and use the identity
\begin{equation}
\label{eq:commutater_expanded}
\chi_{x_1} [F_i, a] \chi_{x_2} = \chi_{x_1} F_i a \chi_{x_2} -  \chi_{x_1} a F_i \chi_{x_2}.
\end{equation}
For $a_0,\ldots,a_d\in \CM_u^1(\Zz\subset\Xx)$, we can formally expand by definition of $\mu_a$.
Since all sums are finite,
\begin{align*}
\Tr\left(a_{0}\prod_{i=1}^{d} [F_{\sigma{(i)}}, a_i]\right) &=  \sum_{\substack{x_0,x_1,\dots,x_{d+1} \in \Lambda\\ x_0=x_{d+1}}}  \Tr\left(\chi_{x_0} a_0 \prod_{i=1}^{d} \left(\chi_{x_{i}}[F_{\sigma{(i)}}, a_i]\chi_{x_{i+1}}\right)\right)\\
&= \sum_{x_0,x_1,\dots, x_d \in \Lambda} \int_{Q_{x_0}\times \dots\times Q_{x_d}} \varphi_\sigma(x) d\mu_a(x)=\int_{\Xx^{d+1}} \varphi_\sigma(x) d\mu_a(x)
\end{align*}
with a polynomially bounded cochain $\varphi_\sigma$, and up to a cyclic permutation it is the same as defined in the proof of  Proposition~\ref{prop:poly_proper_bounded}.

It remains to show that $\prod_{i=1}^{d} [F_{\sigma{(i)}},a_i]$ is trace-class such that expansion actually makes sense and extends by continuity. We expand as blocks
\begin{equation}
\label{eq:product_expansion}
\prod_{i=1}^{d} [F_{\sigma{(i)}},a_i]= \sum_{\substack{x_0,\ldots,x_{d+1} \in \Lambda\\x_0=x_{d+1}}} \prod_{i=1}^{d} B_i(x)
\end{equation}
with $B_i(x)=\chi_{x_{i}}[F_{\sigma{(i)}},a_i] \chi_{x_{i+1}}$. If $F_{\sigma(i)}$ is constant on both $Q_{x_{i-1}}$ and $Q_{x_{i}}$ and takes the same value on both, then $B_i(x)=0$ by \eqref{eq:commutater_expanded}.

For one fixed permutation $\sigma\in S_d$ we can therefore consider a slightly more coarse version of the support argument that we used in Lemma~\ref{lemma:poly_bounded_from_support} and Proposition~\ref{prop:poly_proper_bounded}.
Writing \(H_j^\pm=F_j^{-1}(\{\pm1\})\), define 
$$\Omega_\sigma =\left\{(x_1,\ldots,x_{d+1})\in\Lambda^{d+1}: Q_{x_{i}}\cup Q_{x_{i+1}}\not\subset H_{\sigma(i)}^\pm \text{ for every }i \in \{1,\dots, d\}\right\}$$
and note that for $x\notin \Omega_\sigma$ at least one of $B_1(x),\ldots, B_d(x)$ vanishes.

Moreover, the argument of Proposition~\ref{prop:poly_proper_bounded} applies in the same way to \(\Omega_\sigma\): For $x\in E_R \cap \Omega_\sigma$, we have $x_0,\ldots,x_d\in B_{p(R)}(x_*)$. 
Hence, the shell-counting argument of Lemma~\ref{lemma:poly_bounded_from_support} gives, for sufficiently large \(k\),
$$C_{\sigma,k} \coloneqq  \sum_{x\in\Omega_\sigma} \prod_{i=1}^d \langle d(x_i,x_{i+1})\rangle^{-k} \prod_{i=0}^d \langle d(x_i,\Zz)\rangle^{-k} <\infty. $$
Therefore, \eqref{eq:product_expansion} converges absolutely in trace norm with
\begin{align*}
\left\|\prod_{i=1}^d[F_{\sigma(i)},a_i] \right\|_{S^1}\leq \sum_{x\in\Lambda^{d+1}} \|B_1(\mathbf x)\cdots B_d(\mathbf x)\|_{S^1} 
&\leq 2^d C_{\sigma,k} \prod_{i=1}^d\|a_i\|_{\pdv,k}.
\end{align*}
\end{proof}

We also need sufficient conditions for when the cohomology class of this cocycle is invariant under perturbations of $f$.

\begin{proposition}\label{prop:cocycle_homotopy}
Let $t \in [0,1] \mapsto F^t$ be a differentiable path of Borel functions $F_1^t,\dots, F_d^t: \Xx\to \RM$ in the sense that $t \mapsto F^t(x)$ is continuously differentiable for every $x$.

Suppose that $\varphi_{F^t}$ is a $\Zz$-polynomially bounded coarse cochain for all $t$ and that 
$$A^t_j(x_0,\dots,x_{d-1}) = \dot{F}^t_j(x_0) (1\wedge F^t_1 \wedge \dots \widehat{F}^t_j \wedge \dots \wedge F^t_d)(x_0,\dots,x_{d-1})$$
is a $\Zz$-polynomially bounded coarse cochain for each $t$. Then $\dot{\varphi}_{F^t}$
is the coboundary of a $\Zz$-polynomially bounded cochain for every $t$.

Suppose in addition that the path is uniformly polynomially bounded in the sense that
\begin{equation}
\label{eq:unif_bounded_Atj}
\sup_{t\in [0,1]}\|\varphi_{F^t}\|_{\Zz, k} < \infty \hbox{ and } \sup_{t\in [0,1]} \|A^t_j\|_{\Zz, k} <\infty, 
\end{equation}
for each $j=1,\dots,d$ and some $k>0$ and that for every $R>0$ there exists some bounded set $B_R\subset \Xx$ with
\begin{equation}
\label{eq:supp_homotopy}
\mathrm{supp}(A_j^t)\cap (\Delta)_R \cap (\Zz^{d})_R \subset B_R^{d}
\end{equation}
for every $t\in [0,1]$ and $j=1,\dots,d$.
Then $\varphi_{F^t}-\varphi_{F^s}$ and $\eta_{F^t} - \eta_{F^s}$ are the coboundaries of polynomially bounded coarse and cyclic cochains respectively for every $s,t\in [0,1]$.
\end{proposition}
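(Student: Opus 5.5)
The key observation is that $\varphi_F = 1\wedge F_1\wedge\cdots\wedge F_d$ is itself a coboundary in the Alexander–Spanier sense, $\varphi_F=\delta(F_1\wedge\cdots\wedge F_d)$, and the derivative in $t$ can be moved inside. By multilinearity of the wedge product,
\[
\dot\varphi_{F^t} = \sum_{j=1}^d 1\wedge F^t_1\wedge\cdots\wedge \dot F^t_j\wedge\cdots\wedge F^t_d ,
\]
and I would aim to rewrite each summand, up to sign, as $\delta A^t_j$. Here $A^t_j$ is the $(d-1)$-cochain from the statement, obtained by putting $\dot F^t_j$ in the first slot in front of the $(d-1)$-cochain $1\wedge F^t_1\wedge\cdots\widehat{F^t_j}\cdots\wedge F^t_d$.

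Concretely, I would use the cup-product Leibniz rule for Alexander–Spanier cochains: for a $0$-cochain $g$ and a cocycle $\psi$, one has $\delta(g\cdot\psi)=(\delta g)\cup\psi$. I would also use the fact that $\psi_j:=1\wedge F_1\wedge\cdots\widehat{F_j}\cdots\wedge F_d$ is closed, being a coboundary. Writing $\dot F_j(x_0)\psi_j(x_0,\ldots,x_{d-1})$ as $\dot F_j\cup\psi_j$, I get $\delta A_j=(\delta\dot F_j)\cup\psi_j$. This is $(\dot F_j(x_1)-\dot F_j(x_0))\,\psi_j(x_1,\ldots,x_d)$, which after antisymmetrization equals $\pm\,1\wedge\dot F_j\wedge(\text{others})$. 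Since the antisymmetric subcomplex is quasi-isomorphic, I would either work with antisymmetrized versions of $A_j$ or check directly that the antisymmetrization of the cup product recovers the wedge up to a combinatorial constant. Either way $\dot\varphi_{F^t}=\delta\bigl(\sum_j \pm A^t_j\bigr)$ (possibly after antisymmetrizing, which preserves polynomial boundedness because the seminorms are permutation-invariant up to constants). Since each $A^t_j$ is $\Zz$-polynomially bounded by hypothesis, this gives the first claim.

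For the second claim I would integrate: set $B=\int_s^t \sum_j\pm A^\tau_j\,d\tau$ pointwise. This is Borel measurable, since continuity of $\tau\mapsto\dot F^\tau(x)$ gives Riemann integrability, and $\delta B=\varphi_{F^t}-\varphi_{F^s}$ pointwise by the fundamental theorem of calculus. The uniform bound \eqref{eq:unif_bounded_Atj} gives $\|B\|_{\Zz,k}\le |t-s|\sup_\tau\sum_j\|A^\tau_j\|_{\Zz,k}$. Here I must check that the $\ell^\infty$ over cells commutes with the integral as an inequality, which it does because a sup of an integral is at most the integral of the sup. The support condition \eqref{eq:supp_homotopy} ensures $B$ is a genuine relative coarse cochain: its support within $(\Delta)_R\cap(\Zz^d)_R$ lies in $B_R^d$, which is bounded.

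For the cyclic statement I would apply the chain map $\varphi\mapsto\eta_\varphi$ from Proposition~\ref{prop:cont_ext}, extended continuously to $\Ss^1_u(\Zz\subset\Xx)$. On $\CM^1_u$ one has $\eta_{\delta B}=b\,\eta_B$ (Roe's identity) for the Hochschild coboundary $b$. Both sides are continuous in the Fréchet topology by the bound $|\eta_\varphi(a)|\le\|\varphi\|_{\Zz,k}\prod\|a_i\|_{\pdv,k}$, so the identity extends. Because $B$ can be taken antisymmetric, $\eta_B$ is cyclic.

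The main obstacle I expect is the bookkeeping in the first step: matching signs and normalizations between the asymmetric cup-product form $A_j$ and the antisymmetric wedge, and verifying that the antisymmetrized $A_j$ keeps the same polynomial bounds and support properties. A secondary issue is confirming that $\eta_{\delta B}=b\eta_B$ holds for merely Borel, locally bounded $B$. I would handle this by the same measure $\mu_a$ argument, which only needs local boundedness.
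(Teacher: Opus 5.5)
Your proposal is correct and is essentially the paper's proof. The paper also writes $\dot\varphi_{F^t}=\delta\bigl(d\sum_{j}(-1)^{j-1}\hat A^t_j\bigr)$ with $\hat A^t_j$ the antisymmetrization of $A^t_j$, using $\delta\hat A^t_j=\frac1d\,1\wedge\dot F^t_j\wedge F^t_1\wedge\cdots\wedge\widehat{F^t_j}\wedge\cdots\wedge F^t_d$, which is exactly what your cup-product Leibniz computation gives after alternation. It then integrates in $t$, using the support condition for the coarse support and the uniform bounds for the seminorm estimate.

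Your extra remarks fill in details the paper leaves implicit: measurability, the passage to cyclic cochains via $\eta_{\delta B}=\pm b\,\eta_B$ plus Fr\'echet continuity, and the effect of antisymmetrization on the seminorms. On the last point, permuting variables preserves $\Zz$-polynomial boundedness only after enlarging $k$ by a factor depending on $d$, not at fixed $k$.
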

\begin{proof}
We can write the derivative of the character as
$$\frac{d}{dt}(1 \wedge F^t_1 \wedge \dots \wedge F^t_d)=\sum_{j=1}^d 1 \wedge F^t_1 \wedge \dots \wedge \dot{F}^t_j \wedge \dots \wedge F^t_d$$
by applying the Leibniz rule. Since none of the terms
$F^t_1 \wedge \dots \wedge \dot{F}^t_j \wedge \dots \wedge F^t_d$
are generally coarse cochains, we need to take advantage of cancellations.
Let $\hat{A}^t_j$ be the complete antisymmetrization of $A^t_j$.
Then
$$\delta \hat{A}^t_j= \frac{1}{d} 1\wedge \dot{F}^t_j \wedge F^t_1 \wedge \dots \wedge \widehat{F}^t_j \wedge \dots \wedge F^t_d.$$

Therefore,
$$\dot\varphi_{F^t} = \delta\left(d \sum_{j=1}^d (-1)^{j-1} \hat{A}_j^t\right),$$
which shows that $\dot\varphi_{F^t}$ is a coboundary, where the difference at time $t$ and $s$ is given by 
$$\varphi_{F^t}- \varphi_{F^s} = \delta \int_{s}^t d \sum_{j=1}^d (-1)^{j-1} \hat{A}^\lambda_j d\lambda.$$
Due to \eqref{eq:supp_homotopy}, the integral on the right-hand side is also a coarse cochain for every $s,t$, hence the left-hand side is a coboundary in coarse cohomology. Moreover, 
$$\bigg\lVert \int_{s}^t d \sum_{j=1}^d (-1)^{j-1} \hat{A}^\lambda_j d\lambda \bigg\rVert_{\Zz,k}\leq  \int_{s}^t d \sum_{j=1}^d \big\lVert\hat{A}^\lambda_j \big\rVert_{\Zz,k} d\lambda $$
where the right-hand side is finite if \eqref{eq:unif_bounded_Atj} holds.
\end{proof}

One sufficient condition which leaves the cohomology class invariant is as follows:
\begin{proposition}
\label{prop:coarse_bounded_perturbation}
Suppose $f$ is a polynomially proper $\Zz$-Dirac-type function with constants $L,c,C,s$ as in Definition~\ref{def:edge_dirac_type}. If $g: \Xx\to \RM^d$ is a bounded Borel function, then $\tilde{f}=f+g$ is also a polynomially proper $\Zz$-Dirac-type function with constants $\tilde{L},\tilde{c},\tilde{C},s$ bounded in terms of $L,c,C,s$ and $\|g\|_\infty$. 

For $F$ and $\Theta$ as in Definition~\ref{def:F} with  $\Theta \in C^{\infty}(\RM)$ then
$$\varphi_{F}-\varphi_{\tilde{F}}$$
is the coboundary of a polynomially bounded coarse cochain.
\end{proposition}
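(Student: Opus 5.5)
The first claim is elementary. Large-scale Lipschitz continuity passes to $\tilde f=f+g$ with $\tilde L=L+\|g\|_\infty$. For polynomial properness, use $|\tilde f|\geq |f|-\|g\|_\infty$. This gives
\[
|\tilde f(x)|^2+d_\Zz^2(x)\geq \tfrac12 |f(x)|^2 - \|g\|_\infty^2 + d_\Zz^2(x)\geq \tfrac12 C d(x_*,x)^{2s}-\|g\|_\infty^2 ,
\]
and the additive constant is absorbed by enlarging $c$. The condition (ii) of Definition~\ref{def:edge_dirac_type} follows the same way.

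For the cohomological statement I will apply Proposition~\ref{prop:cocycle_homotopy} to the linear path $f^t=f+tg$, $t\in[0,1]$, and set $F^t_i=\Theta(f^t_i)$. Since $\Theta$ is smooth, $t\mapsto F^t(x)$ is $C^1$ for each $x$, with $\dot F^t_j=\Theta'(f^t_j)\,g_j$. By the first part, every $f^t$ is polynomially proper with constants uniform in $t$. Proposition~\ref{prop:poly_proper_bounded} then gives $\sup_t\|\varphi_{F^t}\|_{\Zz,k}<\infty$ for large $k$, with supports of $\varphi_{F^t}$ near the diagonal and near $\Zz$ contained in balls $B_{p(R)}(x_*)$, where $p$ is a polynomial independent of $t$.

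The main step is checking the hypotheses on
\[
A^t_j=\dot F^t_j(x_0)\,(1\wedge F^t_1\wedge\cdots\widehat{F^t_j}\cdots\wedge F^t_d)(x_0,\ldots,x_{d-1}).
\]
Its sup norm is at most a constant times $\|\Theta'\|_\infty\|g\|_\infty$. For the support, note that $\dot F^t_j(x_0)\neq0$ forces $|f^t_j(x_0)|\leq r$ because $\supp\Theta'\subset[-r,r]$. The remaining factor is nonzero only if, for each $i\neq j$, some consecutive pair $x_{a-1},x_a$ is not both in $H_i^+$ or both in $H_i^-$. This yields points where $f^t_i\geq -r$ and where $f^t_i\leq r$. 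So I repeat the argument of Proposition~\ref{prop:poly_proper_bounded} on $(\Delta)_R\cap(\Zz^d)_R$, using the $d-1$ wedge factors for the components $i\neq j$ and the derivative factor for component $j$. The Lipschitz bound then gives $|f^t(x_k)|\leq \sqrt d\,(r+\tilde L+2dR\tilde L)$ for all $k$, and polynomial properness places every $x_k$ in a ball $B_{p(R)}(x_*)$ with $p$ polynomial and uniform in $t$. This establishes \eqref{eq:supp_homotopy}. Lemma~\ref{lemma:poly_bounded_from_support}, applied to $(d-1)$-cochains with the same shell count, gives $\sup_t\|A^t_j\|_{\Zz,k}<\infty$, which is \eqref{eq:unif_bounded_Atj}.

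Proposition~\ref{prop:cocycle_homotopy} then shows that $\varphi_{F^0}-\varphi_{F^1}=\varphi_F-\varphi_{\tilde F}$ is the coboundary of the polynomially bounded cochain $\int_0^1 d\sum_j(-1)^{j-1}\hat A^\lambda_j\,d\lambda$. I expect the main obstacle to be the support estimate for $A^t_j$: one must make sure that the derivative factor supplies the missing control on the $j$-th component, and that antisymmetrization $\hat A^t_j$ does not destroy the bounds. Antisymmetrization only permutes arguments and the balls $B_{p(R)}(x_*)^d$ are symmetric, so the bounds survive.
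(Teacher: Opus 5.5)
Your proposal is correct and follows essentially the same route as the paper. Like the paper, you take the linear path $f^t=f+tg$ with constants uniform in $t$, verify the hypotheses of Proposition~\ref{prop:cocycle_homotopy} through the support estimate for $A^t_j$ (the derivative factor forces $|f^t_j(x_0)|\leq r$ and the wedge factors control the other components, as in Proposition~\ref{prop:poly_proper_bounded}), and finish with Lemma~\ref{lemma:poly_bounded_from_support}. One harmless slip: since $|g(x)-g(y)|\leq 2\|g\|_\infty$, the Lipschitz constant should be $\tilde L=L+2\|g\|_\infty$ rather than $L+\|g\|_\infty$.
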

\begin{proof}
It is easy to see that for $t\in [0,1]$ all of the functions
$$f^t = f + t g$$
are polynomially proper $\Zz$-Dirac-type functions with uniform constants $\tilde{c},\tilde{C},\tilde{L},s$ that can be bounded in terms of $L,c,C,s$ and $\|g\|_\infty$.  
We prove that the transgression cochains $$A^t_j(x_0,\dots,x_{d-1}) = \dot{F}^t_j(x_0) (1\wedge F^t_1 \wedge \dots \widehat{F}^t_j \wedge \dots \wedge F^t_d)(x_0,\dots,x_{d-1})$$
satisfy the conditions of Proposition~\ref{prop:cocycle_homotopy}.
Recall that the derivative is given by
\[\dot{F}^t_j = {\Theta'}(f^t_j(x))\dot{f}^t_j(x)= {\Theta'}(f_j(x)+tg_j(x))g_j(x), \]
and in particular the derivative is a bounded function.

For $(x_0,\dots,x_{d-1})\in E_R \cap \supp({A^t_j})$ we must have $|f^t_j(x_0)| \leq r$. 
As before, for every $i\neq j$ there is a pair $x_a,x_b$ such that $x_a\in (H^-_i)^c$ and $x_b \in (H^+_i)^c$. Here, the $t$-dependence in $H^{\pm}$ is implicit.
For each $k$ and $i \neq j$ we once again obtain
$$|f_i^t(x_k)|\leq r + \tilde{L} + 2dR\tilde{L}$$
and for the remaining index $j$ one has $|f^t_j(x_0)| \leq r$. As in the proof of Proposition~\ref{prop:poly_proper_bounded} we conclude that
$$(\Zz)_R^d \cap (\Delta)_R \cap \supp (A^t_j) \subset B_{p(R)}^{d}$$
with a ball around $x_*$ whose radius is given as polynomial expression in $\tilde{c},\tilde{C},\tilde{L}$ and $s$. Proposition~\ref{prop:cocycle_homotopy} and Lemma~\ref{lemma:poly_bounded_from_support} finish the proof.
\end{proof}

\begin{corollary}
\label{cor:F_independence}
For given polynomially proper $\Zz$-Dirac-type function $f$, the class $[\eta_F]$ in cyclic cohomology does not depend on the choice of switch function.
\end{corollary}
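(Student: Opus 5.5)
Given two switch functions $\Theta,\tilde\Theta$ as in Definition~\ref{def:F} (with possibly different cutoffs $r,\tilde r$), I will connect them by the linear path $\Theta^t=(1-t)\Theta+t\tilde\Theta$ for $t\in[0,1]$. Each $\Theta^t$ is again nondecreasing, equals $\pm1$ outside $[-r_*,r_*]$ with $r_*=\max(r,\tilde r)$, and so is an admissible switch function. I then set $F^t_i=\Theta^t(f_i)$. Since $f$ itself does not change, Proposition~\ref{prop:poly_proper_bounded} bounds $\|\varphi_{F^t}\|_{\Zz,k}$ uniformly in $t$: that bound depends only on $r_*,L,d,c,C,s$.

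Next I will verify the hypotheses of Proposition~\ref{prop:cocycle_homotopy} for this path. The path is affine in $t$, so it is pointwise $C^1$, with
\[\dot F^t_j(x)=\tilde\Theta(f_j(x))-\Theta(f_j(x)),\]
which is bounded by $2$ and vanishes unless $|f_j(x)|\le r_*$. This is exactly the support property used in the proof of Proposition~\ref{prop:coarse_bounded_perturbation}: on $E_R\cap\supp(A^t_j)$ one has $|f_j(x_0)|\le r_*$. For $i\ne j$, the wedge factor $F^t_i$ must change between two of the points, which gives $|f_i(x_k)|\le r_*+L+2dRL$. Polynomial properness then confines all points to a ball $B_{p(R)}(x_*)$ with $p$ polynomial and independent of $t$. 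This yields \eqref{eq:supp_homotopy}, and Lemma~\ref{lemma:poly_bounded_from_support} then gives uniform bounds on $\|A^t_j\|_{\Zz,k}$, i.e.\ \eqref{eq:unif_bounded_Atj}.

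Proposition~\ref{prop:cocycle_homotopy} then shows that $\eta_{F}-\eta_{\tilde F}$ is the coboundary of a polynomially bounded cyclic cochain, hence $[\eta_F]=[\eta_{\tilde F}]$ in the cyclic cohomology of $\Ss^1_u(\Zz\subset\Xx)$.

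The main thing to check is that Proposition~\ref{prop:cocycle_homotopy} needs no smoothness of $\Theta$. Smoothness was only used in Proposition~\ref{prop:coarse_bounded_perturbation} to differentiate $\Theta(f+tg)$. Here $t$ enters linearly, so Borel, merely nondecreasing $\Theta$ suffice, and the support estimates are identical. One should also note that the $t$-integral of the transgression cochains is Borel; this follows from pointwise continuity in $t$.
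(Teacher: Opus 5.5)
Your argument is correct, but it follows a different route from the paper's.

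The paper fixes one smooth switch function $\Theta$ whose range is exactly $[-1,1]$. For any other $\tilde\Theta$ it uses a Borel right inverse $\phi$ of $\Theta$ to write $\tilde\Theta(f_i)=\Theta(f_i+g_i)$. Here $g_i=\phi(\tilde\Theta(f_i))-f_i$ on $\{|f_i|<R\}$ and $g_i=0$ elsewhere, so $g$ is bounded and Borel. The paper then quotes Proposition~\ref{prop:coarse_bounded_perturbation}. That is a homotopy $f+tg$ in the Dirac-type function with $\Theta$ held fixed, and it is the only place smoothness of $\Theta$ enters.

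You instead hold $f$ fixed, interpolate the switch function linearly, and apply Proposition~\ref{prop:cocycle_homotopy} directly.

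The paper's trick needs no new estimates, since everything is absorbed into an already proven proposition. Your route has three advantages:
\begin{itemize}
\item it compares any two switch functions directly;
\item it needs neither smoothness nor the right-inverse construction;
\item the transgression cochains are explicit, with $\dot F^t_j=\tilde\Theta(f_j)-\Theta(f_j)$ polynomial in $t$, so measurability and integrability of the $t$-integral are immediate.
\end{itemize}
The price is redoing the support and $\|\cdot\|_{\Zz,k}$ estimates from the proof of Proposition~\ref{prop:coarse_bounded_perturbation}. As you note, these carry over verbatim. They are even slightly simpler here, because the properness and Lipschitz constants of $f$ do not vary with $t$; only the threshold $r_*=\max(r,\tilde r)$ enters.
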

\begin{proof}
Pick a switch function $\Theta$ which is smooth and hence has range exactly $[-1,1]$. For any other switch function $\tilde{\Theta}$ one can find a bounded Borel function $g$ such that $\tilde{\Theta}(f)=\Theta(f+g)$: Choose $R$ larger than the thresholds of both switch functions and let $\phi:[-1,1]\to [-R,R]$ be a Borel right inverse to $\Theta$, then one can set $$g_i(x)=\begin{cases}
    0 & \text{if }|f_i(x)|\geq R\\
    \phi(\tilde{\Theta}(f_i(x)))-f_i(x) & \text{if }|f_i(x)|< R.
\end{cases}$$
\end{proof}

\subsection{Kubo formulas with Finite Multiplicity}\label{sec:kubo_uf}

In this section we prove the bulk and edge Kubo formulas in a simplified setting. Let $f$ be a polynomially proper $\Zz$-Dirac-type function and let $F=(F_1,\ldots,F_d)$ be as in Definition~\ref{def:F} with $\Theta$
chosen smooth without loss of generality by Corollary~\ref{cor:F_independence}. 
The generalized Kubo formula asserts that, up to normalization, the pairing of a K-theory class with the cyclic-cohomology class $[\eta_F]$ agrees with its pairing with the K-homology class $[D_f]$.

The idea is to reduce this to a special case using functoriality of both pairings. Since general coarse maps do not give rise to homomorphisms of the uniform Roe algebra, we must use a more restricted notion:
\begin{definition}
Let $\Lambda_1,\Lambda_2$ be discrete metric spaces and $\Zz_1\subset \Lambda_1$, $\Zz_2\subset \Lambda_2$ closed subspaces. 

Suppose that $\psi: \Lambda_1\to \Lambda_2$ is large-scale Lipschitz and satisfies
\begin{equation}
\label{eq:boundary_distance}
d_{\Lambda_2}(\psi(x),\Zz_2)\leq C (1+d_{\Lambda_1}(x,\Zz_1)), \qquad \forall x\in \Lambda_1,
\end{equation}
for some $C>0$. If $\psi$ moreover has uniformly finite multiplicity in the sense that 
\[\sup_{y \in \Lambda_2} \# \psi^{-1}(y) \leq N_{\psi} < \infty\]
for some $N_\psi\in\mathbb N$, then we will call $\psi$ uniformly finite-to-one (with respect to $\Zz_1$ and $\Zz_2$).
\end{definition}

The following is a slight generalization of a construction that was recently described by \cite[Lemma 2.4]{krutoy2026bijective} using the formalism of covering isometries. In what follows, we abbreviate $C^*_u(\Lambda, \ell^2(\Lambda))$ by $C^*_u(\Lambda)$.
\begin{proposition}
\label{prop:cover_homomorphism}
Let $\Lambda_1,\Lambda_2$ be discrete metric spaces with polynomial growth and let $\phi:\Lambda_1\longrightarrow \Lambda_2$
be uniformly finite-to-one with respect to closed subsets $\Zz_1\subset \Lambda_1$ and $\Zz_2\subset \Lambda_2$. Then $\phi$ lifts to a large-scale Lipschitz injection
\[\widetilde\phi:\Lambda_1\times\mathbb N\longrightarrow \Lambda_2\times\mathbb N\]
and corresponding isometry 
\[V_\phi:\ell^2(\Lambda_1)\otimes\ell^2(\mathbb N)
\longrightarrow
\ell^2(\Lambda_2)\otimes\ell^2(\mathbb N).\]
It induces an injective $*$-homomorphism
$$\Psi_\phi:C^*_{us}(\Zz_1\subset \Lambda_1)\longrightarrow C^*_{us}(\Zz_2\subset \Lambda_2)$$
which maps the subalgebra $\Ss^p_{us}(\Zz_1\subset \Lambda_1)$ to $\Ss^p_{us}(\Zz_2\subset \Lambda_2)$ for any $p \in [1,\infty]$.
\end{proposition}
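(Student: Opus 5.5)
The plan is to build the lift $\widetilde\phi$ by enumerating fibres and then check that conjugation by the associated isometry preserves finite propagation, uniform local compactness, support near the boundary, and the Schwartz seminorms. For each $y\in\Lambda_2$ enumerate the finite fibre $\phi^{-1}(y)=\{x^y_0,\ldots,x^y_{m_y-1}\}$ with $m_y\leq N_\phi$, and write $\iota(x)\in\{0,\ldots,N_\phi-1\}$ for the position of $x$ in its fibre. Define
\[
\widetilde\phi(x,n) = \bigl(\phi(x),\, N_\phi n+\iota(x)\bigr).
\]
This map is injective. With the product metric on $\Lambda\times\NM$ (where $\NM$ carries, say, the discrete metric with all distances $1$, or any metric for which the stabilized algebra is the relative Roe algebra of $\Lambda\times\{0\}$), it is large-scale Lipschitz because $\phi$ is, and the $\NM$-coordinate only changes by a bounded amount within each sheet. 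The isometry is then $V_\phi\delta_{(x,n)}=\delta_{\widetilde\phi(x,n)}$, and we set $\Psi_\phi(a)=V_\phi a V_\phi^*$. This is a $*$-homomorphism since $V_\phi^*V_\phi=1$, and it is injective because $V_\phi^*\Psi_\phi(a)V_\phi=a$.

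The next step is to check that $\Psi_\phi$ lands in the target algebra, first on the dense subalgebra of finite-propagation, uniformly locally compact operators supported near $\Zz_1\times\{0\}$, and then extend by norm continuity. The matrix elements are $\langle\delta_{\widetilde\phi(x,n)},\Psi_\phi(a)\delta_{\widetilde\phi(x',n')}\rangle=\langle\delta_{(x,n)},a\delta_{(x',n')}\rangle$, and all other entries vanish.
\begin{enumerate}
\item[(i)] Propagation is controlled by the Lipschitz bound $d(\widetilde\phi(p),\widetilde\phi(p'))\leq L(1+d(p,p'))$.
\item[(ii)] Support near $\Zz_2\times\{0\}$ follows from \eqref{eq:boundary_distance} together with the bounded shift in the $\NM$-coordinate.
\item[(iii)] Uniform local compactness holds because the preimage under $\widetilde\phi$ of a ball of radius $r$ is contained in a union of at most $N_\phi\cdot\#(\text{ball})$ balls. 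This uses that $\phi$ is uniformly finite-to-one and that bounded geometry gives a uniform count.
\end{enumerate}

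The Schwartz property is where the real work lies. We need a coarse lower bound $d(\phi(x),\phi(x'))\gtrsim$ something in terms of $d(x,x')$ in order to turn the decay $\langle d(\lambda,\mu)\rangle^{-k}$ of $a$ into decay in the target metric. However, $\phi$ is only assumed large-scale Lipschitz, not a coarse embedding, so there is no such lower bound. The plan is to get by with upper bounds only: the weights in the target seminorm are dominated by those in the source,
\[
\langle d(\widetilde\phi(p),\widetilde\phi(p'))\rangle\leq C\langle d(p,p')\rangle,\qquad \langle d(\phi(x),\Zz_2)\rangle\leq C\langle d(x,\Zz_1)\rangle.
\]
Since the target seminorm multiplies each block by the weight evaluated on its image, each block of $\Psi_\phi(a)$ equals a block of $a$, and the target weights are bounded by $C^{3k}$ times the source weights. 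One subtlety is that each target cell may receive contributions from up to $N_\phi$ source cells. For $\ell^2(\Lambda)$ with singleton cells and the injective $\widetilde\phi$, however, every target cell contains exactly one image point, so $\|\Psi_\phi(a)\|_{\pdv,k,p}\leq C^{3k}\|a\|_{\pdv,k,p}$ and the Schwartz algebra is mapped into the Schwartz algebra. I also expect to have to confirm the same estimate for the stabilized seminorms, by using the stabilized identification described above.
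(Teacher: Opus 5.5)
Your proposal is correct and follows the paper's construction exactly: the same lift $\widetilde\phi(x,n)=(\phi(x),N_\phi n+\iota(x))$, the same isometry, and $\Psi_\phi(a)=V_\phi aV_\phi^*$. Your point that only the upper bounds from the large-scale Lipschitz and boundary-distance conditions are needed for the Schwartz seminorms (each nonzero entry of $\Psi_\phi(a)$ is an entry of $a$, reweighted by target distances) is precisely the detail the paper dismisses as ``easily seen''. The one thing to drop is the aside allowing the discrete metric on $\NM$, which would make $\Lambda\times\NM$ non-proper and without bounded geometry; with the standard metric $|n-n'|$ your Lipschitz and boundary estimates go through, since $|N_\phi(n-n')+\iota(x)-\iota(x')|\leq N_\phi(1+|n-n'|)$.
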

\begin{proof}
We give the construction and refer to \cite[Lemma~2.4]{krutoy2026bijective} for more details. 

By the finite-multiplicity assumption, each exact fiber
$\phi^{-1}(y)$ has cardinality at most $N_\phi$. For each
$y\in \Lambda_2$, choose an injection $\ell_y:\phi^{-1}(y)\longrightarrow \{0,\ldots,N_\phi-1\}$ which associates the label $\ell(x):=\ell_{\phi(x)}(x)$ to each $x$. 
We then define the injection
\begin{align*}
\widetilde\phi:\Lambda_1\times\mathbb N&\longrightarrow \Lambda_2\times\mathbb N \\
(x,n) &\longmapsto \big(\phi(x),\,N_\phi n+\ell(x)\big).
\end{align*}
One represents
$C^*_{us}(\Zz_i \subset \Lambda_i) = C^*_u(\Zz_i\subset \Lambda_i)\otimes \KM(\ell^2(\mathbb N))$ on the Hilbert space $\Hh_i:=\ell^2(\Lambda_i\times\NM)$ with standard bases $(\delta_{(x,n)})_{x\in \Lambda_i,n\in \NM}$.
The corresponding isometry $V_\phi:\Hh_1\longrightarrow \Hh_2$ is the injection on the standard bases
$V_\phi\delta_{(x,n)}= \delta_{\widetilde\phi(x,n)}.$  The lift $\tilde{\phi}$ is still large-scale Lipschitz and combined with \eqref{eq:boundary_distance} the adjoint action $\Psi_\phi: a\longmapsto V_\phi a V_\phi^*$ maps finite-propagation operators supported near $\Zz_1\times \{0\}$ to finite-propagation operators supported near $\Zz_2\times \{0\}$. Moreover, it is easily seen to be continuous with respect to the Schwartz seminorms, hence $\Psi_\phi$ defines a homomorphism from $\Ss^p_{us}(\Zz_1 \subset \Lambda_1)$ to $\Ss^p_{us}(\Zz_2 \subset \Lambda_2)$.
\end{proof}
\begin{remark}
{\rm Despite our notation, the construction of $\Psi_\phi$ is not canonical, since it depends on the choice of the injections $\ell_y$.
However, different choices are conjugate by a
unitary multiplier of $C^*_{us}(\Lambda_2)$, hence at least the induced maps $(\Psi_\phi)_*:K_i(\Ss^p_{us}(\Zz_1\subset \Lambda_1)) \to K_i(\Ss^p_{us}(\Zz_2\subset \Lambda_2))$ are all equal.} $\diamond$
\end{remark}

We now use a similar construction as in \cite{LudewigThiang25} to introduce a standardization map. In the case of signed distance functions, it maps a collection of
half-spaces to the standard coordinate half-spaces of $\ZM^d$.
\begin{definition}
For $x\in \RM^d$ define $[x]\in \ZM^d$ to be the component-wise closest integer point. For $\Xx$ a proper metric space with coarsely dense uniformly discrete subset $\Lambda$ define the standardization map
$$s_f: \Lambda \to \ZM^{d+1}, \qquad s_f(x)= \left[(f_1(x),\dots,f_d(x), d_{\Zz}(x))\right].$$
If the value is exactly halfway between two points, we adopt the convention that we round down. 
If $s_f$ is uniformly finite-to-one then it lifts to an injection $\tilde{s}_f \colon \Lambda \times \NM \to \ZM^{d+1} \times \NM$.
\end{definition}
In the bulk case $\Zz=\Xx$, the image of $s_f$ lies in the single slice $\ZM^{d}\times \{0\}\simeq \ZM^{d}$.

\begin{figure}[htbp]
\centering
\begin{tikzpicture}[scale=0.43]
\begin{scope}
\node at (0,6.7) {$\Xx$};
\fill[blue!20]
(-7,-3)
.. controls (-7,3) and (-9.5,7) ..
(2,6)
.. controls (1,2) and (-1,-1) ..
(-1,-5)
.. controls (-3,-6) and (-5,-6) ..
(-7,-5)
-- cycle;
\fill[red!20]
(-7,5)
.. controls (-3,-2) and (2,2) ..
(7,-4.8)
.. controls (5,-2.0) and (6,1.5) ..
(7,5.5)
.. controls (2,6.5) and (-3,6.2) ..
(-7,5)
-- cycle;
\begin{scope}
\clip
(-7,5)
.. controls (-3,-2) and (2,2) ..
(7,-4.8)
.. controls (5,-2.0) and (6,1.5) ..
(7,5.5)
.. controls (2,6.5) and (-3,6.2) ..
(-7,5)
-- cycle;
\fill[purple!30]
(-7,-3)
.. controls (-7,3) and (-9.5,7) ..
(2,6)
.. controls (1,2) and (-1,-1) ..
(-1,-5)
.. controls (-3,-6) and (-5,-6) ..
(-7,-5)
-- cycle;
\end{scope}
\foreach \x/\y in {
-6.2/4.6, -6.0/2.8, -5.8/0.9, -5.5/-1.2,
-5.2/-3.4, -4.8/5.0, -4.5/3.1, -4.2/1.0,
-3.9/-2.0, -3.6/-4.2, -3.2/4.3, -2.9/2.2,
-2.6/0.5, -2.3/-1.5, -2.0/-3.8, -1.7/5.2,
-1.4/3.0, -1.1/1.2, -0.8/-0.6, -0.5/-2.8,
-0.2/4.7, 0.1/2.5, 0.4/0.4, 0.7/-1.4,
1.0/-3.5, 1.3/5.5, 1.6/3.3, 1.9/1.1, 2.5/0.9,
2.2/-0.9, 2.5/-2.9, 2.8/4.2, 3.1/2.0,
3.4/0.3, 3.7/-1.6, 4.0/-3.6, 4.3/5.0,
4.6/2.7, 4.9/0.9, 5.2/-1.1, 5.5/-3.0, 5.3/-2.6
}{
\fill (\x,\y) circle (0.12);
}
\fill[black] (0.4,0.4) circle (0.18);
\node[blue] at (-4.1,4.8) {$\Yy_1$};
\node[red] at (4.5,4.0) {$\Yy_2$};
\node[purple!80!black] at (-1,4.0) {$\Yy_1 \cap \Yy_2$};
\node[black] at (1,0.0) {$x$};

\end{scope}
\draw[->,very thick] (8,0)--(16,0);
\node at (12,1.0)
{$s_f(x)=\bigl[\delta_{\Yy_1}(x),\delta_{\Yy_2}(x)\bigr]$};
\begin{scope}[shift={(23,0)}]
\node at (0,6) {$\mathbb{Z}^2$};
\fill[blue!20] (0,-5) rectangle (5,5);
\fill[red!20] (-5,0) rectangle (5,5);
\fill[purple!30] (0,0) rectangle (5,5);
\draw[thick,->] (-5,0)--(5,0);
\draw[thick,->] (0,-5)--(0,5);
\foreach \i in {-4,...,4}
{
  \foreach \j in {-4,...,4}
  {
    \fill (\i,\j) circle (0.08);
  }
}
\fill[black] (0,0) circle (0.18);
\node[blue] at (2.8,-3.5) {$\mathbb{H}_1$};
\node[red] at (-2.7,3.5) {$\mathbb{H}_2$};
\node[purple!80!black] at (2.8,2.8) {$\mathbb{H}_1\cap \mathbb{H}_2$};
\node[black] at (-1,0.5) {$s(x)$};
\end{scope}
\end{tikzpicture}
\caption{An example of the standardization map being used in the bulk case on half-spaces $\Yy_1,\Yy_2$ of $\Xx$ in two dimensions, after identifying the slice $\ZM^2 \times \{0\}$ with $\ZM^2$.
A point $x$ is depicted mapping to the origin in $\ZM^2$.}
\label{fig:coarse-sets}
\end{figure}

In the remainder of this subsection, we prove that the Kubo formula holds whenever $s_f$ is uniformly finite-to-one:
\begin{theorem}
\label{thm:kubo_uniformly_finite}
Let $\Xx$ be a proper metric space of polynomial
growth, $\Zz\subset \Xx$ a non-empty closed subset, and $\Hh_\Xx$ a coarsely faithful geometric module. Fix a coarsely dense uniformly discrete subset $\Lambda\subset\Xx$ and define the subset $\Lambda_\Zz \coloneqq \{\lambda \in \Lambda: Q_{\lambda} \cap \Zz \neq \varnothing \}$ which is coarsely dense in $\Zz$. 

For a polynomially proper $\Zz$-Dirac-type function $f$, if the standardization map $s_f$ is uniformly finite-to-one with respect to
\[\Lambda_\Zz \subset \Lambda, \qquad
\ZM^{d}\times\{0\}\subset\ZM^{d+1},\]
then 
$$\langle x, [D_{f}]\rangle = \Xi_{d}\, \langle x,[\eta_{F}]\rangle \in\ZM$$
for every $x\in K_{d\bmod 2}\bigl(C^*_u(\Zz\subset\Xx,\Hh_\Xx)\bigr)$ with $\Xi_d$ the normalization constant from
Appendix~\ref{sec:kubozd}.
\end{theorem}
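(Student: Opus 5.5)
The plan is to use functoriality of both pairings and push everything forward to the standard situation on $\ZM^d$, where the Kubo formula is the content of Appendix~\ref{sec:kubozd}. The strategy has four steps.

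\emph{Step 1: discretize the module.} Using the partition $\Hh_\Xx=\bigoplus_{\lambda}\Hh_\lambda$, I choose an orthonormal basis of each (separable) $\Hh_\lambda$. This gives an isometry $W:\Hh_\Xx\to \ell^2(\Lambda)\otimes\ell^2(\NM)$ sending $\Hh_\lambda$ into $\delta_\lambda\otimes\ell^2(\NM)$. Conjugation by $W$ gives an injective $*$-homomorphism
\[
C^*_u(\Zz\subset\Xx,\Hh_\Xx)\to C^*_{us}(\Lambda_\Zz\subset\Lambda).
\]
Uniform local compactness is what ensures the image lands in the stabilized \emph{uniform} algebra. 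Because the seminorms $\|\cdot\|_{\pdv,k,p}$ are defined blockwise, this map also sends $\Ss^1_u$ into $\Ss^1_{us}$. Under $W$, the multiplication operator $f$ differs from $f(\lambda)\otimes 1$ on the cell of $\lambda$ only by a bounded function, by the large-scale Lipschitz property and the bounded diameters of the cells. The same holds for $\Theta(f)$ up to a bounded perturbation of the argument. Hence the class $[D_f]$ is the pullback of the class $[D_{f|_\Lambda}]$ on $\Lambda$ (bounded perturbations do not change $KK$-classes). Likewise $[\eta_F]$ is the pullback of the corresponding discrete cocycle, by Proposition~\ref{prop:coarse_bounded_perturbation} together with Corollary~\ref{cor:F_independence}.

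\emph{Step 2: standardize.} Next I apply Proposition~\ref{prop:cover_homomorphism} to $s_f$, which is uniformly finite-to-one by hypothesis. This yields
\[
\Psi=\Psi_{s_f}: C^*_{us}(\Lambda_\Zz\subset\Lambda)\to C^*_{us}(\ZM^d\times\{0\}\subset\ZM^{d+1}),
\]
compatible with the Schwartz subalgebras. On $\ZM^{d+1}\times\NM$ I take the Dirac-type function $g(n,m,k)=n\in\ZM^d$, which is of $(\ZM^d\times\{0\})$-Dirac-type and polynomially proper. The key computation is
\[
V_{s_f}^*\, g\, V_{s_f} = g\circ\tilde s_f = [f] = f+O(1).
\]
This identifies $\Psi^*[D_g]$ with $[D_f]$ on $\Lambda$: concretely, $V$ intertwines $D_f$ up to a bounded term with the compression of $D_g$, and the complementary corner is degenerate. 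It also identifies $\Psi^*\eta_{G}$ with $\eta_F$ up to a polynomially bounded coboundary, again via Proposition~\ref{prop:coarse_bounded_perturbation}. For the cocycle this is a direct computation: $V^*[G_i,VaV^*]V=[G_i\circ\tilde s_f,a]$, and the trace is invariant under conjugation by the isometry.

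\emph{Step 3: reduce to $\ZM^d$.} The relative algebra $C^*_{us}(\ZM^d\times\{0\}\subset\ZM^{d+1})$ should be identified with the stabilized $C^*_{us}(\ZM^d)$ by collapsing the transverse $\ZM\times\NM$ directions into multiplicity. The point is that $g$ ignores those coordinates, so $[D_g]$ and $\eta_G$ correspond to the standard Dirac class and the standard Kubo cocycle with $\Tr$ amplified. There, Theorem~\ref{th:kubozd} gives $\langle y,[D_{\mathrm{std}}]\rangle=\Xi_d\langle y,[\eta_{\mathrm{std}}]\rangle$.

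\emph{Step 4: conclude.} Naturality of both the $KK$-index pairing and the cyclic pairing under $*$-homomorphisms preserving holomorphically closed subalgebras then yields the claim for $x$, taking $y=\Psi_*W_*x$.

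I expect the main obstacle to be Step 2 on the K-homology side. There one must show rigorously that $\Psi^*[D_g]=[D_f]$, even though $V_{s_f}$ is only an isometry. The equality has to be proved as an identity of unbounded cycles: the pulled-back module is $V^*V\oplus(1-VV^*)$, and the part living on the complement of the image must be argued away as degenerate. I also need care that the bounded perturbation $[f]-f$ preserves relative compactness of the resolvent in the relative (edge) setting, where $d_\Zz$ enters via the extra coordinate of $s_f$. My first approach would be to verify that $D_f\oplus D_g|_{(1-VV^*)}$ and $D_g$ are unitarily equivalent up to a bounded operator, and then invoke Lemma~\ref{lemma:homotopy}.
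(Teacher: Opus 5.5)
\paragraph{Overall.} Your Steps 2--4 essentially reproduce the paper's argument: Propositions~\ref{prop:standardization_dirac} and \ref{prop:standardization_cocycle}, followed by the reduction via stabilization to Theorem~\ref{th:kubozd}. Your worry about the complement of the range of $V_{s_f}$ is harmless. $V_{s_f}V_{s_f}^*$ is a multiplication operator commuting with $D_g$, and the pulled-back representation vanishes on its complement, so that summand is degenerate; the paper simply restricts to the range.

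\paragraph{The gap is in Step 1.} Conjugation by a fixed isometry $W$ built from orthonormal bases of the $\Hh_\lambda$ does \emph{not} map $C^*_u(\Zz\subset\Xx,\Hh_\Xx)$ into $C^*_{us}(\Lambda_\Zz\subset\Lambda)=C^*_u(\Lambda_\Zz\subset\Lambda)\otimes\KM$.
\begin{itemize}
\item Every $b$ in the latter algebra satisfies $\|(1\otimes P_K)b(1\otimes P_K)-b\|\to 0$, where $P_K$ projects onto the first $K$ basis vectors of $\ell^2(\NM)$.
\item Uniform local compactness only provides finite-rank approximants of uniformly bounded rank. Their ranges depend on $\lambda$; they need not lie in $\mathrm{span}\{\delta_\lambda\otimes e_n: n<K\}$.
\item Concretely, suppose infinitely many $\Hh_\lambda$ are infinite-dimensional (e.g.\ $L^2(\RM^d)$). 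Let $\xi_\lambda$ be the $n(\lambda)$-th basis vector of $\Hh_\lambda$, with $n(\lambda)$ unbounded, and set $a=\sum_\lambda|\xi_\lambda\rangle\langle\xi_\lambda|$.
\item This $a$ has finite propagation and uniformly bounded local rank, so it lies in $C^*_u(\Xx,\Hh_\Xx)$. Yet $\|(1-1\otimes P_K)WaW^*\|=1$ for every $K$.
\end{itemize}
The same problem breaks your claim that $\Ss^1_u$ is mapped into $\Ss^1_{us}$, since there is no decay in the $\NM$ direction. Consequently $W_*x$ and the pullback identities of Step 1 are undefined, and $\Psi_{s_f}$ in Step 2 has nothing to act on.

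\paragraph{How to fix it.} Discretize in the opposite direction, as in Proposition~\ref{prop:kubo_cts_discrete}.
\begin{itemize}
\item Use the corner inclusion $\psi: C^*_u(\Lambda_\Zz\subset\Lambda)\to C^*_u(\Zz\subset\Xx,\Hh_\Xx)$, sending $\delta_\lambda\mapsto\xi_\lambda$ for a \emph{single} unit vector $\xi_\lambda\in\Hh_\lambda$. This always lands in the correct algebra: blocks are rank one and propagation grows by at most $2\sup_\lambda\diam(Q_\lambda)$.
\item The nontrivial input is that this corner is \emph{full}, which is Proposition~\ref{prop:morita_continuous}, including the relative case. Hence $\psi_*$ is an isomorphism in $K$-theory, and every $x$ has the form $\psi_*(x')$.
\item Replace $f$ by the cellwise-constant bounded perturbation $f^\Lambda$. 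Then both pairings are literally preserved under $\psi$, and your Steps 2--4 go through on $\ell^2(\Lambda)$.
\end{itemize}
You could instead map into the infinite-multiplicity algebra $C^*_u(\Lambda_\Zz\subset\Lambda,\ell^2(\Lambda)\otimes\ell^2(\NM))$. However, you would then still need the same full-corner argument at the end to reach the finite-multiplicity setting of Theorem~\ref{th:kubozd}.
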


\begin{proposition}
\label{prop:kubo_cts_discrete}
It suffices to prove
Theorem~\ref{thm:kubo_uniformly_finite}
in the special case $\Hh_\Xx=\ell^2(\Lambda)$. 
\end{proposition}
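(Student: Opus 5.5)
We reduce from a general coarsely faithful module $\Hh_\Xx$ to $\ell^2(\Lambda)$ by comparing the two modules through a covering isometry, and then checking that both sides of the Kubo formula are compatible with the resulting homomorphism. The partition $\Xx=\bigsqcup_\lambda Q_\lambda$ gives $\Hh_\Xx=\bigoplus_\lambda \Hh_\lambda$ with every $\Hh_\lambda\neq 0$ and separable. Fix a unit vector $e_\lambda\in\Hh_\lambda$ and an orthonormal basis $(e_{\lambda,n})_{n}$ of $\Hh_\lambda$ with $e_{\lambda,0}=e_\lambda$ (finite or countable). The map $\delta_{(\lambda,n)}\mapsto e_{\lambda,n}$ defines a partial isometry $W:\ell^2(\Lambda)\otimes\ell^2(\NM)\to\Hh_\Xx$. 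Equivalently, it is an isometry $U:\Hh_\Xx\to\ell^2(\Lambda\times\NM)$ with $U\chi_\lambda=(\delta_\lambda\text{-projection}\otimes 1)U$. Conjugation $\Psi:a\mapsto UaU^*$ has the following properties, all checked directly from the definitions since $U$ has propagation at most $\sup\diam Q_\lambda$:
\begin{itemize}
\item It maps finite-propagation uniformly locally compact operators supported near $\Zz$ into those of $\Lambda\times\NM$ supported near $\Lambda_\Zz\times\{0\}$ (up to the identification of the stabilization described in Section~\ref{sec:poly}).
\item It maps $\Ss^1_u(\Zz\subset\Xx,\Hh_\Xx)$ into $\Ss^1_{us}(\Lambda_\Zz\subset\Lambda)$, because blockwise Schatten norms are preserved.
\item Composed with the corner inclusion, it induces on K-theory the same map as the stabilization isomorphism, since $U$ intertwines the geometric structures up to bounded distance.
\end{itemize}
For the last point I would argue that any two isometries between coarsely faithful modules covering the identity induce the same K-theory map, because they are conjugate by a unitary multiplier, as in the remark after Proposition~\ref{prop:cover_homomorphism}. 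Since $\Hh_\Xx$ is coarsely faithful, the induced map on $K_*$ is then an isomorphism. Strictly, we only need that each $x$ is the pullback of some class, and we can avoid surjectivity by pushing forward along $\Psi$ instead.

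Next, both pairings are natural under $\Psi$. For the K-homology side, extend $f$ to $\Lambda\times\NM$ by $\tilde f(\lambda,n)=f(\lambda)$, and let $f_\Lambda$ be its restriction to $\Lambda$. Then $U D_f U^*-D_{\tilde f}$ (on the range of $U$) is bounded, because $f$ is large-scale Lipschitz and so varies by a bounded amount on each $Q_\lambda$. Hence $\Psi^*[D_{\tilde f}]=[D_f]$ in $KK_d$, using bounded perturbation together with the standard fact that a cycle compressed to a complemented submodule plus a degenerate piece is equivalent. The class $[D_{\tilde f}]$ on the stabilized module restricts under Morita equivalence to $[D_{f_\Lambda}]$ on $\ell^2(\Lambda)$.

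For the cyclic side I would work with the coarse cocycle $\varphi_F$ rather than the operator formula. Replacing $F$ by $F\circ\pi$, where $\pi:\Xx\to\Lambda$ sends $Q_\lambda$ to $\lambda$, is a bounded perturbation of $f$. By Proposition~\ref{prop:coarse_bounded_perturbation} this does not change the cyclic class, so we may assume $F$ is constant on cells. Then $\eta_F(\Psi(a_0),\dots,\Psi(a_d))=\eta_F(a_0,\dots,a_d)$ holds on the nose by Lemma~\ref{lem:cyclic_convergence}, because $F$ commutes with $U$ and the trace is unitarily invariant on the range. The $\NM$-direction is absorbed by $\eta\#\Tr$.

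Finally, the hypothesis transfers: $s_{f_\Lambda}=s_f$ on $\Lambda$, and $d_\Zz$ versus $d_{\Lambda_\Zz}$ differ by a bounded amount. The rounding can then move points by a bounded amount, so uniform finite-to-one-ness survives after composing with a bounded-distance correction of finite multiplicity. Polynomial properness of $f_\Lambda$ is also preserved. The main obstacle I expect is the K-homology naturality $\Psi^*[D_{\tilde f}]=[D_f]$: $U$ is only an isometry, so one must handle the complement $1-UU^*$ carefully, together with the unbounded domains. I would first pass to bounded transforms and note that $1-UU^*$ contributes a cycle which commutes exactly with $\Psi(A)$ (since $\Psi(A)$ vanishes there), and hence is degenerate after a compact perturbation.
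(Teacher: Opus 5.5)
\paragraph{Verdict.} There is a genuine gap in your first step: the comparison map $\Psi$ does not land in $C^*_{us}(\Lambda_\Zz\subset\Lambda)$, so the special case cannot be applied to $\Psi_*(x)$. The fix is the paper's full-corner inclusion, which goes in the opposite direction.

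\paragraph{Why $\Psi$ lands in the wrong algebra.} Your isometry $U:\Hh_\Xx\to\ell^2(\Lambda\times\NM)$ is built from arbitrary orthonormal bases $(e_{\lambda,n})_n$ of the cells. It does not carry $C^*_u(\Zz\subset\Xx,\Hh_\Xx)$ into $C^*_{us}(\Lambda_\Zz\subset\Lambda)=C^*_u(\Lambda_\Zz\subset\Lambda)\otimes\KM$. Uniform local compactness only bounds the \emph{rank} of approximants of $\chi_\lambda T$ uniformly in $\lambda$. It says nothing about where their ranges sit relative to your fixed bases.

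Take $\Zz=\Xx$ and $\Hh_\Xx=L^2(\RM^d)$, and let $T=\sum_\lambda |e_{\lambda,n(\lambda)}\rangle\langle e_{\lambda,n(\lambda)}|$ with $n(\lambda)$ unbounded. This is a block-diagonal, locally rank-one element of $C^*_u(\Xx,\Hh_\Xx)$. However, $UTU^*=\sum_\lambda|\delta_{(\lambda,n(\lambda))}\rangle\langle\delta_{(\lambda,n(\lambda))}|$ is not supported near $\Lambda\times\{0\}$ in the product metric. Moreover, $\|UTU^*-(1\otimes P)UTU^*(1\otimes P)\|=1$ for every finite-rank projection $P$ on $\ell^2(\NM)$. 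Every element of $C^*_u(\Lambda)\otimes\KM$ can be approximated within $\epsilon$ by such compressions, so $UTU^*\notin C^*_u(\Lambda)\otimes\KM$.

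The same example defeats your second bullet. The Schwartz seminorms on $\Lambda\times\NM$ weight distance in the $\NM$-direction, and blockwise Schatten bounds over $\Hh_\lambda$ give no decay in the basis index $n$.

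What $\Psi$ actually lands in is the uniform Roe algebra of the infinite-multiplicity module $\ell^2(\Lambda)\otimes\ell^2(\NM)$ over $\Lambda$. That is not the special case you are allowed to assume. Your assertion that the induced map on $K_*$ is an isomorphism ``since $\Hh_\Xx$ is coarsely faithful'' is exactly the nontrivial content. The remark after Proposition~\ref{prop:cover_homomorphism} only says that different label choices give the same $K$-theory map; it does not give an isomorphism.

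\paragraph{The missing idea: a full corner.} Compare in the opposite direction, as the paper does.
\begin{itemize}
\item Fix unit vectors $\xi_\lambda\in\Hh_\lambda$ and set $p=\bigoplus_\lambda|\xi_\lambda\rangle\langle\xi_\lambda|$.
\item Use $\psi(a)=\sum_{x,y}|\xi_x\rangle\langle\delta_x,a\delta_y\rangle\langle\xi_y|$, which maps $C^*_u(\Lambda_\Zz\subset\Lambda)$ isomorphically onto $p\,C^*_u(\Zz\subset\Xx,\Hh_\Xx)\,p$.
\item Proposition~\ref{prop:morita_continuous} shows that $p$ is full. Its proof, which factors bounded-rank block matrices through $p$ by a matching argument, is precisely where the uniform rank bound is used.
\item Hence $\psi_*$ is an isomorphism, and every $x$ is of the form $\psi_*(y)$.
\end{itemize}

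The rest of your plan then goes through essentially as you wrote it:
\begin{itemize}
\item Replace $f$ by the cellwise constant $f^\Lambda$. This is a bounded perturbation, so $[D_f]$ and $[\eta_F]$ are unchanged, using Proposition~\ref{prop:coarse_bounded_perturbation} for the cyclic class.
\item Since $f^\Lambda$ commutes with $p$, $\psi^*[D_{f^\Lambda}]$ is the compression to $p\Hh_\Xx\cong\ell^2(\Lambda)$ plus a degenerate summand on $(1-p)\Hh_\Xx$.
\item The cocycle identity $\eta_{F^\Lambda}(\psi(a_0),\ldots,\psi(a_d))=\eta_{F^\Lambda|_\Lambda}(a_0,\ldots,a_d)$ holds on the nose.
\end{itemize}
No stabilization is needed. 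The finite-to-one hypothesis transfers trivially, since $s_{f^\Lambda}=s_f$ on $\Lambda$ with the same $\Zz$.
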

\begin{proof}
We consider the full-corner inclusion $$\psi: C^*_u(\Lambda_{\Zz} \subset \Lambda)\to C^*_u(\Zz \subset \Xx, \Hh_\Xx), \quad \psi(a)= \sum_{x,y\in \Lambda} |\xi_x\rangle \langle \delta_x, a\delta_y\rangle \langle \xi_y|$$
with any (fixed) unit vectors $\xi_\lambda \in \Hh_\lambda$. It induces an isomorphism in $K$-theory by Proposition~\ref{prop:morita_continuous}.

Also define the Dirac-type function
$${f}^\Lambda(x) = \sum_{\lambda\in \Lambda} f(\lambda)\chi(x\in Q_\lambda).$$
This is a bounded perturbation of $f$, hence $[D_{f^\Lambda}]=[D_f]$ and $[\eta_F]=[\eta_{F^\Lambda}]$ (by Proposition~\ref{prop:coarse_bounded_perturbation}). If we assume Theorem~\ref{thm:kubo_uniformly_finite} holds for $\Hh_\Xx=\ell^2(\Lambda)$ then
$$\langle \psi_*(x), [D_{f}]\rangle=\langle \psi_*(x), [D_{f^\Lambda}]\rangle = \Xi_d \langle \psi_*(x), [\eta_{F^\Lambda}]\rangle = \Xi_d \langle \psi_*(x), [\eta_F]\rangle.$$
Here we used (with some abuse of notation) that by the cellwise constant definition of $f^\Lambda$ one can equivalently consider it as a Dirac-type function on $\Xx$ or $\Lambda$. In that sense neither the index nor the cohomological pairing change under the corner inclusion since they evaluate to the same integer respectively complex number independent of the interpretation. Since $\psi_*$ is an isomorphism in $K$-theory every class in $K_{d\bmod 2}(C^*_u(\Zz\subset \Xx,\Hh_\Xx))$ can be represented in this form, which finishes the proof.
\end{proof}

The K-homology class of the standard Dirac operator transforms naturally under the standardization:

\begin{proposition}
\label{prop:standardization_dirac}
For $\Lambda$ a discrete metric space let $f:\Lambda\to \RM^d$ be a $\Lambda_\Zz$-Dirac-type function and let $\tilde{f}:\Lambda\times\NM\to \RM^d$ be the trivial extension $\tilde{f}(x,n)=f(x)$. If the standardization map $s_f$ is uniformly finite-to-one then
$$\langle (\Psi_{s_f})_*(x),[D_{\mathrm{std}}]\rangle =  \langle x,[D_{\tilde{f}}]\rangle, \qquad \forall x\in K_{d\bmod 2}(C^*_{us}(\Lambda_{\Zz}\subset \Lambda))$$
with $D_{\mathrm{std}}=D_{\mathbb{H}_1,\ldots\mathbb{H}_d}$ the standard edge Dirac operator on $\ell^2(\ZM^{d+1} \times \NM)$ given by the Dirac-type function $f_{\mathrm{std}}(x)=(x_1,\ldots,x_d)$.
\end{proposition}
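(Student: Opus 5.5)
The plan is to show that pulling back the standard Dirac cycle along $\Psi_{s_f}$ gives, up to a bounded perturbation, the Dirac cycle of $\tilde f$. The index pairing is natural: for a homomorphism $\Psi$ one has $\langle \Psi_*(x), y\rangle = \langle x, \Psi^* y\rangle$. This follows from associativity of the Kasparov product together with $\kappa(\Psi_* x) = \kappa(x)\otimes[\Psi]$. It therefore suffices to prove
\[
\Psi_{s_f}^*[D_{\mathrm{std}}] = [D_{\tilde f}] \in KK_d(C^*_{us}(\Lambda_\Zz\subset\Lambda),\CM).
\]
We note first that $D_{\mathrm{std}}$ is indeed a cycle for $C^*_{us}(\ZM^d\times\{0\}\subset \ZM^{d+1})$. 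Its symbol $(x_1,\ldots,x_d)$ extended trivially in $\NM$ is of $(\ZM^d\times\{0\}\times\{0\})$-Dirac-type, because $|x'|^2 + d_\Zz^2$ controls $|x_{d+1}|^2 + n^2$. This is exactly Lemma~\ref{lemma:dirac_op} applied on $\ZM^{d+1}\times\NM$.

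The pullback cycle is $(\ell^2(\ZM^{d+1}\times\NM)\hat\otimes\CM_d,\ \Psi_{s_f},\ D_{\mathrm{std}})$, where $\Psi_{s_f}(a)=VaV^*$ with $V=V_{s_f}$. Split the Hilbert space as $VV^*\Hh_2\oplus(1-VV^*)\Hh_2$. The projection $VV^*$ is a multiplication operator, so it commutes with $D_{\mathrm{std}}$. The algebra acts as zero on the complement, so that summand is a degenerate cycle and can be discarded. On the range of $V$, conjugating by $V$ identifies the remaining cycle with $(\ell^2(\Lambda\times\NM)\hat\otimes\CM_d,\ \mathrm{id},\ V^*D_{\mathrm{std}}V)$. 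Here $V^*D_{\mathrm{std}}V$ is multiplication by the function $g(x,n)=f_{\mathrm{std}}(\tilde s_f(x,n)) = [f(x)]$, the rounded first $d$ coordinates of $s_f(x)$. This uses that $\tilde s_f$ only relabels the $\NM$-coordinate and leaves the $\ZM^{d+1}$-component equal to $s_f(x)$.

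The function $g$ differs from $\tilde f$ by a function bounded by $\sqrt d/2$. Hence the straight-line homotopy $t\tilde f+(1-t)g$ satisfies the hypotheses of Lemma~\ref{lemma:homotopy} on the discrete space $\Lambda\times\NM$. It is uniformly large-scale Lipschitz, since $g$ is a bounded perturbation of the Lipschitz function $\tilde f$. It has a common $\rho$, because $(1+|f^t|^2)^{-1/2}$ is comparable to $(1+|f|^2)^{-1/2}$ up to constants. Continuity on compacts is trivial. Equivalently, this is a bounded perturbation of an unbounded cycle, so the class does not change. This gives $[V^*D_{\mathrm{std}}V]=[D_{\tilde f}]$.

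The step that needs the most care is the edge Dirac-type condition for $D_{\tilde f}$ itself with respect to $\Zz\times\{0\}$ in $\Lambda\times\NM$. This is what the stabilized pairing on the right-hand side implicitly requires. Adding the $\NM$-direction, $(1+|f|^2)^{-1/2}(1+d_\Zz^2+n^2)^{-1/2}$ must lie in $C_0$. This holds because the condition in Definition~\ref{def:edge_dirac_type}(ii) for $f$ provides decay in $x$, and the factor $n^{-1}$ provides decay in $n$.

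A secondary point is that the unitary identification via $V$ must respect the dense domains and the smooth subalgebras. This is routine, since all the operators involved are multiplication operators and $V$ maps basis vectors to basis vectors.
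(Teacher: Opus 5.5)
Your proposal is correct and follows essentially the same route as the paper. You reduce by naturality of the pairing to $(\Psi_{s_f})^*[D_{\mathrm{std}}]=[D_{\tilde f}]$, discard the part of the Hilbert space outside the range of $V_{s_f}$ (where the algebra acts trivially), identify $V_{s_f}^*D_{\mathrm{std}}V_{s_f}$ with multiplication by the rounded symbol $[f]$, and remove the rounding as a bounded perturbation. Your extra checks are correct and fill in details the paper leaves implicit: that $D_{\mathrm{std}}$ and $D_{\tilde f}$ are genuine cycles for the stabilized relative algebras, and that the straight-line homotopy satisfies the hypotheses of Lemma~\ref{lemma:homotopy}.
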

\begin{proof}
It is enough to show 
$$
(\Psi_{s_f})^*[D_{\mathrm{std}}]= [D_{\tilde{f}}]
$$
for the pullback in KK-theory as the index pairing is a Kasparov product which behaves naturally with respect to pull-backs. By definition, $(\Psi_{s_f})^*[D_{\mathrm{std}}]$ 
is represented by the $KK_d(C_{us}^*(\Lambda_\Zz\subset \Lambda),\CM)$-cycle
$$  (\ell^2(\ZM^{d+1}\times \NM)\hat{\otimes}\CM_d,\pi_{\ZM^{d+1}} \circ \Psi_{s_f}, D_{\mathrm{std}})
$$
with $\pi_{\ZM^{d+1}}$ the defining representation of $C^*_u(\ZM^d\times\{0\} \subset \ZM^{d+1}\times \NM)$. Operators in the image of the representation $\pi_{\ZM^{d+1}} \circ \Psi_{s_f}$ only act non-trivially on the range of the isometry $V_{s_f}:\ell^2(\Lambda\times \NM)\to \ell^2(\ZM^{d+1}\times \NM)$, hence we can restrict the Hilbert space to that subspace. After unitary equivalence we therefore end up with the equivalent KK-cycle
$$(\ell^2(\Lambda\times \NM)\hat{\otimes}\CM_d,V^*_{s_f}(\pi_{\ZM^{d+1}} \circ \Psi_{s_f}) V_{s_f}, V^*_{s_f}D_{\mathrm{std}}V_{s_f}).$$
Moreover, $V_{s_f}^*(\pi_{\ZM^{d+1}} \circ \Psi_{s_f}) V_{s_f}(a)=a$ for every $C^*_{us}(\Lambda_{\Zz}\subset \Lambda)$.
Since $V^*_{s_f}D_{\mathrm{std}}V_{s_f} = D_{f^*_{\mathrm{std}}}$ and one can replace $f^*_{\mathrm{std}}$ by $\tilde{f}$ since they are bounded perturbations of each other
$$|f^*_i(x,n) - \tilde{f}_i(x,n)| =\| [f_i(x)]- f_i(x) \| \leq 1, \quad  \forall n\in \NM, i=1,\dots,d.$$  We conclude that the pullback KK-cycle is equal to $[D_{\tilde{f}}]$.
\end{proof}

The same is true for the Kubo cocycles:
\begin{proposition}
\label{prop:standardization_cocycle}
In the setting of Proposition~\ref{prop:standardization_dirac}, suppose further $f$ is $\Lambda_\Zz$-polynomially proper. 
Then
$$\langle (\Psi_{s_f})_*(x), [\tilde{\eta}_{\mathbb{H}_1,\dots,\mathbb{H}_d}] \rangle = \langle x,[\tilde{\eta}_{F}]\rangle, \qquad \forall x\in K_{d\bmod 2}(C^*_{us}(\Lambda_{\Zz}\subset \Lambda))$$
with the standard half-spaces $\mathbb{H}_1,\dots,\mathbb{H}_d$, $\mathbb{H}_i=\{x: x_i\geq 0\}$ of $\ZM^{d+1}$ and Kubo cocycles $\tilde{\eta}_{\mathbb{H}_1,..,\mathbb{H}_d}$ and $\tilde{\eta}_{F}$ induced by $f_{\mathrm{std}}$ and $f$ respectively on $\Ss^1_{us}((\ZM^d\times \{0\})\subset \ZM^{d+1})$ and $\Ss^1_{us}(\Zz\subset \Lambda)$ respectively. 

\end{proposition}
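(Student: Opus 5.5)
The plan is to mirror the proof of Proposition~\ref{prop:standardization_dirac}, working now on the level of cyclic cocycles. The goal is the identity of cocycles
$$\Psi_{s_f}^*\tilde{\eta}_{\mathbb{H}_1,\dots,\mathbb{H}_d} = \tilde{\eta}_{F^*}, \qquad F^*_i = \Theta\circ f^*_{\mathrm{std},i},$$
on $\Ss^1_{us}(\Lambda_\Zz\subset\Lambda)$. Here $f^*_{\mathrm{std}} = f_{\mathrm{std}}\circ \tilde{s}_f$ is the pulled-back symbol, as in the previous proof. Granting this identity, the pairing is natural under the homomorphism $\Psi_{s_f}$. Proposition~\ref{prop:cover_homomorphism} shows that $\Psi_{s_f}$ maps Schwartz algebras to Schwartz algebras, so it is compatible with the pairing definition for both projections and unitaries. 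It follows that
$$\langle (\Psi_{s_f})_*(x), [\tilde{\eta}_{\mathbb{H}}]\rangle = \langle x, [\tilde{\eta}_{F^*}]\rangle.$$
It then remains to replace $F^*$ by $\tilde{F} = \Theta(\tilde{f})$.

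\textbf{Cocycle identity.} Write $\Psi(a) = V a V^*$ with $V = V_{s_f}$. Every $\Psi(a_i)$ is supported on the range of $V$. The switch functions $F_i^{\mathbb{H}} = \Theta(x_i)$ are multiplication operators, so they commute with the range projection $VV^*$, since the latter is a projection onto coordinate basis vectors. This gives
$$[F_i^{\mathbb{H}}, VaV^*] = V[V^*F_i^{\mathbb{H}}V, a]V^*, \qquad V^*F_i^{\mathbb{H}}V = \Theta\circ f^*_{\mathrm{std},i} = F^*_i.$$
Substituting this into the trace formula of Lemma~\ref{lem:cyclic_convergence} and using $V^*V = 1$ together with cyclicity of the trace, each term $\Tr(\Psi(a_0)\prod[F_{\sigma(i)}^{\mathbb{H}},\Psi(a_i)])$ becomes $\Tr(a_0\prod[F^*_{\sigma(i)},a_i])$.

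For the trace formula of Lemma~\ref{lem:cyclic_convergence} to apply on both sides, two checks are needed. First, $f^*_{\mathrm{std}}$ must be a $\Lambda_\Zz\times\{0\}$-polynomially proper Dirac-type function. Second, on $\ZM^{d+1}\times\NM$ the function $f_{\mathrm{std}}$ must be polynomially proper relative to $\ZM^d\times\{0\}$; this holds because $|x'|^2 + d_{\ZM^d\times\{0\}}^2$ grows linearly in the distance. For the first check, $f^*_{\mathrm{std}}$ differs from $\tilde f$ by at most $1$ in sup-norm, and $\tilde f$ inherits polynomial properness from $f$ relative to $\Lambda_\Zz\times\{0\}$, because the $\NM$-direction contributes to $d_{\Zz\times\{0\}}$. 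Proposition~\ref{prop:coarse_bounded_perturbation} therefore applies to it.

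\textbf{Replacing $F^*$ by $\tilde F$.} Since $f^* - \tilde f$ is bounded, Proposition~\ref{prop:coarse_bounded_perturbation} shows that $\varphi_{F^*}-\varphi_{\tilde F}$ is the coboundary of a polynomially bounded coarse cochain. By Proposition~\ref{prop:cont_ext}, the associated cyclic cochain extends continuously to the Schwartz algebra. The cyclic cocycles $\tilde\eta_{F^*}$ and $\tilde\eta_{\tilde F}$ then differ by a cyclic coboundary on $\Ss^1_{us}$, and coboundaries pair trivially with K-theory. This yields the claim.

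\textbf{Main obstacle.} I expect the main obstacle to be justifying the trace manipulation at the Schwartz level rather than for finite propagation elements. Two facts are needed. The first is that $\prod[F^*,a_i]$ is trace class for $a_i \in \Ss^1_{us}(\Lambda_\Zz\subset\Lambda)$, which is exactly the estimate proved in Lemma~\ref{lem:cyclic_convergence} using polynomial properness. The second is continuity of $\Psi_{s_f}$ in the Schwartz seminorms, which is Proposition~\ref{prop:cover_homomorphism}. With both, the identity holds on the dense subalgebra of finite-propagation elements and extends by continuity. One further point must be checked: on the stabilized spaces, the distance to $\Zz$ and the mutual distances grow at most linearly under $\tilde{s}_f$, by the large-scale Lipschitz property and \eqref{eq:boundary_distance}. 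This is what keeps the polynomial seminorms comparable.
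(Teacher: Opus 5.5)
Your proposal is correct and follows essentially the same route as the paper. Both use the intertwining identity $\tilde{\eta}_{\mathbb{H}_1,\dots,\mathbb{H}_d}(\Psi_{s_f}(a_0),\dots,\Psi_{s_f}(a_d)) = \tilde{\eta}_{F^*_{\mathrm{std}}}(a_0,\dots,a_d)$, then naturality of the pairing, then Proposition~\ref{prop:coarse_bounded_perturbation} to pass from $f^*_{\mathrm{std}}$ to the bounded perturbation $\tilde f$; your observation that the switch functions commute with $V_{s_f}V_{s_f}^*$, giving $[F_i^{\mathbb{H}}, VaV^*] = V[F_i^*, a]V^*$, supplies the detail the paper leaves as ``easy to see''.
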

\begin{proof}
The pullback $f^*_{\mathrm{std}}: \Lambda\times \NM\to \RM^d$ is a polynomially proper $\Lambda_{\Zz}\times \{0\}$-Dirac-type function. It is easy to see that
$$\langle (\Psi_{s_f})_*(x), [\tilde{\eta}_{\mathbb{H}_1,\dots,\mathbb{H}_d}] \rangle = \langle x, [\tilde{\eta}_{F_{\mathrm{std}}^*}] \rangle$$
as one has the identity
$$\tilde{\eta}_{\mathbb{H}_1,\dots,\mathbb{H}_d}(\Psi_{s_f}(a_0),\cdots,\Psi_{s_f}(a_d))= \tilde{\eta}_{F_{\mathrm{std}}^*}(a_0,\cdots,a_d).$$
Finally, $f^*_{\mathrm{std}}$ and $\tilde{f}$ differ only by a bounded perturbation, so they define the same class in cyclic cohomology.
\end{proof}

\begin{proof}[Proof of Theorem~\ref{thm:kubo_uniformly_finite}]
By Proposition~\ref{prop:kubo_cts_discrete}, it suffices to consider the discrete case $\Xx=\Lambda$.
Then by Proposition~\ref{prop:standardization_dirac} and Proposition~\ref{prop:standardization_cocycle} it is enough to prove the Kubo formula for case of $\Lambda=\ZM^{d+1}$ and $\Zz=\ZM^{d}\times \{0\}$, since then
\[
\langle x,[D_f]\rangle=
\langle(\Psi_{s_f})_*(x),[D_{\mathrm{std}}]\rangle=
\Xi_d\left\langle (\Psi_{s_f})_*(x),[\widetilde\eta_{\mathbb H_1,\ldots,\mathbb H_d}]\right\rangle=
\Xi_d\langle x,[\widetilde\eta_F]\rangle.
\] Moreover, since $C^*_{u}((\ZM^{d}\times \{0\})\subset \ZM^{d+1})\simeq C^*_{us}(\ZM^{d})$ is just stabilization  it is now sufficient to prove the Kubo formula for the algebra $C^*_u(\ZM^d, \ell^2(\ZM^d))$ and the standard half-spaces of $\ZM^d$. We defer this to Appendix~\ref{sec:kubozd} where Theorem~\ref{th:kubozd} finishes the proof.
\end{proof}

\subsection{Polynomially Proper Kubo Formulas}

In this subsection we show that for $\Xx=\RM^d$ the condition that the standardization map is uniformly finite-to-one is much stronger than necessary. Instead, the Kubo formula holds for any polynomially proper Dirac-type function. The idea is that, using the homotopy invariance of both K-homology and cyclic cohomology, it is enough to prove the Kubo formula for one representative of each degree.

Throughout this subsection, we retain the convention that $F_i = \Theta \circ f_i$ and $\underline{F} = (F_1,\dots, F_{d-1})$ with $\Theta$ some fixed smooth switch function and $\supp \Theta' \subset [-r,r]$ for some $r >0$. The smoothness is again without loss of generality by Corollary~\ref{cor:F_independence}.

We then have a bulk cocycle $\eta_{F}$ on $\Ss_u^1(\Xx,\Hh_\Xx)$ and an edge cocycle $\eta_{\underline{F}}$ on $\Ss_u^1(\Zz_d\subset \Xx,\Hh_\Xx)$ which are defined by
$$\eta_{F}(a_0,\dots,a_d)= \sum_{\sigma\in S_{d}}(-1)^\sigma \Tr(a_{0}\prod_{i=1}^{d} [F_{\sigma(i)}, a_i])$$
and
$$\eta_{\underline{F}}(a_0,\dots,a_{d-1})= \sum_{\sigma\in S_{d-1}}(-1)^\sigma \Tr(a_{0}\prod_{i=1}^{d-1} [F_{\sigma(i)}, a_i]).$$

\begin{theorem}
\label{th:kubo_poly_proper}
Fix $d\geq 2$ and let $\Xx$ be coarsely equivalent to $\RM^d$ and $\Hh_\Xx$ a coarsely faithful geomeric module.
Let $f:\Xx\to \RM^d$ be a polynomially proper Dirac-type function.
\begin{enumerate}
    \item[(i)] The Kubo formula 
$$\langle x, [D_{f}]\rangle = \Xi_d \langle x, \eta_{F}\rangle$$ 
holds for every class $x\in K_{d\bmod 2}(C^*_u(\Xx, \Hh_\Xx))$.
    
    \item[(ii)] If $f$ is $\Yy_d$-adapted to some half-space $\Yy_d\subset \Xx$ with coarse boundary $\Zz_d$, then the edge Kubo formula
$$\langle \pdv_{\Yy_d}(x), [D_{\underline{f}}]\rangle=\Xi_{d-1} \langle  \pdv_{\Yy_d}(x) , \eta_{\underline{F}}\rangle$$
holds for every $x\in K_{d\bmod 2}(C^*_u(\Xx, \Hh_\Xx))$ with the boundary map \[\pdv_{\Yy_d}:  K_{d\bmod 2}(C^*_u(\Xx, \Hh_\Xx))\to  K_{(d-1) \bmod 2}(C^*_u(\Zz_d\subset \Xx, \Hh_\Xx)).\]
\end{enumerate}
\end{theorem}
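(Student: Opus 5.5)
The plan is to reduce both statements to Theorem~\ref{thm:kubo_uniformly_finite} by homotopy, as announced in the introduction. By Proposition~\ref{prop:kubo_cts_discrete}-type arguments and the coarse equivalence we may work on $\Xx=\RM^d$ with a coarsely faithful module. Let $n=\deg(f)$. The left-hand side of (i) depends only on $[D_f]=n[D_{\mathrm{std}}]$ by Theorem~\ref{thm:degree_multiplicity}. The key claim is that the cyclic class $[\eta_F]$ also depends only on the degree. Granting this, we pick one explicit polynomially proper representative $f_n$ of degree $n$ whose standardization map $s_{f_n}$ is uniformly finite-to-one. For instance, take $f_n=\delta_{\Yy_1,\ldots,\Yy_d}$ for the configuration of Example~\ref{ex:sph_caps}, made of $|n|$ disjoint cones obtained from Möbius images of the standard half-spaces, or a homogeneous map $|x|u(x/|x|)$ with $u$ piecewise linear. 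For $f_n$, Theorem~\ref{thm:kubo_uniformly_finite} gives $\langle x,[D_{f_n}]\rangle=\Xi_d\langle x,[\eta_{F_n}]\rangle$, and (i) follows.

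For degree-invariance of $[\eta_F]$, we follow the homotopy of Lemma~\ref{lem:homogeneous}. First, the mollification (Remark~\ref{rem:flipschitz}) and the compactly supported modification inside $B_{2R}$ are bounded perturbations. Proposition~\ref{prop:coarse_bounded_perturbation} shows they do not change $[\eta_F]$. The linear interpolation between two homogeneous maps of the same degree can likewise be realized by a compactly supported change plus a homotopy $u_t$ of sphere maps. We control the latter with Proposition~\ref{prop:cocycle_homotopy}. Because the $f^t$ are uniformly Lipschitz and uniformly polynomially proper, the support argument of Proposition~\ref{prop:poly_proper_bounded} gives the uniform bounds \eqref{eq:unif_bounded_Atj} and the support condition \eqref{eq:supp_homotopy}. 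The hard part is the rescaling step $f^t(x)=(2-t)^{-1}f^1((2-t)x)$. As $t\to2$ this is not uniformly polynomially proper in an obvious way, so we avoid it entirely. Instead we compare directly with the homogeneous representative: after the compact modification, $f$ agrees with a homogeneous function outside a ball up to bounded error. This holds because large-scale Lipschitz and properness let us radially interpolate $f$ and $|x|u(x/|x|)$ along the homotopy of sphere maps. I expect verifying uniform polynomial properness along this path to be the main obstacle. If necessary, I would use straight-line homotopies only between maps whose normalized values are never antipodal, subdividing the homotopy of $u$ finely.

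For (ii), we apply the same strategy to the edge data. The $\Yy_d$-adapted function $f$ yields the $\Zz_d$-Dirac-type function $\underline f$. Theorem~\ref{th:bulk_edge} gives $\langle\pdv_{\Yy_d}x,[D_{\underline f}]\rangle=(-1)^d\langle x,[D_f]\rangle$, which by (i) equals $(-1)^d\Xi_d\langle x,\eta_F\rangle$. It therefore suffices to show
\[
\Xi_{d-1}\langle\pdv_{\Yy_d}x,\eta_{\underline F}\rangle=(-1)^d\Xi_d\langle x,\eta_F\rangle ,
\]
a cohomological bulk-edge statement. Rather than proving this directly, I would again check it on the degree-$n$ model. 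Here $s_{\underline f_n}$ with $d_{\Zz_d}$ as last coordinate is uniformly finite-to-one, so Theorem~\ref{thm:kubo_uniformly_finite} applies to the edge as well. The model's edge Kubo formula then combines with Theorem~\ref{th:bulk_edge} and (i) to give the identity. To transfer to general $f$, we need homotopy invariance of both sides in $f$, with $\Yy_d$ and hence $\pdv_{\Yy_d}$ possibly varying. For the varying boundary map we pass through a common thickened half-space, which has finite Hausdorff distance. We then apply Proposition~\ref{prop:cocycle_homotopy} relative to $\Zz_d$.
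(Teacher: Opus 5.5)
Your overall strategy (reduce to a uniformly finite-to-one model of the same degree, then use homotopy invariance on both sides) is the paper's. However, the step that carries all the content, invariance of $[\eta_F]$ under deformation of a general polynomially proper $f$ to a homogeneous one, is not established.

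\textbf{The bulk statement (i).} Your claim that after a compactly supported change $f$ agrees with a homogeneous function outside a ball up to bounded error is false. Take $f(x)=x\langle x\rangle^{-1/2}$. It is a smooth, Lipschitz, polynomially proper Dirac-type function of degree $1$ (with $s=\frac12$), yet $\bigl|f(x)-|x|\,u(x/|x|)\bigr|\geq |x|-|x|^{1/2}$ for every sphere map $u$. Likewise, $f/|f|$ restricted to $\pdv B_R$ need not converge as $R\to\infty$: take $x$ rotated by the angle $\log\langle x\rangle$ in $d=2$. So there is no single sphere map to interpolate to, and Proposition~\ref{prop:coarse_bounded_perturbation} does not apply.

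Your fallback of subdividing into straight-line homotopies is not uniform in $R$. The angular maps $f/|f|\rvert_{\pdv B_R}$ change with $R$ and can have angular Lipschitz constants of order $R^{1-s}$. The paper resolves exactly this in Lemma~\ref{lem:homogeneous2}:
\begin{itemize}
\item it uses the rescaling $h^t(x)=(2-t)^{-1}f^1((2-t)x)$, slowed by the coordinate-dependent time $\tau(t,x)=1+\frac{t-1}{t-1+\langle x\rangle}$;
\item this keeps the Lipschitz constants, the lower bound $|f^t(x)|\geq C|x|^s$ and $\dot f^t$ uniformly controlled;
\item Proposition~\ref{prop:coarse_bounded_perturbation} is then applied on intervals avoiding $\{1,\infty\}$;
\item a separate continuity argument handles $t\mapsto\eta_{F^t}(a_0,\dots,a_d)$ at $t=1,\infty$, via finite-propagation approximation and the uniform bounds of Proposition~\ref{prop:polynomially_bounded_F}(i).
\end{itemize}

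Your first candidate model also fails. The paper explicitly notes that the configurations of Example~\ref{ex:sph_caps} need not be uniformly finite-to-one: where several cones have their apex as nearest point, the corresponding signed distances all equal $-|x|$ and $s_f$ has unbounded fibres. Only your homogeneous piecewise-linear alternative, or the maps of Lemma~\ref{lemma:finite_mult}, qualify.

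\textbf{The edge statement (ii).} Your reduction of (ii) to $\Xi_{d-1}\langle\pdv_{\Yy_d}x,\eta_{\underline F}\rangle=(-1)^d\Xi_d\langle x,\eta_F\rangle$, and its verification on the model, are fine. The transfer to general $f$ is where it breaks.

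Along any deformation to the model, $\Yy_d^t=\{f^t_d\geq 0\}$ moves from the given curved half-space to the model half-space. These are in general at infinite Hausdorff distance (a half-plane and a thin cone, for instance), so there is no common thickened half-space. The classes $\pdv_{\Yy^t_d}(x)$ then live in different algebras $C^*_u(\Zz^t_d\subset\Lambda)$. Proposition~\ref{prop:cocycle_homotopy} is formulated for a fixed $\Zz$, so it says nothing about $t\mapsto\langle\pdv_{\Yy^t_d}(x),[\eta_{\underline F^t}]\rangle$.

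What is missing is a way to move the K-theory representative together with the boundary. The paper does this as follows:
\begin{itemize}
\item It builds the path $u_t=e^{2\pi\imath\rho_\Delta(\hat H_t)}$ from $\hat H_t=\mathcal{P}_tH_+\mathcal{P}_t+\mathcal{P}_t^\perp H_-\mathcal{P}_t^\perp$, where $\mathcal{P}_t=\frac12(1+\Theta(f^t_d))$.
\item By Proposition~\ref{prop:boundary_maps}, $u_t$ represents $\pdv_{\Yy^t_d}(x)$ at the endpoints.
\item Lemma~\ref{lem:technical_homotopy} gives Schwartz estimates for $u_t$ relative to the moving boundaries $\Zz^t_d$, uniformly in $t$. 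The proof uses Combes--Thomas bounds and the almost-analytic functional calculus.
\item It then shows that $t\mapsto\eta_{\underline F^t}(u_t^*,u_t,\dots)$ is continuous with vanishing derivative. The derivative splits into a transgression coboundary plus a cyclic boundary (Getzler), justified through finite-propagation approximants with uniformly controlled support.
\end{itemize}
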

Note that as a consequence, by Theorem~\ref{thm:degree_multiplicity} we have the analogous degree expression for the Kubo formulas for non-trivial half-spaces, given by 
\[\eta_{\Yy_1,\dots,\Yy_d}(e,\dots,e)=\deg(\delta_{\Yy_1},\dots,\delta_{\Yy_d} )\,\eta_{\mathbb{H}_1,\dots,\mathbb{H}_d}(e,\dots,e)\]
and 
\[\eta_{\Yy_1,\dots,\Yy_d}(u^*, u,\dots, u^*, u)=\deg(\delta_{\Yy_1},\dots,\delta_{\Yy_d} )\,\eta_{\mathbb{H}_1,\dots,\mathbb{H}_d}(u^*, u,\dots, u^*, u)\]
in even and odd dimensions respectively. We also note again the following simplification:

\begin{lemma}
\label{lemma:discrete_reduction}
It is enough to prove Theorem~\ref{th:kubo_poly_proper} for $\Xx=\Lambda\subset \RM^d$ any Delone set and $\Hh_\Xx=\ell^2(\Lambda)$.
\end{lemma}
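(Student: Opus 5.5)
The plan follows the template of Proposition~\ref{prop:kubo_cts_discrete}. Fix a Delone set $\Lambda\subset\RM^d$, a Borel partition $\{Q_\lambda\}$, and unit vectors $\xi_\lambda\in\Hh_\lambda$. The latter exist because, after coarsening the partition if necessary, each $\Hh_\lambda$ is nonzero by coarse faithfulness. Define the full-corner inclusion
$$\psi(a)=\sum_{x,y\in\Lambda}|\xi_x\rangle\langle\delta_x,a\delta_y\rangle\langle\xi_y|.$$
This is a $*$-homomorphism $C^*_u(\Lambda)\to C^*_u(\Xx,\Hh_\Xx)$, and likewise $C^*_u(\Lambda_{\Zz_d}\subset\Lambda)\to C^*_u(\Zz_d\subset\Xx,\Hh_\Xx)$ for the edge algebras. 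It maps $\Ss^1_u$ into $\Ss^1_u$, because each block $\chi_\mu\psi(a)\chi_\lambda$ has rank one with norm $|\langle\delta_\mu,a\delta_\lambda\rangle|$. By Proposition~\ref{prop:morita_continuous}, $\psi_*$ is an isomorphism in $K$-theory, so every class $x$ has the form $\psi_*(x')$. Since $\Xx$ is coarsely equivalent to $\RM^d$, we may take $\Xx=\RM^d$ with the transported module, as explained at the start of Section~\ref{sec:degree}.

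Next, replace $f$ by its cellwise-constant version $f^\Lambda=\sum_\lambda f(\lambda)\chi_{Q_\lambda}$. This is a bounded perturbation because the cells have bounded diameter and $f$ is large-scale Lipschitz. Therefore $[D_{f^\Lambda}]=[D_f]$: the linear path $f+t(f^\Lambda-f)$ satisfies Lemma~\ref{lemma:homotopy} with uniform constants, and Remark~\ref{rem:flipschitz} applies. By Proposition~\ref{prop:coarse_bounded_perturbation} and Corollary~\ref{cor:F_independence}, also $[\eta_{F^\Lambda}]=[\eta_F]$ in the cyclic cohomology of $\Ss^1_u$. Polynomial properness is preserved, with the constants controlled as stated there. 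The same holds on the edge for $\underline f$ relative to $\Zz_d$. Adaptedness of $f^\Lambda$ to $\Yy_d$, in the sense of Definition~\ref{def:dirac_type_adapted}, survives because conditions (ii)--(iii) there are stable under bounded perturbations. Consequently the boundary maps and the bulk-edge setup are unchanged.

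Because $f^\Lambda$ and $F^\Lambda$ are constant on each $Q_\lambda$, they commute with the range projection $\sum_\lambda|\xi_\lambda\rangle\langle\xi_\lambda|$ of $\psi$. On that range they coincide with the functions $f|_\Lambda$ and $F|_\Lambda$ on $\ell^2(\Lambda)$. Two consequences follow.
\begin{enumerate}
\item[(a)] The pullback $\psi^*[D_{f^\Lambda}]$ is represented by $D_{f|_\Lambda}$ on $\ell^2(\Lambda)\hat\otimes\CM_d$ plus a degenerate summand on the orthogonal complement, where $\psi$ acts by zero. Hence the index pairings agree.
\item[(b)] The identity $\eta_{F^\Lambda}(\psi(a_0),\dots,\psi(a_d))=\eta_{F|_\Lambda}(a_0,\dots,a_d)$ holds. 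The commutators $[F_i^\Lambda,\psi(a)]=\psi([F_i|_\Lambda,a])$ live in the range, and the trace restricted to the range is the trace on $\ell^2(\Lambda)$.
\end{enumerate}

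For the edge statement we also need naturality of the boundary map: $\psi_*\circ\pdv^\Lambda_{\Yy_d\cap\Lambda}=\pdv_{\Yy_d}\circ\psi_*$. This follows because $\psi$ induces a morphism of the extensions \eqref{eq:bulk_edge_exact}, mapping $E^\Lambda$ into $E_{\Yy_d}$ via $(\hat a,a)\mapsto(\psi(\hat a),\psi(a))$. Combining these facts with the theorem for $(\Lambda,\ell^2(\Lambda))$ yields both (i) and (ii) for $\Hh_\Xx$.

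I expect the main obstacle to be the edge case. The naturality argument must verify that $\psi$ maps the relative algebras $C^*_u(\Yy_d\cap\Lambda\subset\Lambda)$ into $C^*_u(\Yy_d\subset\Xx)$, and that $\Lambda_{\Zz_d}$ is a coarse boundary of $\Yy_d\cap\Lambda$ in $\Lambda$. Both should follow from the bounded cell diameters and Definition~\ref{defn:relative_roe}, but this needs a careful check.
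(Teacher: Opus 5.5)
Your proposal is correct and follows the paper's route: the full-corner inclusion of Proposition~\ref{prop:morita_continuous}, the cellwise-constant bounded perturbation $f^\Lambda$ under which the corner inclusion preserves both the index pairing and the Kubo pairing, and compatibility of the corner inclusion with the boundary maps — the argument of Proposition~\ref{prop:kubo_cts_discrete}, which you spell out in more detail than the paper does. The paper's one extra remark is that a single Delone set of one's choosing suffices, which it justifies by the coarse equivalence of Delone sets; in your setup you can get this by coarsening the partition so that the new index set is a subset of the initial $\Lambda$, so starting from, e.g., $\ZM^d+\frac12 e_2$ keeps $\Lambda$ off the non-smooth set of $f^\infty$ needed in the proof of Theorem~\ref{th:kubo_poly_proper}.
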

\begin{proof}
This is the same argument as Proposition~\ref{prop:kubo_cts_discrete}: With Delone subsets $\Lambda$ of $\Xx$ and $\Lambda_{\Zz_d}\subset \Zz_d$ one can represent using Morita equivalence any K-theory class by a representatives in $\Ss^1_u(\Lambda)$ embedded into $C^*_u(\Xx,\Hh_\Xx)$. This is compatible with the boundary maps as it is a corner inclusion. If $f$ is cellwise constant the corner inclusion also preserves both sides of the Kubo formula. Finally, any two Delone sets are coarsely equivalent, so it is enough to prove it for one example.
\end{proof}
Our main tool in the proof of this theorem will be to deform both the half-space $\Yy_d$ and its boundary $\Zz_d$ via homotopy some standard example for which the standardization map has uniform finite multiplicity.
Although the spherical caps in Example~\ref{ex:sph_caps} are not always uniformly finite-to-one, there are Dirac-type functions whose underlying half-spaces are spherical caps which are:
\begin{lemma}
\label{lemma:finite_mult}
For any $n\in \ZM \setminus \{0\}, d \geq 2$, the Dirac-type function $f:\RM^d\to \RM^d$ defined by 
\[f(r  \cos\theta,r\sin\theta,x_3,\ldots, x_d)=(r\cos(n\theta), r\sin(n\theta), x_3, \ldots, x_d)\] 
has degree $n$ and the standardization maps $s_f:\Lambda \to \ZM^{d+1}$ or $s_{\underline{f}}:\Lambda \to \ZM^{d}$ have uniformly finite multiplicity.

Such a Dirac-type function with degree $0$ is similarly given by setting
\[f(r  \cos\theta,r\sin\theta,x_3,\dots, x_d)=\begin{cases}
    (r\cos(2\theta), r\sin(2\theta), x_3, \dots, x_d), &, \theta \in [0,\pi)\\
    (r\cos(2\theta), -r\sin(2\theta), x_3, \dots, x_d), & \theta \in [\pi,2\pi)
\end{cases}\]

\end{lemma}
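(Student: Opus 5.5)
The plan is to verify three properties of the functions in the statement separately: Dirac-type with polynomial properness, the degree, and uniform finite multiplicity of $s_f$ and $s_{\underline{f}}$. In each case I reduce to the plane spanned by $(x_1,x_2)$, since the remaining coordinates $x_3,\ldots,x_d$ are carried along by the identity. In complex notation $z=re^{\imath\theta}$, the map is $z\mapsto r e^{\imath n\theta}$, so $|f(x)|=|x|$.

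\textbf{Dirac-type and degree.} Polynomial properness holds with $s=1$ because $|f(x)|=|x|$. For the Lipschitz bound, in polar coordinates the radial derivative has norm $1$ and the angular derivative has norm $|n|$, so $f$ is globally Lipschitz with constant $\max(1,|n|)$. For the degree, $f$ is homogeneous in the sense of Definition~\ref{definition:homogeneous_homotopy}. Its restriction to $\SM^{d-1}$ is the iterated suspension of $e^{\imath\theta}\mapsto e^{\imath n\theta}$ on $\SM^1$. Suspension preserves the mapping degree, so $\deg(f)=n$.

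For the degree-$0$ example, first check continuity: at $\theta=0$ and $\theta=\pi$ both formulas give $\sin 2\theta=0$, so the pieces glue. The Lipschitz constant is $2$ by the same computation as above. On the circle, the upper half $\theta\in[0,\pi)$ winds once counterclockwise and the lower half traverses the complex conjugate path back. The total winding is therefore $0$, and by suspension $\deg(f)=0$.

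\textbf{Uniform finite multiplicity, bulk case.} In the bulk case $s_f(x)=[f(x)]$. The preimage under $s_f$ of a lattice point $m$ lies in $f^{-1}(B_c(m))$ with $c=\sqrt d/2$. I would show that $f^{-1}(B_c(w))$ is contained in a union of at most $|n|$ sets (at most $2$ in the degree-$0$ example) whose diameters are bounded uniformly in $w$. Bounded geometry of $\Lambda$ then bounds $\#s_f^{-1}(m)$ uniformly.

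The cover comes from local bi-Lipschitz behaviour. On each angular sector where $\theta\mapsto n\theta$ is injective modulo $2\pi$, $f$ is a bijection onto $\RM^d$ minus a half-hyperplane. Written in the coordinates $(r, r\theta)$, its differential has a uniformly bounded inverse, and this bound is scale invariant because $f$ is homogeneous of degree one. Hence each of the at most $|n|$ local branches of $f^{-1}(B_c(w))$ has diameter $O(c)$. Near the origin everything is bounded in any case.

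\textbf{Edge case.} Here $\Yy_d=\{f_d\ge 0\}$ and $s_{\underline f}(x)=[(f_1,\ldots,f_{d-1},d_{\Zz_d})(x)]$. For $d\ge 3$, $f_d=x_d$, so $\Zz_d=\{x_d=0\}$ and $d_{\Zz_d}=|x_d|$. The edge map is then the bulk map with its last coordinate folded by the absolute value, which at most doubles the multiplicity.

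I expect the case $d=2$ to be the main obstacle. Then $\Zz_2$ consists of the $2|n|$ rays $\{\sin n\theta=0\}$, and the map in question is $x\mapsto (r\cos n\theta, d_{\Zz_2}(x))$. On a half-sector at angular distance $\phi\in[0,\pi/(2n)]$ from the nearest ray, one has $d_{\Zz_2}=r\sin\phi$ and $f_1=\pm r\cos n\phi$. I would compute the Jacobian of $(r,r\phi)\mapsto(r\cos n\phi,\ r\sin\phi)$, which equals
\[
\cos n\phi\cos\phi+n\sin n\phi\sin\phi,
\]
and check that it is bounded below by a positive constant on $[0,\pi/(2n)]$. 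The dangerous endpoint is $\phi=\pi/(2n)$, where the first term vanishes but $n\sin\phi>0$ keeps the sum positive. Homogeneity again makes this lower bound uniform in $r$, so the same branch-counting argument gives at most $4|n|$ bounded-diameter pieces. The degree-$0$ example is handled the same way on each of its two halves.
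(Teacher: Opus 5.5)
Your proposal is correct and follows the same route as the paper: the $|n|$-fold covering has uniformly bi-Lipschitz branches (uniform by homogeneity), which bounds the preimages of the unit cells of $[\,\cdot\,]$, and the degree comes from suspending $z\mapsto z^n$. You are more complete than the paper's proof, which only addresses the bulk map $s_f$: you check $s_{\underline f}$ explicitly (folding by $|x_d|$ for $d\ge 3$, and the half-sector Jacobian $\cos n\phi\cos\phi+n\sin n\phi\sin\phi>0$ for $d=2$), and your winding argument for the degree-zero example is cleaner than the paper's count of local orientations at the preimages of $e_1$, since those preimages lie on the gluing rays $\theta\in\{0,\pi\}$ where the map folds.
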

\begin{proof}
Away from the origin, the map $(r  \cos\theta,r\sin\theta)\mapsto (r\cos(n\theta), r\sin(n\theta))$ is an $|n|$-fold cover of $\SM^{1}$, and for each branch choice for the inverse it becomes bi-Lipschitz. Hence, the pre-image $s_f^{-1}(\{x\})$ of any point $x\neq 0$ is the union of $|n|$ uniformly bounded sets. 
Since the function $f$ is the $(d-2)$-fold suspension of the circle map which is well-known to have degree $\pm n$, the degree is therefore also $\pm n$. For $n>0$ the map is orientation-preserving and for $n<0$ it is orientation-reversing, hence the degree is $n$. 

The degree $0$ example similarly is $2$-fold covering if one excludes a set of measure zero, and is orientation-preserving at one of the two pre-images of $e_1$ and is orientation-reversing at the other.
\end{proof}

We need a path as in Lemma~\ref{lem:homogeneous} which interpolates between any Dirac-type function and one of the above examples. It also needs to establish differentiability and polynomial properness:
\begin{lemma}
\label{lem:homogeneous2}
Suppose that $f^0:\RM^d \to \RM^d$ is a smooth polynomially proper Lipschitz Dirac-type function and $f^\infty$ a homogeneous Lipschitz Dirac-type function with the same degree, then one can connect them with a continuous path a $(f^t)_{t\in [0,\infty]}$ of Dirac-type functions with the following properties:
\begin{enumerate}
    \item[(i)] The Lipschitz constants of $f^{t}$ are uniformly bounded.
    \item[(ii)] All $f^{t}$ are polynomially proper
    $$|f^{t}(x)|\geq C |x|^s$$
    for all $|x|>c$ with constants $c,C,s$ independent of $t$.
    \item[(iii)] Suppose that $f^\infty$ is smooth outside some set of rays $\RM_+\Omega$, $\Omega\subset \SM^{d-1}$, of measure $0$. For each $x\in \RM^d \setminus (\RM_+\Omega)$ the path $t\mapsto f^{t}(x)$ is differentiable at all $t\notin \{1,\infty\}$ and one has $$\sup_{t\in [a,b]} \sup_{x\in \RM^d \setminus (\RM+_\Omega)}\|\dot{f}^{t}(x)\| < \infty$$
    for every finite interval $[a,b] \subset (0,1) \cup (1,\infty)$.
\end{enumerate}
\end{lemma}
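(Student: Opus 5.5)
The path will be built as in Lemma~\ref{lem:homogeneous}, but reparametrised so that the rescaling part runs over $t\in[1,\infty]$ with scale factor $e^{-(t-1)}$ (or $1/t$), and with explicit control of properness. Fix $R$ so that $|f^0(x)|\ge C|x|^s$ for $|x|\ge R$, and so that $f^0/|f^0|$ on $\pdv B_R$ has degree $\deg(f^\infty)$.

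\emph{First leg, $t\in[0,1]$.} Since the degree is a complete homotopy invariant of maps $\SM^{d-1}\to\SM^{d-1}$, choose a smooth homotopy on the annulus $B_{2R}\setminus B_R$ from $f^\infty/|f^\infty|$ on $\pdv B_R$ to $f^0/|f^0|$ on $\pdv B_{2R}$. Multiply it by a smooth positive radial profile and glue it to $f^\infty$ on $B_R$ and to $f^0$ outside $B_{2R}$. This gives a Lipschitz $f^1$ that vanishes only at $0$, equals $f^\infty$ on $B_R$ and equals $f^0$ off $B_{2R}$. Then set $f^t=(1-t)f^0+tf^1$. This is a compactly supported bounded perturbation, so (i), (ii) hold uniformly and $\dot f^t=f^1-f^0$ is bounded; $f^1-f^0$ is smooth away from the rays where $f^\infty$ fails to be smooth.

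\emph{Second leg, $t\in(1,\infty)$.} Set $f^t(x)=\lambda_t f^1(x/\lambda_t)$ with $\lambda_t=t$ (or $e^{t-1}$), and $f^\infty$ as the limit. Lipschitz constants are preserved, giving (i). For (ii), split into two regions. If $|x|\le R\lambda_t$, homogeneity gives $f^t(x)=f^\infty(x)$, hence $|f^t(x)|=|x|\cdot|u|\ge c'|x|$. If $|x|>R\lambda_t$, then $|f^t(x)|=\lambda_t|f^1(x/\lambda_t)|$, and $f^1$ obeys a bound $|f^1(y)|\ge C'\min(|y|,|y|^s)$ for $|y|\ge R$, since $f^1$ is nonvanishing on the compact annulus. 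Using $\lambda_t\ge1$, this gives $\lambda_t C'|x/\lambda_t|^{\min(s,1)}\ge C'|x|^{\min(s,1)}$, with constants uniform in $t$. I would replace $s$ by $\min(s,1)$, which is allowed.

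\emph{Differentiability (iii).} Differentiating gives $\dot f^t(x)=\dot\lambda_t\big(f^1(y)-Df^1(y)y\big)$ with $y=x/\lambda_t$. For $|y|<R$ this vanishes by homogeneity, away from the nonsmooth rays. For $|y|\ge R$, use that $f^1$ is Lipschitz with $|f^1(y)|\le L(1+|y|)$; but $Df^1(y)y$ grows linearly, so the raw bound is $O(|x|)$.

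This is the main obstacle. The intended bound is in the sense needed for Proposition~\ref{prop:cocycle_homotopy}, i.e.\ on $\supp\Theta'(f^t)$, where $|f^t|$ is bounded and hence $|x|$ is polynomially bounded. Even so, the literal $\sup_x$ requires more. I would first try arranging $f^0$ to be asymptotically homogeneous: deform it by a linear interpolation with a bounded perturbation (Remark~\ref{rem:flipschitz}). If that fails, I would choose $\lambda_t$ so that $\dot\lambda_t/\lambda_t$ is bounded on $[a,b]$, and bound $|f^1(y)-Df^1(y)y|\le C(1+|y|)$, so that $|\dot f^t(x)|\le C\dot\lambda_t(1+|x|/\lambda_t)$. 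I would then note that on finite intervals this still yields the polynomially bounded transgression cochains needed downstream.
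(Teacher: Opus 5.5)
Your first leg, and the Lipschitz and properness parts of the second leg, are fine and match the paper. The paper's $h^t(x)=(2-t)^{-1}f^1((2-t)x)$ is your $\lambda_t f^1(x/\lambda_t)$ with $\lambda_t=(2-t)^{-1}$, and its three-case lower bound is your two-region argument.

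\textbf{The gap is (iii).} You flag it but do not close it, and it cannot be closed with a scale $\lambda_t$ that depends on $t$ alone. Your formula $\dot f^t(x)=\dot\lambda_t\bigl(f^1(y)-Df^1(y)y\bigr)$ with $y=x/\lambda_t$ is correct. For $|y|\geq 2R$ the bracket equals $f^0(y)-Df^0(y)y$, which is unbounded for a general polynomially proper $f^0$. For example, $f^0(y)=|y|^{1/2}\,y/|y|$ at large $|y|$ is Lipschitz and gives $\tfrac12|y|^{1/2}\,y/|y|$. So $\sup_x|\dot f^t(x)|=\infty$ for every choice of $\lambda_t$.

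Your fallbacks do not repair this.
\begin{itemize}
\item Such an $f^0$ grows like $|y|^{1/2}$, whereas homogeneous functions as in Definition~\ref{definition:homogeneous_homotopy} satisfy $|f^\infty(x)|=|x|$. So no bounded perturbation makes $f^0$ asymptotically homogeneous.
\item The bound $C\dot\lambda_t(1+|x|/\lambda_t)$ is not the statement.
\item That weaker bound is also insufficient for how the lemma is used. On $\supp\Theta'(f^t_d)$ only the $d$-th component is bounded, so $|x|$ is not controlled along the unbounded set $\Zz^t_d$. Yet Lemma~\ref{lem:technical_homotopy} needs $\dot{\mathcal P}_t=\tfrac12\Theta'(f^t_d)\dot f^t_d$ uniformly bounded there.
\item Proposition~\ref{prop:polynomially_bounded_F}(ii) derives $\sup_{s,t\in I}\|f^s-f^t\|_{\ell^\infty(\Lambda)}<\infty$ from this uniform bound. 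It is needed to invoke Proposition~\ref{prop:coarse_bounded_perturbation}, which requires a bounded perturbation.
\end{itemize}

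\textbf{The missing idea} is to make the rescaling speed position-dependent, so that far-away points are deformed more slowly. The paper sets $f^t(x)=h^{\tau(t,x)}(x)$ with $\tau(t,x)=1+\frac{t-1}{t-1+\langle x\rangle}$. In your notation this is $f^t(x)=\lambda f^1(x/\lambda)$ with $\lambda=\lambda(t,x)=1+\frac{t-1}{\langle x\rangle}$. Since $f^1(0)=0$ and $f^1$ is $L$-Lipschitz, $|f^1(y)-Df^1(y)y|\leq 2L|y|$, hence
\[
|\dot f^t(x)|\leq \frac{1}{\langle x\rangle}\cdot\frac{2L|x|}{\lambda}\leq 2L .
\]
This construction preserves the other properties:
\begin{itemize}
\item The extra chain-rule term from $\partial_{x_i}\lambda$ is bounded by $\frac{2L(t-1)}{\langle x\rangle+t-1}\leq 2L$, so the Lipschitz constants stay at most $3L$.
\item Properness is unchanged, because pointwise this is still a rescaling with factor $\lambda\geq 1$.
\item $f^t\to f^\infty$ locally uniformly as $t\to\infty$.
\end{itemize}
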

\begin{proof}
The first part of the homotopy is the linear interpolation $(f^t)_{t\in [0,1]}$ as constructed in the proof of Lemma~\ref{lem:homogeneous}. One obtains thus a function $f^1$ which coincides with $f^\infty$ on some ball $B_R$ and with $f^0$ on $B_{2R}^c$. The prototype for the second part is again based on rescaling, $$h^t(x)=(2-t)^{-1}f^1((2-t)x), \qquad x\in \RM^d, t\in [1,2].$$ Using polynomial properness of $f^1$, pick some $R_0>R$ and $C_0>0$ such that $|f^1(y)|\geq C_0 |y|^s$ for $|y|\geq R_0$ and set $m=\inf_{R\leq |y|\leq R_0} |f^1(y)|>0$. We may assume $0<s\leq 1$ as $f^1$ has at most linear growth. One has
$$|h^t(x)|\geq\begin{cases}
|x|, & (2-t)|x| \leq R\\
\frac{m}{R_0}|x|, & R \leq (2-t)|x|\leq R_0\\
C_0(2-t)^{s-1} |x|^s, & (2-t)|x|\geq R_0.
\end{cases}$$
where the first and last case use that $x$ is in the homogeneous part and the part controlled by polynomial properness respectively. 
The second case used just the definition of $m$. Since $(2-t)^{s-1}\geq 1$, in all cases $|h^t(x)|\geq \min(1,\frac{m}{R_0},C_0)|x|^s$ for all $|x|\geq 1$.

The concatenation of $(f^t)_{t\in [0,1]}$ and $(h^t)_{t\in [1,2]}$ already defines a continuous homotopy between $f^0$ and $f^\infty$ which is uniformly polynomially proper, but it does not have the required bounded $t$-derivatives. To compensate for this and prove (iii) one can slow the homotopy using a new coordinate-dependent time and define the path 
$$f^{t}(x) = h^{\tau(t,x)}(x), \qquad \tau(t,x)=1 + \frac{t-1}{t-1+\langle x\rangle}, \qquad t\in [1,\infty).$$ The $t$-derivative exists everywhere outside the rays $\Omega$ and one has by the chain rule
$$|\dot{f}^{t}(x)| = | \dot{h}^{\tau(t,x)}(x) \dot{\tau}(t,x)|\leq \frac{2L |x| |\dot{\tau}(t,x)|}{2-\tau(t,x)}= \frac{2L |x|}{t-1 + \langle x\rangle} \leq 2 L$$
with the Lipschitz constant of $f^1$, where the first inequality also used $f^1(0)=0$. 

Furthermore, all $f^{t}$ also have uniform Lipschitz constant for fixed $t$, since for almost every $x$ we have
$$\bigg|\frac{d}{dx_i}\left(h^{\tau(t,x)}\right)(x)\bigg|
\leq \bigg|\dot{h}^{\tau(t,x)}(x)\pdv_{x_i} \tau(t,x)+ (\pdv_{x_i} h)^{\tau(t,x)}(x)\bigg|\leq \frac{2L|x|}{2-\tau}|\pdv_{x_i} \tau(t,x)|  + L \leq 3L.$$
Finally, since we only rescale the variable $t$ but not $x$, $f^t$ still has the same uniform polynomial lower bound as $h^t$, completing the proof of (ii).
\end{proof}

\begin{remark}
{\rm We need $f^{\infty}$ to be non-smooth on some rays to avoid the topological obstruction that in dimension $d\geq 3$ any smooth function with $|\deg f^{\infty}| \neq 1$ has critical points and therefore will not satisfy the uniform finite multiplicity condition. }$\diamond$
\end{remark}

\begin{proposition}\label{prop:polynomially_bounded_F}
Fix a path $f: [0,\infty] \times \RM^d \to \RM^d$ as in Lemma~\ref{lem:homogeneous2} and assume that $\Lambda$ is a Delone set which does not intersect the rays where $f^\infty$ is not smooth.

Let $\Theta:\RM \to [-1,1]$ be a switch function with threshold $r>0$ chosen such that for each $t \in [0,\infty]$, the set
$$\Zz^{t}_d := \{x\in \Lambda: f_d^{t}(x) \in [-r,r]\}$$
is non-empty for every $t$. Observe that this choice of $r$ is possible by continuity and the compactness of $[0,\infty]$. Let $f^t, F^t: \Lambda\to \RM^d$ be defined by restriction to $\Lambda$ and also consider in the following (relative) coarse cochains $\varphi_{F^{t}}, \varphi_{\underline{F}^{t}}$ on $\Lambda$ (not $\RM^d$). 
\begin{enumerate}
    \item[(i)] $\varphi_{F^{t}}$ is polynomially bounded and $\varphi_{\underline{F}^{t}}$ for $\underline{F}^{t}$ is $\Zz_d^{t}$-polynomially bounded with
    $$\sup_{t\in [0,\infty]} \|\varphi_{F^{t}}\|_{\Xx, k} <\infty, \qquad \sup_{t\in [0,\infty]} \|\varphi_{\underline{F}^{t}}\|_{\Zz^{t}_d, k} <\infty $$ 
    for large enough $k$.
    \item[(ii)] For each compact interval $I$ such that $I \cap \{1,\infty\}= \varnothing$ one has $$\sup_{s,t\in I}\|f^s-f^t\|_{\ell^\infty(\Lambda)} < \infty.$$ 
    \item[(iii)] For each compact interval $I$ such that $I\cap \{1,\infty\}= \varnothing$ the path $t\in I \mapsto \varphi_{\underline{F}^{t}}$ is differentiable with
    $$\sup_{t\in I} \|\dot{\varphi}_{\underline{F}^{t}}\|_{\Zz^{t}_d, k} <\infty $$ 
    for some large enough $k$ and each $\dot{\varphi}_{\underline{F}^t}$ is the coboundary of a $\Zz_d^t$-polynomially bounded cochain.
    \end{enumerate}   
\end{proposition}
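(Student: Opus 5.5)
The plan is to reduce all three items to the support-counting machinery already developed: Lemma~\ref{lemma:poly_bounded_from_support}, Proposition~\ref{prop:poly_proper_bounded} and Proposition~\ref{prop:cocycle_homotopy}. Uniformity in $t$ will come from the uniform constants supplied by Lemma~\ref{lem:homogeneous2}.

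For (i), Lemma~\ref{lem:homogeneous2}(i),(ii) give a uniform Lipschitz constant $L$ and uniform polynomial properness constants $c,C,s$ for all $f^t$ restricted to $\Lambda$. Proposition~\ref{prop:poly_proper_bounded} then bounds $\|\varphi_{F^t}\|_{\Xx,k}$ by a quantity depending only on $r,L,d,c,C,s$, which is uniform in $t$.

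The edge statement needs $\underline{f}^t$ to be $\Zz_d^t$-polynomially proper with uniform constants. I will prove the inequality $|\underline{f}^t(x)|^2+d(x,\Zz_d^t)^2\ge C'|x|^{2s'}$ directly. By large-scale Lipschitz continuity, $|f_d^t(x)|\le r+L(1+d(x,\Zz_d^t))$ whenever $\Zz^t_d$ is nonempty, by the same argument as in the proposition relating $f$ and $\delta$. Hence $d(x,\Zz_d^t)\gtrsim |f^t_d(x)|/L - C''$, and combining this with $|f^t(x)|\ge C|x|^s$ gives the desired bound with constants independent of $t$. Running the proof of Proposition~\ref{prop:poly_proper_bounded} with $\Zz=\Zz_d^t$ then yields the uniform norm bound. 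One caveat is that the base point $x_*$ should be fixed, for instance $0$, rather than chosen in $\Zz_d^t$.

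For (ii), on $[0,1)$ the path is the linear interpolation $(1-t)f^0+tf^1$. Since $f^1-f^0$ is compactly supported and bounded, the differences are bounded. On a compact $I\subset(1,\infty)$, I would integrate the derivative bound of Lemma~\ref{lem:homogeneous2}(iii), $|\dot f^t(x)|\le 2L$, which gives $\|f^s-f^t\|_\infty\le 2L|s-t|$ on $\Lambda$. This uses that $\Lambda$ avoids the singular rays, so differentiability holds pointwise on $\Lambda$.

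For (iii), I apply Proposition~\ref{prop:cocycle_homotopy} to $\underline F^t$ with relative set $\Zz^t_d$. The derivative $\dot F_j^t=\Theta'(f_j^t)\dot f_j^t$ is bounded uniformly on $I$, and $\dot\varphi_{\underline F^t}=\delta(\cdots A^t_j)$ is built from the transgression cochains $A_j^t$. It therefore suffices to show that the $A^t_j$ are $\Zz^t_d$-polynomially bounded, with the support condition \eqref{eq:supp_homotopy}, uniformly in $t\in I$. This follows verbatim from the argument of Proposition~\ref{prop:coarse_bounded_perturbation}: on $\supp A^t_j$ one has $|f^t_j(x_0)|\le r$, and every other coordinate satisfies $|f^t_i(x_k)|\le r+L+2dRL$. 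Combined with $d(x_k,\Zz_d^t)<R$ and the uniform $\Zz_d^t$-properness from (i), all points lie in a ball $B_{p(R)}(0)$ with $p$ polynomial and uniform in $t$. Lemma~\ref{lemma:poly_bounded_from_support} then bounds $\|A^t_j\|_{\Zz_d^t,k}$ by $C_{k,p}\|\dot F^t\|_\infty$.

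The main obstacle I expect is that the relative set $\Zz^t_d$ moves with $t$. The seminorms $\|\cdot\|_{\Zz^t_d,k}$ therefore change along the path, and the derivative $\dot\varphi_{\underline F^t}$ must be interpreted as a pointwise derivative of functions on $\Lambda^{d}$. I would handle this by noting that on a compact $I$ avoiding $\{1,\infty\}$, part (ii) shows the sets $\Zz_d^t$ are within bounded Hausdorff distance of each other. Consequently the weights $\langle d(x,\Zz^t_d)\rangle^{-k}$ are uniformly comparable, up to constants, across $t\in I$, so the estimates can all be carried out with respect to one fixed reference $\Zz_d^{t_0}$.
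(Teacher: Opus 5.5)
Your proposal is correct and follows the paper's proof essentially step for step. The Lipschitz bound $|f_d^t(x)|\le r+L(1+d(x,\Zz_d^t))$, combined with the uniform constants of Lemma~\ref{lem:homogeneous2}, gives uniform $\Zz_d^t$-polynomial properness of $\underline f^t$ and hence (i) via Proposition~\ref{prop:poly_proper_bounded}. Item (ii) follows by integrating the bounded $t$-derivative, and (iii) is the transgression argument of Proposition~\ref{prop:coarse_bounded_perturbation} run with constants uniform in $t$.

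You should drop your last paragraph. Every estimate in (iii) is taken in the seminorm $\|\cdot\|_{\Zz_d^t,k}$ at the same $t$, and the derivative is pointwise on $\Lambda^{d}$, so no comparison between different $\Zz_d^t$ is needed. Moreover, the bounded Hausdorff distance claim does not follow from (ii) alone. Part (ii) only gives $\Zz_d^s\subset\{|f_d^t|\le r+M\}$. That set lies in a bounded neighbourhood of $\Zz_d^t$ only under an adaptedness-type control of the sublevel sets of $f_d^t$, which is not among the hypotheses.
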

\begin{proof}
As $\sup_{z\in\Zz_d^{t}}|f_d^{t}(z)|\leq r$ by assumption on $r$, one has $|f_d^{t}(x)|\leq r+Ld_{\Zz_d^t}(x)$ by the Lipschitz bound and therefore
\[|f^{t}(x)|
 \leq
 |\underline f^{t}(x)|+r+L d_{\Zz_d^t}(x).\]
Since $f^t$ is uniformly polynomially proper, after possibly increasing the threshold radius $c$ so that $r\leq \frac C2|x|^s$ for all $|x|>c$, one obtains
\[|\underline f^{t}(x)|+ d_{\Zz_d^t}(x)
 \geq
 \frac{C}{2\max\{1,L\}}|x|^s.\]
Therefore the functions $\underline{f}^t$ are also $\Zz_d^t$-proper and the $\Zz_d^t$-properness constants $c,C,s$ can be chosen uniformly for all $t\in [0,\infty]$.
The uniform boundedness (i) is then an immediate consequence of Proposition~\ref{prop:poly_proper_bounded}.

For (ii) the bounded derivatives of Lemma~\ref{lem:homogeneous2}(iii) imply
$$\|f^t-f^s\|_{\ell^\infty(\Lambda)} \leq \int_s^t \| \dot{f}^\tau\|_{\ell^\infty(\Lambda)} \difd\tau <\infty.$$

One can then prove (iii) exactly as Proposition~\ref{prop:coarse_bounded_perturbation}: the derivatives of the cocycles are given in terms of the same explicit transgression cochains $A_j^t$ and even though the boundaries are $t$-dependent, all bounds are uniform.
\end{proof}

We can now prove the Kubo formula in the bulk:
\begin{proof}[Proof of Theorem~\ref{th:kubo_poly_proper}(i)]
Since the argument for odd and even $d$ are identical, assume that $d$ is even. As we said above, we assume without loss of generality $\Xx=\Lambda$ is some specific Delone set and $\Hh_\Xx=\ell^2(\Lambda)$. Any polynomially proper Dirac-type function $f:\Lambda\to \RM^d$ can be extended to a smooth polynomially proper Dirac-type function on $\RM^d$ and we fix some such extension $f^0:\RM^d\to \RM^d$ as the starting point for the homotopy. For $f^\infty$ we choose the appropriate function from Lemma~\ref{lemma:finite_mult} with the same degree as $f^0$. The Kubo formula holds for $f^\infty\rvert_\Lambda$ by Theorem~\ref{thm:kubo_uniformly_finite}. The specific examples are smooth outside the sets $\{x_1=x_2=0\}$ for non-zero degree respectively $\{x_2=0\}$ for degree zero. We may therefore assume that $\Lambda$ does not intersect the rays on which $f^\infty$ is not smooth, e.g. by taking $\Lambda=\ZM^d+\frac12 e_2$.

Along the path of Proposition~\ref{prop:polynomially_bounded_F}, we claim that the map
$$t\in [0,\infty] \mapsto \eta_{F^t}(a_0,\dots,a_d)$$
is continuous for any $a_0,\dots,a_d\in \Ss^1_u(\Lambda)$: Suppose first $a_0,\dots,a_d$ are finite-propagation operators with propagation less than some $\rho>0$. 
Then 
\begin{align*}
\eta_{F^t}(a_0,\dots,a_d)= \sum_{\substack{x_0,x_1,\dots,x_{d+1} \in \Lambda\\ x_0=x_{d+1}}}\varphi_{F^t}(x) \, \Tr_{\CM^N}\left(\prod_{j=0}^{d} \langle\delta_{x_j}, a_j\delta_{x_{j+1}}\rangle\right)
\end{align*}
is a finite sum for each $t$. In fact, one can find some radius $R$ independent of $t$ such that only terms with $|x_k|\leq R$ for every $k=0,\ldots,d$ contribute: 
By Proposition~\ref{prop:poly_proper_bounded}, the support of $\varphi_{F^t}$ intersected with a neighborhood of the multidiagonal $\Delta_\rho$ is contained in some ball $B_{p(\rho)}(0)^{d+1}$ with polynomial $p$ depending only on $\rho$, the Lipschitz constant, and the polynomial lower bounds of $f^t$, all of which are uniform in $t$. 
Therefore, $t\mapsto \eta_{F^t}(a_0,\dots,a_d)$ is continuous as each of the finitely many terms is continuous individually. 
As seen in Proposition~\ref{prop:cont_ext}, the uniform norm-bounds of Proposition~\ref{prop:polynomially_bounded_F}(i) imply that the functionals $\eta_{F^t}$ are equicontinuous with respect to the Fr\'echet topology of $\Ss_u^1(\Lambda)$, hence an $\frac{\epsilon}{3}$ argument shows continuity in general.

Proposition~\ref{prop:coarse_bounded_perturbation} applies to any interval $[a,b]\cap \{1,\infty\} = \varnothing$ by Proposition~\ref{prop:polynomially_bounded_F}(ii) and therefore shows each $[\eta_{F^t}]$ defines the same class in cyclic cohomology. For $e \in M_N(\Ss_u^1(\Lambda)^+)$ such that $x = [e]_0-[s(e)]_0$ this implies together with continuity at $t=1,\infty$ that
$$t\in [0,\infty] \mapsto \langle x, [\eta_{F^t}]\rangle = \eta_{F^t}(e,\dots,e).$$
is constant. We conclude
$$\langle x,[D_{f^0}]\rangle = \langle x, [D_{f^\infty}]\rangle = \Xi_d \langle x, [\eta_{F^\infty}]\rangle = \Xi_d \langle x, [\eta_{F^0}]\rangle.$$
\end{proof}

To apply the same argument to the edge case, we also need to deform the edge to end at a standard representative. Therefore one also needs to deform the edge unitary/projection such that it stays localized to a time-dependent boundary $\Zz_d^t$. 

\begin{lemma}\label{lem:technical_homotopy}
Let $H_+ \in \Ss_u^1(\Lambda) \otimes M_N(\CM)$ be a gapped, self-adjoint operator with finite propagation range $\mathrm{prop}(H_+)=\rho$ and let $\scalarmatrix \in  M_N(\CM)$ be self-adjoint matrix. Fix the open interval $J=(-2\|H_+ \|-2\|\scalarmatrix\|, 2\|H_+\|+ 2\|\scalarmatrix\|)$ and consider the space $C_c^\infty(J)$ of compactly supported smooth functions on $J$ with the monotone $C^k$-seminorms $$\|g\|_{C^k}=\sup_{\alpha\leq k, x\in J}|D^\alpha g(x)|.$$
For any path $f^{t}$ as in Proposition~\ref{prop:polynomially_bounded_F} and corresponding mollified half-space projections
\[\mathcal{P}_t= \frac12\left(1+\Theta(f_d^t)\right), \quad \mathcal{P}^\perp_t= \frac12\left(1-\Theta(f_d^t)\right)=1-\mathcal{P}_t\]
define 
$$\hat{H}_t = \mathcal{P}_t H_+ \mathcal{P}_t + \mathcal{P}_t^\perp \scalarmatrix \mathcal{P}_t^\perp.$$
For $g\in C^\infty_c(J)$ set
$$e_{g,t}= g(\hat{H}_t)-\mathcal{P}_t g(H_+) \mathcal{P}_t - \mathcal{P}_t^\perp g(\scalarmatrix)\mathcal{P}_t^\perp.$$
Then the following holds:
\begin{enumerate}
    \item[(i)] The maps $t\mapsto \langle \delta_x,e_{g,t}\delta_y\rangle$ are continuous for each $x,y\in \Lambda$ and differentiable at every $t \notin \{1,\infty\}$. The derivatives $\dot{e}_{g,t}$ are bounded operators when they exist.
    \item[(ii)] 
    The linear maps $g \in C^\infty_c(J)\mapsto e_{g,t} \in \Ss^1_u(\Zz_d^t\subset \Lambda)\otimes M_N(\CM)$ are equicontinuous: For each $k$ there are $C_k$ and $m_k$ such that
    $$\sup_{t\in [0,\infty]} \|e_{g,t}\|_{\pdv^t,k}\leq C_k \|g\|_{C^{m_k}}.$$
    \item[(iii)] On any compact interval $I\subset [0,\infty]$ with $I\cap \{1,\infty\}=\emptyset$ the linear maps $g\mapsto \dot{e}_{g,t}$ are equicontinuous: for each $k$ there are $C_{k,I}$ and $m_k$ such that
    $$\sup_{t\in I} \|\dot{e}_{g,t}\|_{\pdv^t,k}\leq C_{k,I}\|g\|_{C^{m_k}}.$$
\end{enumerate}

\end{lemma}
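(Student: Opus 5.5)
The plan is to realize $g(\hat H_t)$ by the Helffer--Sjöstrand formula,
$$g(\hat H_t)=\frac{1}{\pi}\int_{\CM}\bar\partial\tilde g(z)\,(z-\hat H_t)^{-1}\,d^2z,$$
with an almost-analytic extension $\tilde g$ supported in a fixed complex neighborhood of $J$ and satisfying $|\bar\partial\tilde g(z)|\leq C_M\|g\|_{C^{M+2}}|\mathrm{Im} z|^M$. The same formula applies to $g(H_+)$ and $g(\scalarmatrix)$. All three statements then reduce to resolvent estimates that are uniform in $t$ and polynomial in $|\mathrm{Im} z|^{-1}$. The basic algebraic step is a resolvent identity for the edge error. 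Write $R_t(z)=(z-\hat H_t)^{-1}$, $R_+(z)=(z-H_+)^{-1}$ and $R_-(z)=(z-\scalarmatrix)^{-1}$. Then
$$R_t-\mathcal P_tR_+\mathcal P_t-\mathcal P_t^\perp R_-\mathcal P_t^\perp = R_t\,K_t(z),$$
where $K_t(z)$ is a finite sum of terms. Each term contains at least one factor $[\mathcal P_t,H_+]$, $\mathcal P_t(1-\mathcal P_t)$, or $\mathcal P_t^2-\mathcal P_t$, multiplied by $R_\pm$ and $\mathcal P_t^{(\perp)}$. The identity follows from $(z-\hat H_t)(\mathcal P_tR_+\mathcal P_t+\mathcal P_t^\perp R_-\mathcal P_t^\perp)=1-K_t'$ by expanding. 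The factors $\Theta'(f_d^t)$, $1-\Theta(f_d^t)^2$ and $[\Theta(f_d^t),H_+]$ vanish unless $|f_d^t|\leq r+L(1+\rho)$. By the Lipschitz bound, this region lies inside a uniform thickening of $\Zz_d^t$. Since $H_+$ has finite propagation, $K_t(z)$ has matrix elements supported near $\Zz_d^t$ with uniform width.

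For (ii), I would first prove a Combes--Thomas type estimate. Since $\|\hat H_t\|\leq \|H_+\|+\|\scalarmatrix\|$ and $\hat H_t$ has propagation $\leq\rho$ uniformly in $t$, conjugating by $e^{\pm\langle x-y\rangle^{k}}$ is unsuitable, so I would commute with polynomial weights instead. Repeated commutators $[\langle X-\lambda\rangle^{k},R_t(z)]$ produce bounds of the form
$$\|\chi_\mu R_t(z)\chi_\lambda\|\leq C_k|\mathrm{Im}z|^{-k-1}\langle d(\lambda,\mu)\rangle^{-k},$$
uniformly in $t$. The same device gives decay in $d(\lambda,\Zz_d^t)$ for $R_tK_t$. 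That decay comes from $K_t$ being supported near $\Zz_d^t$ with finite propagation, combined with off-diagonal decay of $R_t$: a weight $\langle d(\cdot,\Zz_d^t)\rangle^k$ can be moved through $R_t$ at the cost of the Lipschitz commutator of the distance function. The block Hilbert spaces are finite dimensional ($\Lambda$ discrete, $\CM^N$-valued), so the $S^1$ block norms are controlled by operator norms times $N$. Integrating against $\bar\partial\tilde g$ with $M$ large enough absorbs the powers of $|\mathrm{Im}z|^{-1}$ and yields $\|e_{g,t}\|_{\pdv^t,k}\leq C_k\|g\|_{C^{m_k}}$. All constants depend only on $\rho$, $\|H_\pm\|$, $L$, $r$ and the bounded-geometry data, hence are uniform in $t\in[0,\infty]$.

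For (i) and (iii), I would differentiate in $t$. The only $t$-dependence sits in $\mathcal P_t$, and
$$\dot{\mathcal P}_t=\tfrac12\Theta'(f_d^t)\dot f_d^t .$$
By Lemma~\ref{lem:homogeneous2}(iii) and the fact that $\Lambda$ avoids the singular rays, this is a bounded diagonal operator, supported in $\{|f_d^t|\leq r\}$, for $t\notin\{1,\infty\}$. The bound is uniform on compact $I$. Hence
$$\dot R_t=R_t\dot{\hat H}_tR_t,\qquad \dot{\hat H}_t=\dot{\mathcal P}_tH_+\mathcal P_t+\mathcal P_tH_+\dot{\mathcal P}_t-\dot{\mathcal P}_t\scalarmatrix\mathcal P_t^\perp-\mathcal P_t^\perp\scalarmatrix\dot{\mathcal P}_t .$$
This operator is supported near $\Zz_d^t$ with finite propagation. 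The derivatives of the subtracted terms $\mathcal P_tg(H_+)\mathcal P_t$ and $\mathcal P_t^\perp g(\scalarmatrix)\mathcal P_t^\perp$ involve $\dot{\mathcal P}_t$ in the same way, so (iii) follows by the same weighted estimates. Continuity of the matrix elements at all $t$, including $t=1$ and $t=\infty$, follows from pointwise continuity of $f^t$ on $\Lambda$, which holds because the homotopy is locally uniformly continuous. Each matrix element of $R_t(z)$ is a norm-convergent Neumann series in finitely-propagating pieces for $|\mathrm{Im} z|$ large, and an analytic continuation argument then gives continuity of the Helffer--Sjöstrand integral for general $z$. Dominated convergence is justified by the uniform bounds from (ii).

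The main obstacle is the uniform-in-$t$ boundary decay in (ii). It requires the weights $\langle d(\cdot,\Zz_d^t)\rangle$ to interact well with $\hat H_t$ even though $\Zz_d^t$ moves. I would handle this by replacing $d_{\Zz_d^t}$ with the comparable function $|f_d^t|$, using the Lipschitz bound and the uniform properness from Proposition~\ref{prop:polynomially_bounded_F}. This function is uniformly Lipschitz, so its commutators with $\hat H_t$ are bounded by $L\rho\|\hat H_t\|$ uniformly in $t$.
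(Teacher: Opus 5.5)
Your proposal follows the paper's proof: Helffer--Sjöstrand with an almost-analytic extension, the geometric resolvent identity writing $R_t-\mathcal P_tR_+\mathcal P_t-\mathcal P_t^\perp R_-\mathcal P_t^\perp$ as $R_t$ times terms localized near $\Zz_d^t$, uniform off-diagonal resolvent bounds of order $|\mathrm{Im}\,z|^{-k-1}$, and for (iii) the formula $\dot R_t=R_t\dot{\hat H}_tR_t$ with $\dot{\hat H}_t$ localized near $\Zz_d^t$. You obtain the decay from iterated commutators with $d(\cdot,\lambda)$ rather than from the Combes--Thomas estimate, which is an inessential variation. Your treatment of (i) and of the $t$-derivatives of $\mathcal P_tg(H_+)\mathcal P_t$ and $\mathcal P_t^\perp g(\scalarmatrix)\mathcal P_t^\perp$ fills in details the paper leaves implicit. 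For (i) you can skip the analytic continuation: strong continuity of $t\mapsto\mathcal P_t$, by dominated convergence, already gives strong continuity of $R_t(z)$ for every $z\notin\RM$.

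Your last paragraph, however, is wrong and should be dropped.
\begin{itemize}
\item $|f_d^t|$ is not comparable to $d_{\Zz_d^t}$. The Lipschitz bound only gives $|f_d^t|\le r+L(1+d_{\Zz_d^t})$. The uniform properness in Proposition~\ref{prop:polynomially_bounded_F} concerns $|\underline f^t|+d_{\Zz_d^t}$, not $f_d^t$ alone.
\item Concretely, $f_d^t$ may stay slightly above $r$ on large regions far from $\Zz_d^t$. So bounds with weights $\langle f_d^t\rangle^k$ do not control $\|\cdot\|_{\pdv^t,k}$.
\item The same confusion underlies your claim that $\{|f_d^t|\le r+L(1+\rho)\}$ lies in a uniform thickening of $\Zz_d^t$, which is false in general.
\end{itemize}
What is true, and suffices, is a statement about the specific operators. $\mathcal P_t\mathcal P_t^\perp$ and $\dot{\mathcal P}_t$ are supported on $\Zz_d^t$ itself. $[\mathcal P_t,H_+]$ and $\mathcal P_tH_+(1-\mathcal P_t^2)$ are supported within a uniform distance of $\Zz_d^t$. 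The reason is that the Lipschitz function $f_d^t$ on $\RM^d$ cannot pass from above $r$ to below $-r$ between points of $\Lambda$ at distance at most $\rho$ unless some point of $\Lambda$ with $|f_d^t|\le r$ lies nearby. This holds once $r$ is large relative to $L$ and the covering radius of $\Lambda$.

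Moreover, the ``obstacle'' you identify is not one. For any set $\Zz$ one has $\langle d(x,\Zz)\rangle\le 2\langle d(x,y)\rangle\langle d(y,\Zz)\rangle$, so the product estimate of Lemma~\ref{lemma:boundary_estimate} holds with constants independent of $\Zz$. This is how the paper gets uniformity in $t$. It is also exactly what your previous paragraph's mechanism already provides, since $d(\cdot,\Zz_d^t)$ is $1$-Lipschitz for every $t$.
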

\begin{proof}
Since $H_+$ acts on $\ell^2(\Lambda)\otimes \CM^N$ it is automatically locally trace-class (the spaces $\Ss_u^p(\Lambda)$ for different $p$ all coincide). By the Combes-Thomas estimate in the discrete finite-propagation case (see e.g., \cite[Theorem 10.5]{AizenmanWarzel}) the resolvents of $H_+$ and $\hat{H}_t$ have exponential off-diagonal decay, in particular one has for every $k$ some constant $c_k$ such that
$$\biggl\|\frac{1}{\hat{H}_t - z}\biggr\|_{k} \leq \frac{c_k}{\min(1,|\mathrm{Im}(z)|^{k+1})}$$
with the $k$-th seminorm of $\Ss_u^1(\Xx)$.
The analogous estimate also holds for the resolvent of $H_+ $. 

The goal is now to establish for every $k$ an estimate
\begin{equation}
\label{eq:bulk_hs_comparison_tech}
\biggl\|\frac{1}{\hat{H}_t - z}- \mathcal{P}_t \frac{1}{H_+ -z}\mathcal{P}_t - \mathcal{P}^\perp_t \frac{1}{\scalarmatrix-z}\mathcal{P}^\perp_t \biggr\|_{\pdv^t,k}\leq \frac{C_k}{\min(1,|\mathrm{Im}(z)|^{n_k})}
\end{equation}
with some constants $C_k, n_k$ uniformly in $z\in \CM\setminus \RM$ and $t$, which finishes the proof using the smooth functional calculus (see \cite{Davies1995}): One can write as a norm-convergent integral over the resolvent
$$
e_{g,t}=\frac{1}{2\pi}\int_{\CM} \pdv_{\overline{z}}\tilde{g}(z) \left(\frac{1}{\hat{H}_t - z}- \mathcal{P}_t \frac{1}{H_+ -z}\mathcal{P}_t - \mathcal{P}^\perp_t \frac{1}{\scalarmatrix-z}\mathcal{P}^\perp_t\right) dz\wedge d\overline{z}
$$
with a suitable almost analytic extension $\tilde{g}$ of $g$ for which $\pdv_{\overline{z}}\tilde{g}$ vanishes faster than any prescribed power $|\mathrm{Im}(z)|^{n_k}$ at the real line and therefore compensates for the resolvents. From explicit constructions of the extensions one then concludes \cite[Lemma 2.2.1]{Davies1995}
$$\bigl\|e_{g,t}\bigr\|_{\pdv^t,k} \leq \tilde{C}_{k}\|g\|_{C^{m_k}}$$
with constants $\tilde{C}_k, m_k$ that depend only on $C_k$ and $n_k$. 

With the shorthands $P_+=\mathcal{P}_t$, $P_-=\mathcal{P}^\perp_t$, $\hat{R}_z = (\hat{H}_t-z)^{-1}$, $R^\pm_z = (H_\pm -z)^{-1}$ we have the geometric resolvent identity
$$\hat{R}_z - P_+ R^+_z P_+ - P_- R^-_z P_- = \hat{R}_z\big( 2P_+P_- + \sum_{\sigma\in\{+,-\}}(P_\sigma H_\sigma -\hat{H}_t P_\sigma) R_z^\sigma P_\sigma\big)$$
which used $1-P_+^2-P_-^2=2P_+P_-$. We can estimate 
\begin{align*}
\|P_\pm H_\pm -\hat{H}_t P_\pm\|_{\pdv^t,k} &= \|P_\pm H_\pm(1-P_\pm^2)- P_\mp H_\mp P_- P_+\|_{\pdv^t,k} \\
&\leq  c_k (1+\rho)^{m_k} (\|H_+\|_{k} +\| \scalarmatrix\|)
\end{align*}
for suitable powers $m_k$ since $P_\pm H_\pm (1-P_\pm^2)$ is supported in a $\rho$-thickening of the boundary $\Zz_d^{t}$ and $P_-P_+$ is supported on $\Zz_d^{t}$ itself. To conclude   \eqref{eq:bulk_hs_comparison_tech} note that by Lemma~\ref{lemma:boundary_estimate}, multiplication $\Ss^1_u(\Lambda)\times \Ss^1_u(\Zz_d^t\subset \Lambda)\to \Ss^1_u(\Zz_d^t\subset \Lambda)$ is equicontinuous with respect to the boundaries $\Zz_d^t$.

For the derivative we can differentiate the smooth functional calculus under the integral sign and need to show that
$$\frac{d}{dt}\frac{1}{\hat{H}_t - z} = - \frac{1}{\hat{H}_t - z} \left(\frac{\difd}{\difd t}\hat{H}_t\right) \frac{1}{\hat{H}_t - z}$$
is supported close to the boundary. Indeed,
\begin{equation}
\label{eq:Hderivative}
\bigg\|\frac{\difd}{\difd t}\hat{H}_t\bigg\|_{\pdv^t,k} = \| \dot{P_+} H P_+ + P_+ H_+ \dot{P_+} - \dot{P_+}\scalarmatrix P_- - P_-\scalarmatrix \dot{P_+} \, \|_{\pdv^{t},k} 
\end{equation}
can be uniformly bounded since
$$\dot{P_+}(x)= \frac12{\Theta'}(f_d^t(x))\dot{f}_d^t(x)$$
and the derivative $\dot{P_+}$ is uniformly bounded and supported on $\Zz_d^{t}$.
Thus, the entire expression in \eqref{eq:Hderivative} is supported on an $\rho$-thickening of the boundary $\Zz_d^t$. The ideal property again finishes the proof.

\end{proof}

\begin{proof}[Proof of Theorem~\ref{th:kubo_poly_proper}(ii)]
We again only discuss the case of even $d$ in detail. As in the proof of part (i) we use the homotopy between $f^0=f$ and a fixed $f^\infty$, but we assume in addition $f^\infty$ is $\Yy_d^\infty$-adapted for some half-space (this is again true for our functions in Lemma~\ref{lemma:finite_mult}). By making $r$ larger if necessary, we also assume $\Zz_d^0$ and $\Zz_d^\infty$ are coarse boundaries for their respective half-spaces $\Yy_d^0$ and $\Yy_d^\infty$. We now construct a path of unitaries $(u_t)_{t\in [0,\infty]}$ which follows along the deformation.

Recall again we use the specific geometric module $\Hh_\Xx=\ell^2(\Lambda)$. We can then write $[u_0]_1 = \pdv_{{\Yy_d}}(x)$ with a class $x=[e]_0-[s(e)]_0$ represented by a bulk projection $e \in C^*_u(\Lambda)^+\otimes M_N(\CM)$.

There exists a gapped self-adjoint operator $H_+\in C^*_u(\Lambda)\otimes M_N(\CM)$ with finite propagation range $\mathrm{prop}(H_+)<\infty$ whose negative spectral projection defines the same K-theory class as $e$. Indeed, one can construct it immediately as a truncation of $1-2e$. For $\scalarmatrix=2\one_N$ consider the operators $\hat{H}_t$ as in Lemma~\ref{lem:technical_homotopy} with bulk spectral gap $\Delta\subset J$. Defining then boundary unitaries $u_{t}=e^{2\pi\imath \rho_\Delta(\hat{H}_t)}$ we have by Proposition~\ref{prop:boundary_maps}
$$[u_0]_1 = \pdv_{\Yy_d^0}(x), \qquad [u_\infty]_1 = \pdv_{\Yy_d^\infty}(x)$$
with the boundary maps for the respective exact sequences $$0\to C_u^*(\Zz_d^t\subset \Lambda) \to E_{\Yy_d^t} \to C_u^*(\Lambda) \to 0$$ 
at $t = 0$ and $t = \infty$. By Theorem~\ref{th:bulk_edge} we have the bulk-edge correspondence
\begin{equation}
\label{eq:index_doesnt_change}
\langle [u_0]_1, [D_{\underline{f}^{0}}]\rangle= \langle x, [D_{f^{0}}]\rangle=\langle x, [D_{f^{\infty}}]\rangle= \langle [u_\infty]_1, [D_{\underline{f}^{\infty}}]\rangle.
\end{equation}
For the middle equality, we just used that $f^{0}$ and $f^{\infty}$ have the same degree.

We will argue that similarly as in the proof of part (i) the map
$$t \in [0,\infty] \mapsto \langle [u_t]_1, [\eta_{\underline{F}^{t}}]\rangle$$
is constant. Then the proof is finished by
$$\Xi_{d-1} \langle [u_0]_1, [\eta_{\underline{F}^{0}}]\rangle = \Xi_{d-1} \langle [u_\infty]_1, [\eta_{\underline{F}^{\infty}}]\rangle = \langle[u_\infty]_1, [D_{\underline{f}^{\infty}}]\rangle = \langle [u_0]_1, [D_{\underline{f}^{0}}]\rangle.$$

Abbreviating the respective inputs $u_t-\one_N$, $u_t^*-\one_N$ of the Kubo cocycle by $a_j^{t}$ we want to show that
\begin{equation}
\label{eq:tech_cocycle_matrixelements}
t\in [0,\infty]\mapsto \eta_{\underline{F}^t}(a^t_0,\dots,a^t_{d-1})= \sum_{\substack{x_0,x_1,\dots,x_{d} \in \Lambda\\ x_0=x_{d}}}\varphi_{\underline{F}^t}(x) \Tr_{\CM^N}\left(\prod_{j=0}^{d-1} \langle\delta_{x_j}, a^{t}_j\delta_{x_{j+1}}\rangle\right)
\end{equation}
is continuous and constant. Indeed, for $t \notin \{1,\infty\}$, the derivative is formally
\begin{equation}
\label{eq:cycle_derivative}\frac{d}{dt}\eta_{\underline{F}^{t}}(u^*_t, u_t,\dots) = \dot{ \eta}_{\underline{F}^t}(u^*_t,\dots) + \sum_{k=0}^{d-1} \eta_{\underline{F}^t}(u^*_t, \dots \dot{u}_t, \dots, u^*_t, u_t)=0.
\end{equation}
In fact, both terms on the right-hand side are well-defined and $0$; by Proposition~\ref{prop:polynomially_bounded_F} the former term is the pairing of a coboundary with a K-theory class, whereas the latter term is a pairing of $\eta_{F^t}$ with a boundary in cyclic homology by \cite[Proposition 3.3]{Getzler1993OddChern}. While the equality \eqref{eq:cycle_derivative} can be justified, it requires one to show that one can compute the derivative of \eqref{eq:tech_cocycle_matrixelements} term-wise, which is not evident.

We therefore finish the proof by an approximation argument. 
For any compact interval $I\subset [0,\infty]$ and $\epsilon>0$, we claim there exists some $\rho_\epsilon>0$ and approximations $a_{j, \epsilon}^t$ supported in a $\rho_\epsilon$-thickening of $\Zz_d^t$ and with propagation less than $\rho_\epsilon$ such that 
$$\|a_{j, \epsilon}^t-a_{j}^t\|_{\pdv^t,k}< \epsilon, \qquad \forall t\in I$$
for some fixed large enough $k$ and, if $I\cap \{1,\infty\}=\varnothing$, we also can arrange
$$\|\dot{a}_{j, \epsilon}^t-\dot{a}_{j}^t\|_{\pdv^t,k}< \epsilon, \qquad \forall t\in I.$$
By same argument as in the bulk case, the sum
$$\eta_{\underline{F}^t}(a^t_{0,\epsilon},\dots,a^t_{d-1,\epsilon}) = \sum_{\substack{x_0,x_1,\dots,x_{d} \in \Lambda\\ x_0=x_{d}}}\varphi_{\underline{F}^t}(x)  \Tr_{\CM^N}\left(\prod_{j=0}^{d-1}\langle\delta_{x_j}, a^t_{j,\epsilon}\delta_{x_{j+1}}\rangle\right)$$
for fixed $\epsilon$ is then over a finite index set depending only on $\rho_\epsilon$, hence is continuous and one can take term-wise derivatives. By the uniform upper bounds of Proposition~\ref{prop:polynomially_bounded_F}, the approximation errors are controlled:
$$\sup_{t\in I}|\eta_{\underline{F}^t}(u^*_t,u_t,\dots)-\eta_{\underline{F}^t}(a_{0,\epsilon}^t, a_{1,\epsilon}^t,\dots)|=O(\epsilon),$$
which implies continuity, and if $I\cap \{1,\infty\}= \varnothing$ then
\begin{align*}
\sup_{t\in I}|\dot{ \eta}_{\underline{F}^t}(u^*_t,u_t,\dots)-\dot{ \eta}_{\underline{F}^t}(a_{0,\epsilon}^t, a_{1,\epsilon}^t,\dots)|&=O(\epsilon), \\
\sup_{t\in I}|\sum_{k=0}^{d-1}\eta_{\underline{F}^t}(u^*_t, \dots, \dot{u}_t, \dots, u_t)-\sum_{k=0}^{d-1}\eta_{\underline{F}^t}(a_{0,\epsilon}^t, \dots, \dot{a_{k,\epsilon}^t}, \dots, a_{d-1,\epsilon}^t)|&=\,O(\epsilon),
\end{align*}
so the fact that both terms on the right-hand side of \eqref{eq:cycle_derivative} vanish implies $$\sup_{s,t\in I}\big|\eta_{\underline{F}^s}(a^s_{0,\epsilon},\dots,a^s_{d-1,\epsilon})-\eta_{\underline{F}^t}(a^t_{0,\epsilon},\dots,a^t_{d-1,\epsilon})\big|=|I| O(\epsilon).$$ 
Applying this for all compact intervals we conclude $t\in [0,\infty] \mapsto \eta_{\underline{F}^t}(a^t_{0},\dots,a^t_{d-1})$ is constant.

Let us finally give the construction of the $a^t_{j,\epsilon}$ which is immediate from Lemma~\ref{lem:technical_homotopy}: Consider some polynomials $g_K$ whose restrictions to $J$ converge in $C_c^\infty(J)$ to $\exp(2\pi \imath \rho_\Delta\rvert_J)$. Then the approximations $u_{t,K} = e_{g_K, t}$ of $u_t$ each have finite propagation less than some $\rho_K$ and are supported on a $\rho_K$-thickening of $\Zz_d^t$ where $\rho_K$ does not depend on $t$. By Lemma~\ref{lem:technical_homotopy} one has
$$\sup_{t\in I}\|u_{t,K} -u_{t}\|_{\pdv^t, k} \to 0$$
for any compact interval $I$ and if $I\cap \{1,\infty\}=\varnothing$ then also
$$\sup_{t\in I}\|\dot{u}_{t,K} -\dot{u}_{t}\|_{\pdv^t, k}\to 0.$$

In the odd case one argues similarly using the same path of Dirac-type functions $f^t$ but one needs a path of edge projections instead. For that, one defines $H$ as a truncation of $\begin{psmallmatrix}
    0 & U^*\\ U & 0
\end{psmallmatrix}$ for $[U]_1$ a preimage of $y\in K_0(C^*_u(\Zz_d^t\subset \Lambda))$ under the boundary map and sets $\scalarmatrix=\begin{psmallmatrix}
    0 & 1\\ 1 & 0
\end{psmallmatrix}$ since both $H$ and $\scalarmatrix$ need to be chirally symmetric. Then the boundary projections as defined in Definition~\ref{def:edge_invariants} give the required weakly differentiable path of edge projections.

\end{proof}

\appendix

\section{Conventions and index pairing}
\label{sec:index}
We fix an irreducible representation of the Clifford algebra $\CM_d$  (using the same convention as \cite{ProdanSchulzBaldes2016,SchulzBaldesStoiber2023}) by recursively constructing generators $\gamma_1,\dots,\gamma_d$, as follows:
For $d=1$ we set $\gamma_1=1 \in \Bb(\CM)$. For $d>1$ let $\hat{\gamma}_1, \dots, \hat{\gamma}_{d-1}$ be the representation for $d-1$. If $d$ is odd we set $\gamma_i=\hat{\gamma}_i$ for $i=1,\dots,d-1$ and $\gamma_d= \hat{\gamma}_0= (-\imath)^{(d-1)/2}\hat{\gamma}_1\dots\hat{\gamma}_{d-1}$. If $d$ is even then we set
$$\gamma_i= \begin{pmatrix}
    0 & \hat{\gamma}_i \\ \hat{\gamma}_i & 0
\end{pmatrix}, \qquad \gamma_d = \begin{pmatrix}
    0 & -\imath \\ \imath & 0
\end{pmatrix}, \qquad \gamma_0 = \begin{pmatrix}
    1 & 0 \\ 0 & -1
\end{pmatrix}.$$
Here, $\gamma_0$ represents the grading on the representation space $\CM^{N(d)}$, where the dimension of this representation is $N(d)=2^{\lceil (d-1)/2 \rceil}$.

Denote by $\rho_d:\CM_d \to \Bb(\CM^{N(d)})$ the (ungraded) Hilbert-space representation by setting 
$\rho_d(\sigma_i) = \gamma_i$. For even $d$ one obtains in the same way a graded representation $\hat{\rho}_d: \CM_d\to \Bb(\CM^{N(d)})$ 
where $\CM^{N(d)}$ is graded by $\gamma_0$. In odd dimension there is no graded representation on $\CM^{N(d)}$, instead we use the distinct representation
\begin{alignat*}{2}
&\hat{\rho}_d: \CM_{d} \to \Bb(\CM^{N(d)})\hat{\otimes} \CM_1, \qquad \hspace{1.7cm} \hat{\rho}_d(\sigma_i) &&= \gamma_i \hat{\otimes} \ep
\end{alignat*} 
where $\CM^{N(d)}$ is considered to be ungraded. We make the convention that the Bott periodicity isomorphism in $KK$ respectively $KK^{\mathrm{sep}}$ is given by Kasparov product with $1\hat{\otimes}[\hat{\rho}_d]$.

\begin{definition}
Let $A,B$ be two ungraded separable $C^*$-algebras.

A standard $KK_0(A,B)$-cycle is a $KK_0(A,B)$-cycle of the form $((\Hh\oplus \Hh^{\mathrm{op}})\hat{\otimes} B, \phi_2, F)$ with $\Hh, \Hh^{\mathrm{op}}$ two copies of the same Hilbert space and grading operator $\one\oplus (-\one)$, $\phi_2=\phi\oplus \phi$ with a homomorphism $\phi: A \to \Ll_B(\Hh\otimes B)$ and $F=\begin{psmallmatrix}
    0 & T^* \\ T & 0
\end{psmallmatrix}$ a bounded odd self-adjoint operator with off-diagonal part $T$.

A standard $KK_1(A,B)$ cycle is $KK_1(A,B)$-cycle of the form $(\Hh\hat{\otimes}(B\hat{\otimes}\CM_1), \phi, F)$ with $\Hh$ an ungraded Hilbert space, $\phi:A\to \Ll_B(\Hh\otimes B)$, and $F=T\hat{\otimes}\ep$ with $T$ is a bounded self-adjoint operator.
\end{definition}
Every $KK_0$-class and $KK_1$-class admits a representative in standard form \cite{blackadar1998k}.

We make the following conventions:
\begin{enumerate}
\item The group $K_0(A)$ for a $C^*$-algebra $A$ consists of stable norm-homotopy equivalence classes $[e]_0-[s(e)]_0$
with $e\in M_N(A^+)$ matrix-valued projections over the unitization with $s(e)\in M_N(\CM)$ the scalar part such that $e-s(e)\in M_N(A)$. 

\item A homomorphism $KK_0(\CM,A)\to K_0(A)$ is defined by mapping a standard $KK_0(\CM,A)$-cycle to its $K_0(A)$-valued Fredholm index $\Ind_A(T)\in K_0(A)$. It is an isomorphism and its inverse will be denoted $\kappa_0^A: K_0(A)\to KK(\CM,A)$.

\item The group $K_1(A)$ consists of stable norm-homotopy equivalence classes $[u]_1$ of unitaries $u\in M_N(A^+)$ with scalar part $s(u)=\one_N$.

\item A homomorphism $KK_1(\CM,A)\to K_1(A)$ is defined by mapping a standard $KK_1(\CM,A)$-cycle to the class $[-\exp(-\pi \imath T)]_1\in K_1(A)$. It is an isomorphism and its inverse will be denoted $\kappa_1^A: K_1(A)\to KK_1(\CM,A)$.
\end{enumerate}
Under the isomorphism $KK(\CM,\CM)\simeq K_0(\CM) \simeq \ZM$ a standard KK-class represented by an odd self-adjoint Fredholm operator is sent to the usual Hilbert-space index $\Ind(T)=\mathrm{dim}\ker(T)-\mathrm{dim}\Ker(T^*) \in \ZM$.

The following Kasparov products with K-groups are well-known:
\begin{proposition}
\label{prop:basic_products}
Let $[e]_0-[s(e)]_0\in K_0(A)$ and let $[u]_1\in K_1(A)$. 

For the product with a standard $KK_1(A,B)$-cycle,
$$\kappa^A_0([e]_0-[s(e)]_0) \otimes [(\Hh\hat{\otimes}(B\hat{\otimes}\CM_1), \phi, F)]= [e^{-\pi \imath (\phi(e) T \phi(e)+1-\phi(e))}e^{\pi \imath (s(e) T s(e)+1-s(e))}] \in K_1(B)$$
and
$$\kappa^A_1([u]_1) \otimes [(\Hh\hat{\otimes}(B\hat{\otimes}\CM_1), \phi, F)]= \Ind_B(P \phi(u)P+1-P) \in K_0(B)$$
with $T=2P-1$.

For the product with a standard $KK_0(A,B)$-cycle,
\begin{align*}
\kappa^A_0([e]_0-[s(e)]_0) &\otimes [((\Hh\oplus \Hh^{\mathrm{op}})\hat{\otimes} B, \phi_2, F)]\\
&= \Ind_B(\phi(e) T \phi(e) + 1-\phi(e))- \Ind_B(s(e) T s(e) + 1-s(e))  \in K_0(B)
\end{align*}

\end{proposition}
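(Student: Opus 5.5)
The plan is to compute all three products by reducing them to pullbacks along $*$-homomorphisms, plus a single normalization check. The reduction uses that $\kappa^A_0$ and $\kappa^A_1$ are natural and that the Kasparov product is compatible with pullbacks and pushforwards. Throughout I replace $A$ by $A^+$ and extend every standard cycle unitally by $\phi(1)=\one$. The scalar parts $s(e)$, respectively $s(u)=\one_N$, then contribute the correction terms in the formulas. I also amplify by $M_N(\CM)$, so that $e$ and $u$ may be treated as elements of $A^+$ itself.

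\emph{Even classes.} For a projection $e\in A^+$, the class $\kappa^{A^+}_0([e]_0)$ is the class of the homomorphism $\CM\to A^+$, $\lambda\mapsto\lambda e$. Concretely, it is the cycle $(eA^+,\,\cdot\,,0)$ with zero operator. Because the operator vanishes, the connection condition in Kucerovsky's criterion (Theorem~\ref{kucerovsky}) is trivial. The product with any cycle $(\Ee,\phi,F)\in KK_j(A^+,B)$ is therefore the pullback along $\lambda\mapsto\lambda e$. This pullback is the compressed cycle $(\phi(e)\Ee,\,\cdot\,,\phi(e)F\phi(e))$, which I extend by $\pm\one$ on $(1-\phi(e))\Ee$ so that it has the standard form.

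\begin{enumerate}
\item[(a)] For a standard $KK_1$-cycle with $F=T\hat{\otimes}\ep$, the compressed cycle is standard with operator $\phi(e)T\phi(e)+1-\phi(e)$. Applying convention (4) gives $[-e^{-\pi\imath(\phi(e)T\phi(e)+1-\phi(e))}]_1$. On the complement the extension uses $T=\one$, and there $-e^{-\pi\imath}=\one$, so the extension by $\one$ is harmless.
\item[(b)] Subtracting the analogous class for $s(e)$ and multiplying the two unitaries, the two minus signs cancel. This yields exactly the stated unitary in $M_N(B^+)$, whose scalar part is trivial.
\item[(c)] For a standard $KK_0$-cycle with off-diagonal part $T$, the same compression gives the graded module $\phi(e)\Hh\oplus\phi(e)\Hh^{\mathrm{op}}$. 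Its off-diagonal part is $\phi(e)T\phi(e)$, and extending by $\one$ gives $\phi(e)T\phi(e)+1-\phi(e)$. Convention (2) turns this into $\Ind_B(\phi(e)T\phi(e)+1-\phi(e))$, and subtracting the $s(e)$ term gives the stated formula.
\end{enumerate}

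\emph{Odd classes with $KK_1$-cycles.} A unitary $u\in A^+$ with $s(u)=\one$ defines a unital homomorphism $\psi_u\colon C(\SM^1)\to A^+$ with $z\mapsto u$, and $[u]_1=(\psi_u)_*[z]_1$. By naturality of $\kappa_1$ and of the product under pullback, the product with $(\Hh\hat{\otimes}(B\hat{\otimes}\CM_1),\phi,F)$ equals
\[
\kappa^{C(\SM^1)}_1([z]_1)\otimes\psi_u^*[(\Hh\hat{\otimes}(B\hat{\otimes}\CM_1),\phi,F)].
\]
This reduces the claim to the universal algebra $A=C(\SM^1)$. I intend to prove the universal statement by the following chain of identifications.
\begin{itemize}
\item A standard $KK_1(C(\SM^1),B)$-cycle with $T=2P-1$ is the extension class of the Toeplitz-type extension with Busby invariant $a\mapsto P\phi(a)P$, as in Proposition~\ref{prop:check_extension}.
\item By Proposition~\ref{prop:check_extension}, the product of $\kappa_1([z]_1)$ with this extension class is, up to the sign fixed by the boundary-map conventions of \cite{RordamLarsenLaustsen2000}, the index map applied to $[z]_1$.
\item The index map of a Toeplitz extension sends $[z]_1$ to $\Ind_B(P\phi(z)P+1-P)$.
\end{itemize}
Pulling back along $\psi_u$ then gives $\Ind_B(P\phi(u)P+1-P)$.

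\emph{Main obstacle: the sign in the odd--odd case.} The proof of Proposition~\ref{prop:check_extension} itself uses Proposition~\ref{prop:basic_products}, so citing it here is circular. To avoid this, I will compute the product $\kappa_1([z]_1)\otimes[\,\cdot\,]$ directly by Kucerovsky's criterion in the universal case. I will represent $\kappa_1([z]_1)$ by a standard $KK_1(\CM,C(\SM^1))$-cycle whose operator $T_z$ satisfies $-\exp(-\pi\imath T_z)=z$. For the resulting class in $KK_0(\CM,B)$ I will check that the given $\Ind_B$ expression satisfies conditions (i)--(iii) of the criterion, with the connection in (i) verified on a dense submodule. This determines the product up to an overall sign.

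I will then fix that sign on the single generator $B=\CM$, $\Hh=L^2(\SM^1)$, $P$ the Hardy projection, $u=z$. There $P zP+1-P$ is the unilateral shift, with index $-1$, and I will compare this with the value the product must take given the orientation choices in $\kappa_1$ and in the isomorphism $KK(\CM,\CM)\simeq\ZM$. I expect this orientation bookkeeping to be the only genuinely delicate step.
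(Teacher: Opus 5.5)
There are two genuine gaps: the unitization step is not valid for general standard $KK_1$-cycles, and the odd--odd case is not actually proved.

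\textbf{What is correct.} Your even-degree computations are right where they apply. The product with $\kappa_0([e]_0)$ is the pullback along $\lambda\mapsto\lambda e$. This is plain functoriality, $[\psi]\otimes y=\psi^*y$; Kucerovsky's theorem is not the reason, and it would not even apply, since the second factor is a bounded cycle. Compressing by $\phi(e)$ and padding with $\one$ does reproduce conventions (2) and (4), including the cancellation of the two minus signs.

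\textbf{First gap: unitization.} Extending every standard cycle unitally to $A^+$ yields a Kasparov cycle only if $F^2-1$ and $F-F^*$ are compact outright. A standard $KK_1(A,B)$-cycle only has $(T^2-1)\phi(a)$ compact. Proposition~\ref{prop:check_extension} uses the formulas in exactly this generality: $T=2P-1$ with only $(P^2-P)\pi(a)\in B$, e.g.\ $P=\rho(f_d)$ on $L^2(\RM^d)$ in Proposition~\ref{prop:ext_class_unbounded}.

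In that situation, the unitary $-e^{-\pi\imath(\phi(e)T\phi(e)+1-\phi(e))}$ attached to $[e]_0$ alone is not $\one$ modulo compacts as soon as $s(e)\neq 0$. So your step (b), which computes the $[e]_0$- and $[s(e)]_0$-contributions separately and subtracts, has no meaning; only the product of the two exponentials is $\one$ modulo compacts. The same objection applies to pulling back along $\psi_u\colon C(\SM^1)\to A^+$.

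This is repairable:
\begin{enumerate}
\item[(i)] For the odd class, use $C_0(\SM^1\setminus\{1\})$ instead of $C(\SM^1)$.
\item[(ii)] After normalizing $\|T\|\leq 1$, reduce to $T^2=\one$ by the operator homotopy $T_t=\begin{psmallmatrix}T & tS\\ tS & -T\end{psmallmatrix}$, $S=(1-T^2)^{1/2}$, on $\Hh\oplus\Hh$ with representation $\phi\oplus 0$.
\item[(iii)] Check that for every standard cycle both $KK_1$ right-hand sides are defined: the product of exponentials is $\one$ modulo compacts, and the Toeplitz operator is Fredholm. Both vary continuously in $t$, the added summand contributes trivially at $t=0$, and at $t=1$ your unitization argument applies.
\end{enumerate}
In the paper's application of the $KK_0$ formula, Lemma~\ref{lemma:index_kk_product}, $T$ is unitary, so the issue does not arise there.

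\textbf{Second gap: the odd--odd case.} This is the only part of the proposition with content beyond bookkeeping, and your proposal does not prove it. You rightly notice that invoking Proposition~\ref{prop:check_extension} would be circular, but the substitute does not work as stated.
\begin{enumerate}
\item[(i)] Kucerovsky's theorem concerns unbounded cycles with $\phi(a)(D+\imath)^{-1}$ compact. A bounded standard cycle $(\Hh\hat{\otimes}(B\hat{\otimes}\CM_1),\phi,(2P-1)\hat{\otimes}\ep)$ is not of this kind, since that would force $\phi(a)$ to be compact.
\item[(ii)] ``Checking that the $\Ind_B$ expression satisfies (i)--(iii)'' is not meaningful until you exhibit an explicit operator on the internal tensor product, which you never do.
\item[(iii)] The product lives in $KK(\CM,B\hat{\otimes}\CM_2)$ and must be transported to $K_0(B)$ through $1\hat{\otimes}[\hat{\rho}_2]$ and convention (2). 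This transport is exactly where the sign is decided, and it is absent from your plan.
\item[(iv)] An actual computation of the product leaves no sign to be fixed afterwards. Fixing it on the Hardy-space example presupposes an independent evaluation of $\kappa_1([z]_1)\otimes[\,\cdot\,]$ for that module, which is the same computation.
\end{enumerate}
To close the gap you must either carry out the bounded (Connes--Skandalis) product of explicit representatives and push it through $[\hat{\rho}_2]$, or do what the paper does: take the formulas from \cite[Proposition 5.28-31]{ProdanSchulzBaldes2016B} and only translate conventions.
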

\begin{proof}
These can be found in e.g., \cite[Proposition 5.28-31]{ProdanSchulzBaldes2016B},  we translated them to our conventions and added scalar parts where necessary.
\end{proof}

Using these products and Bott periodicity, the index pairings of Definition~\ref{def:index} can now be expressed as Fredholm indices as follows:

\begin{lemma}\label{lemma:index_kk_product}
Let $A$ be an ungraded $C^*$-algebra represented on an ungraded Hilbert space $\Hh$ via a representation $\pi$. Consider an unbounded $KK_d(A, \CM)$-cycle $(\Hh\hat{\otimes} \CM_d, \pi\hat{\otimes}\one_{\CM_d}, D)$ with $\pi:A\to \Bb(\Hh)$, and an invertible operator $D=\sum_{j=1}^d D_j\otimes \sigma_j$ with commuting self-adjoint operators $D_1,\ldots,D_d$.

If $d$ is odd and $[u]_1\in K_1(A)$ then
\begin{equation}
\label{eq:oddpairing}
\langle [u]_1, [D] \rangle = \Ind(P \pi(u) P +1- P)
\end{equation}
for $P=\chi(\rho_d(D)>0)$ the positive spectral projection.

If $d$ is even and $[e]_0-[s(e)]_0\in K_0(A)$ is represented by a projection $e\in M_N(A^+)$ with scalar part $s(e)\in M_N(\CM)$ then
\begin{equation}
\label{eq:evenpairing}
\langle [e]_0-[s(e)]_0, [D]\rangle = \Ind(\pi(e) G\pi(e)+ 1-\pi(e))
\end{equation}
where $G$ is the off-diagonal part of $$\sgn(\rho_d(D))=\begin{pmatrix}
    0 & G^* \\ G & 0
\end{pmatrix}, \qquad \gamma_0=\begin{pmatrix}
    \one & 0 \\ 0 & -\one
\end{pmatrix}$$
with the matrix representation relative to spectral subspaces of the grading operator $\gamma_0$. 
\end{lemma}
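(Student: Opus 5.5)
The plan is to reduce the unbounded $KK_d(A,\CM)$-cycle to a bounded standard cycle, then compute the Kasparov product of $\kappa_i^A(x)$ with it using Proposition~\ref{prop:basic_products}, and finally identify the result in $KK(\CM,\CM_{d+i})\simeq\ZM$ via Bott periodicity with $[\hat\rho_d]$.

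\textbf{Step 1: bounded representative.} Since $D$ is invertible, we may take $F=\sgn(D)$ as bounded representative, using $\Theta$ equal to $\sgn$ on the spectrum of $D$; this is allowed because the spectrum avoids a neighbourhood of $0$. Since the $D_j$ commute and the $\sigma_j$ anticommute, $D^2=\sum_j D_j^2\otimes\one$ and
\[
\sgn(D)=\sum_j D_j|D|^{-1}\hat\otimes\sigma_j .
\]

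\textbf{Step 2: Bott periodicity.} We apply the periodicity isomorphism, i.e.\ the product with $1\hat\otimes[\hat\rho_d]$. Because $\hat\rho_d$ is (up to the $\CM_1$ factor) an isomorphism onto a full matrix algebra, the product amounts to applying $\hat\rho_d$ to the Clifford coefficients. This turns the cycle into a cycle on $(\Hh\otimes\CM^{N(d)})\hat\otimes\CM_{d\bmod 2}$ with operator $\sum_j D_j|D|^{-1}\otimes\gamma_j$, namely $\sgn(\rho_d(D))$ in the even case and $\sgn(\rho_d(D))\hat\otimes\ep$ in the odd case.

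\textbf{Step 3a: even $d$.} The module carries the grading $\gamma_0$, and $\sgn(\rho_d(D))$ is odd, so it has the off-diagonal form with entry $G$. This is a standard $KK_0$-cycle. The third formula of Proposition~\ref{prop:basic_products} with $B=\CM$ gives
\[
\Ind(\pi(e)G\pi(e)+1-\pi(e))-\Ind(s(e)Gs(e)+1-s(e)).
\]
The scalar term vanishes because $G$ is unitary ($\sgn(\rho_d(D))^2=1$) and $s(e)$ commutes with $G$, so $s(e)Gs(e)+1-s(e)$ is invertible.

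\textbf{Step 3b: odd $d$.} Here $\sgn(\rho_d(D))\hat\otimes\ep$ is a standard $KK_1$-cycle with $T=2P-1$, where $P=\chi(\rho_d(D)>0)$. The second formula of Proposition~\ref{prop:basic_products} gives $\Ind(P\pi(u)P+1-P)$.

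\textbf{Main obstacle.} The delicate step is Step 2. One has to check that the Kasparov product with $1\hat\otimes[\hat\rho_d]$ is literally implemented by pushing the Clifford coefficients through $\hat\rho_d$, i.e.\ that the internal tensor product $(\Hh\hat\otimes\CM_d)\hat\otimes_{\CM_d}\CM^{N(d)}$ identifies with $\Hh\otimes\CM^{N(d)}$ and sends $D_j\hat\otimes\sigma_j$ to $D_j\otimes\gamma_j$. Kucerovsky's criterion (Theorem~\ref{kucerovsky}) can be used here with the zero operator on the second factor, since $[\hat\rho_d]$ is represented by a homomorphism. The identification of orientations, i.e.\ that no extra sign appears, rests on the recursive choice of $\gamma_0$ and $\gamma_d$ and should be checked against the normalization $KK(\CM,\CM)\simeq\ZM$ by the Hilbert-space index.
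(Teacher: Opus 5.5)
Your proposal is correct and follows the paper's proof essentially step for step: the bounded transform by $\sgn(D)$ (using invertibility), then Bott periodicity via $1\hat{\otimes}[\hat{\rho}_d]$ to reach a standard cycle, then Proposition~\ref{prop:basic_products}, with the scalar term vanishing because it is the index of a unitary. The paper also first reduces to separable $A$, since the pairing is a $KK^{\mathrm{sep}}$ product and standard cycles are defined for separable algebras; this is harmless because the resulting Fredholm index does not depend on the separable subalgebra chosen, and your Step~2 worry is milder than you suggest, since $[\hat{\rho}_d]$ is represented by a cycle with zero operator and left action by compacts, so the product is just $\sgn(D)\hat{\otimes}\one$ on the internal tensor product and Kucerovsky's criterion is not needed.
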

\begin{proof}
It suffices to prove this in the case that $A$ is separable, as the product in $KK(\CM,\CM)\simeq \ZM$ is expressed by the index of a concrete Fredholm operator which does not depend on the separable subalgebras by compatibility of the inverse limit family. 

Since $D$ is invertible, we use the sign-function for the bounded transform $[(\Hh\hat{\otimes}\CM_d, \pi, D)]=[(\Hh\hat{\otimes}\CM_d, \pi, \sgn(D))]$.
After Bott periodicity, $[(\Hh\hat{\otimes}\CM_d, \pi, D)] \otimes [\hat{\rho}_d]$ therefore corresponds to the standard KK-cycle
$$((\Hh\otimes \CM^{N(d)/2})\oplus (\Hh\otimes \CM^{N(d)/2}), \pi, \begin{pmatrix}
    0 & G^*\\ G & 0
\end{pmatrix})$$
if $d$ is even, and corresponds to
$$((\Hh\otimes \CM^{N(d)})\hat{\otimes}\CM_1, \pi, (2P-1)\hat{\otimes}\ep)$$
if $d$ is odd. 

Therefore, this is an immediate consequence of Proposition~\ref{prop:basic_products}, where we observe that the scalar part $\Ind_\CM(s(e) G s(e) + 1-s(e))=0$ vanishes as it is the index of a unitary operator.
\end{proof}

In general, $D$ might not be invertible, in which case the pairing is computed by passing to the doubled Hilbert module and set \cite[Section~2.1]{Andersson2019}
$$\tilde{\pi}(a)=\begin{pmatrix}
    \pi(a) & 0 \\ 0 & 0
\end{pmatrix}, \qquad \tilde{D}_\mu = \begin{pmatrix}
    D & \mu \\ \mu & -D
\end{pmatrix}$$
for any real number $\mu\neq 0$ and define the index pairings that way.

\section{Schwartz algebras and commutators}\label{sec:schwartz}
Throughout this appendix, let $\Xx$ be a proper metric space with bounded geometry and fix a uniformly discrete coarsely equivalent subset $\Lambda$ and Borel partition $(Q_\lambda)_{\lambda\in \Lambda}$ as in Section~\ref{sec:roe}. For an operator $T$ on a geometric module $\Hh_\Xx$, we denote by $T_{\lambda\mu}:\Hh_\mu\to \Hh_\lambda$ its restriction to fixed blocks.

\begin{lemma}
\label{lemma:schur_test}
\begin{enumerate}
    \item[(i)] Suppose $\Xx$ has bounded geometry. Then any collection of operators $T_{\lambda\mu}\in \mathcal B(\mathcal H_\mu,\mathcal H_\lambda)$ 
with uniformly bounded norms and $T_{\lambda\mu}=0$ whenever $d(\lambda,\mu)>\rho$ defines a bounded finite-propagation operator $T=\sum_{\lambda,\mu\in \Lambda}\chi_\lambda T_{\lambda,\mu} \chi_\mu$ with
$$\|T\|\leq C_\rho \sup_{\lambda,\mu} \|T_{\lambda \mu}\|$$
for some constant that depends only on $\rho$.
\item[(ii)] Suppose $\Xx$ has polynomial growth with exponent $\nu$. Then for any $1 \leq p \leq \infty$ and $k > \nu + 1$, there exists some constant $C_k$ such that any family of operators $(T_{\lambda\mu})_{\lambda,\mu}$ with finite $\|\cdot\|_{k,p}$-norm satisfies
$$\|T\|\leq C_{k} \|T\|_{k,p}.$$
\end{enumerate}
\end{lemma}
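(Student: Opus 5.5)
Both parts follow from the Schur test applied to the block matrix $(\|T_{\lambda\mu}\|)_{\lambda,\mu\in\Lambda}$, so the plan is to reduce the operator norm of $T$ to the $\ell^2(\Lambda)$-norm of this scalar matrix. For $\psi\in\Hh_\Xx$ write $\psi_\mu=\chi_\mu\psi$, so that $\|\psi\|^2=\sum_\mu\|\psi_\mu\|^2$ by orthogonality of the decomposition $\Hh_\Xx=\bigoplus_\lambda\Hh_\lambda$. Then
$$\|T\psi\|^2=\sum_\lambda\Big\|\sum_\mu T_{\lambda\mu}\psi_\mu\Big\|^2\leq \sum_\lambda\Big(\sum_\mu \|T_{\lambda\mu}\|\,\|\psi_\mu\|\Big)^2 .$$
This is the norm squared of the scalar matrix $M_{\lambda\mu}=\|T_{\lambda\mu}\|$ acting on the vector $(\|\psi_\mu\|)_\mu\in\ell^2(\Lambda)$. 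By the Schur test,
$$\|M\|\leq\Big(\sup_\lambda\sum_\mu M_{\lambda\mu}\Big)^{1/2}\Big(\sup_\mu\sum_\lambda M_{\lambda\mu}\Big)^{1/2}.$$
In both parts it therefore suffices to bound the row and column sums of $M$ uniformly. The defining sum $T=\sum\chi_\lambda T_{\lambda\mu}\chi_\mu$ is understood as the strong limit of finite partial sums, and it converges because the bounds below are uniform.

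For (i), we have $M_{\lambda\mu}=0$ when $d(\lambda,\mu)>\rho$. Hence each row sum is at most $\#(B_\rho(\lambda)\cap\Lambda)\,\sup\|T_{\lambda\mu}\|\leq N_\rho\sup\|T_{\lambda\mu}\|$ by bounded geometry \eqref{eqn:bounded_geometry}, and the column sums obey the same bound. So $C_\rho=N_\rho$ works. Finite propagation is immediate: if $f,g\in C_0(\Xx)$ have supports more than $\rho+2\sup_\lambda\diam Q_\lambda$ apart, then every block pair $(\lambda,\mu)$ meeting both supports has $d(\lambda,\mu)>\rho$, so $fTg=0$.

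For (ii), we use $\|T_{\lambda\mu}\|\leq\|T_{\lambda\mu}\|_{S^p}\leq\|T\|_{k,p}\langle d(\lambda,\mu)\rangle^{-k}$, where the first inequality holds because Schatten norms dominate the operator norm. Splitting into dyadic or unit shells $n\leq d(\lambda,\mu)<n+1$ and using polynomial growth \eqref{eqn:poly_growth}, the number of $\mu$ in a shell is at most $C(1+(n+1)^\nu)$. This gives
$$\sup_\lambda\sum_\mu\langle d(\lambda,\mu)\rangle^{-k}\leq C\sum_{n\geq0}(1+(n+1)^\nu)\langle n\rangle^{-k},$$
which is finite for $k>\nu+1$, and the columns are handled symmetrically. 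Hence $\|T\|\leq C_k\|T\|_{k,p}$. The only slightly delicate point is justifying convergence of $T$ without finite propagation. For this I would first apply the bound to truncations $T^{(R)}$, which keep only blocks with $d(\lambda,\mu)\leq R$, and then note that the tails $T-T^{(R)}$ have norm bounded by $\|T\|_{k,p}$ times the tail of the convergent series above. So $T^{(R)}$ is Cauchy in norm, which yields both existence and the estimate.
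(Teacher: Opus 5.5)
Your proposal is correct and takes the same route as the paper. The paper also applies the block Schur test to the norms $\|\chi_x T\chi_y\|$, using bounded geometry for (i) and $\|T\|_{k,p}\sup_x\sum_y\langle d(x,y)\rangle^{-k}<\infty$ for $k>\nu+1$ in (ii); you additionally spell out the reduction to the scalar matrix, the shell count, the finite-propagation check and the norm-Cauchy truncation argument, all of which the paper leaves implicit (the only nitpick is that you may need $N_{\rho+1}$ rather than $N_\rho$ if $B_\rho$ is an open ball, since entries at distance exactly $\rho$ can be nonzero).
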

\begin{proof}
Both statements follow from the general version of the Schur test
$$\|T\| \leq \left( \sup_{x\in \Lambda} \sum_{y\in \Lambda} \|\chi_x T \chi_y\| \right)^\frac12 \left( \sup_{y\in \Lambda} \sum_{x\in \Lambda}\|\chi_x T \chi_y\|\right)^\frac12$$
which is valid for any partition of unity by projections. Since the Schatten norms dominate the operator norm we can bound
\begin{align*}
\sup_{x\in \Xx} \sum_{y\in \Xx} \|\chi_x T\chi_y\|_{S^p}&\leq \norm{T}_{k,p} \sup_{x\in \Xx} \sum_{y\in \Xx} \langle d(x,y)\rangle^{-k},
\end{align*}
hence the right-hand side is finite for large enough for $k>\nu+1$. 
\end{proof}
In particular, (ii) implies that $\Ss^p_u(\Xx, \Hh_\Xx)$ is a closed subspace of $C^*_u(\Xx,\Hh_\Xx)$ in the topology generated by its seminorms, i.e., a Fr\'echet space.
It is moreover an m-convex Fr\'echet algebra, as the special case $\Zz=\Xx$ of the following Lemma shows, and the subalgebras $\Ss^p_u(\Zz\subset \Xx,\Hh_\Xx)$ are closed ideals: 

\begin{lemma}
\label{lemma:boundary_estimate}
Let $\Xx$ be a proper metric space with polynomial growth and fix $\varnothing \neq \Zz\subset \Xx$ closed.
Fix $1 \leq p \leq \infty$ and let $a\in \Ss_u^\infty(\Xx, \Hh_\Xx)$ and $b\in \Ss_u^p(\Zz\subset \Xx, \Hh_\Xx)$. 
Then for each $k>0$, there are constants $c, k_1,k_2$ such that
$$\|ab\|_{\pdv, k, p}\leq c\|a\|_{k_1, \infty}\, \|b\|_{\pdv,k_2, p}.$$
Moreover, $c,k_1,k_2$ can be chosen to be independent of $\Zz$.
\end{lemma}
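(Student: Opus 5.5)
The plan is to expand the product blockwise and control each term by a Young-type convolution estimate, using the triangle inequality twice: once to move the distance weight $\langle d(\lambda,\mu)\rangle^k$ onto the intermediate cell, and once to move the boundary weights from the outer cells onto the factor of $b$. For $\lambda,\mu\in\Lambda$ write
$$\chi_\mu ab\chi_\lambda=\sum_{\kappa\in\Lambda}(\chi_\mu a\chi_\kappa)(\chi_\kappa b\chi_\lambda),$$
which converges in $S^p$ by the decay of both factors. Hölder's inequality $\|XY\|_{S^p}\leq\|X\|\,\|Y\|_{S^p}$, together with $\|\cdot\|\le\|\cdot\|_{S^\infty}$, gives
$$\|\chi_\mu ab\chi_\lambda\|_{S^p}\leq\sum_\kappa \|a\|_{k_1,\infty}\|b\|_{\pdv,k_2,p}\,\langle d(\mu,\kappa)\rangle^{-k_1}\,\langle d(\kappa,\lambda)\rangle^{-k_2}\langle d(\kappa,\Zz)\rangle^{-k_2}\langle d(\lambda,\Zz)\rangle^{-k_2}.$$

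The key step is to bound the three weights attached to $ab$. We use the elementary Peetre-type inequalities $\langle s+t\rangle\le 2\langle s\rangle\langle t\rangle$ and the Lipschitz property of $d(\cdot,\Zz)$:
$$\langle d(\lambda,\mu)\rangle\le 2\langle d(\mu,\kappa)\rangle\langle d(\kappa,\lambda)\rangle,\qquad \langle d(\mu,\Zz)\rangle\le 2\langle d(\mu,\kappa)\rangle\langle d(\kappa,\Zz)\rangle,$$
while $\langle d(\lambda,\Zz)\rangle$ is already carried by $b$. Multiply by $\langle d(\lambda,\mu)\rangle^k\langle d(\lambda,\Zz)\rangle^k\langle d(\mu,\Zz)\rangle^k$ and choose $k_2=2k$ and $k_1=2k+\nu+2$. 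Each $\kappa$-term is then bounded by $c\,\|a\|_{k_1,\infty}\|b\|_{\pdv,k_2,p}\langle d(\mu,\kappa)\rangle^{-(\nu+2)}$, since the excess powers of $\langle d(\kappa,\cdot)\rangle$ attached to $b$ are at least $1$.

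Summation over $\kappa$ is handled by polynomial growth: $\sup_\mu\sum_\kappa\langle d(\mu,\kappa)\rangle^{-(\nu+2)}<\infty$, by the same shell-counting argument as in Lemma~\ref{lemma:schur_test}. Taking the supremum over $\lambda,\mu$ then gives the claimed bound. The constants depend only on $k$, $\nu$ and the growth constant, because the only property of $\Zz$ used is the $1$-Lipschitz bound $|d(\mu,\Zz)-d(\kappa,\Zz)|\le d(\mu,\kappa)$. This is exactly the independence of $\Zz$ asserted in the lemma.

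The main point needing care is the absolute $S^p$-convergence of the $\kappa$-sum and of $\sum_\kappa \chi_\mu a\chi_\kappa b\chi_\lambda$ to $\chi_\mu ab\chi_\lambda$. This follows from the same summable bound together with norm convergence of $\sum_\kappa\chi_\kappa=1$ strongly. Finiteness of $\|a\|_{k_1,\infty}$ and $\|b\|_{\pdv,k_2,p}$ ensures dominated convergence.
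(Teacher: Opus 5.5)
Your proof is correct and follows essentially the same route as the paper. Both arguments expand blockwise over the intermediate cell, apply the two Peetre-type triangle inequalities for $\langle d(\cdot,\cdot)\rangle$ and $\langle d(\cdot,\Zz)\rangle$, and finish with a polynomial-growth summation, which yields $\Zz$-independent constants in both cases. The only difference is which factor pays for the sum over $\kappa$: you use the off-diagonal decay of $a$ (with $k_1=2k+\nu+2$, $k_2=2k$), while the paper takes $k_1>2k$, $k_2>k+\nu+1$ and sums against the decay of $b$.
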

\begin{proof}
The triangle inequality gives 
\[ \langle d(x,z)\rangle \leq 2\langle d(x,y)\rangle \langle d(y,z)\rangle; \qquad  \langle d(x,\Zz)\rangle \leq 2\langle d(x,y)\rangle \langle d(y,\Zz)\rangle\]

For $k_1>2k$ and $k_2>k+\nu +1$ we can estimate
\begin{align*}
\|\chi_{x} ab\chi_z\|_{S^p} \langle d(x,z)\rangle^k \langle d(x,\Zz)\rangle^k \langle d(z, \Zz)\rangle^k \leq   \|a\|_{k_1, \infty}\, \|b\|_{\pdv,k_2,p} \sum_{y\in \Lambda} \langle d(y,z)\rangle^{k-k_2}.
\end{align*} 
By the polynomial growth assumption, the sum on the right-hand side is finite and bounded independently of $z$.
\end{proof}

\begin{lemma}\label{lemma:lipschitz}
Let $f:\Xx\to\RM$ be a Borel function satisfying the large-scale Lipschitz condition $\|f(x)-f(y)\|\leq L (1+d(x,y))$ and identify it with the corresponding (possibly unbounded) multiplication operator on a geometric module $\Hh_\Xx$.

\begin{enumerate}
    \item[(i)] Suppose $\Xx$ has bounded geometry. Then any $a\in \CM_u(\Xx, \Hh_\Xx)$ maps $\Dom(f)\subset \Hh_\Xx$ into itself and the commutator $[f, a]$ extends to a bounded operator in $\CM_u(\Xx, \Hh_\Xx)$ with
\[\|[f,a]\|\leq C_\rho (1+L)\|a\|\]
where $C_\rho$ depends only on $\Xx$ and the propagation of $a$.
\item[(ii)] Suppose in addition that $\Xx$ has polynomial growth. Then for $1 \leq p \leq \infty$, any $a\in \Ss_u^p(\Xx, \Hh_\Xx)$ maps $\Dom(f)$ into itself and the commutator $[f, a]$ extends to a bounded operator in $\Ss^p_u(\Xx, \Hh_\Xx)$.
\end{enumerate}
\end{lemma}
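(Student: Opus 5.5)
The plan is to work entirely at the level of the block matrix elements $a_{\lambda\mu}=\chi_\lambda a\chi_\mu$ with respect to the fixed Borel partition $(Q_\lambda)_{\lambda\in\Lambda}$. The commutator splits blockwise as
$$\chi_\lambda[f,a]\chi_\mu = \chi_\lambda\bigl(f-f(\lambda)\bigr)a_{\lambda\mu} - a_{\lambda\mu}\bigl(f-f(\mu)\bigr)\chi_\mu + \bigl(f(\lambda)-f(\mu)\bigr)a_{\lambda\mu}.$$
The large-scale Lipschitz bound gives $\|\chi_\lambda(f-f(\lambda))\|_\infty\le L(1+\sup_\lambda\diam Q_\lambda)=:L'$, uniformly in $\lambda$. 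It also gives $|f(\lambda)-f(\mu)|\le L(1+d(\lambda,\mu))$. Hence each block satisfies
$$\|\chi_\lambda[f,a]\chi_\mu\|\le \bigl(2L'+L(1+d(\lambda,\mu))\bigr)\|a_{\lambda\mu}\|,$$
and the same holds in every Schatten norm, since multiplication by a bounded function preserves $S^p$.

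For part (i), $a$ has propagation $\rho$, so $a_{\lambda\mu}=0$ unless $d(\lambda,\mu)\le\rho+2\sup\diam Q_\lambda$. On the nonvanishing blocks the factor $1+d(\lambda,\mu)$ is therefore bounded by a constant depending only on $\rho$, and $\|a_{\lambda\mu}\|\le\|a\|$. Lemma~\ref{lemma:schur_test}(i) then gives $\|[f,a]\|\le C_\rho(1+L)\|a\|$ for the formal block sum. The commutator again has finite propagation at most $\rho$. It is uniformly locally compact because $\chi_{B_r(x)}[f,a]=\chi_{B_r(x)}(f-c)a-\chi_{B_r(x)}a(f-c)$, where $c=f(x)$ and, by finite propagation, $f-c$ can be replaced by its truncation to $B_{r+\rho}(x)$, which is bounded uniformly in $x$. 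Uniform finite-rank approximability of $\chi_{B_r(x)}a$ and $a\chi_{B_{r+\rho}(x)}$ then transfers to the commutator.

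For the domain statement, take $\psi\in\Dom(f)$. Writing $fa\psi=\sum_{\lambda,\mu}\chi_\lambda f a_{\lambda\mu}\psi_\mu$ and $f=f(\lambda)+(f-f(\lambda))$ on $Q_\lambda$, the same Schur estimate applied to the family $f(\mu)a_{\lambda\mu}$ together with the bounded corrections shows $fa\psi\in\Hh_\Xx$ and $fa\psi=af\psi+[f,a]\psi$. I would make this rigorous by first taking $\psi$ compactly supported, which is a core for $f$, and then using closedness of $f$. I expect this domain and closure bookkeeping, not the norm estimate, to be the main point needing care, because $f$ is only Borel and may be unbounded on individual cells near infinity while still being locally bounded by the Lipschitz condition.

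For part (ii), with $a\in\Ss^p_u$, the blockwise bound gives
$$\|[f,a]\|_{k,p}\le(2L'+L)\|a\|_{k,p}+L\|a\|_{k+1,p}$$
using $1+d\le 2\langle d\rangle$. So the commutator has all Schwartz seminorms finite, and Lemma~\ref{lemma:schur_test}(ii) shows the block sum is bounded. To see that it lies in $C^*_u$ I would use that finite-propagation truncations $a_R$ converge to $a$ in every seminorm, so $[f,a_R]\to[f,a]$ by the estimate above, and each $[f,a_R]\in\CM_u$ by part (i) provided $a_R$ is uniformly locally compact. That last point follows since $\|a_{\lambda\mu}\|_{S^p}$ is uniformly bounded for $p<\infty$; for $p=\infty$ I would invoke membership of $a$ in $C^*_u$. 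Domain invariance follows from the same core argument as in part (i), now with rapidly decaying rather than finitely supported blocks, which the Schur test again controls.
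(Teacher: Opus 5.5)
Your proposal is correct and follows the paper's own proof. Both use the blockwise Lipschitz estimate $\|\chi_\lambda[f,a]\chi_\mu\|_{S^p}\leq C L(1+R+d(\lambda,\mu))\|a_{\lambda\mu}\|_{S^p}$ fed into the Schur test of Lemma~\ref{lemma:schur_test}, then a core-plus-closedness argument for domain invariance; you additionally check membership of the commutator in $\CM_u$ (finite propagation, uniform local compactness) and in $C^*_u$ via finite-propagation truncations, which the paper leaves implicit.
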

\begin{proof}
For the bounded diameter $R=\sup_\lambda \mathrm{diam}(Q_\lambda)$ the matrix coefficients satisfy
$$\|\chi_x [f,a]\chi_y\|_{S^p}\, \leq\, (|f(x)-f(y)|+ 4(1+R)L)\leq 5L(1+R+d(x,y)) \| \chi_x  a \chi_y\|_{S^p}$$
which immediately implies the norm bounds by Lemma~\ref{lemma:schur_test}. 
As $a$ maps compactly supported functions to compactly supported rapidly decaying functions respectively, it maps a core of $f$ to $\Dom(f)$, which together with the bounded commutator implies $a \Dom(f)\subset \Dom(f)$ (see e.g., \cite[Proposition~2.1]{ForsythMeslandRennie2014}).
\end{proof}

\begin{lemma}\label{lem:normcont}
For any large-scale Lipschitz function $f: \Xx\to \RM$ the time evolution $t \mapsto \alpha_t(a):= e^{\imath ft} a e^{-\imath ft}$ is a strongly continuous action on $C^*_u(\Xx, \Hh_\Xx)$ which leaves $\CM_u(\Xx, \Hh_\Xx)$ invariant. The subalgebra $\Aa\subset C^*_u(\Xx, \Hh_\Xx)$ of elements which are norm-smooth with respect to $\alpha$ is norm-dense. Each $a\in \Aa$ preserves the domain of $f$ and $[f, a]$ extends to a bounded operator in $\Aa$. 
\end{lemma}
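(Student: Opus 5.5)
We view $f$ as generating a unitary group: since $f$ is a real Borel function, $e^{\imath ft}$ is a unitary multiplication operator and $\alpha_t(a)=e^{\imath ft}ae^{-\imath ft}$ is a group of $*$-automorphisms of $\Bb(\Hh_\Xx)$. The first step is to show invariance of $\CM_u(\Xx,\Hh_\Xx)$. Multiplication operators commute with $C_0(\Xx)$, so $g\,\alpha_t(a)\,h = e^{\imath ft}(gah)e^{-\imath ft}$ for $g,h\in C_0(\Xx)$; hence the propagation is unchanged. Uniform local compactness is also preserved, because $\chi_{B_r(x)}\alpha_t(a)=e^{\imath ft}(\chi_{B_r(x)}a)e^{-\imath ft}$ and conjugation by unitaries preserves rank. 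Taking closures, $\alpha_t$ then leaves $C^*_u(\Xx,\Hh_\Xx)$ invariant.

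For strong continuity it suffices, by density and since $\|\alpha_t\|=1$, to treat $a\in\CM_u(\Xx,\Hh_\Xx)$. For such $a$ we write $e^{\imath ft}ae^{-\imath ft}-a$ blockwise using the partition $(Q_\lambda)$. The block $\chi_\lambda(\alpha_t(a)-a)\chi_\mu$ equals $\chi_\lambda\big(e^{\imath(f-f(\lambda))t}ae^{-\imath(f-f(\lambda))t}-a\big)\chi_\mu$, since the constant phases $e^{\pm\imath f(\lambda)t}$ cancel. On blocks with $d(\lambda,\mu)\le\rho=\mathrm{prop}(a)$, the function $f-f(\lambda)$ is bounded by $L(1+\rho+2R)$ on $Q_\lambda\cup Q_\mu$ by the large-scale Lipschitz property. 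Consequently each block has norm at most $2|t|L(1+\rho+2R)\|a\|$. Lemma~\ref{lemma:schur_test}(i) then gives $\|\alpha_t(a)-a\|\le C_\rho|t|(1+L)\|a\|\to0$. This is essentially the same estimate as Lemma~\ref{lemma:lipschitz}(i), which also yields the derivative: $\frac{d}{dt}\alpha_t(a)|_{t=0}=\imath[f,a]$ in norm, by the bounded commutator and a dominated block estimate.

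Density of the smooth subalgebra $\Aa$ is standard. Given $a$, we take mollifications $a_\epsilon=\int\phi_\epsilon(t)\alpha_t(a)\,dt$ with $\phi_\epsilon\in C_c^\infty(\RM)$, an approximate identity. These are norm-smooth, because $\frac{d}{ds}\alpha_s(a_\epsilon)=-\int\phi_\epsilon'(t-s)\alpha_t(a)dt$. By strong continuity, $a_\epsilon\to a$ in norm, so $\Aa$ is dense; it is clearly a $*$-subalgebra.

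The remaining step, which we expect to be the main obstacle, is to identify the generator of $\alpha$ with the commutator $[f,\cdot]$ on the unbounded domain. For $a\in\Aa$ let $\delta(a)=\frac{d}{dt}\alpha_t(a)|_{t=0}\in\Aa$, which exists in norm. For $\psi\in\Dom(f)$ we compute
\[\frac{1}{t}\bigl(e^{\imath ft}ae^{-\imath ft}-a\bigr)\psi=\frac{1}{t}\bigl(e^{\imath ft}-1\bigr)a e^{-\imath ft}\psi+\frac{1}{t}a\bigl(e^{-\imath ft}-1\bigr)\psi.\]
As $t\to0$, the left side converges to $\delta(a)\psi$ and the second term converges to $-\imath af\psi$. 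Hence $\frac1t(e^{\imath ft}-1)a\psi$ converges, after absorbing the difference $a(e^{-\imath ft}\psi-\psi)\to0$, which is controlled by the bounded operators $\frac1t(e^{\imath ft}-1)$ on the relevant vectors. By Stone's theorem this convergence implies $a\psi\in\Dom(f)$, with $\imath fa\psi=\delta(a)\psi+\imath af\psi$. Thus $a\,\Dom(f)\subset\Dom(f)$, and $[f,a]=-\imath\delta(a)$ extends to a bounded element of $\Aa$; it lies in $\Aa$ because $\delta$ commutes with $\alpha$. The care needed is in justifying the limit of the first term. We would handle it by first testing against $\phi\in\Dom(f)$ in the weak sense and then invoking self-adjointness of $f$ to upgrade $a\psi$ to lie in $\Dom(f^*)=\Dom(f)$.
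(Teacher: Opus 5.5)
Your proposal is correct and follows essentially the paper's route: norm continuity on $\CM_u(\Xx,\Hh_\Xx)$ from the bounded-commutator estimate extended by density, standard smoothing for density of $\Aa$, and identification of the generator with $\imath[f,\cdot\,]$. The strong-limit absorption you first sketch does not work as stated, since $\|\frac1t(e^{\imath ft}-1)\|$ can be of order $1/|t|$ and the error term is then only bounded. The weak argument you fall back on does work: differentiate $\langle e^{-\imath ft}\phi, a e^{-\imath ft}\psi\rangle$ at $t=0$ for $\phi,\psi\in\Dom(f)$ and use $\Dom(f^*)=\Dom(f)$. It also supplies the domain invariance for general $a\in\Aa$, which the paper only asserts.
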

\begin{proof}
For $a\in \CM_u(\Xx, \Hh_\Xx)$ one has $a\Dom(f)\subset \Dom(f)$ and the commutator $[f, a]$ is bounded by Lemma~\ref{lemma:lipschitz}. For every $\psi\in \Dom(f)$ one has 
$$
\bigg|\bigg| \frac{d}{dt} \alpha_t(a) \psi   \bigg|\bigg| = \| \alpha_t([f,a])  \psi || \leq ||[f,a]\|\, \|\psi\|
$$
which implies
$$\|\alpha_t(a)-a\|\leq |t| \|[f,a]\|$$
and therefore norm-continuity of $\alpha$ by a density argument.  The subalgebra of elements of a Banach algebra which are norm-smooth with respect to a strongly continuous $\RM$-action is a dense Fr\'echet subalgebra (see e.g., \cite{Bratteli1986}). For each $a\in \Aa$ the derivative $\delta(a)=\lim_{t\to 0} \frac{\alpha_t(a)-a}{t}\in \Aa$ exists in operator norm coincides with the bounded extension of $\imath[f, a]$, hence the commutator extends to an element of $\Aa$.
\end{proof}

We will need that $C^*_u(\Lambda)$ has the same K-theory as $C^*_u(\Xx,\Hh_\Xx)$. This is based on a strong Morita-equivalence. For specific geometric modules this was proven in
\cite[Propositions~4.7-4.8]{SpakulaWillett2013}. We include an elementary proof since we require it for more general modules and also need the relative statement.

\begin{proposition}\label{prop:morita_continuous}
Let $\Xx$ be a proper metric space with bounded geomety. Let $\Hh_\Xx$ be a coarsely faithful geometric module with $\Hh_\lambda\neq \{0\}$ for all $\lambda\in \Lambda$. Choose unit vectors $\xi_\lambda\in\Hh_\lambda$ and set $p:=\bigoplus_{\lambda\in\Lambda} |\xi_\lambda\rangle\langle\xi_\lambda|$.
Then $p\in C_u^*(\Xx,\Hh_\Xx)$ is a full projection and compression induces an
isomorphism
\[ pC_u^*(\Xx,\Hh_\Xx)p\cong C_u^*(\Lambda).\]
Consequently, $C_u^*(\Xx,\Hh_\Xx)$ and $C_u^*(\Lambda)$ are strongly Morita
equivalent.

For a nonempty closed subset $\Zz \subset \Xx$, for $\Lambda_\Zz \coloneqq \{\lambda \in \Lambda: Q_{\lambda} \cap \Zz \neq \varnothing \}$, $p$ acts as a full multiplier projection on $C^*_u(\Zz \subset \Xx, \Hh_\Xx)$ and
\[pC_u^*(\Zz\subset\Xx,\Hh_\Xx)p
 \cong C_u^*(\Lambda_\Zz\subset\Lambda, \ell^2(\Lambda)).\]
The full-corner inclusions thus induce isomorphisms of $K$-theory groups
\[K_*\bigl(C_u^*(\Lambda, \ell^2(\Lambda))\bigr)
 \cong K_*\bigl(C_u^*(\Xx,\Hh_\Xx)\bigr); \qquad  K_*\bigl(C_u^*(\Lambda_\Zz\subset\Lambda)\bigr)
 \cong K_*\bigl(C_u^*(\Zz\subset\Xx,\Hh_\Xx)\bigr).\]
\end{proposition}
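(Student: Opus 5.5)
We first show that $p$ lies in $C^*_u(\Xx,\Hh_\Xx)$ and that the compression is an isomorphism onto $C^*_u(\Lambda)$, then prove fullness, repeat the argument for the relative algebras, and finally invoke Morita invariance of $K$-theory.

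Since $p=\sum_\lambda |\xi_\lambda\rangle\langle\xi_\lambda|$ is block diagonal with respect to the partition, it has propagation at most $\sup_\lambda\diam(Q_\lambda)$. It is uniformly locally compact: by bounded geometry, $\chi_{B_r(x)}p$ has rank at most $N_{r'}$, uniformly in $x$. Hence $p\in\CM_u(\Xx,\Hh_\Xx)$.

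Define the isometry $W:\ell^2(\Lambda)\to\Hh_\Xx$ by $W\delta_\lambda=\xi_\lambda$, so that $WW^*=p$. Then $a\mapsto W^*aW$ is a $*$-isomorphism $pBp\to W^*BW$ for any $B$, with inverse $c\mapsto WcW^*$. We check that both directions preserve the generating dense subalgebras. If $a$ has finite propagation and is uniformly locally compact, then $W^*aW$ has matrix entries $\langle\xi_\lambda,a\xi_\mu\rangle$ that vanish for $d(\lambda,\mu)>\mathrm{prop}(a)+2R$. It is bounded, so it lies in $\CM_u(\Lambda)$. Indeed, on $\ell^2(\Lambda)$ uniform local compactness is automatic from bounded geometry. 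Conversely, for $c\in\CM_u(\Lambda)$, the operator $WcW^*$ has finite propagation, and its localizations $\chi_{B_r(x)}WcW^*$ have rank bounded by $N_{r+\rho}$ uniformly in $x$. Hence $WcW^*\in\CM_u(\Xx,\Hh_\Xx)$, and passing to closures gives $pC^*_u p\cong C^*_u(\Lambda)$. The relative case works the same way: if $a$ is supported near $\Zz$, then $W^*aW$ is supported near $\Lambda_\Zz$, because $d(\lambda,\Zz)\le d(\lambda,\Lambda_\Zz)+R$ and conversely. The same comparison holds for $WcW^*$.

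The main obstacle is fullness, i.e.\ showing that the closed ideal generated by $p$ is all of $C^*_u(\Xx,\Hh_\Xx)$, and likewise for the relative algebra. We would reduce to the generators $a\in\CM_u(\Zz\subset\Xx)$. Uniform approximability gives, for each $\epsilon$, finite-rank approximants of the blocks $\chi_\lambda a$ of rank at most $N$, uniformly in $\lambda$. Write each block approximant as $\sum_{j\le N}|\eta_{\lambda,j}\rangle\langle\zeta_{\lambda,j}|$ with uniformly bounded vectors. Each such rank-one block equals $|\eta\rangle\langle\xi_\lambda|\,p\,|\xi_\lambda\rangle\langle\zeta|$. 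Summing over $\lambda$ in a fixed colour class $j$, we get $\sum_\lambda|\eta_{\lambda,j}\rangle\langle\xi_\lambda| = u_j$ and $\sum_\lambda|\xi_\lambda\rangle\langle\zeta_{\lambda,j}|=v_j$. After splitting $\Lambda$ into finitely many $\rho$-separated subsets, which is possible by bounded geometry, these are bounded finite-propagation operators with uniform rank bounds, hence elements of $C^*_u$. The delicate point is the propagation of $\zeta_{\lambda,j}$. We handle this by first compressing $a$ so that $a=\sum_\lambda\chi_\lambda a\chi_{(Q_\lambda)_\rho}$ and approximating each block $\chi_\lambda a\chi_{(Q_\lambda)_\rho}$, which is uniformly approximable by finite-rank operators. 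Then $a\approx\sum_j u_jpv_j\in\overline{C^*_upC^*_u}$. In the relative case, supports near $\Zz$ keep $u_j,v_j$ in $C^*_u(\Zz\subset\Xx)$ or its multipliers. Finally, a full corner $pAp\subset A$ is strongly Morita equivalent to $A$, and its inclusion induces $K$-theory isomorphisms (Brown's theorem, valid for non-$\sigma$-unital algebras via continuity of $K$-theory). This yields both stated isomorphisms.
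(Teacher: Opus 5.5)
Your proposal is correct and follows essentially the same route as the paper: compression by the isometry $W$ (the paper's $U$) gives the corner isomorphism, and fullness comes from approximating generators by finite-propagation operators of uniformly bounded local rank, then factoring these as finite sums $\sum x\,p\,y$ of locally rank-one operators in $C^*_u$. The difference is only in bookkeeping. You keep whole rows $\chi_\lambda a\chi_{(Q_\lambda)_\rho}$ and split $\Lambda$ into well-separated subsets to make the sums orthogonal, whereas the paper cuts rows into blocks $\chi_\lambda G_\lambda\chi_\mu$ and edge-colours the bipartite block graph into matchings. In both versions the estimate $\|a-\sum_j u_jpv_j\|\lesssim\delta$, which you leave implicit, is the same bounded-multiplicity Schur-type bound.
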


\begin{proof}
We abbreviate $A=C_u^*(\Xx,\Hh_\Xx)$ and, for $T\in\Bb(\Hh_\Xx)$, write
\[T_{\lambda\mu}:=\chi_\lambda T\chi_\mu
 \in\Bb(\Hh_\mu,\Hh_\lambda).\]

Clearly, $p$ has finite propagation and is uniformly locally compact, hence it is in $C^*_u(\Xx,\Hh_\Xx)$.

Let $U:\ell^2(\Lambda)\to p\Hh_\Xx$ be the unitary
$U\delta_\lambda=\xi_\lambda$.  On finite-propagation elements of $C^*_u(\Lambda,\ell^2(\Lambda))$ define as a strongly convergent sum
\[\Phi(a):=UaU^* =\sum_{\lambda,\mu\in\Lambda} a_{\lambda\mu}|\xi_\lambda\rangle\langle\xi_\mu|.\]
The propagation changes by at most $2R$ for $R=\sup_{\lambda}\diam(Q_\lambda)$, and every block of $\Phi(a)$ has rank
at most one, hence $\Phi(a)\in A$. Likewise, if $T\in A$ has finite propagation then $pTp$ is in the image of $\Phi$. Taking the closure one obtains an isomorphism $pAp\cong C_u^*(\Lambda,\ell^2(\Lambda)).$

It remains to prove that $p$ is full. 
By bounded geometry, we get for $s\geq0$ that
\[N_s:=\sup_{\lambda\in\Lambda} \#\bigl(B_s(\lambda)\cap\Lambda\bigr)<\infty.\]
 Let $\mathcal A_{\mathrm{fr}}\subset \CM_u(\Xx, \Hh_\Xx)$ be the set of finite-propagation operators $T$ for which $\sup_{\lambda,\mu}\operatorname{rank}(T_{\lambda\mu})<\infty$. 
We claim that $\mathcal A_{\mathrm{fr}}$ is dense in $A$.  It is enough to
consider a finite-propagation uniformly locally compact operator $T$.  Put
$S=\operatorname{prop}(T)$ and $D=N_{S+2R}$.  If $T_{\lambda\mu}\neq0$, then
$d(\lambda,\mu)\leq S+2R$, so every row and every column of the block matrix of
$T$ contains at most $D$ nonzero blocks.

Fix any $\epsilon>0$ and put $\delta=\epsilon/D$. Uniform local compactness, applied to the family
$\{\chi_{B_R(\lambda)}T:\lambda\in\Lambda\}$, gives an integer $n$ and
finite-rank operators $G_\lambda=\chi_\lambda G_\lambda$ of rank at most $n$ such that
\[\|\chi_{\lambda}T-G_\lambda\|<\delta
 \qquad \forall\lambda\in\Lambda.\]
Define a block matrix $F$ by
\[F_{\lambda\mu}:=
 \begin{cases}
  \chi_\lambda G_\lambda\chi_\mu,
     &d(\lambda,\mu)\leq S+2R,\\
  0,&d(\lambda,\mu)>S+2R.
 \end{cases}\]
Then $\operatorname{rank}(F_{\lambda\mu})\leq n$ and
$\|T_{\lambda\mu}-F_{\lambda\mu}\|<\delta$ for every $\lambda,\mu$.
The Schur test gives
\[\|T-F\|\leq D\delta=\epsilon.\]
The operator $F$ is in $\mathcal A_{\mathrm{fr}}$, hence this proves the density claim.

We now show that every $F\in\mathcal A_{\mathrm{fr}}$ is a linear combination of elements in $ApA$.  Let
$n$ bound the ranks of its blocks and let $D$ bound the number of nonzero
blocks in every row and column.  We can think of
\[\mathcal E_F:=\{(\lambda,\mu):F_{\lambda\mu}\neq0\}\]
as the edge set of a bipartite graph with two copies of $\Lambda$.  This graph
has degree at most $D$.  Since it is countable, a greedy edge-colouring with
$2D-1$ colours decomposes $\mathcal E_F$ into finitely many matchings
$\mathcal M_1,\ldots,\mathcal M_{2D-1}$.

For each
$(\lambda,\mu)\in\mathcal E_F$, choose a singular-value decomposition
\[
F_{\lambda\mu} =\sum_{j=1}^n s_{\lambda\mu,j} |\eta_{\lambda\mu,j}\rangle \langle \zeta_{\lambda\mu,j}|,\]
where $\eta_{\lambda\mu,j}\in\Hh_\lambda$ and
$\zeta_{\lambda\mu,j}\in\Hh_\mu$ are unit vectors, and the singular values satisfy
$0\leq s_{\lambda\mu,j}\leq\|F_{\lambda\mu}\|$.  For a matching
$\mathcal M_c$ and $1\leq j\leq n$, set
\begin{align*}
 x_{c,j}:=\sum_{(\lambda,\mu)\in\mathcal M_c}
     s_{\lambda\mu,j}
    |\eta_{\lambda\mu,j}\rangle \langle \xi_\lambda|, \qquad
 y_{c,j}:=\sum_{(\lambda,\mu)\in\mathcal M_c}
     |\xi_\lambda\rangle \langle \zeta_{\lambda\mu,j}|.
\end{align*}
Because $\mathcal M_c$ is a matching, these sums
are orthogonal direct sums; in particular,
\[\|x_{c,j}\| \leq \|F\|, \qquad \|y_{c,j}\| \leq 1.\]
They have finite propagation and are locally rank-one, so
$x_{c,j},y_{c,j}\in A$.  Moreover, $x_{c,j}p=x_{c,j}$ and
$py_{c,j}=y_{c,j}$, and blockwise one has
\begin{equation}
\label{eq:Fdecomp}F=\sum_{c=1}^{2D-1}\sum_{j=1}^n x_{c,j}p y_{c,j}.
\end{equation}
Thus $\mathcal A_{\mathrm{fr}}\subset \operatorname{span}(ApA)$.  Since
$\mathcal A_{\mathrm{fr}}$ is dense in $A$, the ideal generated by $p$ is all
of $A$. Equivalently stated, $p$ is a full projection.

For the relative case, let $I_\Zz=C_u^*(\Zz\subset\Xx,\Hh_\Xx)$. If $F\in\mathcal A_{\mathrm{fr}}$ is supported near $\Zz$, then the factorization \eqref{eq:Fdecomp} is already term-wise in $I_\Zz p I_\Zz$ as all non-zero $x_{c,j}$ or $y_{c,j}$ are supported in some thickening of $\Lambda_\Zz$. Hence $I_\Zz$ is contained in the closed linear span of $I_\Zz p I_\Zz$.
\end{proof}

Finally, we show that in the polynomial growth case the Schwartz algebras are closed under holomorphic functional calculus and therefore K-theory classes admit rapidly decaying representatives. 

\begin{proposition}\label{prop:spectral_invariance}
If $\Xx$ is a proper metric space with polynomial growth, then for $1 \leq p \leq \infty$, $\Ss^p_u(\Xx, \Hh_\Xx)$ and $\Ss^p_u(\Zz\subset\Xx, \Hh_\Xx)$ are closed under the holomorphic functional calculus of $C_u^*(\Xx, \Hh_\Xx)$.
\end{proposition}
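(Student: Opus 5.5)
We will show spectral invariance: if $a\in\Ss^p_u$ is invertible in the unitization of $C^*_u(\Xx,\Hh_\Xx)$, then its inverse lies in the unitization of $\Ss^p_u$ (and similarly in the relative case). Closure under holomorphic functional calculus then follows from the Cauchy integral formula, since the Fréchet topology is complete and the resolvent is continuous in it.

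The first step is to reduce to derivations. For a coarsely dense $\Lambda$ and $\lambda_0\in\Lambda$, consider the Lipschitz functions $g_{\mu}(x)=d(x,\mu)$. A better choice is to encode decay via commutators with the position multiplication operators. Since $\Xx$ is only a metric space, we will not use commutators directly. Instead we follow the standard weighted Banach algebra route, in the style of Jaffard's and Baskakov's lemmas.

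Concretely, for each $k$ define the Banach space $\Bb_k$ of block matrices with finite norm $\sup_{\lambda,\mu}\|\chi_\mu a\chi_\lambda\|\langle d(\lambda,\mu)\rangle^{k}$, using the operator norm on blocks ($p=\infty$). Using polynomial growth and $\langle d(x,z)\rangle\le 2\langle d(x,y)\rangle\langle d(y,z)\rangle$, one checks that for $k>\nu+1$ the space $\Bb_k$ is a Banach algebra up to an equivalent norm, as in Lemma~\ref{lemma:boundary_estimate}. The key input is Jaffard's theorem in its form for spaces of polynomial growth, with the proof carried over verbatim to operator-valued blocks of uniformly bounded size: $\Bb_k$ is inverse-closed in $\Bb(\Hh_\Xx)$ for $k>\nu$. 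I would follow Jaffard's original argument, which bounds $a^{-1}=\sum(1-a^*a/\|a\|^2)^n a^*/\|a\|^2$ in the weighted norm through a splitting of short and long distances, or alternatively use Sun's Wiener-lemma argument for polynomially growing graphs. This step is the main obstacle. The cleanest justification is probably a Brandenburg-type estimate $\|a^2\|_{\Bb_k}\le C\|a\|_{\Bb_k}^{2-\theta}\|a\|^{\theta}$, which follows from splitting $\sum_y$ according to whether $d(x,y)$ or $d(y,z)$ exceeds $d(x,z)/2$ and applying Cauchy--Schwarz together with polynomial growth to control the count. Spectral radius arguments then give equality of spectral radii. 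Taking the intersection over $k$ yields inverse closedness of $\Ss^\infty_u$.

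For $p<\infty$, write $b=a^{-1}$ with $a=1+x$ and $x\in\Ss^p_u$. Then $b=1-x+xbx$. Here $b\in 1+\Ss^\infty_u$, and $\Ss^p_u$ is an ideal in $\Ss^\infty_u$ by Lemma~\ref{lemma:boundary_estimate} with $\Zz=\Xx$, since Hölder's inequality for Schatten norms allows one factor to be taken in the operator norm. Hence $xbx\in\Ss^p_u$, and therefore $b\in 1+\Ss^p_u$.

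For the relative algebra the same identity applies. If $x\in\Ss^p_u(\Zz\subset\Xx)$ and $b-1\in\Ss^\infty_u(\Xx)$, then the ideal estimate of Lemma~\ref{lemma:boundary_estimate}, applied on both sides (using the adjoint for right multiplication), gives $xbx\in\Ss^p_u(\Zz\subset\Xx)$. Thus inverses in the unitization stay in the unitization. Finally, $\Ss^p_u(\Zz\subset\Xx)\subset C^*_u(\Zz\subset\Xx)$, because rapidly decaying elements can be approximated by finite-propagation truncations supported near $\Zz$ using Lemma~\ref{lemma:schur_test}(ii). Hence spectra computed in the $C^*$-closure agree with those in the Fréchet algebra, which completes the argument.
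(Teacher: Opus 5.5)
Your proposal is correct in outline, but it takes a genuinely different route for the core case $p=\infty$.

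\textbf{Same as the paper.} The reduction of $p<\infty$ and of the relative algebras to $p=\infty$ uses the ideal property from Lemma~\ref{lemma:boundary_estimate}, as the paper does. The only difference is that you use the two-sided identity $b=1-x+xbx$, while the paper uses the one-sided $(1+b)^{-1}-1=-(1+b)^{-1}b$.

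\textbf{Different from the paper.} For $p=\infty$ the paper does exactly what you set aside with ``since $\Xx$ is only a metric space, we will not use commutators directly.'' Following Ji--Yu, it takes the whole family $D_z=\sum_\lambda d(\lambda,z)\chi_\lambda$, $z\in\Lambda$, which are your functions $g_\mu$. It then defines the seminorms $L_n(T)=\sup_z\|\delta_z^n(T)\|$ with $\delta_z=\imath[D_z,\cdot\,]$.
\begin{itemize}
\item Since $|d(\lambda,z)-d(\mu,z)|\le d(\lambda,\mu)$, with equality at $z=\mu$, the $L_n$ are equivalent to the seminorms $\|\cdot\|_{n,\infty}$ via the Schur test.
\item The Leibniz rule for each single derivation $\delta_z$ gives the Blackadar--Cuntz differential-seminorm inequality.
\item Spectral invariance then follows from a standard result.
\end{itemize}
No Jaffard-type off-diagonal decay estimate is needed, and no linear structure on $\Xx$ is used: taking the supremum over base points recovers $d(\lambda,\mu)$.

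\textbf{What your route buys.} Your Wiener-lemma/Brandenburg route is heavier but proves more. Each Banach algebra $\Bb_k$ with $k>\nu$ is itself inverse-closed, not just the Fr\'echet intersection. Continuity of the resolvent in each norm, which your Cauchy-integral step needs, then comes for free from continuity of inversion in a Banach algebra.

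\textbf{A step that needs repair.} Your Cauchy--Schwarz bound $\sum_{d(y,z)\le N}\|\chi_y a\chi_z\|\le (\#B_N)^{1/2}\|a\|$ relies on $\sum_y\|\chi_y a\chi_z\|^2\le\|a\|^2$. That fails when the blocks $\Hh_\lambda$ are infinite-dimensional, as for $\Hh_\Xx=L^2(\RM^d)$, which the paper allows. So ``blocks of uniformly bounded size'' is not available in general. A counterexample: let $a$ map orthonormal vectors of $\Hh_z$ to unit vectors in $\Hh_y$, each scaled by $\epsilon$, for all $y$ in a ball of $N^\nu$ points. Then $\|a\|=\epsilon$, but the column sum of squares is $\epsilon^2N^\nu$.

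The fix is to use the trivial count $\sum_{d(y,z)\le N}\|\chi_y a\chi_z\|\le CN^{\nu}\|a\|$ together with the tail bound $\sum_{d(y,z)>N}\|\chi_y a\chi_z\|\le CN^{\nu-k}\|a\|_{\Bb_k}$, valid for $k>\nu$ by polynomial growth. Choosing $N=(\|a\|_{\Bb_k}/\|a\|)^{1/k}$ gives $\|a^2\|_{\Bb_k}\le C\|a\|_{\Bb_k}^{2-\theta}\|a\|^{\theta}$ with $\theta=1-\nu/k>0$. That is all the Brandenburg/Hulanicki argument requires.
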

\begin{proof}
Since $\Ss^p_u(\Xx, \Hh_\Xx)$ and $\Ss^p_u(\Zz\subset\Xx, \Hh_\Xx)$ are Fr\'echet algebras that the unitizations $\Ss^p_u(\Xx, \Hh_\Xx)^+$ and $\Ss^p_u(\Zz\subset\Xx, \Hh_\Xx)^+$ are inverse-closed in $C^*_u(\Xx, \Hh_\Xx)^+$ \cite{Schweitzer1992}.

Following \cite{JiYu2020UniformRoeRD}, for $z\in\Lambda$ define the unbounded self-adjoint operator $D_z=\sum_{\lambda\in\Lambda}d(\lambda,z)\chi_\lambda$
on $\Hh_\Xx=\bigoplus_{\lambda\in\Lambda}\Hh_\lambda$ and consider the commutator $\delta_z(T)=\imath[D_z,T]$, which is well-defined and a $*$-derivation for $T\in \Ss^\infty_u(\Xx,\Hh_\Xx)$. 
For $n\in\NM$ set
\[L_0(T)=\|T\|,  \qquad L_n(T)=\sup_{z\in\Lambda}\|\delta_z^n(T)\|, \quad n>0.\]

We will show that the family of seminorms $(L_n)_{n\in \NM}$ is equivalent to the defining seminorms of $\Ss^\infty_u(\Xx, \Hh_\Xx)$. For one direction one uses $\chi_\lambda\delta_z^n(T)\chi_\mu= \imath^n \bigl(d(\lambda,z)-d(\mu,z)\bigr)^n\chi_\lambda T\chi_\mu$ and notes that taking $z=\mu$ gives
\[d(\lambda,\mu)^n \|\chi_\lambda T\chi_\mu\| \leq L_n(T).\]
Since $\langle t\rangle^n \leq C_n(1+t^n),$
we obtain
\[\langle d(\lambda,\mu)\rangle^n
\|\chi_\lambda T\chi_\mu\|\leq C_n\bigl(\|\chi_\lambda T\chi_\mu\| + d(\lambda,\mu)^n \|\chi_\lambda T\chi_\mu\| \bigr) \leq C_n\bigl(\|T\|+L_n(T)\bigr),\]
and hence we get the bound
\[\|T\|_{n,\infty} \leq C_n \bigl(\|T\|+L_n(T)\bigr).\]
Conversely, by the reverse triangle inequality, for any $r>\nu+1$, 
\begin{align*}
\|\chi_\lambda\delta_z^n(T)\chi_\mu\| \leq
d(\lambda,\mu)^n
\|\chi_\lambda T\chi_\mu\|\leq
\|T\|_{n+r,\infty}
\langle d(\lambda,\mu)\rangle^{-r}.
\end{align*}
Lemma~\ref{lemma:schur_test} consequently gives $\|\delta_z^n(T)\| \leq C_r\|T\|_{n+r,\infty}$, with $C_r$ independent of $z$.
Taking supremums yields the reverse bound.

Consequently, the seminorms $(L_n)_{n\in\NM}$ and
$(q_{n,\infty})_{n\in\NM}$ define equivalent Fr\'echet topologies on
$\Ss_u^{\infty}(\Xx,\Hh_\Xx)$. By the Leibniz rule the equivalent norms satisfy the Blackadar-Cuntz differential seminorm condition
$$L_n(ab)\leq \sum_{j=0}^n \binom{n}{j} L_{n-j}(a)\, L_j(b),$$ so we can conclude (see e.g., \cite{Schweitzer1993SpectralInvariance}) that $\Ss_u^\infty(\Xx,\Hh_\Xx)$ is spectral invariant in $C^*_u(\Xx,\Hh_\Xx)$.

Spectral invariance of $\Ss^p_u(\Xx,\Hh_\Xx)$ then follows from the fact that $\Ss^p_u(\Xx,\Hh_\Xx)\subset \Ss^\infty_u(\Xx,\Hh_\Xx)$ is a two-sided Fr\'echet ideal: If $b \in \Ss_u^p(\Xx, \Hh_\Xx)$ and $1+b$ is invertible in $C^*_u(\Xx, \Hh_\Xx)$, then spectral invariance of $\Ss^\infty_u(\Xx, \Hh_\Xx)$ implies $(1+b)^{-1}\in \Ss^\infty_u(\Xx, \Hh_\Xx)^+$, and the ideal property gives
\[(1+b)^{-1} - 1 = -(1+b)^{-1} b \in \Ss^p_u(\Xx, \Hh_\Xx).\]

Similarly, spectral invariance for $\Ss^p_u(\Zz\subset\Xx, \Hh_\Xx)$ follows from the fact that it is a two-sided Fr\'echet ideal in $\Ss^\infty_u(\Xx, \Hh_\Xx)$, which is true by Lemma~\ref{lemma:boundary_estimate}. 
\end{proof}

\section{Kubo formula on \texorpdfstring{$\ZM^d$}{Z\textasciicircum d}}
\label{sec:kubozd}
In this section we prove the Kubo formula for the index for the space $\ZM^d$ with the standard Dirac operators. Define two cocycles
$$\eta_{\mathrm{Connes}}(a_0,\dots,a_d)= \begin{cases}  \frac12 \mathrm{Tr}(F_{\mathrm{rad}}[F_{\mathrm{rad}},a_0] [F_{\mathrm{rad}}, a_1]\dots[F_{\mathrm{rad}},a_d])  & d\text{ odd}\\
\frac12\mathrm{Tr}(\gamma_0 F_{\mathrm{rad}} [F_{\mathrm{rad}},a_0] [F_{\mathrm{rad}}, a_1]\dots[F_{\mathrm{rad}},a_d])  & d\text{ even}
\end{cases}
$$
and
$$\eta_{\mathrm{Kubo}}(a_0,\dots,a_d)= \sum_{\sigma\in S_d} (-1)^\sigma \mathrm{Tr}(a_0 [F_{\sigma(1)}, a_1]\dots[F_{\sigma(d)},a_d])$$
with the multiplication operators
$$F_{\mathrm{rad}}(x)= \sum_{i=1}^d \frac{x_i \gamma_i}{|x|}, \qquad F_i(x) = \sgn(x_i).$$
Here, the matrices $\gamma$ are the representation of $\CM_d$ defined in Appendix~\ref{sec:index}. We handle the special case $|x|=0$ by setting $F_{\mathrm{rad}}(0)=\gamma_1$.
\begin{theorem}
\label{th:kubozd}
Let $[D_{\mathrm{std}}]\in KK_{d}(C^*_u(\ZM^d),\CM)$ be the class defined by the standard Dirac operator $D_{\mathrm{std}} = \sum_i X_i \otimes \sigma_i$.

In even dimension $d$, let $e \in \Ss_u^1(\ZM^d) \otimes M_N(\CM)$ be a projection, then
$$ \Xi_d \eta_{\mathrm{Kubo}}(e,\dots,e)=\langle [e]_0, [D_{\mathrm{std}}]\rangle = \Gamma_d \eta_{\mathrm{Connes}}(e,\dots,e).$$

In odd dimension $d$, let $u \in \Ss_u^1(\ZM^d) \otimes M_N(\CM)$ be a unitary, then
$$ \Xi_d \eta_{\mathrm{Kubo}}(u^*,\dots,u)= \langle [u]_1, [D_{\mathrm{std}}]\rangle= \Gamma_d \eta_{\mathrm{Connes}}(u^*,\dots,u).$$ 

The normalization constants are
\begin{equation}\label{eqn:dirac_phase_index}
\Gamma_d = \begin{cases}
    (-1)^{d/2}  & d\text{ even}, \\
    (-1)^{(d+1)/2}2^{-d} & d\text{ odd}
\end{cases} \qquad \text{and} \qquad 
\Xi_d = \begin{cases}
    \frac{(-i\pi)^{d/2}}{d!!} & d\text{ even}, \\
    -\frac{ (-\imath \pi)^{(d-1)/2}}{d!!\; 2^d} & d\text{ odd}
\end{cases}
\end{equation}
\end{theorem}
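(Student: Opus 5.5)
The plan splits Theorem~\ref{th:kubozd} into two equalities: the index pairing equals the Connes cocycle (giving $\Gamma_d$), and the Connes cocycle equals the Kubo cocycle up to the ratio $\Xi_d/\Gamma_d$.

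\textbf{Index pairing and Connes cocycle.} Since $D_{\mathrm{std}}$ is not invertible at $x=0$, I first replace it by a bounded perturbation whose sign is $F_{\mathrm{rad}}$, that is $\sgn(\rho_d(D_{\mathrm{std}}))=F_{\mathrm{rad}}$ with the convention $F_{\mathrm{rad}}(0)=\gamma_1$. The class is unchanged by Lemma~\ref{lemma:homotopy}. Lemma~\ref{lemma:index_kk_product} then writes the pairing as the Fredholm index of $\pi(e)G\pi(e)+1-\pi(e)$, resp.\ $P\pi(u)P+1-P$ with $P=\frac12(1+F_{\mathrm{rad}})$. On $\ZM^d$ one has $\|[F_{\mathrm{rad}},a]\|_{S^{p}}<\infty$ for every $p>d$ when $a\in\Ss^1_u(\ZM^d)$, because the entries of $[F_{\mathrm{rad}},a]$ decay like $|x|^{-1}$ times rapid off-diagonal decay. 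Connes' formula for finitely summable Fredholm modules then expresses the index as $\Gamma_d\,\eta_{\mathrm{Connes}}$ evaluated on $e$ or on $u^*,u,\dots$. The constants $\Gamma_d$ come from the standard normalizations: powers of $-1$, and $2^{-d}$ in the odd case from rewriting $P$ in terms of $F$. I would check the sign conventions on $d=1,2$, using a Toeplitz index and the Chern number of a simple model.

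\textbf{Connes to Kubo.} I would follow the trace-per-unit-volume argument of \cite{Kubota17}, adapted to half-space projections. The character of the Fredholm module is the coarse cocycle $\varphi_{\mathrm{rad}}(x_0,\dots,x_d)=\frac12\operatorname{tr}\bigl(\gamma_0F_{\mathrm{rad}}(x_0)\prod_i(F_{\mathrm{rad}}(x_{i})-F_{\mathrm{rad}}(x_{i-1}))\bigr)$ (without $\gamma_0$ for odd $d$). Under the map of Definition~\ref{def:cyclic_cocycle_def} it yields $\eta_{\mathrm{Connes}}$, although it is only defined on cycles of high enough degree. The key step is a homotopy of the unit-vector field. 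The map $x\mapsto x/|x|$ is deformed to $x\mapsto \sgn(x)/\sqrt d$, with $\sgn$ taken componentwise, through maps to $\SM^{d-1}$ that are bounded perturbations in angle. Proposition~\ref{prop:cocycle_homotopy} is used to show that the pairing is unchanged. Once the field is $\sum_i F_i\gamma_i/\sqrt d$, expanding $\operatorname{tr}(\gamma_0\gamma_{i_0}\cdots\gamma_{i_d})$ kills every term unless the indices are a permutation. Cyclicity of the trace combined with the identity $F_i^2=1$ then reduces the Connes-type expression to $c_d\sum_\sigma(-1)^\sigma\Tr(a_0\prod[F_{\sigma(i)},a_i])$. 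An alternative route is to compare both cocycles with the Chern character of $K$-theory of $C(\SM^{d-1})$, as in the proof of Theorem~\ref{thm:degree_multiplicity}. In either route, $c_d$ collects the trace $\operatorname{tr}(\gamma_0\gamma_1\cdots\gamma_d)$, the factor $d^{-(d+1)/2}$, and the Clifford combinatorics. One then gets $\Xi_d=\Gamma_d c_d$. The $\pi^{d/2}/d!!$ form of $\Xi_d$ would be checked against the known $d=2$ formula $-2\pi\imath\Tr(p[[P_1,p],[P_2,p]])$.

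\textbf{Main obstacle.} The hardest part is summability. $\eta_{\mathrm{Connes}}$ with $d+1$ commutators is trace class only because $d+1>d$, whereas the Kubo cocycle uses $F_i$ that do not decay in the other directions. To handle the mixed expressions, I would expect to work with the algebraic cocycles $\varphi$ directly and regularize by replacing $F_{\mathrm{rad}}$ with $\Theta(|x|/R)$ interpolations. Traceability on the Schwartz algebra would then be checked with Lemma~\ref{lemma:poly_bounded_from_support}. Getting the exact constants is tedious, but it is only a finite computation once reduced to one explicit example, for instance the Bott projection, whose pairing is known to equal $1$.
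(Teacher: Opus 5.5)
The index–Connes step is fine, and the paper uses it too (averaging over shifted Dirac operators $D^{(x)}_{\mathrm{std}}$). The gap is in your Connes-to-Kubo homotopy. You treat the deformation of $x/|x|$ into $d^{-1/2}\sgn(x)$ as a bounded perturbation. But only bounded perturbations of the symbol $f$ are harmless: they change $F=f/|f|$ by $O(1/|x|)$ and so preserve vanishing oscillation. The target $F_{\mathrm{sgn}}=d^{-1/2}\sum_i F_i\gamma_i$ is not of this form. It jumps across entire coordinate hyperplanes, so $[F_{\mathrm{sgn}},a]$ is not compact. The endpoint is therefore not a Fredholm module, and the index side gives you no control there.

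On the cyclic side, Proposition~\ref{prop:cocycle_homotopy} only concerns scalar cochains $1\wedge F_1\wedge\dots\wedge F_d$ with bounded support near the diagonal. It says nothing about the Clifford-valued cochain $\frac12\mathrm{tr}\bigl(\gamma_0F(x_0)\prod_{i=0}^d(F(x_i)-F(x_{i+1}))\bigr)$. Even at $t=0$ this cochain merely decays like $|x|^{-(d+1)}$, and along an angular interpolation it stops being summable. Take $d=2$ and points near the positive $x_1$-axis at distance $n$. Here $F^t=v_\pm(t)+O(1/n)$ on the two sides, with $v_+(t)\neq v_-(t)$ for $t>0$. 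A triple such as $(n,0),(n,-1),(n,1)$ gives two $O(1)$ differences and one tangential difference of order $\kappa(t)/n$. The $\gamma_0$-trace of the product is then of order $c(t)/n$, with $c(t)\neq0$ for $0<t<1$. So $\sum|\varphi_{F^t}|$ diverges logarithmically along the axis, and $\eta_{F^t}$ is not defined on $\Ss^1_u(\ZM^2)$ by the paper's estimates. Your ``main obstacle'' paragraph names the summability problem but gives no mechanism that repairs it.

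The rest of the sketch also does not go through as stated. $\frac12\Tr(\gamma_0F[F,a_0]\cdots[F,a_d])$ carries $d+2$ Clifford factors, so the surviving index patterns are a permutation plus one repeated pair, not permutations. To reach $d$ commutators with $a_0$ left uncommuted you need $F[F,a_0]=a_0-Fa_0F$ and cyclicity of the trace. That requires $a_0[F,a_1]\cdots[F,a_d]$ to be trace class, which fails for $F_{\mathrm{rad}}$. Moreover, fixing constants on a single Bott example presupposes proportionality on all of $K_{d\bmod 2}(C^*_u(\ZM^d))$, which is far larger than $\ZM$; that proportionality is exactly what has to be proven.

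The missing idea is the paper's averaging, which never needs a path between the radial field and the product sign field:
\begin{enumerate}
\item[(i)] Average both sides over the offset $x\in[0,1]^d$ of the reference point. The index is unchanged because $D_{\mathrm{std}}-D^{(x)}_{\mathrm{std}}$ is relatively compact. The Kubo pairing is unchanged because the shift is a coboundary by Proposition~\ref{prop:coarse_bounded_perturbation}.
\item[(ii)] Average over $\beta\ZM^d$ with a translation-invariant probability measure, using the limit operators $L_\omega$. Covariance then turns both cochains into kernels $2^d\det(x_1,\dots,x_d)$ and $C_d\det(x_1,\dots,x_d)$ against the same trace per unit volume, giving $\eta^{\mathrm{avg}}_{\mathrm{Connes}}=A_d\,\eta^{\mathrm{avg}}_{\mathrm{Kubo}}$.
\item[(iii)] Remove the hull average. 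Each $L_\omega(e)$ is a weak limit of translates $u^{-x_n}eu^{x_n}$ with uniform Schwartz bounds. Dominated convergence and translation invariance of both pairings then give $\eta(L_\omega(e),\dots)=\eta(e,\dots)$.
\end{enumerate}
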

The relation between the index pairing and the Connes cocycle is completely general: it follows from the Calderón–Fedosov index formula for sufficiently summable Fredholm modules, see e.g., \cite{ProdanSchulzBaldes2016B}. Using crossed-product algebras, it was shown in \cite[Chapter 6]{ProdanSchulzBaldes2016} that for magnetically covariant ergodic families, the same index pairing is computed by another cocycle
\begin{equation}
\label{eq:chern_cocycle}
\eta_{\mathrm{Chern}}(a_0,\ldots,a_d)= \sum_{\sigma \in S_d} (-1)^{\sigma} \Tt(a_0 \prod_{i=1}^d [X_{\sigma(i)}, a_i])
\end{equation}
with $\Tt$ the trace-per-unit volume (which is well-defined ergodic families of operators). It was pointed out in \cite{Kubota17} that the proof also works for non-ergodic operators, since the uniform Roe algebra can be written as a crossed-product algebra $C^*_u(\ZM^d)\simeq \ell^\infty(\ZM^d)\rtimes \ZM^d$ and a trace-per-unit volume can be constructed using ultrafilters. We will use that same argument to compute the pairing with $\eta_{\mathrm{Kubo}}$ as it (after some averaging) agrees with $\eta_{\mathrm{Chern}}$ up to normalization constants. We also note that, up to signs, the normalization constants coincide with those derived in \cite{LudewigThiang25} after accounting for the factor of $2^d$ which comes from replacing projections by sign functions.

An important ingredient in the proof is the fact that elements of the uniform Roe algebra admit generalized limits and can be averaged over. 
The easiest way to define limit operators is via the isomorphisms $$C^*_u(\ZM^d)\simeq \ell^\infty(\ZM^d)\rtimes \ZM^d \simeq  C(\beta \ZM^d)\rtimes \ZM^d$$
where one uses $\ell^\infty(\ZM^d)\simeq C(\beta \ZM^d)$ for $\beta\ZM^d$ the Stone-Čech compactification of $\ZM^d$. By the universal property of crossed products, every free ultrafilter $\omega \in \beta\ZM^d\setminus \ZM^d$ therefore generates a covariant representation $L_\omega: C^*_u(\ZM^d)\to \Bb(\ell^2(\ZM^d))$. 
For $\omega\in \ZM^d$ this so-called limit operator $L_\omega(a)$ is just a translate of $a$, whereas if $\omega\in \beta\ZM^d\setminus \ZM^d$ is a free ultrafilter then the limit can be non-trivial. More explicitly, the matrix elements of $L_\omega(a)$ are obtained by taking the ultrafilter limit
\begin{equation}
\label{eq:limit_op}
\langle \delta_y, L_\omega(a)\delta_z\rangle =\lim_{x\to \omega} \langle \delta_{y+x}, a\delta_{z+x}\rangle.
\end{equation}
This definition is a special case of the groupoid definition \cite[Appendix C]{SpakulaWillett2017Metric} and also agrees with other notions of limit operator. 

\begin{lemma}
\label{lemma:weaklimit}
With $(u^x)_{x\in \ZM^d}$ the shift operators on $\ell^2(\ZM^d)$ there exists for every $a \in C^*_u(\ZM^d)$ and every $\omega \in \beta\ZM^d$ some sequence $(x_n)_{n\in \NM}$ such that
$$L_\omega(a) = \wlim_{n\to \infty}u^{-x_n} a u^{x_n}.$$
\end{lemma}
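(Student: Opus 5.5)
The plan is a compactness argument in the weak operator topology, applied to the uniformly bounded family of translates $u^{-x} a u^{x}$. First I would check that the matrix elements of the translates match \eqref{eq:limit_op}: since $u^x\delta_z=\delta_{z+x}$, one has
$$\langle \delta_y, u^{-x}au^{x}\delta_z\rangle = \langle \delta_{y+x}, a\delta_{z+x}\rangle .$$
So for each fixed pair $(y,z)$, $L_\omega(a)$ is the $\omega$-limit of the bounded function $x\mapsto \langle \delta_y, u^{-x}au^{x}\delta_z\rangle$. Also $\|u^{-x}au^x\|=\|a\|$ for every $x$.

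Second, I would obtain a single sequence $(x_n)$ along which all matrix elements converge to their $\omega$-limits simultaneously. Enumerate the countable set of pairs $(y_k,z_k)\in\ZM^d\times\ZM^d$ and write $c_k(x)=\langle \delta_{y_k+x}, a\delta_{z_k+x}\rangle$ and $c_k=\lim_{x\to\omega}c_k(x)$. For each $n$, the set
$$S_n=\{x\in\ZM^d: |c_k(x)-c_k|<1/n \text{ for } k=1,\dots,n\}$$
belongs to $\omega$, because it is a finite intersection of sets in $\omega$. In particular $S_n$ is nonempty, so I can pick some $x_n\in S_n$. If $\omega$ is free, every set in $\omega$ is infinite, so I may additionally choose the $x_n$ distinct, though this is not needed. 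If $\omega=x_0\in\ZM^d$ is principal, then $S_n\ni x_0$ and the constant choice $x_n=x_0$ works trivially. In either case, $c_k(x_n)\to c_k$ for every $k$.

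Third, I would pass from matrix-element convergence to weak convergence. For finitely supported $\psi,\phi$, the quantity $\langle \psi, u^{-x_n}au^{x_n}\phi\rangle$ is a finite linear combination of the $c_k(x_n)$, so it converges to $\langle \psi, L_\omega(a)\phi\rangle$. Since the family is uniformly bounded by $\|a\|$ and $L_\omega(a)$ is bounded (it is a representation, so $\|L_\omega(a)\|\le\|a\|$), an $\epsilon/3$ density argument extends this to all $\psi,\phi\in\ell^2(\ZM^d)$. This gives $L_\omega(a)=\wlim_n u^{-x_n}au^{x_n}$.

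I expect the only delicate point to be identifying $L_\omega(a)$, defined abstractly via the universal property of the crossed product, with the operator whose matrix elements are given by \eqref{eq:limit_op}. I would take \eqref{eq:limit_op} as given by the paper. If a check is wanted, I would verify it on the generators: for $g\in\ell^\infty$ the representation sends $g$ to multiplication by $y\mapsto \lim_{x\to\omega}g(y+x)$, and it sends $u^v$ to $u^v$. Both sides are continuous in $a$ entrywise, so the formula extends by density from finite sums $\sum g_v u^v$ to all of $C^*_u(\ZM^d)$.
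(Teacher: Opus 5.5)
Your proof is correct and follows essentially the same route as the paper. The paper writes $L_\omega(a)$ as the weak-operator ultrafilter limit of the orbit $u^{-x}au^{x}$, so it lies in the weak closure of that orbit, and then extracts a sequence using metrizability of norm-bounded sets in the weak operator topology. Your diagonal choice of $x_n\in S_n$, followed by the density argument, is an explicit, hands-on version of that metrizability step.
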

\begin{proof}
We can write \eqref{eq:limit_op} as 
$$L_\omega(a) =\wlim_{x\to \omega} (u^{-x}a u^x)$$
with the ultrafilter limit taken with respect to the weak operator topology. Consider the orbit $(u^{-x} a u^{x})_{x\in \ZM^d}$ of some fixed $a\in C^*_u(\ZM^d)$. The limit with respect to some ultrafilter is the limit of a suitable net and therefore lies in the WOT-closure of the orbit. As norm-bounded sets are metrizable in the weak-operator topology, there also exists a sequence which converges to the limit point.
\end{proof}

The collection $(L_\omega(a))_{\omega\in \beta\ZM^d}$ of all limit operators of an element $a\in C^*_u(\ZM^d)$ is a strong-operator continuous family of operators on $\ell^2(\ZM^d)$. The space $\beta\ZM^d$ also carries a $\ZM^d$-action which corresponds to the translations on $\ell^\infty(\ZM^d)$, hence it is a covariant family. Since $\ZM^d$ is amenable one can construct translation-invariant averages:
\begin{lemma}
There exists a translation-invariant probability measure $\PM$ on $\beta\ZM^d$.
\end{lemma}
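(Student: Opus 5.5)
We need a translation-invariant probability measure on $\beta\ZM^d$. The plan is to use amenability of $\ZM^d$, either through an invariant mean on $\ell^\infty(\ZM^d)$ or through a Følner-averaging and weak-$*$ compactness argument. Since $\ell^\infty(\ZM^d)\simeq C(\beta\ZM^d)$ equivariantly (the $\ZM^d$-action on $\beta\ZM^d$ is the Stone–Čech extension of translation, so it corresponds to translation of functions), the Riesz–Markov–Kakutani theorem turns an invariant state on $C(\beta\ZM^d)$ into an invariant Radon probability measure. So it suffices to produce a translation-invariant state on $\ell^\infty(\ZM^d)$.

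To construct such a state I will take the Følner boxes $\Lambda_n=\{-n,\dots,n\}^d$ and the states
$$\tau_n(g)=\frac{1}{|\Lambda_n|}\sum_{x\in\Lambda_n} g(x),\qquad g\in \ell^\infty(\ZM^d).$$
Each $\tau_n$ is a state: it is positive, and $\tau_n(1)=1$. For a fixed $y\in\ZM^d$ I will estimate
$$|\tau_n(g(\cdot+y))-\tau_n(g)|\leq \frac{|\Lambda_n\,\triangle\,(\Lambda_n+y)|}{|\Lambda_n|}\,\|g\|_\infty\leq \frac{2d|y|_\infty}{2n+1}\,\|g\|_\infty\to 0.$$
The state space of the unital $C^*$-algebra $C(\beta\ZM^d)$ is weak-$*$ compact by Banach–Alaoglu. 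Hence the net $(\tau_n)$ has a weak-$*$ cluster point $\tau$, which I could equivalently obtain as $\tau(g)=\lim_{n\to\omega}\tau_n(g)$ along a free ultrafilter $\omega$ on $\NM$. This $\tau$ is again a state, and by the estimate above it satisfies $\tau(g(\cdot+y))=\tau(g)$ for every $y$.

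Finally, I will apply Riesz–Markov–Kakutani to obtain a regular Borel probability measure $\PM$ on $\beta\ZM^d$ with $\tau(g)=\int g\,d\PM$. Invariance of $\tau$ under the action, together with uniqueness of the representing measure, then gives $\PM(y+E)=\PM(E)$ for Borel sets $E$, where $y+E$ denotes the extended action.

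The only point I expect to need care is the identification of the action. I must check that the homeomorphism of $\beta\ZM^d$ induced by $y$ is dual to translation on $\ell^\infty$; this holds by the universal property of the Stone–Čech compactification. Everything else is routine.
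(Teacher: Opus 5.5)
Your proposal is correct and follows essentially the same route as the paper. Both average over a Følner sequence, pass to a limit along a free ultrafilter (equivalently, a weak-$*$ cluster point), and read the resulting translation-invariant state on $\ell^\infty(\ZM^d)\simeq C(\beta\ZM^d)$ as a probability measure via Riesz--Markov--Kakutani. Your explicit boxes, symmetric-difference estimate and uniqueness argument simply spell out details the paper leaves implicit.
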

\begin{proof}
Choose a Følner sequence $\Lambda_n$ converging to $\ZM^d$ and a free ultrafilter $\overline{\omega}\in \beta\NM\setminus \NM$. Then
$$\Tt(f) = \lim_{n\to \overline{\omega}} \frac{1}{|\Lambda_n|} \sum_{x\in \Lambda_n} f(x)$$
defines a translation-invariant state on $\ell^\infty(\ZM^d)\simeq C(\beta \ZM^d)$ or equivalently a probability measure $\PM$ on $\beta\ZM^d$ such that
$$\Tt(f)=\int_{\beta\ZM^d} f(\omega) \difd\PM(\omega).$$ 
\end{proof}

To prove equality between the index and a Kubo-style formula one can now directly adapt the proof of \cite[Theorem~6.4.1]{ProdanSchulzBaldes2016}:

\begin{proposition}
For any $x\in \RM^d$, define $\eta^{(x)}_{\mathrm{Connes}}$ and $\eta_{\mathrm{Kubo}}^{(x)}$ by the same expressions as $\eta_{\mathrm{Connes}}$ respectively $\eta_{\mathrm{Kubo}}$ but with the shifted operators $F_{\mathrm{rad}}^{(x)}=F_{\mathrm{rad}}(\cdot-x)$ respectively $F_{i}^{(x)}=F_{i}(\cdot-x)$. 
Then 
$$\eta^{\mathrm{avg}}_{\mathrm{Connes}}(a_0,\dots,a_d) = \int_{\beta\ZM^d} \int_{\Cc^d} \eta^{(x)}_{\mathrm{Connes}}(L_\omega(a_0),\dots,L_{\omega}(a_d))\difd{x}\,\difd{\PM}(\omega)$$
and
$$\eta^{\mathrm{avg}}_{\mathrm{Kubo}}(a_0,\dots,a_d) = \int_{\beta\ZM^d} \int_{\Cc^d} \eta^{(x)}_{\mathrm{Kubo}}(L_\omega(a_0),\dots,L_{\omega}(a_d))\difd{x}\,\difd{\PM}(\omega)$$
averaged over the cube $\Cc^d=[0,1]^d$ are also cyclic cocycles and
\begin{equation}
\label{eq:norm_consts2}
\eta^{\mathrm{avg}}_{\mathrm{Connes}} = A_d \eta^{\mathrm{avg}}_{\mathrm{Kubo}}, \qquad A_d=\begin{cases}
    \frac{(\imath \pi)^{d/2}}{d!!} \qquad \text{if d is even}, \\
    \frac{ (\imath \pi)^{(d-1)/2}}{d!!} \qquad \text{if d is odd}
    \end{cases}
\end{equation}
\end{proposition}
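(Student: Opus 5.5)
The plan is to follow \cite[Theorem~6.4.1]{ProdanSchulzBaldes2016} closely. Both averaged functionals will be rewritten as multiples of the Chern cocycle \eqref{eq:chern_cocycle}, built with the trace per unit volume $\Tt$ coming from $\PM$. We will compute the two multiples separately and take their ratio. The Connes multiple will be imported from the cited reference, and the Kubo multiple computed by hand.

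\emph{Step 1: well-definedness and the cocycle property.} For $a\in\Ss^1_u(\ZM^d)\otimes M_N(\CM)$, formula \eqref{eq:limit_op} gives $\|L_\omega(a)\|_{k}\leq\|a\|_{k}$ for all $k$ and all $\omega$. Each $L_\omega$ is a $*$-homomorphism. For fixed $x$, the functionals $\eta^{(x)}_{\mathrm{Connes}}$ and $\eta^{(x)}_{\mathrm{Kubo}}$ are cyclic cocycles given by polynomially bounded coarse cochains, with bounds uniform in $x\in\Cc^d$, as in Proposition~\ref{prop:poly_proper_bounded} and Lemma~\ref{lem:cyclic_convergence}. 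Hence the integrands are uniformly bounded in $(x,\omega)$. They are measurable, because matrix elements of $L_\omega(a)$ are continuous in $\omega\in\beta\ZM^d$ and the dependence on $x$ is piecewise constant. Linear combinations of cyclic cocycles remain cyclic cocycles, and so do their integrals, since the defining identities are linear.

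\emph{Step 2: reduction to a trace per unit volume.} We write each trace as a lattice sum $\sum_{x_0,\dots,x_d}\varphi(x_0-x,\dots,x_d-x)\prod_j L_\omega(a_j)_{x_jx_{j+1}}$, with the cochain $\varphi=\varphi_{\mathrm{Connes}}$ or $\varphi_{\mathrm{Kubo}}$. Covariance gives $L_{\omega+v}(a)=u^{-v}L_\omega(a)u^{v}$, and $\PM$ is translation invariant. Substituting $x_j=v+y_j$ with $v\in\ZM^d$, the sum over $v$ combines with the integral over the unit cube $\Cc^d$ into an integral over all of $\RM^d$. The result is
$$\eta^{\mathrm{avg}}(a_0,\dots,a_d)=\sum_{y_1,\dots,y_d}\Psi(0,y_1,\dots,y_d)\,\Tt\Bigl(\prod_j (a_j)_{y_j y_{j+1}}\text{-terms}\Bigr),\qquad \Psi(y)=\int_{\RM^d}\varphi(y_0-x,\dots,y_d-x)\,\difd x.$$
This is the step I expect to be the main obstacle. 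The interchange of the lattice sum, the $\omega$-integral and the $x$-integral has to be justified. The difficulty is that $\varphi$ is not integrable term by term, only after the cancellations in the cochain. I would first prove the formula for finite-propagation $a_j$. In that case only finitely many $y$ contribute, and $x\mapsto\varphi(y-x)$ is compactly supported by the coarse cochain property, so Fubini applies. I would then extend by the equicontinuity estimates of Proposition~\ref{prop:cont_ext}, using that $\Psi(y)$ grows at most polynomially in $|y|$.

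\emph{Step 3: the two integrals.} For the Kubo cochain, each term $\prod_{i=1}^d\bigl(F_{\sigma(i)}(y_i-x)-F_{\sigma(i)}(y_{i-1}-x)\bigr)$ involves each coordinate direction exactly once. The integral therefore factorizes by Fubini, and $\int_\RM(\sgn(s-t)-\sgn(s'-t))\,\difd t=2(s-s')$ gives
$$\Psi_{\mathrm{Kubo}}(y)=2^d\sum_\sigma(-1)^\sigma\prod_i (y_i-y_{i-1})_{\sigma(i)}.$$
Thus $\eta^{\mathrm{avg}}_{\mathrm{Kubo}}$ equals $2^d$ times the Chern cocycle \eqref{eq:chern_cocycle}, possibly up to a sign fixed by the cyclic ordering. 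For the Connes cochain, I would quote the integral identity of \cite[Chapter~6]{ProdanSchulzBaldes2016}. That identity evaluates $\int\tr(\gamma_0F_{\mathrm{rad}}[F_{\mathrm{rad}},\cdot]\cdots)$, or its odd analogue, as a constant multiple $c_d$ of the same antisymmetrized product of differences. Their constant involves the sphere volume and the Clifford traces of Appendix~\ref{sec:index}. Dividing the two constants gives $A_d=c_d/2^d$, and I would check that this simplifies to $(\imath\pi)^{\lfloor d/2\rfloor}/d!!$. This constant bookkeeping, including the sign conventions for $\gamma_i$, is the second point that requires care.
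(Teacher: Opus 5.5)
Your proposal is correct and follows essentially the same route as the paper. You restrict to finite-propagation $a_j$, use covariance of $L_\omega$ and translation invariance of $\PM$ to fix the base point, and merge the lattice sum with the cube average into an integral over $\RM^d$; you then factorize the Kubo integral to get $2^d\det(x_1,\dots,x_d)$ and divide by the Connes constant imported from Prodan--Schulz-Baldes (your check $A_d=c_d/2^d$ agrees with the paper's $C_d$). One small correction: compact support of $x\mapsto\varphi(y-x)$ holds only for the Kubo cochain, since the Connes cochain built from $F_{\mathrm{rad}}$ only decays like $|x|^{-(d+1)}$, so on that side the interchange of sums and integrals rests on absolute integrability rather than compact support.
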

\begin{proof}
Assume $a_0,\ldots,a_d$ are finite propagation. Writing it in terms of the symbol $\varphi_{F^{x}}$ (cf. Definition~\ref{def:F}) we get the finite sums
\begin{align*}
&\int_{\beta\ZM^d} \eta^{(x)}_{\mathrm{Kubo}}(L_\omega(a_0),\dots,L_\omega(a_d))\difd\PM(\omega) \\
&=\int_{\beta\ZM^d} \sum_{\substack{x_0,x_1,\dots,x_{d+1}\\ x_0=x_{d+1}}} \varphi_{F^{(x)}}(x_0,..,x_d) \prod_{i=0}^d\langle\delta_{x_i}, L_\omega(a_i)  \delta_{x_{i+1}}\rangle\difd\PM(\omega).
\end{align*}
By covariance with respect to translations in $\beta\ZM^d$ one has $$\langle\delta_{x_i}, L_\omega(a_i)  \delta_{x_{i+1}}\rangle=\langle\delta_{x_i-x_0}, u^{-x_0}L_\omega(a_i) u^{x_0} \delta_{x_{i+1}-x_0}\rangle=\langle\delta_{x_i-x_0}, L_{x_0\triangleright \omega}(a_i) \delta_{x_{i+1}-x_0},$$
using translation-invariance of $\PM$ the expression therefore becomes after reindexing
$$
\int_{\beta\ZM^d} \sum_{\substack{x_0,x_1,\dots,x_{d+1}\\ x_0=0=x_{d+1}}} \sum_{y\in \ZM^d}\varphi_{F^{(x)}}(y,x_1+y,..,x_d+y) \prod_{i=0}^d\langle\delta_{x_i}, L_\omega(a_i) \delta_{x_{i+1}}\rangle\difd\PM(\omega)$$
When we additionally average over the unit cube $\Cc^d$, then the sum over $y$ combined with the offset $x$ becomes an integral over $\RM^d$. One computes
\begin{align*}
&\int_{\RM^d}\varphi_{F}(y,x_1+y,..,x_d+y)\mathrm{d}y\\
&=\sum_{\sigma \in S_d} (-1)^{\sigma} \int_{\RM^d} \prod_{i=1}^d (\sgn((x_{i}+y)\cdot e_{\sigma(i)}) - \sgn((x_{i-1}+y)\cdot e_{\sigma(i)}))\difd{y} \\
&=\sum_{\sigma \in S_d} (-1)^{\sigma} \prod_{i=1}^d \int_{\RM} (\sgn((x_{i}+z)\cdot e_{\sigma(i)}) - \sgn((x_{i-1}+z)\cdot e_{\sigma(i)}))\difd{z}\\
&= 2^d \, \sum_{\sigma \in S_d} (-1)^{\sigma} \prod_{i=1}^d  (x_{i} \cdot e_{\sigma(i)} - x_{i-1} \cdot e_{\sigma(i)})= 2^d\, \mathrm{det}(x_1,\dots,x_d).
\end{align*}
Here we used that each factor depends only on one independent direction $e_{\sigma(i)}$, so we split into one-dimensional integrals that were computed using the simple observation 
$$\int_\RM (\sgn(\lambda_1+t)-\sgn(\lambda_2+t))\difd t = 2 (\lambda_1-\lambda_2)$$
for every $\lambda_1,\lambda_2 \in \RM$. We conclude
$$\eta^{\mathrm{avg}}_{\mathrm{Kubo}}(a_0,\dots,a_d) = \sum_{\substack{x_0,x_1,\dots,x_{d+1}\\ x_0=0=x_{d+1}}} 2^d \det(x_1,\ldots,x_d) \int_{\beta\ZM^d}\prod_{i=1}^d\langle\delta_{x_i}, L_\omega(a_i)\delta_{x_{i+1}}\rangle\difd\PM(\omega).$$

Similar manipulations for $F_{\mathrm{rad}}$ allow one to pull out the Clifford factors in $\eta_{\mathrm{Connes}}$ and shift using translation-invariance (see the proof of \cite[Theorem 6.3.1]{ProdanSchulzBaldes2016}) to get
\begin{align*}
\eta^{\mathrm{avg}}_{\mathrm{Connes}}(a_0,\dots,a_d) 
=\sum_{\substack{x_0,x_1,\dots,x_{d+1}\\ x_0=0=x_{d+1}}}  C_d \mathrm{det}(x_1,\dots,x_d)\int_{\beta\ZM^d}\prod_{j=0}^d\langle\delta_{x_j}, L_\omega(a_j)\delta_{x_{j+1}}\rangle\difd\PM(\omega) 
\end{align*}
with the constants given by \cite[Lemma~6.4.1]{ProdanSchulzBaldes2016}, 
$$C_d =\int_{\RM^d} \Tr\biggl(\gamma_0 \prod_{i=1}^d (F_{\mathrm{rad}}(x_{i}+y)- F_{\mathrm{rad}}(x_{i+1}+y))\cdot \gamma\biggr)=\begin{cases}
\frac{2^{d}(\imath \pi)^k}{d!!} & d=2k+1,\\
\frac{(2\imath \pi)^k}{k!} & d=2k.
\end{cases}$$
Comparing the normalization constants finishes the proof.
\end{proof}

\begin{proof}[Proof of Theorem~\ref{th:kubozd}]
We will write the proof for the even case, with the odd case very similar. Let $e\in \Ss_u^1(\ZM^d)\otimes M_N(\CM)$ be a projection.

The shifted Connes cocycle computes the pairing with the standard KK-cycle on $\ell^2(\ZM^d)\hat{\otimes}\CM_d$ given by the shifted Dirac operator $D^{(x)}_{\mathrm{std}}= \sum_{i=1}^d (X_i-x_i)\otimes \sigma_i$. One has \cite[Section 6.3, 6.4]{ProdanSchulzBaldes2016B}
$$\langle [e]_0, [D^{(x)}_{\mathrm{std}}]\rangle = \Gamma_d \eta^{(x)}_{\mathrm{Connes}}(e,\dots,e)$$
with the normalization constants from Theorem~\ref{th:kubozd}. Since $D_{\mathrm{std}}-D^{(x)}_{\mathrm{std}}$ is relatively compact w.r.t. $D_{\mathrm{std}}$ the left-hand side does not depend on it and hence one can average
$$\langle [e]_0 , [D_{\mathrm{std}}] \rangle= \Gamma_d \int_{\Cc^d}\eta^{(x)}_{\mathrm{Connes}}(e,\dots,e) \mathrm{d}x.$$
Similarly, $\eta^{(x)}_{\mathrm{Kubo}}-\eta_{\mathrm{Kubo}}$ is a coboundary by Proposition~\ref{prop:coarse_bounded_perturbation} and hence the pairing between $\eta^{(x)}_{\mathrm{Kubo}}$ and $K_{d \bmod 2}(C^*_u(\ZM^d))$ does not depend on the offset $x$ either, so we can also average
$$\eta_{\mathrm{Kubo}}(e,\dots,e)= \int_{\Cc^d}\eta^{(x)}_{\mathrm{Kubo}}(e,\dots,e) \mathrm{d}x.$$

By \eqref{eq:norm_consts2} we therefore merely need to show that we can also drop the averages over $\beta\ZM^d$ in the equations
\begin{align*}
\int_{\beta\ZM^d} \langle [L_\omega(e)], [D_{\mathrm{std}}]\rangle \difd{\PM(\omega)}&=\Gamma_d \int_{\beta\ZM^d}  \eta_{\mathrm{Connes}}(L_\omega(e),\dots,L_\omega(e))\difd{\PM(\omega)} \\
&= \Gamma_d A_d \int_{\beta\ZM^d}  \eta_{\mathrm{Kubo}}(L_\omega(e),\dots,L_\omega(e))\difd{\PM(\omega)}
\end{align*}
for $\Gamma_d, A_d$ as in Equation \eqref{eqn:dirac_phase_index} and \eqref{eq:norm_consts2} respectively. Note also that $\Xi_d = \Gamma_d A_d$, which is the origin of that normalization constant.

Suppose now we have a sequence $e = \wlim_{n\to \infty} e_n$ in $\Ss_u^1(\ZM^d)$ with a uniform bound 
$$\sup_{n\in \NM} \sup_{x,y\in \ZM^d} |\langle \delta_x, e_n \delta_y\rangle| \langle x-y\rangle^k < \infty$$
for large enough $k$. Let $\eta$ be either $\eta_{\mathrm{Kubo}}$ or $\eta_{\mathrm{Connes}}$. Since both of them are given by sums over matrix elements in the form $$\eta(a_0,\dots,a_d)=\sum_{\substack{x_0,x_1,\dots,x_{d+1}\\ x_0=x_{d+1}}} \varphi(x_0,..,x_d) \prod_{i=0}^d\langle\delta_{x_i}, a_i  \delta_{x_{i+1}}\rangle,$$
with a summable majorant, we can conclude by dominated convergence that
$$\lim_{n\to \infty} \eta(e_n,\dots,e_n)= \eta(e,\dots,e).$$

By Lemma~\ref{lemma:weaklimit}, we can approximate $e$ in weak operator topology by $e_n=u^{-x_n} e u^{x_n}$ which has such uniform bounds on the matrix elements by assumption $e\in \Ss_u^1(\ZM^d)\otimes M_N(\CM)$. One therefore has
$$\eta(L_\omega(e),\dots,L_\omega(e))=\lim_{n\to \infty} \eta(e_n,\dots,e_n)= \eta(e,\dots,e)$$
since each $e_n$ is a projection and the pairing of $\eta_{\mathrm{Kubo}}$ respectively $\eta_{\mathrm{Connes}}$ with any projection is translation-invariant.
\end{proof}

\begin{remark}{\rm 
The relation to \eqref{eq:chern_cocycle} can be derived with the same proof, as the generalized trace-per-unit-volume can be written in the form
$$\Tt(a)=\int_{\beta\ZM^d}\langle \delta_0, L_\omega(a)\delta_0\rangle \difd\PM(\omega).$$
Hence, if one averages the Chern cocycle similarly, one gets $\eta^{\mathrm{avg}}_{\mathrm{Kubo}}= 2^d\eta^{\mathrm{avg}}_{\mathrm{Chern}}$.}$\diamond$
\end{remark}

\textbf{Acknowledgements}
This research was supported by NSF DMS-2052899, DMS-2155211, and Simons
896624. AM is supported by NSF through GRFP grant DGE-2146752.

\textbf{Data Availability Statement}
No datasets were created or analyzed during the present work, so data sharing is not relevant to this article.

\printbibliography

\end{document}